\newcommand{\R}{\mathbb R}
\newcommand{\Z}{\mathbb Z}
\renewcommand{\l}{\lambda}
\newcommand{\N}{\mathbb{N}}
\newcommand{\T}{\mathbb{T}}
\newtheorem{thm}{Theorem}[section]
\newtheorem{lem}[thm]{Lemma}
\newtheorem{prop}[thm]{Proposition}
\newtheorem{cor}{\bf Corollary}[section]
\newtheorem{rem}{\bf Remark}[section]
\theoremstyle{definition}
\theoremstyle{statement}
\numberwithin{equation}{section}
\begin{document}
	\title[$C^2$-Arithmetic Anderson localization]{Localization and  Regularity of the Integrated Density of States  for Schr\"odinger Operators on $\Z^d$ with $C^2$-cosine Like Quasi-periodic  Potential}
	\author[Cao]{Hongyi Cao}
\address[H. Cao] {School of Mathematical Sciences,
Peking University,
Beijing 100871,
China}
\email{chyyy@stu.pku.edu.cn}
\author[Shi]{Yunfeng Shi}
\address[Y. Shi] {School of Mathematics,
Sichuan University,
Chengdu 610064,
China}
\email{yunfengshi@scu.edu.cn}

\author[Zhang]{Zhifei Zhang}
\address[Z. Zhang] {School of Mathematical Sciences,
Peking University,
Beijing 100871,
China}
\email{zfzhang@math.pku.edu.cn}

\date{\today}

\keywords{Multidimensional quasi-periodic Schr\"odinger operators, $C^2$-cosine like potentials, arithmetic Anderson localization, multi-scale analysis, H\"older continuity of IDS, quantitative Green's function estimates}

\begin{abstract}
In this paper,  we study the multidimensional  lattice   Schr\"odinger operators with $C^2$-cosine like quasi-periodic (QP) potential.   We  establish quantitative Green's function estimates,  the  arithmetic version of Anderson (and dynamical) localization,  and the  finite volume version of $(\frac 12-)$-H\"older continuity of the integrated density of states (IDS) for  such QP Schr\"odinger operators. Our proof is based on an extension of the fundamental  multi-scale analysis (MSA) type method of  Fr\"ohlich-Spencer-Wittwer [\textit{Comm. Math. Phys.} 132 (1990):  5--25]  to the higher lattice dimensions.  We resolve the  level crossing issue on eigenvalues parameterizations in the case of both higher lattice dimension and $C^2$ regular potential. 
 \end{abstract}

\maketitle

\maketitle
	
	\section{Introduction and main results}

In this paper,  
we  are concerned with  the QP Schr\"odinger operator 
\begin{align}\label{model}
	H(\theta)=\varepsilon \Delta+v(\theta+ x\cdot{\omega})\delta_{x, y},\ x\in\Z^d,
\end{align}
where $\varepsilon\geq0$ and  the discrete Laplacian $\Delta$ is defined as
\begin{align*}
	\Delta(x, y)=\delta_{{\|x- y\|_{1}, 1}},\ \| x\|_{1}:=\sum_{i=1}^{d}\left|x_{i}\right|.
\end{align*}
For the diagonal part of \eqref{model}, we let $\theta\in \mathbb{T}=\mathbb{R}/\mathbb{Z},   \omega\in 	{\rm DC}_{\tau, \gamma}$ and $x\cdot\omega=\sum\limits_{i=1}^dx_i \omega_i$, 	with	$$	{\rm DC}_{\tau, \gamma}=\left\{\omega\in[0,1]^d:\ \|x\cdot\omega\|=\inf_{l\in\mathbb{Z}}|l- x\cdot\omega|\geq \frac{\gamma}{\| x\|_1^{\tau}} \ {\rm for}\ \forall\  x\in\mathbb{Z}^d\setminus\{0\}\right\},$$ 
where $\tau>d,\gamma>0.$ We call $\theta$ the phase  and $\omega$ the frequency. 
We  further assume  that the  potential $v\in C^2(\mathbb{T}; \R)$ is an \textit{even} function with \textit{exactly two non-degenerate critical points} \footnote{Without loss of generality, we  assume that $\theta=0$ is the maxima point and $\theta=1/2$ is the minima one for  $v$. Since we are considering  small $\varepsilon$, we  further assume that there exists $0<a<1/10$, such that $|v''(\theta)|>3$ for $\theta\in\{\theta\in \mathbb{T}:\ \|\theta\|<a\}\cup\{\theta\in \mathbb{T}:\ \|\theta-1/2\|<a\}$,  and $|v'(\theta)|>3$ for  $\theta\in\{\theta\in \mathbb{T}:\ \|\theta\|\geq a\}\cap\{\theta\in \mathbb{T}:\ \|\theta-1/2\|\geq a\}$. Under these assumptions, we denote $$M_1=\sup_{\theta\in \T}\max(|v(\theta)|,|v'(\theta)|,|v''(\theta)|)>0.$$}\label{ftnote1}.  
 The special case of $d=1$ and $v=\cos2\pi\theta$ corresponds to the famous almost Mathieu operator  (AMO). The main goals of the present work are as follows. 
\begin{itemize}
\item We first extend the celebrated  multi-scale analysis (MSA) type method of Fr\"ohlich-Spencer-Wittwer \cite{FSW90} to the higher lattice dimensions. In particular, we establish the quantitative Green's function estimates for \eqref{model}. 
\item Based on the quantitative Green's function estimates, we prove the arithmetic version of Anderson (and dynamical) localization in the perturbative regime. 
\item We  prove the finite volume version of the $(\frac12-)$-H\"older continuity of the IDS.
\end{itemize}

Our main motivations come  from extending some fine properties obtained for AMO  to the general QP Schr\"odinger operators.  
In particular,  we are interested in   the   Anderson localization (i.e., pure point spectrum with exponentially decaying eigenfunctions).   
Actually, since the fundamental works of Sinai \cite{Sin87} and Fr\"ohlich-Spencer-Wittwer \cite{FSW90}, the Anderson localization  has been  obtained  for the $1D$ QP Schr\"odinger operators with $C^2$-cosine like potentials or even more general Gevrey  potentials  \cite{Eli97} assuming Diophantine frequencies.  
 However, all these $1D$  results are perturbative in the sense that the required perturbation strength depends  on the Diophantine frequency  (i.e., localization holds for $|\varepsilon|\leq \varepsilon_0(v,\omega)$).   Then Jitomirskaya made a  breakthrough in \cite{Jit94, Jit99}, where the non-perturbative method  for control of  Green's functions (cf. \cite{Jit02})  was developed first  for  AMO.  This will allow effective (even optimal in many cases) and independent of $\omega$ estimate on $\varepsilon_0$. In addition, applying this  method can prove the \textit{arithmetic version of Anderson localization} for AMO which means the removed sets on both  $\omega$ and $\theta$ when establishing localization have an explicit arithmetic description (cf. \cite{Jit99,JL18} for details).  The non-perturbative method of Jitomirskaya \cite{Jit99} was later  extended by Bourgain-Goldstein \cite{BG00} to the case of general analytic potentials.  However,  the localization results of \cite{BG00}  hold for arbitrary $\theta\in\T$ and a.e. Diophantine frequencies (the permitted set of frequencies depends  on $\theta$).  So, there seems  no  arithmetic  version of Anderson localization  result  for general analytic QP Schr\"odinger operators even in the $1D$ case. Recently,  the evenness condition of \cite{FSW90} on the potential was removed in \cite{FV21} in the $1D$ case.  We also mention the work \cite{GYZ} in which the arithmetic version of the Anderson localization was proved for $1D$ quasi-periodic Schr\"odinger operators with a $C^2$-cosine like potential via the reducibility method. 
  
 It is well-known that the non-perturbative localization is not expected  for QP operators on $\Z^d$ for $d\geq 2$ (cf. \cite{Bou02}). In the multidimensional case, Chulaevsky-Dinaburg \cite{CD93} and Dinaburg \cite{Din97}  first extended  results of Sinai \cite{Sin87} to the exponential long-range QP operators with  $C^2$ regular potentials on $\Z^d$ for arbitrary $d\geq1.$   However,  while the localization results of \cite{CD93,Din97} allow any Diophantine frequencies,  there is simply no explicit arithmetic description on the $\theta$.
Later, the remarkable work of Bourgain-Goldstein-Schlag \cite{BGS02}  established the Anderson localization for  general analytic QP Schr\"odinger operators on $\Z^2$ via Green's function estimates. In 2007,   Bourgain \cite{Bou07} successfully extended the  results of  \cite{BGS02} to arbitrary dimensions. The results of \cite{Bou07} have been largely generalized by Jitomirskaya-Liu-Shi  \cite{JLS20} to  the  case of both arbitrarily  dimensional multi-frequencies and exponential long-range hopping.
 We want to remark that the localization results of \cite{BGS02,Bou07,JLS20} are  non-arithmetic. 
 Very recently, Ge-You \cite{GY20}  applied a reducibility argument (based on ideas of \cite{JK16,AYZ17}) to  the multidimensional  long-range QP operators with the cosine potential, and proved the arithmetic version of Anderson localization. The authors \cite{CSZ22} also provided an alternative proof (based on Green's function estimates) of the arithmetic Anderson localization.
 
 To the best of our knowledge,  there is simply no arithmetic version of  Anderson localization result  for QP Schr\"odinger operators on $\Z^d$ ($d\geq 2$) with the potential beyond the cosine function.  This is one of our main motivations of the present work.  For this, we first establish the quantitative Green's function estimates, which is based on the MSA type method of \cite{FSW90}.  Occasionally, by combining the Green's function estimates  with an argument of Bourgain \cite{Bou00}, we can also obtain the finite volume version of the  $(\frac12-)$-H\"older continuity of the IDS.  However, to extend the method of \cite{FSW90}  to work in the higher lattice dimensions, we have to deal with the essential difficulty of the {\it level crossing}  on eigenvalues parameterizations. This motivates us to take full advantage of  the deep results of  Rellich \cite{Rel69} and Kato \cite{Kat95} concerning the $C^1$  eigenvalues variations. In addition,  to handle the resonances using MSA, it requires to overcome the difficulty of the {\it non-interval} structure of  the resonant blocks, which is  accomplished via the method developed previously by the authors in  \cite{CSZ22}.

\subsection{Main results}
In this section, we will introduce our main results.
\subsubsection{Quantitative Green's function estimates}
We begin with the quantitative Green's function estimates.

Let $\Lambda\subset\Z^d$, $E\in\R$  and $\theta\in\T$. The Green's function $G_\Lambda(\theta;E)$ is defined by 
$$G_\Lambda(\theta;E)=(H_\Lambda(\theta)-E)^{-1},$$
where $H_\Lambda(\theta)=R_\Lambda H(\theta)R_\Lambda$ with $R_\Lambda$ being the restriction operator.  We also write
$$G_\Lambda(\theta;E)(x,y)=\left\langle\delta_x, G_\Lambda(\theta;E)\delta_y \right\rangle,$$
where $\langle \cdot,\cdot\rangle$ denotes the standard inner product on $\ell^2(\Lambda).$

Let $0< \varepsilon\leq\varepsilon_0$, where $\varepsilon_0$ is sufficiently small depending on $ v,d, \tau,\gamma$. Fix $E^*\in\R, \theta^*\in\T$ and $ \delta_0=\varepsilon_0^{1/20}$. Define the $0$-th  generation of singular points  set
$$Q_0=\big\{c_0^i\in \Z^d:\ |v(\theta^*+ c_0^i\cdot{\omega})-E^*|<\delta_0\big\}.$$
For $n\geq1$, we  inductively define the  family of $l_n$-size (i.e., diameter) blocks $\{B_n^i\}_{c_n^i\in P_n}$, where $l_1= |\log\varepsilon_0|^2$ or $|\log\varepsilon_0|^4$, $l_{n+1}= l_{n}^2$ or $l_{n}^4$  (each $B_n^i$ is centered at $c_n^i$).  These blocks  are  used to  cover  the $(n-1)$-th generation of singular points set $Q_{n-1}$. 
 We also define  the $n$-th generation of  singular points set  (resp. singular blocks)
 $$Q_n=\big\{c_n^i\in P_n:\ \operatorname{dist}(\sigma (H_{B_n^i}(\theta^*)),E^*)<\delta_n:=e^{-l_n^{2/3}}\big\}\ ({\rm resp. }\ \{B_n^i\}_{c_n^i\in Q_n}), $$
where $\sigma(\cdot)$ denotes the spectrum of some operator.  
The non-singular blocks $\{B_n^i\}_{c_n^i\in P_n\setminus Q_n}$ are  $n$-regular.  An arbitrary finite  set $\Lambda\subset\Z^d$ is  $n$-good  if every point  of $\Lambda\cap Q_0$ is contained 
in an $m$-regular block $B_m^i\subset\Lambda$ for some $m\leq n$.
\begin{thm}\label{mainthm}
	Let $\omega\in{\rm DC}_{\tau,\gamma}$. Then there exists some  $\varepsilon_0=\varepsilon_0(v,d,\tau,\gamma)>0$, such that for all $0< \varepsilon\leq \varepsilon_0$, the following two statements hold true.
	\begin{itemize}
		\item {\rm ({\bf Green's function estimates})} If $\Lambda$ is $n$-good, then the estimates 
		\begin{align*}
		\|G_\Lambda(\theta;E)\|&\leq10\delta_n^{-1},\\
		|G_\Lambda(\theta;E)(x,y)|&\leq e^{-\gamma_n\|x-y\|_1}\ {\rm for}\ \|x-y\|_1\geq l_n^\frac{5}{6}\ (l_0=1)
		\end{align*}
		hold for all $|\theta-\theta^*|<\delta_n/(10M_1)$ and $ |E-E^*|<\delta_n/5$. Moreover, we have $$\gamma_n\searrow \gamma_\infty\geq\gamma_0/2=|\log\varepsilon|/4>0.$$	
		\item{\rm ({\bf Center Theorem})} If $c_n^i,c_n^j\in Q_n $, then $$m(c_n^i,c_n^j)\leq2\delta_n^{1/2},$$
		where $$m(c_n^i,c_n^j):=\min(\|(c_n^i-c_n^j)\cdot \omega\|,\|2\theta^*+(c_n^i+c_n^j)\cdot \omega\|).$$
	\end{itemize} 
\end{thm}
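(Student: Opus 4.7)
The plan is a simultaneous induction on the scale $n$, proving the Green's function estimates and the Center Theorem at level $n$ together. At the base $n=0$, any $x\notin Q_0$ satisfies $|v(\theta+x\cdot\omega)-E|\geq \delta_0/2$ for $(\theta,E)$ in the admitted neighborhood of $(\theta^*,E^*)$, so for a $0$-good $\Lambda$ the Neumann series $(D-\varepsilon \Delta)^{-1}=\sum_{k\geq 0}D^{-1}(\varepsilon \Delta D^{-1})^k$ converges, yielding both $\|G_\Lambda\|\leq 10\delta_0^{-1}$ and exponential decay at rate $\gamma_0=|\log\varepsilon|/2$. For the Center Theorem at $n=0$, if $c_0^i,c_0^j\in Q_0$ then $v(\theta^*+c_0^i\cdot\omega)$ and $v(\theta^*+c_0^j\cdot\omega)$ both lie within $\delta_0$ of $E^*$; the evenness of $v$ and the presence of exactly two nondegenerate critical points forces the two phases to cluster near the same critical point (giving $\|(c_0^i-c_0^j)\cdot\omega\|\lesssim \delta_0^{1/2}$) or near opposite ones (giving $\|2\theta^*+(c_0^i+c_0^j)\cdot\omega\|\lesssim \delta_0^{1/2}$), the $1/2$ exponent coming from the quadratic degeneracy of critical points.

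For the inductive step of the Green's function estimate at scale $n+1$, cover the intersection of an $(n+1)$-good $\Lambda$ with $Q_0$ by the regular blocks $B_m^i\subset \Lambda$ of various scales $m\leq n+1$, and apply the inductive Green's function bounds on each regular piece along with the diagonal dominance outside these pieces. A Schur complement reduction concentrates the problem onto an effective matrix indexed by sites in $Q_n\cap\Lambda$, whose number and mutual geometry are controlled by the Center Theorem at scale $n$. Estimating the reduced matrix and gluing back via the resolvent identity yields $\|G_{\Lambda}\|\leq 10\delta_{n+1}^{-1}$; the exponential off-diagonal decay is propagated through the non-resonant corridors between blocks and loses only $O(l_n^{-1/3})$ per step, so $\gamma_n\searrow \gamma_\infty\geq \gamma_0/2$ by summability of the geometric losses.

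For the inductive step of the Center Theorem at scale $n+1$, given $c_{n+1}^i,c_{n+1}^j\in Q_{n+1}$, invoke the Rellich-Kato theorem to select $C^1$ families of eigenvalues $E_k^i(\theta)$ of $H_{B_{n+1}^i}(\theta)$ that persist cleanly through level crossings. The evenness of $v$ together with the unitary equivalence induced by translation (respectively reflection combined with translation) on the lattice identifies, as $\theta$ varies, the spectra of $H_{B_{n+1}^i}$ and $H_{B_{n+1}^j}$ up to a shift of $\theta$ by $(c_{n+1}^j-c_{n+1}^i)\cdot\omega$ or by $2\theta^*+(c_{n+1}^i+c_{n+1}^j)\cdot\omega$. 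Picking branches whose values at $\theta^*$ lie within $\delta_{n+1}$ of $E^*$ and comparing them via the $C^1$ bound (together with the scale-$n$ induction hypothesis to control the derivatives) produces the dichotomy $m(c_{n+1}^i,c_{n+1}^j)\leq 2\delta_{n+1}^{1/2}$.

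The principal obstacle is the level crossing issue: for $d\geq 2$ the dimension of $H_{B_n^i}(\theta)$ grows with the scale and its eigenvalues may intersect as $\theta$ varies, so the simple Lipschitz parameterization used in \cite{FSW90} in $1D$ is unavailable. This is overcome by invoking Rellich's and Kato's theorems, which furnish $C^1$ eigenvalue branches crossing smoothly. A secondary difficulty is that at higher $n$ the resonant blocks have non-interval geometry, which is handled via the covering and clustering scheme developed previously by the authors in \cite{CSZ22}.
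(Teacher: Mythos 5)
Your high-level architecture matches the paper's — simultaneous induction, Neumann series at scale $0$, Rellich–Kato to tame level crossings, and the non-interval block geometry handled as in the authors' earlier work. However, the key inductive step of the Center Theorem is left too vague to actually deliver the square-root exponent, and this is a genuine gap.

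You write that "comparing the branches via the $C^1$ bound (together with the scale-$n$ induction hypothesis to control the derivatives) produces the dichotomy $m \leq 2\delta_{n+1}^{1/2}$." A $C^1$ bound on $E(\theta)$ gives only $|E(\theta^*+h)-E(\theta^*)| \lesssim |h|$, which goes the wrong way; what is needed is a \emph{lower} bound $|E(\theta^*+h)-E(\theta^*)|\gtrsim h^2$. The paper obtains this by proving a transversality statement: when $|\frac{d}{d\theta}E|$ is small one has $|\frac{d^2}{d\theta^2}E| \geq 2$ with a fixed sign, so that near the symmetry point $\theta_s$ the eigenvalue branch is a quadratic Morse function and a Morse-type lemma (Appendix A, Lemma~\ref{C2}) yields $|E(\theta_2)-E(\theta_1)|\geq \tfrac12\min(|\theta_2-\theta_1|^2,|\theta_2+\theta_1-2\theta_s|^2)$. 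Your proposal never establishes (nor invokes) such a second-derivative estimate, so the square root is not accounted for. Moreover, the transversality estimate itself depends on a careful dichotomy: when a single eigenvalue $E_{n+1}^i$ sits near $E^*$ one uses the expansion of $\frac{d^2}{d\theta^2}E_{n+1}^i$ against $G^\perp_{B_{n+1}^i}$ (no crossings); when \emph{two} eigenvalues sit near $E^*$ (the mirror-image situation, {\bf Subcase B} / {\bf Case 2}) the second derivative formula has a singular term $\frac{2\langle\psi,V'\Psi\rangle^2}{E-\mathcal{E}}$, a level crossing can occur exactly at $\theta_s$, and Kato's theorem is needed to compute the one-sided derivatives at the crossing (Lemma~\ref{daogroup}). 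None of this appears in your sketch, and without it the Center Theorem step collapses.

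On the Green's function side you propose a Schur complement reduction onto an effective matrix indexed by $Q_n\cap\Lambda$. The paper instead uses a more pedestrian route: the resolvent identity with a covering of $\Lambda$ by regular blocks $B_m^i$, combined with Schur's test to close the $L^2$ bound, followed by an iterated resolvent identity for the off-diagonal decay. Your alternative is plausible but you would still need to check that the effective matrix is well-conditioned, which again rests on the Center Theorem; and in the paper's scheme the loss of decay rate per scale is $O(l_{n+1}^{-1/30})$, not $O(l_n^{-1/3})$ — your stated loss rate is far too large and would not sum to give $\gamma_\infty \geq \gamma_0/2$ unless corrected. The clean takeaway: the proposal needs to (i) state and prove the second-derivative transversality hypothesis ((H3), (H9) in the paper) and the Morse lemma it feeds; (ii) distinguish the one-eigenvalue and two-eigenvalue cases and carry out the level-crossing analysis via Kato's formula; and (iii) fix the decay-rate loss so the sum converges.
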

\begin{rem}
For a more complete description on the Green's function estimates, we refer to \S \ref{sec2}.  In contrast,  we can not identify the conditions  of being a center of the single resonant block  as in \cite{CSZ22}, but only provide conditions on centers being a pair of  resonant blocks.  This is reasonable since we have low regular $C^2$ potentials. 
\end{rem}

\subsubsection{Arithmetic version of localization}
In this part, we will state our arithmetic version of localization results.  

We first introduce our Anderson localization result. 
\begin{thm}\label{AL}
	Let $H(\theta)$ be given by \eqref{model} and let $\omega\in \operatorname{DC}_{\tau, \gamma}$. Then there exists some $\varepsilon_0=\varepsilon_0(v, d,\tau, \gamma)>0$ such that, for all $0<\varepsilon\leq\varepsilon_0$ and $\theta \in\mathbb{T}\setminus\Theta$,  $H(\theta)$ satisfies the Anderson localization, where 
	$$\Theta=\{\theta\in \T :\  {\rm the \  relation}\  \|2\theta+x \cdot\omega\|\leq \|x\|_1^{-d-2}\ {\rm holds\  for\  infinitely\  many }\   x\in\Z^d\}.$$
\end{thm}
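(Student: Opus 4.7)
The plan is to combine Theorem \ref{mainthm} with a Schnol-type argument. By Schnol's theorem, it suffices to show that every polynomially bounded generalized eigenfunction $\psi$ of $H(\theta)$, satisfying $H(\theta)\psi = E\psi$ and $|\psi(x)| \leq C(1+\|x\|_1)^A$, decays exponentially. A lattice translation of $\psi$ shifts $\theta$ to $\theta + y\cdot\omega$; the set $\T\setminus\Theta$ is invariant under such shifts, since $\|2(\theta+y\cdot\omega) + x\cdot\omega\| = \|2\theta + (x+2y)\cdot\omega\|$ and the defining condition of $\Theta$ is preserved under the asymptotic $\|x+2y\|_1 \sim \|x\|_1$. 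Hence I may assume $|\psi(0)|$ essentially attains the supremum of $|\psi|$ on a fixed neighborhood of the origin, so $\psi(0)\neq 0$.

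Applying Theorem \ref{mainthm} with $\theta^* = \theta$ and $E^* = E$, I would first argue that $\psi(0)\neq 0$ forces, at every sufficiently large scale $n$, the existence of a singular center $c_n(0) \in Q_n$ with $\|c_n(0)\|_1 \leq l_n$. Indeed, if the neighborhood of $0$ at scale $n$ were $n$-good, the Poisson formula combined with the Green's function bound in Theorem \ref{mainthm} would yield $|\psi(0)| \leq C l_n^{A+d-1} e^{-\g_n l_n^{5/6}}$, which tends to $0$ and contradicts $\psi(0)\neq 0$.

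The heart of the argument is to show that for every $x\in\Z^d$ with $\|x\|_1$ large enough, a scale $n = n(x)$ can be chosen so that no singular center $c_n^j\in Q_n$ lies within distance $\|x\|_1/4$ of $x$. Suppose such $c_n^j$ existed. The Center Theorem applied to $c_n(0),c_n^j\in Q_n$ yields
\[
\min\bigl(\|(c_n(0)-c_n^j)\cdot\omega\|,\ \|2\theta + (c_n(0)+c_n^j)\cdot\omega\|\bigr) \leq 2\delta_n^{1/2}.
\]
In the first alternative, $\omega\in \mathrm{DC}_{\tau,\g}$ forces $\|c_n(0)-c_n^j\|_1 \geq (\g/(2\delta_n^{1/2}))^{1/\tau}$. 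In the second alternative, $\theta\notin\Theta$ implies that $\|2\theta + y\cdot\omega\| > \|y\|_1^{-d-2}$ for all $y$ with $\|y\|_1$ larger than some $N(\theta)$; taking $y = c_n(0)+c_n^j$ then forces $\|c_n(0)+c_n^j\|_1 \geq (2\delta_n^{1/2})^{-1/(d+2)}$. In either case $\|x\|_1 \gtrsim e^{c l_n^{2/3}}$ for an absolute constant $c>0$, since $\delta_n = e^{-l_n^{2/3}}$. As this threshold grows much faster than $l_n^{5/6}$, for any large $\|x\|_1$ a scale $n$ with $l_n^{5/6} \leq \|x\|_1 \leq e^{c l_n^{2/3}}/2$ can be chosen, giving a contradiction and producing an $n$-good box $\Lambda_x$ of diameter $\sim\|x\|_1/4$ containing $x$ and disjoint from the singular cluster near $0$.

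The Poisson formula on $\Lambda_x$,
\[
\psi(x) = -\varepsilon \sum_{(y,y')\in\partial\Lambda_x} G_{\Lambda_x}(\theta;E)(x,y)\,\psi(y'),
\]
combined with the off-diagonal estimate $|G_{\Lambda_x}(\theta;E)(x,y)| \leq e^{-\g_n\|x-y\|_1}$ and the polynomial growth of $\psi$, then yields $|\psi(x)| \leq e^{-\g_\infty \|x\|_1/16}$, completing the proof of Anderson localization. The main technical obstacle I anticipate is the precise treatment of ``the singular block near $0$'' in the MSA hierarchy: one must identify the origin with an appropriate center $c_n^i\in P_n$ so that the Center Theorem applies without losing exponential factors, and handle the bookkeeping of the various scale thresholds so that the ranges $l_n^{5/6} \leq \|x\|_1 \leq e^{c l_n^{2/3}}/2$ cover all sufficiently large $\|x\|_1$. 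Dynamical localization follows from the uniform rate $\g_\infty > 0$ by standard arguments.
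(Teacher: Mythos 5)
Your proposal follows essentially the same route as the paper's proof: Schnol reduction, showing $\psi\neq 0$ forces singular blocks to cluster near the origin at all scales, then combining the Center Theorem, the Diophantine condition on $\omega$, and the arithmetic condition $\theta\notin\Theta$ to exclude singular blocks in an annulus around $x$, and finally applying the off-diagonal Green's function decay. The translation step you include to arrange $\psi(0)\neq 0$ is unnecessary (the paper simply derives $\psi\equiv 0$ as the contradiction), and the bookkeeping you flag at the end about matching the origin to the hierarchical block structure is precisely what the paper's Appendix C is for, so your anticipated obstacle is real but resolvable.
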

\begin{rem}
We prove the first arithmetic version of Anderson localization for QP Schr\"odinger operators on $\Z^d$ with $C^2$ regular potentials.  The reducibility type method seems invalid  in our  case of both higher lattice dimensions and $C^2$ regular potential. Our result can be easily extended to the exponential long-range  QP operators. 
\end{rem}

We then state  our dynamical localization result.

\begin{thm}\label{t1}	
Let $H(\theta)$ be given by \eqref{model} and let $\omega\in \operatorname{DC}_{\tau, \gamma}$. Then there exists some $\varepsilon_0=\varepsilon_0(v, d,\tau, \gamma)>0$ such that, for all $0<\varepsilon\leq\varepsilon_0$, the following statement holds true.  Denote for $A>0,$  
	\begin{align}\label{D1}
		\Theta_A=\left\{\theta \in \mathbb{T}:\  \left\| 2 \theta+x\cdot \omega \right\|>\frac{A}{\|x\|_1^{d+1}} \text{  {\rm for} $ x\in \mathbb{Z}^d \setminus\{0\}$} \right\}.
	\end{align}
	Then for any $A>0$, $\theta\in \Theta_A$ and  $q>0$, we have 
	\begin{align}
		\nonumber&\ \ \ \sup_{t\in \mathbb{R}}\sum_{x\in \mathbb{Z}^d}(1+\|x\|_1)^q|\langle e^{itH(\theta)}{\bm e}_0, {\bm e}_x\rangle|\\
		\label{jielun} &\leq C({q,d})\max\left(|\log \min(A,1)|^{12(q+2d)},|\log\varepsilon_0|^{12(q+2d)}\right),
	\end{align}
	where $\{\bm e_x\}_{x\in\Z^d}$ denotes the standard basis of $\ell^2(\Z^d)$ and $C({q,d})>0$ depends only on $q, d$.
	Moreover,  we have
	\begin{align*}
		\int_\mathbb{T}\sup_{t\in \mathbb{R}}\sum_{x\in \mathbb{Z}^d}(1+\|x\|_1)^q|\langle e^{itH(\theta)}{\bm e}_0, {\bm e}_x\rangle|d\theta<+\infty.
		\end{align*}
	\end{thm}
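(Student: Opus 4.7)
The plan is to deduce dynamical localization from the arithmetic Anderson localization of Theorem \ref{AL} together with the quantitative Green's function estimates and the Center Theorem of Theorem \ref{mainthm}, following the standard SULE (semi-uniformly localized eigenfunctions) philosophy of del~Rio--Jitomirskaya--Last--Simon. First I would revisit the proof of Theorem \ref{AL} in order to extract, for every $\theta\in\Theta_A$, a complete orthonormal basis $\{\psi_n\}$ of eigenfunctions of $H(\theta)$ whose centers $\{c_n\}\subset\Z^d$ are naturally provided by the MSA construction and which satisfy the quantitative bound $|\psi_n(x)|\le C(\theta)e^{-\gamma_\infty\|x-c_n\|_1}$ for $\|x-c_n\|_1$ larger than a logarithmic threshold, with $\gamma_\infty\ge|\log\varepsilon|/4$.

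Next I would use the Center Theorem together with the Diophantine conditions on $\omega$ and $\theta$ to derive uniform separation of the centers. If two eigenvalues $E_n,E_m$ lie within $\delta$ of a common reference energy, then $c_n,c_m$ both belong to some $Q_N$ with $\delta_N\sim\delta$, hence
\[
\min(\|(c_n-c_m)\cdot\omega\|,\|2\theta+(c_n+c_m)\cdot\omega\|)\le 2\delta_N^{1/2}.
\]
The condition $\omega\in\operatorname{DC}_{\tau,\gamma}$ rules out the first alternative unless $\|c_n-c_m\|_1\gtrsim(\gamma/\delta^{1/2})^{1/\tau}$, while $\theta\in\Theta_A$ rules out the second unless $\|c_n+c_m\|_1\gtrsim(A/\delta^{1/2})^{1/(d+1)}$. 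In both cases $\max(\|c_n\|_1,\|c_m\|_1)\gtrsim(A/\delta^{1/2})^{1/(d+1)}$, so the number of centers in any ball of radius $R$ is bounded by $C(d)|\log(R/A)|^{c(d)}$ for a suitable $c(d)$.

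Third, expanding in the eigenbasis gives
\[
\langle e^{itH(\theta)}\bm e_0,\bm e_x\rangle=\sum_n e^{-itE_n}\overline{\psi_n(0)}\psi_n(x),
\]
so
\[
\sup_{t\in\R}\sum_{x\in\Z^d}(1+\|x\|_1)^q|\langle e^{itH(\theta)}\bm e_0,\bm e_x\rangle|\le \sum_n|\psi_n(0)|\sum_{x\in\Z^d}(1+\|x\|_1)^q|\psi_n(x)|.
\]
The SULE decay estimates the inner sum by $C(q,d)(1+\|c_n\|_1)^q$ and $|\psi_n(0)|\le C e^{-\gamma_\infty\|c_n\|_1/2}$, reducing the task to controlling $\sum_n e^{-\gamma_\infty\|c_n\|_1/2}(1+\|c_n\|_1)^q$. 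Plugging in the center-counting bound and optimizing in the radius produces the desired polylogarithmic factor $C(q,d)\max(|\log\min(A,1)|^{12(q+2d)},|\log\varepsilon_0|^{12(q+2d)})$, where the second term takes over in the regime where $A$ is not small compared to the base perturbation scale.

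Finally, the integrated estimate follows by a dyadic decomposition $\mathbb T\setminus\Theta=\bigcup_{k}(\Theta_{2^{-k}}\setminus\Theta_{2^{-(k+1)}})$: on the $k$-th slice the pointwise bound is $O(k^{12(q+2d)})$, while the measure of $\mathbb T\setminus\Theta_A$ is $O(A)$ by the union bound $\sum_{x\ne0}\|x\|_1^{-d-1}<\infty$, so $\int_\T\cdots d\theta\lesssim\sum_k 2^{-k}k^{12(q+2d)}<\infty$. The main obstacle will be the second step: organizing the MSA output into an unambiguous assignment of a single center $c_n$ to each eigenfunction and proving the quantitative center-spacing estimate. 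In the low $C^2$ setting the level-crossing phenomenon already flagged in Theorem \ref{mainthm} prevents us from identifying centers at the level of a single resonant block, as was done in \cite{CSZ22}, so the argument must be carried out through the full pairwise Center Theorem and an inductive re-examination of the singular set hierarchy.
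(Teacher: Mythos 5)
Your overall skeleton (expand in the eigenbasis, bound the dynamical sum by $\sum_\alpha|\varphi_\alpha(0)|\sum_x(1+\|x\|_1)^q|\varphi_\alpha(x)|$, use decay plus a counting bound, then dyadically decompose $\T\setminus\Theta$ for the integrated estimate) matches the paper, and your argument for $|\T\setminus\Theta_A|=O(A)$ is correct. But the route you take through steps one and two is genuinely different from the paper's, and the step you flag as the ``main obstacle'' is in fact a real gap that the paper deliberately does \emph{not} cross.

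You propose a SULE-style argument: assign to each eigenfunction $\psi_n$ a center $c_n$, prove $|\psi_n(x)|\lesssim e^{-\gamma_\infty\|x-c_n\|_1}$ past a logarithmic threshold, use the Center Theorem to get separation of centers, and count. Two things go wrong. First, the paper never proves a SULE estimate; the exponential decay in the proof of Theorem \ref{AL} (estimate \eqref{shuai}) is decay \emph{away from the origin}, extracted from the absence of singular blocks in annuli around the origin, not decay away from an eigenfunction-specific center. Second, and more structurally, in the $C^2$ level-crossing regime (Propositions \ref{55}, \ref{325}, \ref{325n}) the localization profile of an eigenfunction is a two-bump function concentrated near a \emph{pair} $\{c_m^i,\tilde c_m^i\}$; any single-center assignment is ambiguous, the two eigenfunctions associated with a resonant pair share the same ``center,'' and the Center Theorem is a pairwise statement about singular blocks at a fixed energy scale, not a statement about eigenfunction centers across all energies. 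Your proposed counting bound ``at most $C(d)|\log(R/A)|^{c(d)}$ centers in any ball of radius $R$'' therefore cannot be read off from the Center Theorem as stated: the Center Theorem controls pairs with nearby eigenvalues at scale $\delta_N$, not the total number of eigenfunction centers in a ball.

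The paper sidesteps the center problem entirely by partitioning the eigenstates according to the size of their value \emph{at the origin}: $I_j=\{\alpha:|\varphi_\alpha(0)|>e^{-\gamma_0 l_j}\}$. For $\alpha\in I_j$, a good-box argument forces the singular hierarchy to cluster in $\Lambda_{100 l_n}$ for all $n\ge j$ (claim \eqref{jiao}), and then the arithmetic condition $\theta\in\Theta_A$ together with the Center Theorem rules out a second singular block in the annuli $\Lambda_{4l_{n+2}}\setminus\Lambda_{l_{n+1}}$ once $n\ge\max(m,j)$, where $m$ is defined by $\delta_m^{1/4}<A\le\delta_{m-1}^{1/4}$ (estimate \eqref{yuansu}). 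This yields $|\varphi_\alpha(x)|\le e^{-\gamma_0\|x\|_1/20}$ for $\|x\|_1\ge\max(2l_{m+1},2l_{j+1})$ without ever identifying a center. The counting of $\#I_j\le C(d)l_{\max(j,m)+1}^d$ then follows from the elementary Hilbert--Schmidt bound $\sum_{\|x\|\le R}\sum_\alpha|\varphi_\alpha(x)|^2\le C(d)R^d$, again without a separation theorem. This is both simpler and exactly what produces the explicit exponent $12(q+2d)$ via $l_{m+1}\le l_{m-1}^8=|\log\delta_{m-1}|^{12}$ and $|\log\delta_{m-1}|\le 4|\log A|$. To repair your version you would need to prove a quantitative SULE statement together with a center-separation bound in the presence of level crossing; that is a substantial and, given the paper's alternative, unnecessary detour.
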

\begin{rem}
We note that $\Theta_A\subset \T\setminus\Theta$ for all $A>0$, where $\Theta$ is defined in Theorem \ref{AL}. Our result gives the arithmetic description on $\theta$ at which the dynamical localization holds true. For recent progress on dynamical localization for  the multidimensional QP operators assuming Diophantine frequencies, we refer to \cite{GYZ19}. 
\end{rem}

\subsubsection{H\"older continuity of the IDS}
In this part, we introduce our result concerning regularity of  the IDS. 

For a finite set $\Lambda$, denote by $\#\Lambda$ the cardinality of $\Lambda.$ Let
$$\mathcal{N}_{\Lambda}(E;\theta)=\frac{1}{\#\Lambda}\#\{\lambda\in\sigma({H_\Lambda(\theta)}):\  \lambda\leq E\}$$
and denote by
\begin{align}\label{ids}
	\mathcal{N}(E)=\lim_{N\to\infty}\mathcal{N}_{\Lambda_N}(E;\theta)
\end{align}
the IDS, where $\Lambda_N=\{x\in\Z^d:\ \|x\|_1\leq N\}$ for $N>0$. It is well-known that the limit in  \eqref{ids} exists and is independent of $\theta$  for a.e. $\theta$.

\begin{thm}\label{thm2}
	Let $H(\theta)$ be given by \eqref{model} and let $\omega\in \operatorname{DC}_{\tau, \gamma}$. Then there exists some $\varepsilon_0=\varepsilon_0(v, d,\tau,\gamma)>0$ such that,  for all  $\eta>0$ and for sufficiently large $N$ (depending on $\eta$), we have
	\begin{align}
		\nonumber&\ \ \ \sup_{\theta^*\in\T, E^*\in\R}\left(\mathcal{N}_{\Lambda_N}(E^*+\eta;\theta^*)-\mathcal{N}_{\Lambda_N}(E^*-\eta;\theta^*)\right)\\
		\label{hold}&\leq C(d)\eta^{\frac{1}{2}}\max(1,|\log\eta|^{8d}),
	\end{align}	
	where $C(d)>0$ depends only on $d$. In particular, the IDS is  $(\frac12-)$-H\"older continuous, i.e., for all $\eta>0,$
	$$\mathcal{N}(E+\eta)-\mathcal{N}(E-\eta)\leq C(d)\eta^{\frac{1}{2}}\max(1,|\log\eta|^{8d}).$$
\end{thm}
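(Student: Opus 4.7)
The plan is to deduce Theorem \ref{thm2} from Theorem \ref{mainthm} via an argument in the spirit of Bourgain \cite{Bou00}: convert the quantitative Green's function estimates and the Center Theorem into a lattice-point-counting bound on the number of eigenvalues in a short interval. Fix $\theta^*\in\T$, $E^*\in\R$ and $\eta>0$, and choose the smallest MSA scale $n$ for which $\delta_n=e^{-l_n^{2/3}}\leq \eta$; then $l_n\leq C|\log\eta|^{3/2}$ and $|\log\delta_n|\leq C|\log\eta|$. Applying Theorem \ref{mainthm} at $(\theta^*,E^*)$ yields the family of blocks $\{B_m^i\}$ and the singular-center sets $Q_m$ for $m\leq n$. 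Since
\[
\#\Lambda_N\bigl(\mathcal{N}_{\Lambda_N}(E^*+\eta;\theta^*)-\mathcal{N}_{\Lambda_N}(E^*-\eta;\theta^*)\bigr)\leq\mathrm{rank}\,P_I\bigl(H_{\Lambda_N}(\theta^*)\bigr)
\]
with $I=[E^*-\eta,E^*+\eta]$, it suffices to bound this rank uniformly in $(\theta^*,E^*)$.

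The first step is a spectral reduction: any eigenfunction of $H_{\Lambda_N}(\theta^*)$ with eigenvalue in $I$ must be essentially supported on the union of the level-$n$ singular blocks $\{B_n^i:c_n^i\in Q_n\cap\Lambda_N\}$. Indeed, on any regular block $B_m^i\subset\Lambda_N$ ($m\leq n$), the bound $\|G_{B_m^i}\|\leq 10\delta_m^{-1}$ together with the exponential off-diagonal decay in Theorem \ref{mainthm} forces the eigenfunction at interior points to be controlled by its boundary values, hence by its values on singular blocks. Orthogonality of eigenvectors combined with a standard Weyl-type rank counting then yields
\[
\mathrm{rank}\,P_I\bigl(H_{\Lambda_N}(\theta^*)\bigr)\leq C(d)\,l_n^d\,\#(Q_n\cap\Lambda_N).
\]

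The second step uses the Center Theorem. Pick any $c_*\in Q_n\cap\Lambda_N$ (if empty, we are done); for every other $c\in Q_n\cap\Lambda_N$, either $\|(c-c_*)\cdot\omega\|\leq 2\delta_n^{1/2}$ or $\|2\theta^*+(c+c_*)\cdot\omega\|\leq 2\delta_n^{1/2}$, so
\[
\#(Q_n\cap\Lambda_N)\leq 1+\#X_1+\#X_2,
\]
where $X_1=\{x\in\Lambda_{2N}:\|x\cdot\omega\|\leq 2\delta_n^{1/2}\}$ and $X_2=\{y\in\Lambda_{2N}:\|2\theta^*+y\cdot\omega\|\leq 2\delta_n^{1/2}\}$. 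An Erd\H{o}s--Tur\'an/Weyl equidistribution estimate for $\omega\in\mathrm{DC}_{\tau,\gamma}$ bounds $\#X_1,\#X_2\leq C(d,\tau,\gamma)\,N^d\delta_n^{1/2}|\log\delta_n|^{K}$ for some $K=K(d,\tau)$; the $O(1)$ term is absorbed once $N$ is taken large in terms of $\eta$.

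Combining the two steps and dividing by $\#\Lambda_N\sim N^d$ produces a bound of the form $C(d)\eta^{1/2}|\log\eta|^{3d/2+K}$, which is absorbed into the target exponent $|\log\eta|^{8d}$; the infinite-volume $(\tfrac12-)$-Hölder continuity of the IDS then follows from the finite-volume estimate \eqref{hold} by letting $N\to\infty$ and using the existence of the limit in \eqref{ids}. I expect the main obstacle to be the spectral reduction step: the scale-by-scale regular/singular decomposition of $\Lambda_N$ produced by the MSA must be converted into a clean rank inequality, taking care of boundary effects (blocks straddling $\partial\Lambda_N$ and points of $\Lambda_N$ not covered by any regular block inside $\Lambda_N$); the lattice counting is more standard, but it relies on the Diophantine condition in an essential way to produce the logarithmic power.
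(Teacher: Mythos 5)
Your overall strategy — converting Green's function estimates plus the Center Theorem into an eigenvalue count, and then using equidistribution to control the lattice-point count — is essentially the route the paper takes, and both steps you flag as the potential obstacles are indeed the ones the paper has to address carefully. Two remarks, one on each step.

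On the spectral reduction: the paper packages it as an abstract Lemma (Lemma \ref{IDSL}): if one can delete a set of at most $M$ points from $\Lambda$ to obtain a $\Lambda'$ with $\|G_{\Lambda'}(E^*)\|\leq(2\eta)^{-1}$, then $H_\Lambda$ has at most $3M$ eigenvalues in $[E^*-\eta,E^*+\eta]$. The proof is short: write the eigenvalue equation, invert on $\Lambda'$, and observe that the ``bad'' part lives in a subspace of dimension $\leq M$; a trace/orthogonality argument then gives the $3M$ bound. Your phrasing ``orthogonality + standard Weyl-type counting'' is the right instinct, but you should isolate it into an explicit finite-dimensional lemma of precisely this kind, since naively bounding the rank by ``$C(d)l_n^d\,\#(Q_n\cap\Lambda_N)$'' requires you to control how concentrated the eigenfunctions are and how much they leak off the singular blocks; the projection trick in Lemma \ref{IDSL} is the clean way to do this, and it automatically accommodates the boundary corrections (replace $\Lambda_N$ by a slightly enlarged $\tilde\Lambda$ adapted to the block hierarchy; the extra $O(l_{n+1}N^{d-1})$ points only contribute $l_{n+1}/N$ to $\mathcal{N}_{\Lambda_N}$, which vanishes as $N\to\infty$).

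On the arithmetic: your claim $l_n\leq C|\log\eta|^{3/2}$ is too optimistic and leads to the wrong final exponent. If $n$ is the smallest scale with $\delta_n\leq\eta$, you only know $l_{n-1}<|\log\eta|^{3/2}$, and since $l_n=l_{n-1}^2$ or $l_{n-1}^4$ you get $l_n\leq|\log\eta|^{6}$ (and $l_{n+1}\leq|\log\eta|^{8}$, which is how the paper arrives at the $|\log\eta|^{8d}$ factor). Your ``$3d/2+K$'' does not follow; after fixing this you land on the stated $8d$ exponent. The lattice-counting step via Erd\H{o}s--Tur\'an/Weyl under the Diophantine condition is fine and parallels the paper's use of equidistribution; the $O(1)$ additive term is absorbed by taking $N$ large in terms of $\eta$, consistent with the finite-volume formulation of the theorem.
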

\begin{rem}
Indeed, we obtain the  quantitative estimate on the regularity of the IDS beyond the $(\frac12-)$-one. Our result also improves the upper bound  on the number of eigenvalues  of Schlag (cf. Proposition 2.2 of \cite{Sch01})  in the special case that the potential is given by the $C^2$-cosine like function.  In our case, since the Aubry duality method might not work, it is unclear wether or not the optimal $\frac12$-H\"older continuity  of the IDS for our model remains true. 
\end{rem}
 
The study of the regularity of the IDS for QP operators  has  attracted great attention over the years.  In \cite{GS01}, Goldstein-Schlag first proved the H\"older continuity of the IDS  for general $1D$ and one-frequency analytic QP Schr\"odinger operators in the regime of positive Lyapunov exponent, but provided no explicit information on the H\"older exponent. 
 In \cite{Bou00}, Bourgain developed a method  based on Green's function estimates to obtain the first  finite volume version of  $(\frac12-)$-H\"older continuity of the IDS for AMO in the perturbative regime. In 2009,  by using KAM reducibility method of Eliasson \cite{Eli92}, Amor \cite{Amo09} obtained the first  $\frac12$-H\"older continuity result of the IDS for $1D$ and multi-frequency QP Schr\"odinger operators with small analytic potentials and Diophantine frequencies.  Later,  the one-frequency result of Amor was essentially  generalized by Avila-Jitomirskaya \cite{AJ10} to the non-perturbative case via the quantitative almost reducibility and localization method.  In \cite{GS08} and in the regime of positive Laypunov exponent,   Goldstein-Schlag  proved the $(\frac{1}{2m}-)$-H\"older continuity of the IDS for $1D$ and one-frequency QP Schr\"odinger operators with  potentials given by analytic perturbations of certain trigonometric polynomials of degree $m\ge1$. This work provides in fact the  finite volume version of  estimates on the IDS.  We  remark that the  H\"older continuity of the IDS for  $1D$ and multi-frequency QP Schr\"odinger operators with large general  potentials is hard to prove. In \cite{GS01}, Goldstein-Schlag obtained the weak H\"older continuity \footnote{i.e, the  estimate  \begin{align}\label{weakids}|\mathcal{N}(E)-\mathcal{N}(E')|\leq e^{-\left(\log\frac{1}{|E-E'|}\right)^\zeta},\ \zeta\in(0,1).\end{align}}
 of  the IDS for $1D$ and multi-frequency QP Schr\"odinger operators assuming the positivity of the Lyapunov exponent and strong Diophantine frequencies. The weak H\"older continuity of the IDS for the multidimensional  QP Schr\"odinger operators  has been  established  by  Schlag \cite{Sch01}, Bourgain \cite{Bou07} and Liu \cite{Liu20}. Ge-You-Zhao \cite{GYZ22} proved the  $(\frac{1}{2m}-)$-H\"older continuity of  the  IDS for the  multidimensional QP  Schr\"odinger operators with  small exponential long-range hopping and trigonometric polynomial (of degree $m$) potentials via the reducibility argument.  By Aubry duality, they can obtain the  $(\frac{1}{2m}-)$-H\"older continuity of the IDS for $1D$ and multi-frequency QP operators with a  finite range hopping.   Recently, the work \cite{XGW} established the $\frac12$-H\"older continuity of the IDS for some $1D$ quasi-periodic Schr\"odinger operator with cosine like potential. Very recently,   the authors \cite{CSZ22} proved the finite volume version of $(\frac12-)$-H\"older continuity of the IDS for QP Schr\"odinger operators on $\Z^d$ with the cosine potential.  In the present, we extend the work \cite{CSZ22} to the case of $C^2$-regular potentials.

\subsection{The strategy of the proof and comparison with previous works}
The key ingredient of our proof is the quantitative Green's function estimates. Once such estimates were obtained, the proof of both the arithmetic version of localization and the finite volume version of the $(\frac12-)$-H\"older continuity of the IDS just follows in a standard way.  To deal with Green's function estimates, we will apply  the MSA type method of Fr\"ohlich-Spencer-Wittwer \cite{FSW90}.  However,  in higher lattice dimensions case, there comes essential difficulties not appeared in \cite{FSW90}. This definitely requires  a   proof with new ideas,  which will be explained below. 
\subsubsection{The level crossing issue}
 The first issue is about the level crossing of eigenvalues parameterizations  in the present case.  More precisely, by the definition of the singular site of the $n$-th step, for $c_n^i\in Q_n$,  there is some $E_n^i(\theta^*)$ so that 
\begin{align}\label{rescd}
{\rm dist}(\sigma(H_{B_n^i}(\theta^*)), E_n^i(\theta^*))\leq \delta_n,
\end{align}
where $B_n^i$ is a resonant block centered at $c_n^i.$   We assume further 
\begin{align}\label{sncd}
s_n=\inf_{c_n^i\neq c_n^j\in Q_n}\|c_n^i-c_n^j\|_1\geq 10 l_n^2.
\end{align} 
Our main goal here is to establish {\bf Center Theorem} at the $(n+1)$-th step.  From \eqref{sncd}, we can define the $(n+1)$-th generation of resonant blocks $\{B_{n+1}^i\}_{c_{n+1}^i\in Q_{n+1}}$ with  ${\rm diam}(B_{n+1}^i)= l_{n+1}\sim l_n^2$ and $c_{n+1}^i=c_n^i.$  By \eqref{rescd}, we can distinguish two cases.\smallskip\\
{\bf Case 1. }  ${\rm dist}(\sigma(H_{B_n^i}(\theta^*))\setminus\{E_n^i(\theta^*)\}, E_n^i(\theta^*))> \delta_n.$ This case is similar to that in \cite{FSW90} without level crossing.  Precisely, in this case, we can show that for every $\theta\in(\theta^*-\delta_n/(10M_1), \theta^*+\delta_n/(10M_1))$,  $H_{B_{n+1}^i}(\theta)$ has a unique eigenvalue $E_{n+1}^i(\theta)$ so that $|E_{n+1}^i(\theta)-E^*|<\delta_n/9,$  where the function $E_{n+1}(\theta)$ is called an eigenvalue parameterization.  Moreover, we can prove  the lower bound $|\frac{d^2E_{n+1}(\theta) }{d\theta^2}|\geq 2$ when $|\frac{d E_{n+1}(\theta) }{d\theta}|$ is small. This combined with the uniqueness of $E_{n+1}^i(\theta)$,  the evenness of $v$ and  
the symmetrical  property of $B_{n+1}^i$ leads to a proof of the {\bf Center Theorem}, i.e., $m(c_{n+1}^i, c_{n+1}^j)\leq 2\delta_{n+1}^{\frac12}.$ In this case,  our proof is similar to that in \cite{FSW90} and contains no essential new ideas. \\
{\bf Case 2. }  ${\rm dist}(\sigma(H_{B_n^i}(\theta^*))\setminus\{E_n^i(\theta^*)\}, E_n^i(\theta^*))\leq  \delta_n.$  This case is not appeared in \cite{FSW90}, since there is no  priori lower bound on differences    of eigenvalues  (cf. Lemma 4.1 in \cite{FSW90}) of $H_{B_{n+1}^i}(\theta^*)$ for $d\geq2$. This situation has also been encountered by Surace \cite{Sur90} in the study of the localization for 
$$\tilde H(K)=\varepsilon \Delta+(K+x_1+x_2\alpha)^2\delta_{x,y},\ K\in\R, \ x=(x_1,x_2)\in\Z^2.$$
Relying on some ideas of Surace \cite{Sur90}, we can show in this case the following: For $\theta\in(\theta^*-10\delta_n^{\frac12}, \theta^*+10\delta_n^{\frac12}),$ there are exactly  two eigenvalues $E_{n+1}^i(\theta)$ and $\mathcal{E}_{n+1}^i(\theta)$ in the energy interval $(E^*-50M_1\delta_n^{\frac12}, E^*+50M_1\delta_n^{\frac12}).$ Then it is inevitable that there may be some $\theta_1 \in(\theta^*-10\delta_n^{\frac12}, \theta^*+10\delta_n^{\frac12})$ with ${E}_{n+1}^i(\theta_1)=\mathcal{E}_{n+1}^i(\theta_1)$, namely,  the level crossing appears.  Fortunately,  we can show the number of  level crossing points   in $(\theta^*-10\delta_n^{\frac12}, \theta^*+10\delta_n^{\frac12})$ is at most $1$ and $\theta_1=\theta_{n+1}^i:=-c_n^i\cdot\omega+\mu_n\mod 1$ ($\mu_n=0$ or $\mu_n=1/2$) whenever $\theta_1$ is a level crossing point.  In addition, if  ${E}_{n+1}^i(\theta_{n+1}^i)\neq \mathcal{E}_{n+1}^i(\theta_{n+1}^i)$, then ${E}_{n+1}^i(\theta)\neq \mathcal{E}_{n+1}^i(\theta)$ for all $\theta\in (\theta^*-10\delta_n^{\frac12}, \theta^*+10\delta_n^{\frac12})$,  and this case reduces to that in \cite{FSW90}.   So, the remaining case is  ${E}_{n+1}^i(\theta_{n+1}^i)=\mathcal{E}_{n+1}^i(\theta_{n+1}^i).$  
For this similar case  in Surace \cite{Sur90}, since $\tilde H(K)$ is analytic in $K$, the analytic version of the Rellich's  theorem  (cf. \cite{Kat95}) can ensure that both $E_{n+1}^i(K)$ and $\mathcal{E}_{n+1}^i(K)$ are analytic in $K$ even though the level crossing occurs. More importantly, the corresponding normalized eigenfunctions associated with $E_{n+1}^i(K)$ and $\mathcal{E}_{n+1}^i(K)$ can also be analytic in $K$.  Based on these analyticity properties, Surace \cite{Sur90} showed by taking derivatives on eigenvalues and eigenfunctions  that both $|\frac{d E_{n+1}^i(K) }{dK}|$ and $|\frac{d \mathcal{E}_{n+1}^i(K) }{dK}|$ have good lower bounds. Then the {\bf Center Theorem} follows.  \textit{Obviously, the method of Surace \cite{Sur90} relies essentially on the smoothness of both eigenvalues and eigenfunctions parameterizations in dealing with the level crossing issue}. Returning to our case, since we have only the $C^2$ regularity of $H(\theta)$ in $\theta$, the level crossing  in this case will destroy the smoothness  of eigenfunctions parameterizations.  To overcome this difficulty, we first employ a more deeper theorem (cf. \cite{Rel69} and also Theorem 6.8 of  \cite{Kat95}) of Rellich, i.e., the $C^1$ version of eigenvalues parameterizations. This remarkable theorem  suggests that one can always  ensure the $C^1$ smoothness (in $\theta$) of ${E}_{n+1}^i(\theta)$ and $\mathcal{E}_{n+1}^i(\theta)$ assuming $H(\theta)$ being $C^1$ (in $\theta$) in some interval.  
Then we  introduce  a  theorem of Kato (cf. Theorem 5.4 in \cite{Kat95}) that can provide  the first order derivatives representations of the $C^1$ eigenvalues parameterizations at some fixed point involving $\frac{ dH(\theta)}{d\theta}$,  but without knowing any smoothness information on the eigenfunctions.   After introducing these two celebrated theorems, we can handle the level crossing issue in the present case. 

\subsubsection{The geometric descriptions of the resonant blocks}
The geometric properties of the resonant blocks $B_{n+1}^i$ play an essential role  in both  the eigenvalues parameterizations analysis and Green's function estimates applying the resolvent identity.  Particularly, we will require $B_{n+1}^i$ to satisfy the following conditions: (i) For any $m\leq n$, if $B_m^j\cap B_{n+1}^i\neq \emptyset$, then $B_m^j\subset B_{n+1}^i$; (ii) Each $B_{n+1}^i$ is translation invariant, i.e., $B_{n+1}^i-c_{n+1}^i$ is independent of $i$; (iii) Each $B_{n+1}^i$ is symmetric about its center $c_{n+1}^i$, i.e., $x\in B_{n+1}^i$ iff $2c_{n+1}^i-x\in B_{n+1}^i.$ In the $1D$ case, the geometric shape of $B_{n+1}^i$ is simple and is given by the  interval. However, in the higher dimensions, the geometric shape of $B_{n+1}^i$ becomes significantly complicated and the interval structure is missing. In fact, it is highly nontrivial to construct $B_{n+1}^i$ satisfying all the properties (i)--(iii) in higher lattice dimensions. While such issue was also appeared in \cite{Sur90}, the author just outlined a possible way  of  achieving the desired constructions, which definitely  restricts to  the  $\Z^2$ lattice.  In the present, we completely resolve this issue by using ideas originated from \cite{CSZ22}. 



\subsection{Organization of the paper}
The paper is organized as follows. Some basic properties on the potentials are introduced in \S \ref{sec1}. The center part of this paper, namely, the Green's function estimates are  presented in \S \ref{sec2}. In \S \ref{sec3}--\S \ref{sec5}, we finish the proof of Theorem \ref{AL}, \ref{t1}. \ref{thm2}, respectively. Some important facts  are collected in the Appendixes. 

\section{Preliminaries}\label{sec1}
In this section, we will introduce some useful lemmas concerning the properties of the potential $v(\theta)$ and $C^1$ eigenvalue variations.  

\begin{lem}[$C^2$-smoothness without the level crossing, \cite{Kat95}]\label{neqs}
	Let $\Lambda$ be a finite set. Assume that $\tilde{E}$ is a simple eigenvalue of $ H_\Lambda(\theta^*)$.  Then there exist a small interval $I$ including $\theta^*$ and a $C^2$ function $E(\theta)$ satisfying {\rm (1)} $E(\theta^*)=\tilde{E}$;  {\rm (2)} For $\theta\in I$, $E(\theta)$ is the unique eigenvalue of $ H_\Lambda(\theta)$  near  $\tilde{E}$. Moreover, the corresponding normalized eigenfunction   $\psi(\theta)$ is also  $C^2$ regular.  
\end{lem}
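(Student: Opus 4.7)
The plan is to invoke the Riesz projection formalism from Kato's analytic perturbation theory, which gives the standard proof of such $C^k$-smoothness results for isolated simple eigenvalues of finite matrices. Since $\Lambda$ is finite and $v\in C^2(\T;\R)$, the matrix-valued map $\theta\mapsto H_\Lambda(\theta)$ is $C^2$. Because $\tilde E$ is simple, I would first choose $r>0$ so small that the circle $\Gamma=\{z\in\C:\,|z-\tilde E|=r\}$ encloses $\tilde E$ and no other eigenvalue of $H_\Lambda(\theta^*)$. Continuity of the roots of the characteristic polynomial in its coefficients then gives an open interval $I\ni\theta^*$ such that no eigenvalue of $H_\Lambda(\theta)$ lies on $\Gamma$ for any $\theta\in I$.

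On $I$ I would define the Riesz projection
\[
P(\theta)=-\frac{1}{2\pi i}\oint_\Gamma (H_\Lambda(\theta)-z)^{-1}\,dz.
\]
The resolvent $(H_\Lambda(\theta)-z)^{-1}$ is $C^2$ in $\theta$ jointly with $z\in\Gamma$ (its entries are rational in $z$ with denominators uniformly bounded away from zero on $\Gamma$, and $C^2$ in $\theta$), so differentiating under the integral yields $P(\cdot)\in C^2(I)$ as a matrix-valued function. The integer-valued function $\operatorname{tr}P(\theta)$ counts the eigenvalues of $H_\Lambda(\theta)$ inside $\Gamma$; it equals $1$ at $\theta^*$, and being continuous and integer-valued, it stays equal to $1$ on a possibly smaller $I$. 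Hence the range of $P(\theta)$ is one-dimensional and is exactly the eigenspace of the unique eigenvalue $E(\theta)$ of $H_\Lambda(\theta)$ lying inside $\Gamma$.

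To produce a $C^2$ normalized eigenfunction, I would set $\tilde\psi(\theta)=P(\theta)\psi(\theta^*)$, where $\psi(\theta^*)$ is a fixed unit eigenvector at $\theta^*$. This is $C^2$ on $I$ with $\tilde\psi(\theta^*)=\psi(\theta^*)\neq 0$, so after shrinking $I$ once more $\tilde\psi(\theta)$ stays non-zero, and $\psi(\theta)=\tilde\psi(\theta)/\|\tilde\psi(\theta)\|$ is then a $C^2$ unit eigenvector. Finally, $E(\theta)=\langle\psi(\theta),H_\Lambda(\theta)\psi(\theta)\rangle$ is $C^2$ with $E(\theta^*)=\tilde E$ and is characterized as the unique eigenvalue of $H_\Lambda(\theta)$ inside $\Gamma$, i.e.\ near $\tilde E$. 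The only delicate point in the argument is verifying that $\operatorname{rank}P(\theta)$ is locally constant, but this follows immediately from continuity and integer-valuedness of $\operatorname{tr}P(\theta)$; beyond this the statement is a textbook application of Kato perturbation theory for finite matrices, so no essential obstacle arises.
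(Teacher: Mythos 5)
Your proposal is correct and is essentially the same Riesz-projection argument the paper uses: both construct $\psi(\theta)$ by normalizing $P(\theta)\psi(\theta^*)$ with $P(\theta)$ a contour integral of the resolvent. The only minor variation is that the paper obtains the $C^2$-smoothness of $E(\theta)$ directly from the implicit function theorem applied to $f(E,\theta)=\det(E-H_\Lambda(\theta))$ (using $\partial f/\partial E(\tilde E,\theta^*)\neq 0$ since $\tilde E$ is simple), whereas you derive it a posteriori from $E(\theta)=\langle\psi(\theta),H_\Lambda(\theta)\psi(\theta)\rangle$; both routes are standard and equivalent here.
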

\begin{proof}
	Note that $f(E, \theta)=\det(E-H_\Lambda(\theta))$ is  a polynomial of $E$ whose  coefficients are  $C^2$ regular in $\theta$. Moreover, $\frac{\partial f}{\partial E}( \tilde{E},\theta^*)\neq0$ since $\tilde{E}$ is simple. The $C^2$ smoothness of $E(\theta)$ follows from the implicit function theorem immediately.  The smoothness of eigenfunction follows from 
	$$\psi(\theta)=\frac{P(\theta)\psi(\theta^*)}{\|P(\theta)\psi(\theta^*)\|},$$
	where $P(\theta)=\int_{\Gamma}(\xi-H_\Lambda(\theta))^{-1}d\xi$ is the $C^2$ projection onto the eigenspace (here $\Gamma$ is a circle enclosing $\tilde{E}$ such that any  other eigenvalues  are outside of $\Gamma$).    
\end{proof}
\begin{rem}
	Since we are working on higher dimensions, the level crossing of eigenvalues parameterizations  may happen. In general, we can not confirm the smoothness of eigenvalues and eigenfunctions parameterizations when $\tilde{E}$ is not a simple eigenvalue.  
\end{rem}

\noindent{\bf Note}. For convenience, we assume that all the eigenfunctions in this paper  are normalized.\smallskip

We then investigate properties of $v(\theta)$ which are important to the proof of {\bf Center Theorem} in the  initial steps. 
\begin{lem}\label{a}For every $\theta_1,\theta_2\in \R$, we have 
\begin{equation}\label{wh}
		|v(\theta_1)-v(\theta_2)|\geq\min(\|\theta_1-\theta_2\|,\|\theta_1+\theta_2\|)^2.
\end{equation}
\end{lem}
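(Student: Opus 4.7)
The plan is to use the $1$-periodicity and evenness of $v$ to reduce \eqref{wh} to a one-variable integral inequality on $[0,1/2]$, which then follows from the standing pointwise bounds on $|v'|$ and $|v''|$ by a short case analysis.

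First I would reduce by symmetries. Both sides of \eqref{wh} depend only on $\theta_1,\theta_2$ modulo $1$, so by periodicity we may take $\theta_1,\theta_2\in[-1/2,1/2]$. Using $v(-\theta)=v(\theta)$, replacing $\theta_j$ by $-\theta_j$ leaves the LHS unchanged while permuting $\|\theta_1-\theta_2\|$ and $\|\theta_1+\theta_2\|$ on the RHS; hence we may further assume $0\leq\theta_2\leq\theta_1\leq 1/2$. In this range $\theta_2\geq 0$ forces $\theta_1+\theta_2\geq\theta_1-\theta_2$ and $\theta_1\leq 1/2$ forces $1-\theta_1-\theta_2\geq\theta_1-\theta_2$, so
\[
\min(\|\theta_1-\theta_2\|,\|\theta_1+\theta_2\|)=\theta_1-\theta_2=:s.
\]

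Second I would rewrite the LHS as an integral. The hypotheses imply that $0$ and $1/2$ are the only critical points of $v$ on $\T$, with $v''<0$ near $0$ and $v''>0$ near $1/2$, so $v$ is strictly monotone on $[0,1/2]$ and
\[
|v(\theta_1)-v(\theta_2)|=\int_{\theta_2}^{\theta_1}|v'(\theta)|\,d\theta.
\]
Integrating $|v''|>3$ outward from the critical points using $v'(0)=v'(1/2)=0$ yields $|v'(\theta)|\geq 3\theta$ on $[0,a]$ and $|v'(\theta)|\geq 3(1/2-\theta)$ on $[1/2-a,1/2]$; the bound $|v'|>3$ on $[a,1/2-a]$ is standing. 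It then remains to verify $\int_{\theta_2}^{\theta_1}|v'(\theta)|\,d\theta\geq s^2$ by cases on the position of $[\theta_2,\theta_1]$. If $[\theta_2,\theta_1]\subset[0,a]$, the quadratic bound gives $\int\geq\tfrac{3}{2}(\theta_1-\theta_2)(\theta_1+\theta_2)\geq\tfrac{3}{2}s^2$; the case $[\theta_2,\theta_1]\subset[1/2-a,1/2]$ is symmetric; and $[\theta_2,\theta_1]\subset[a,1/2-a]$ gives $\int\geq 3s\geq s^2$ since $s\leq 1/2$. In the mixed cases one splits the integral at $a$ and/or $1/2-a$ and sums the contributions; the tightest subcase, $\theta_2\in[0,a]$ and $\theta_1\in[1/2-a,1/2]$, yields $\int\geq 3a^2+\tfrac{3}{2}-6a$, and since $s^2\leq 1/4$ the quadratic inequality $3a^2-6a+5/4\geq 0$ (valid whenever $a<1/10$) closes the estimate.

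The main obstacle is the mixed-case bookkeeping: the pointwise bound on $|v'|$ degenerates quadratically at both critical points, so in cases like the last one above the linear middle contribution must simultaneously absorb the two quadratic end-tails and $s^2$. Getting the constants to match is what forces quantitative use of the smallness hypothesis $a<1/10$; once that numerical check is in place, all remaining cases follow cleanly from the pointwise estimates on $|v'|$.
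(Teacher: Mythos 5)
Your proposal takes essentially the same route as the paper: reduce by evenness and $1$-periodicity to $0\leq\theta_2\leq\theta_1\leq 1/2$ (so the RHS becomes $(\theta_1-\theta_2)^2$), then exploit the standing $|v''|$ and $|v'|$ bounds near and away from the two critical points via a case analysis based on where $[\theta_2,\theta_1]$ sits relative to $a$ and $1/2-a$. The paper organizes this into six cases and works with Taylor/mean-value expansions of $v$ directly, whereas you integrate $|v'|$; these are minor variants of the same argument, and both rely in the ``spanning'' case on the middle stretch $[a,1/2-a]$ alone already producing more than $1/4$.

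One small bookkeeping slip in the mixed case: the claimed bound $\int\geq 3a^2+\tfrac{3}{2}-6a$ is not a valid uniform lower bound over all $\theta_2\in[0,a]$, $\theta_1\in[1/2-a,1/2]$. The two quadratic tail contributions vanish when $\theta_2=a$ and $\theta_1=1/2-a$, so the correct uniform lower bound on the integral in this subcase is only $\tfrac{3}{2}-6a$, not $3a^2+\tfrac{3}{2}-6a$. This does not break your proof --- $\tfrac{3}{2}-6a\geq 1/4\geq s^2$ already holds for $a<5/24$, so in particular for $a<1/10$ --- but the intermediate estimate as written is incorrect, and the inequality ``$3a^2-6a+5/4\geq 0$'' is checking a looser condition than the one the argument actually requires. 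Replacing the claimed bound by $\tfrac{3}{2}-6a$ and checking $\tfrac{3}{2}-6a\geq 1/4$ closes the case cleanly and matches what the paper does.
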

\begin{proof}
	Since $v$ is even and $1$-period, it suffices to consider the case  $\theta_1,\theta_2\in [0,\frac{1}{2}]$. Without loss of generality, we assume $\theta_1<\theta_2$. By our assumption (cf. Footnote \ref{ftnote1}), $v$ is strictly decreasing on $[0,\frac{1}{2}]$ satisfying  $v'(\theta)<-2$ for $\theta\in[a,\frac{1}{2}-a]$ and  $v''(\theta)<-2$ (resp. $>2$) for $\theta\in [0,a]$ (resp. $[\frac{1}{2}-a,\frac{1}{2}]$).\smallskip\\
	{\it Case }1. $0\leq\theta_1<\theta_2\leq a$. We have  in this case $$v(\theta_2)-v(\theta_1)=v'(\theta_1)(\theta_2-\theta_1)+\frac{1}{2}v''(\xi)(\theta_2-\theta_1)^2\leq-(\theta_2-\theta_1)^2.$$\\
	{\it Case }2. $0\leq\theta_1\leq a\leq\theta_2\leq\frac{1}{2}-a$. We have in this case
	\begin{align*}
		v(\theta_2)-v(\theta_1)&=(v(\theta_2)-v(a))+(v(a)-v(\theta_1))\\
		&\leq-2(\theta_2-a)-(a-\theta_1)^2\\&\leq-(\theta_2-\theta_1)^2.
	\end{align*}\\
	{\it Case} 3. $a\leq\theta_1<\theta_2\leq\frac{1}{2}-a$. We have 
	$$	v(\theta_2)-v(\theta_1)=v'(\xi)(\theta_2-\theta_1)\leq-(\theta_2-\theta_1)^2.$$\\
	{\it Case} 4. $\theta_1\leq a<\frac{1}{2}-a\leq\theta_2$. We have 
	$$	v(\theta_2)-v(\theta_1)\leq v(\frac{1}{2}-a)-v(a)\leq-(\frac{1}{2}-2a)<-\frac{1}{4}\leq-(\theta_2-\theta_1)^2.$$\\
	{\it Case} 5. $\frac{1}{2}-a\leq\theta_1<\theta_2\leq\frac{1}{2}$. This case  is similar to {\it Case} 1.\\
	{\it Case} 6. $a\leq\theta_1\leq \frac{1}{2}-a\leq\theta_2\leq\frac{1}{2}$. This case  is similar to {\it Case} 2.
\end{proof}

\begin{lem}\label{'0}
	For any $\theta\in \R$, we have 	$|v'(\theta)|\geq2\min(\|\theta\|,\|\theta-\frac{1}{2}\|)$.\end{lem}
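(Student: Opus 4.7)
The plan is to reduce to $\theta\in[0,1/2]$ using the symmetries of $v$ and then split this interval into three pieces dictated by the constant $a<1/10$ from Footnote~\ref{ftnote1}.  For the reduction, both sides of the desired inequality are invariant under $\theta\mapsto\theta+1$ and $\theta\mapsto -\theta$: the left-hand side because $v'$ is odd and $1$-periodic (as $v$ is even and $1$-periodic), and the right-hand side because $\|\cdot\|$ shares these symmetries.  On $[0,1/2]$ one has $\|\theta\|=\theta$ and $\|\theta-\tfrac12\|=\tfrac12-\theta$, so the right-hand side simplifies to $2\min(\theta,\tfrac12-\theta)$.

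Next I would handle each of the three subintervals $[0,a]$, $[a,\tfrac12-a]$, $[\tfrac12-a,\tfrac12]$ separately.  On $[0,a]$ the minimum equals $\theta$; since $v'(0)=0$ and $|v''|>3$ with constant sign on this interval (by the non-degeneracy of the maximum at $0$), the fundamental theorem of calculus gives $|v'(\theta)|=\bigl|\int_0^\theta v''(s)\,ds\bigr|\geq 3\theta\geq 2\theta$.  The interval $[\tfrac12-a,\tfrac12]$ is treated identically after replacing $0$ by $\tfrac12$ and using $v'(\tfrac12)=0$.  On the middle piece $[a,\tfrac12-a]$, the standing hypothesis directly gives $|v'(\theta)|>3$, whereas $\min(\theta,\tfrac12-\theta)\leq 1/4$, so $2\min(\theta,\tfrac12-\theta)\leq 1/2<3<|v'(\theta)|$.

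There is no substantive obstacle here: the lemma is really a direct quantitative repackaging of the standing assumptions on $v$ in Footnote~\ref{ftnote1}, and the argument runs entirely parallel to the case analysis already carried out for $v$ itself in Lemma~\ref{a}.
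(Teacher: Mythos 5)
Your proof is correct and follows essentially the same argument as the paper's: reduce to $\theta\in[0,\tfrac12]$ by the evenness and periodicity of $v$, then split into $[0,a]$, $[a,\tfrac12-a]$, $[\tfrac12-a,\tfrac12]$ and use the standing bounds $|v''|>3$ near the critical points (where $v'$ vanishes) and $|v'|>3$ on the middle piece. The only cosmetic difference is that you invoke the fundamental theorem of calculus where the paper uses the mean value theorem; the content is identical.
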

\begin{proof}
	It again suffices to consider $\theta\in [0,\frac{1}{2}].$ If $\theta\in [a,\frac{1}{2}-a]$, we have $|v'(\theta)|>2$. If $\theta\in [0,a]$, we have $|v'(\theta)|=|v'(\theta)-v'(0)|\geq |v''(\xi)(\theta-0)|\geq2|\theta|$. Similarly, if $\theta\in [\frac{1}{2}-a,\frac{1}{2}]$, we have $|v'(\theta)|\geq2|\theta-\frac{1}{2}|$.
\end{proof}

\section{ quantitative Green's function estimates}\label{sec2}
In this section, we prove Theorem \ref{mainthm}, i.e., the quantitative Green's function estimates. The proof is based on a MSA type  iteration method of Fr\"ohlich-Spencer-Wittwer \cite{FSW90}.  The $0$-th step of the iteration uses the Neumann series argument and properties of $v$. In the first iteration step, the level crossing issue  has already arisen, and we apply the eigenvalue variations methods of Rellich and Kato to resolve the issue. We want to remark that in the first step,  the   resonant blocks  are simply given by  cubes.  The central part of the proof is definitely the general  iteration steps,  and we design a delicate inductive scheme to handle the level crossing issue. In the general iteration steps, the structure of resonant blocks becomes significantly complicated, since we have to take account of all previous resonant blocks of different sizes.

The following  subsections are devoted to dealing with the $0$-th, $1$-th and general induction steps, respectively.

\subsection{Definition and properties of $Q_0$}
We begin with defining the 0-step singular point 
$$Q_0=\{c_0^i\in \Z^d:\ |v(\theta^*+c_0^i\cdot{\omega})-E^*|<\delta_0\}.$$
Since $\{c_0^i\}$ is a single point, $H_{\{c_0^i\}}(\theta^*)$ has a unique eigenvalue $v(\theta^*+ c_0^i\cdot{\omega})$. We denote it by $E_0^i(\theta^*)$. Then the {\bf Center Theorem} at the $0$-th step is 
\begin{thm}[]\label{0c}
	If $c_0^i,c_0^j\in Q_0 $, then \begin{equation}\label{st}
		m(c_0^i,c_0^j)\leq2|E_0^i(\theta^*)-E_0^j(\theta^*)|^{1/2}\leq2\delta_0^{1/2},
	\end{equation}
	where $m(c_0^i,c_0^j):=\min(\|(c_0^i-c_0^j)\cdot \omega\|,\|2\theta^*+(c_0^i+c_0^j)\cdot \omega\|).$
\end{thm}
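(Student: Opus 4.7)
\medskip

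The plan is to reduce the theorem directly to Lemma \ref{a}, applied to the two phases $\theta_1 = \theta^* + c_0^i\cdot\omega$ and $\theta_2 = \theta^* + c_0^j\cdot\omega$. Since by construction $E_0^i(\theta^*) = v(\theta^* + c_0^i\cdot\omega)$ and $E_0^j(\theta^*) = v(\theta^* + c_0^j\cdot\omega)$, the quantity controlled on the right-hand side of \eqref{st} is exactly $|v(\theta_1) - v(\theta_2)|$, while the quantity we want to bound on the left is exactly
\[
m(c_0^i, c_0^j) = \min(\|\theta_1 - \theta_2\|, \|\theta_1 + \theta_2\|),
\]
up to the observation that $\theta_1 - \theta_2 = (c_0^i - c_0^j)\cdot\omega$ and $\theta_1 + \theta_2 = 2\theta^* + (c_0^i + c_0^j)\cdot\omega$.

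So the first step is just this identification: match the two sides of \eqref{wh} with the two sides of \eqref{st}. Then Lemma \ref{a} gives
\[
|E_0^i(\theta^*) - E_0^j(\theta^*)| \geq m(c_0^i, c_0^j)^2,
\]
which after taking square roots yields $m(c_0^i, c_0^j) \leq |E_0^i(\theta^*) - E_0^j(\theta^*)|^{1/2}$, and in particular the first inequality of \eqref{st} with plenty of room.

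For the second inequality of \eqref{st}, I would use the definition of $Q_0$: since $c_0^i, c_0^j \in Q_0$, both $|v(\theta^* + c_0^i\cdot\omega) - E^*| < \delta_0$ and $|v(\theta^* + c_0^j\cdot\omega) - E^*| < \delta_0$, so by the triangle inequality $|E_0^i(\theta^*) - E_0^j(\theta^*)| < 2\delta_0$, giving the bound $m(c_0^i, c_0^j) < 2\delta_0^{1/2}$.

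There is no real obstacle here; this initial step exists essentially to set up the induction by providing the base case of the \textbf{Center Theorem} in a form compatible with later steps. The only subtlety is ensuring that the bound involves the \emph{both}-variable quantity $\min(\|(c_0^i - c_0^j)\cdot\omega\|, \|2\theta^* + (c_0^i + c_0^j)\cdot\omega\|)$ rather than just the first term, which is exactly what the evenness of $v$ encoded in Lemma \ref{a} delivers. The analogous symmetric structure on both sides is precisely why the lemma was stated in that form, so once the substitution is made the result drops out.
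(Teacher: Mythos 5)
Your proof is correct and follows essentially the same route as the paper: both apply Lemma \ref{a} with $\theta_1=\theta^*+c_0^i\cdot\omega$, $\theta_2=\theta^*+c_0^j\cdot\omega$, identify $E_0^r(\theta^*)=v(\theta^*+c_0^r\cdot\omega)$, and use the definition of $Q_0$ to bound $|E_0^i(\theta^*)-E_0^j(\theta^*)|<2\delta_0$. You also correctly observe that the argument in fact yields the sharper bound $m(c_0^i,c_0^j)\leq|E_0^i(\theta^*)-E_0^j(\theta^*)|^{1/2}$, so the factor of $2$ in the stated inequality is not needed at this step.
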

\begin{proof}
	Let $c_0^i,c_0^j\in Q_0$.  We have $|v(\theta^*+c_0^i\cdot \omega)-v(\theta^*+c_0^j\cdot \omega)|<2\delta_0$. From Lemma \ref{a},  we obtain
	\begin{align*}
		m(c_0^i,c_0^j)^2&=\min(\|(c_0^i-c_0^j)\cdot \omega\|,\|2\theta^*+(c_0^i+c_0^j)\cdot \omega\|)^2\\&
		\leq|v(\theta^*+c_0^i\cdot \omega)-v(\theta^*+c_0^j\cdot \omega)|\\&=|E_0^i(\theta^*)-E_0^j(\theta^*)|\leq2\delta_0,
	\end{align*}
which proves the theorem. 
\end{proof}
Next, we give  Green's function estimates  for the  $0$-good set. 
\begin{thm}[]\label{0g}
	Let $\Lambda\cap Q_0=\emptyset$, $|\theta-\theta^*|<\delta_0/(10M_1)$ and $ |E-E^*|<\delta_0/5$. Then for $\varepsilon\leq \varepsilon_0=\delta_0^{20}\ll1,$
	\begin{align}
		\label{542}\|G_\Lambda(\theta;E)\|&\leq10\delta_0^{-1},\\
	\label{542.}
		|G_\Lambda(\theta;E)(x,y)|&<e^{-\gamma_0\|x-y\|_1}\ (x\neq y).
	\end{align}
\end{thm}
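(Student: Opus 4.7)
The plan is a direct Neumann/resolvent series argument, with no need for multi-scale iteration at this initial step: the hypothesis $\Lambda \cap Q_0 = \emptyset$ forces the diagonal part of $H_\Lambda(\theta) - E$ to be uniformly bounded away from zero, while the hopping $\varepsilon\Delta$ has much smaller operator norm by the choice $\varepsilon_0 = \delta_0^{20}$.

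First I would extract a uniform lower bound on the diagonal. By definition of $Q_0$, every $x \in \Lambda$ satisfies $|v(\theta^* + x\cdot\omega) - E^*| \geq \delta_0$. Combining the Lipschitz estimate $|v(\theta+x\cdot\omega) - v(\theta^*+x\cdot\omega)| \leq M_1|\theta-\theta^*| < \delta_0/10$ with $|E-E^*| < \delta_0/5$ yields, by the triangle inequality, $|v(\theta+x\cdot\omega) - E| \geq 7\delta_0/10$ for every $x \in \Lambda$. Writing $D_\Lambda(\theta)$ for the diagonal part of $H_\Lambda(\theta)$, this gives $\|(D_\Lambda(\theta) - E)^{-1}\| \leq 10/(7\delta_0)$. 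Factoring
\[
H_\Lambda(\theta) - E = (D_\Lambda(\theta) - E)\bigl(I + \varepsilon(D_\Lambda(\theta) - E)^{-1}\Delta_\Lambda\bigr),
\]
and using $\|\Delta_\Lambda\| \leq 2d$ together with $\varepsilon \leq \delta_0^{20}$, the perturbation has norm $\lesssim d\delta_0^{19} \ll 1/2$; the Neumann series converges geometrically and produces $\|G_\Lambda(\theta;E)\| \leq 10\delta_0^{-1}$, which is \eqref{542}.

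For the off-diagonal decay \eqref{542.}, I would expand the resolvent as a sum over nearest-neighbor walks. The matrix entry $G_\Lambda(\theta;E)(x,y)$ becomes
\[
\sum_{k\geq \|x-y\|_1}(-\varepsilon)^k \sum_{\substack{x=x_0,\ldots,x_k=y\\ \|x_j-x_{j-1}\|_1 = 1}} \prod_{j=0}^{k}\frac{1}{v(\theta+x_j\cdot\omega)-E}.
\]
The number of length-$k$ nearest-neighbor walks between $x$ and $y$ is at most $(2d)^k$ and each summand is bounded by $(10/(7\delta_0))^{k+1}\varepsilon^k$. Summing the geometric tail with ratio $r := 20d\varepsilon/(7\delta_0)\sim\varepsilon^{19/20}$ gives
\[
|G_\Lambda(\theta;E)(x,y)| \lesssim \delta_0^{-1}\, r^{\|x-y\|_1}.
\]
Since $\log(1/r) = (19/20)|\log\varepsilon| + O(1)$ is much larger than $\gamma_0 = |\log\varepsilon|/2$, the surplus $(19/20 - 1/2)|\log\varepsilon|$ easily absorbs the logarithm of the $\delta_0^{-1} = \varepsilon^{-1/20}$ prefactor, so for $x\neq y$ one obtains $|G_\Lambda(\theta;E)(x,y)| < e^{-\gamma_0\|x-y\|_1}$.

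There is no genuine obstacle here: the argument is essentially bookkeeping. The only point requiring attention is checking that the exponential gain at the smallest separation $\|x-y\|_1 = 1$ dominates the $\delta_0^{-1}$ prefactor, which is guaranteed by the strong separation of scales $\varepsilon \leq \delta_0^{20}$ built into the setup. This is precisely why the initial step presents no difficulty, in sharp contrast to the subsequent inductive steps, where level crossings and resonant geometry of the blocks $B_n^i$ will require the Rellich/Kato perturbation theory and the non-interval constructions mentioned in the introduction.
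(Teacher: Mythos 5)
Your proof is correct and follows essentially the same route as the paper's: both use the uniform lower bound on the diagonal part obtained from $\Lambda\cap Q_0=\emptyset$ together with the Lipschitz estimates in $\theta$ and $E$, then expand the resolvent in a Neumann series in $\varepsilon$ (your sum over nearest-neighbor walks is exactly the entry-by-entry form of the paper's operator Neumann series), and observe that the ratio $\sim \varepsilon/\delta_0 \sim \varepsilon^{19/20}$ leaves enough surplus over the target rate $\gamma_0=|\log\varepsilon|/2$ to absorb the $\delta_0^{-1}$ prefactor starting from $\|x-y\|_1=1$. The only cosmetic difference is that you track the constants a bit more precisely (getting $7\delta_0/10$ where the paper is content with $\delta_0/2$), which is harmless.
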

\begin{proof}
	Denote by $V_\Lambda(\theta)$ the operator $R_\Lambda v(\theta+x\cdot\omega)\delta_{x,y}R_\Lambda$. Since $\Lambda\cap Q_0=\emptyset$, 
	we have $\|V_\Lambda(\theta^*)-E^*\|\geq\delta_0$.  So   $\|V_\Lambda(\theta)-E\|\geq\delta_0/2$ for $|\theta-\theta^*|<\delta_0/(10M_1)$ and $ |E-E^*|<\delta_0/5$. Since $\|\Delta\|\leq2d$, we have by  the Neumann series  argument
	\begin{align*}
		G_\Lambda(\theta;E)& =\left(\varepsilon \Delta+V_\Lambda(\theta)-E\right)^{-1} \\
		& =\sum_{n=0}^{\infty}(-1)^n \varepsilon^n\left[\left(V_\Lambda(\theta)-E\right)^{-1} \Delta\right]^n\left(V_\Lambda(\theta)-E\right)^{-1}.
	\end{align*}
	Thus for $\varepsilon\leq \varepsilon_0,$
	$$\|G_\Lambda(\theta;E)\|\leq2\|\left(V_\Lambda(\theta)-E\right)^{-1}\|<4\delta_0^{-1},$$
	and 
	\begin{align*}
		\left|G_\Lambda(\theta;E)(x, y)\right| & \leq \frac{4}{\delta_0}\left(\frac{4 d\varepsilon}{\delta_0}\right)^{\|x-y\|_1}\leq\sqrt{\varepsilon}^{\|x-y\|_1}=e^{-\gamma_0\|x-y\|_1}\ (x\neq y).
	\end{align*}
\end{proof}

In the following, we will deal with the first and the general inductive steps in Section 3.2 and Section  3.4, respectively.  For convenience, we include a diagram to clarify the inductive structure.  
						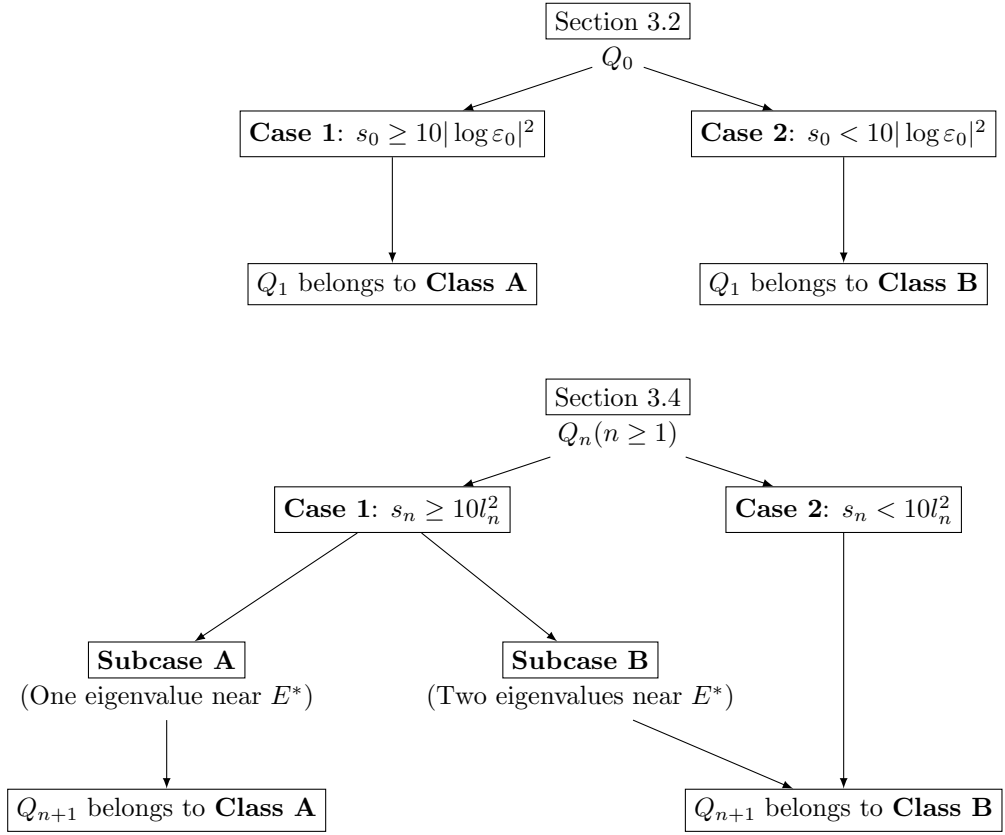
\begin{figure}[htp]
		\begin{tikzpicture}[>=latex, scale=1]
			\draw (0,0.5)node[rectangle, draw=black]{Section 3.2};
			\draw (0,0)node(s1){$Q_0$};
			\draw (-3,-1)node(s2)[rectangle, draw=black]{{\bf Case 1}:  $s_0\geq 10|\log\varepsilon_0|^2$};		
			\draw (3,-1)node(s3)[rectangle, draw=black]{{\bf Case 2}: $s_0< 10|\log\varepsilon_0|^2$};		
			\draw [->](s1)--(s2);
			\draw [->](s1)--(s3);
				\draw (-3,-3)node(s4)[rectangle, draw=black]{$Q_1$ belongs to {\bf Class A}};
				\draw (3,-3)node(s5)[rectangle, draw=black]{$Q_1$ belongs to {\bf Class B}};
					\draw [->](s2)--(s4);
				\draw [->](s3)--(s5);
					\draw (0,-4.5)node[rectangle, draw=black]{Section 3.4};
					\draw (0,-5)node(s6){$Q_n(n\geq 1)$};
					\draw (-3,-6)node(s7)[rectangle, draw=black]{{\bf Case 1}:  $s_n\geq 10l_n^2$};		
					\draw (3,-6)node(s8)[rectangle, draw=black]{{\bf Case 2}:  $s_n<10l_n^2$};	
					\draw [->](s6)--(s7);
				\draw [->](s6)--(s8);
					\draw (-6,-8)node(s9)[rectangle, draw=black]{{\bf Subcase A}};		
						\draw  (-6,-8.5)node(s10){(One eigenvalue near $E^*$)};	
						\draw (-0.5,-8)node(s11)[rectangle, draw=black]{{\bf Subcase B}};
								
						\draw  (-0.5,-8.5)node(s100){(Two  eigenvalues near $E^*$)};	
						\draw [->](s7)--(s9);
						\draw [->](s7)--(s11);
					\draw (-6,-10)node(s12)[rectangle, draw=black]{$Q_{n+1}$ belongs to {\bf Class A}};
				\draw (3,-10)node(s13)[rectangle, draw=black]{$Q_{n+1}$ belongs to {\bf Class B}};
						\draw [->](s10)--(s12);
				\draw [->](s100)--(s13);
					\draw [->](s8)--(s13);
		\end{tikzpicture}
		\caption*{A diagram of the inductive structure}
	\end{figure}

\subsection{Definition and properties of $Q_1$}
In this section,  we define $Q_1$ and establish Theorem \ref{mainthm} for $n=1$. Let $$s_0=\min_{c_1^i\neq c_1^j\in Q_0}\|c_1^i-c_1^j\|_1.$$ 
We shall distinguish two cases. \smallskip\\
{\textbf{Case 1}}.  $s_0>10|\log\varepsilon_0|^2.$ We define $P_1=Q_0$ and associate every $c_1^i\in P_1$ an $l_1:=|\log\varepsilon_0|^2$-size block $B_1^i=\Lambda_{l_1}(c_1^i)$. 
Define
$$Q_1=\big\{c_1^i\in P_1:\ \operatorname{dist}(\sigma(H_{B_1^i}(\theta^*)),E^*)<\delta_1:=e^{-l_1^{2/3}}\big\}. $$
\begin{rem}
	Since $|\log\delta_1|=l_1^{2/3}\sim|\log\delta_0|^{4/3}$, we have $\delta_1<\delta_0^{100}$.
\end{rem}
We show  that in this case, for $c_1^i\in Q_1, |\theta-\theta^*|<\delta_0/(10M_1)$,  the eigenvalue parametrization of $H_{B_1^i}(\theta)$ in the interval   $|E-E^*|<\delta_0/5$  is unique and hence a well-defined $C^2$ function  of  those $\theta$ by Lemma \ref{neqs}. 
\begin{prop}\label{k1}
	For every $c_1^i\in Q_1$ and $|\theta-\theta^*|<\delta_0/(10M_1)$,
	\begin{itemize}
		\item[\textbf{(a)}]  $H_{B_1^i}(\theta)$ has a unique eigenvalue $E_1^i(\theta)$ such that $|E_1^i(\theta)-E^*|<\delta_0/9$. Moreover, any other $\hat{E}\in\sigma(H_{B_1^i}(\theta)) $ must obey $|\hat{E}-E^*|>\delta_0/5$.
		\item[\textbf{(b)}] The corresponding  eigenfunction $\psi_1$  satisfies  
		$$|\psi_1(x)|\leq e^{-\gamma_0\|x-c_1^i\|_1} .$$
		\item[\textbf{(c)}] $\|G_{B_1}^\perp(\theta;E_1^i)\|\leq20\delta_0^{-1}$, where $G_{B_1}^\perp$ denotes the Green's function for $B_1^i$ restricted on the orthogonal complement of $\psi_1$.
	\end{itemize} 
\end{prop}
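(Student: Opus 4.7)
The plan is to exploit that, in Case 1, the cube $B_1^i=\Lambda_{l_1}(c_1^i)$ contains exactly one point of $Q_0$, namely $c_1^i$ itself. Indeed, $s_0>10l_1$ forces any $c_1^j\in Q_0\setminus\{c_1^i\}$ to lie outside $B_1^i$, so the punctured block $B_1^{i\prime}:=B_1^i\setminus\{c_1^i\}$ satisfies $B_1^{i\prime}\cap Q_0=\emptyset$. Theorem \ref{0g} then supplies, uniformly for $|\theta-\theta^*|<\delta_0/(10M_1)$ and $|E-E^*|<\delta_0/5$, the bounds $\|G_{B_1^{i\prime}}(\theta;E)\|\leq 10\delta_0^{-1}$ and $|G_{B_1^{i\prime}}(\theta;E)(x,y)|\leq e^{-\gamma_0\|x-y\|_1}$ for $x\neq y$. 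This will be the workhorse for the whole proposition.

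For part (a), I would apply a Schur-complement (Feshbach) reduction to $H_{B_1^i}(\theta)-E$ along the splitting $B_1^i=\{c_1^i\}\cup B_1^{i\prime}$. For any $E$ in the target window the invertibility of $H_{B_1^{i\prime}}(\theta)-E$ says that $E$ is an eigenvalue of $H_{B_1^i}(\theta)$ iff the scalar
\[
f_\theta(E):=v(\theta+c_1^i\cdot\omega)-E-\varepsilon^2\langle \Gamma,\,G_{B_1^{i\prime}}(\theta;E)\,\Gamma\rangle
\]
vanishes, where $\Gamma$ collects the $\Delta$-entries from $c_1^i$ to its neighbors in $B_1^{i\prime}$ (so $\|\Gamma\|\leq\sqrt{2d}$). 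The correction and its $E$-derivative are $O(\varepsilon^2\delta_0^{-1})$ and $O(\varepsilon^2\delta_0^{-2})$ respectively, hence negligible; in particular $f_\theta'(E)<-1/2$ throughout, so $f_\theta$ has at most one zero in the window, giving uniqueness. For existence and the bound $|E_1^i(\theta)-E^*|<\delta_0/9$ I would combine (i) $c_1^i\in Q_1$ together with the same Schur identity at $\theta=\theta^*$, forcing $|v(\theta^*+c_1^i\cdot\omega)-E^*|\leq\delta_1+O(\varepsilon)$; (ii) the $M_1$-Lipschitz cost $\leq\delta_0/10$ of shifting $\theta$; and (iii) the fact that the zero of $f_\theta$ lies within $O(\varepsilon)$ of $v(\theta+c_1^i\cdot\omega)$.

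For (b), I would restrict the eigenequation $(H_{B_1^i}(\theta)-E_1^i(\theta))\psi_1=0$ to $B_1^{i\prime}$ and invert, obtaining the Poisson-type identity
\[
\psi_1(x)=-\varepsilon\sum_{y\sim c_1^i,\,y\in B_1^{i\prime}} G_{B_1^{i\prime}}(\theta;E_1^i(\theta))(x,y)\,\psi_1(c_1^i),\qquad x\in B_1^{i\prime}.
\]
Inserting the off-diagonal decay from Theorem \ref{0g} and absorbing the prefactor via $2d\varepsilon\cdot e^{\gamma_0}\leq 1$, while handling the boundary layer $x\sim c_1^i$ separately through $\|G_{B_1^{i\prime}}\|\leq 10\delta_0^{-1}$ and $\varepsilon\ll\delta_0$, yields $|\psi_1(x)|\leq e^{-\gamma_0\|x-c_1^i\|_1}$. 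Part (c) is then immediate from (a): the gap between $E_1^i(\theta)$ and the rest of $\sigma(H_{B_1^i}(\theta))$ is at least $\delta_0/5-\delta_0/9=4\delta_0/45$, so $\|G_{B_1^i}^\perp(\theta;E_1^i)\|\leq 45/(4\delta_0)<20\delta_0^{-1}$.

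The main technical hurdle is the Schur bookkeeping in (a), where one needs both uniqueness in a reasonably large spectral window and a sharp $\delta_0/9$ location bound; however, it remains a perturbative one-scale estimate. The genuine difficulties highlighted in the introduction — level crossing of eigenvalue parameterizations and the non-interval geometry of resonant blocks — do not yet appear, because $B_1^i$ is a simple cube and Case 1 produces a single eigenvalue near $E^*$; those issues surface only in the companion Case 2 and in the general inductive step.
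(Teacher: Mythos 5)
Your proof of parts \textbf{(b)} and \textbf{(c)} coincides with the paper's: restrict the eigenequation to $B_1^i\setminus\{c_1^i\}$, apply the Poisson-type identity with the $0$-good Green's function bounds from Theorem \ref{0g}, and read off \textbf{(c)} from the spectral gap guaranteed by \textbf{(a)}. Where you diverge is part \textbf{(a)}. The paper proves existence directly from the definition of $Q_1$ (a block is singular iff $\operatorname{dist}(\sigma(H_{B_1^i}(\theta^*)),E^*)<\delta_1$) together with $M_1$-Lipschitz continuity of the spectrum in $\theta$, and then proves uniqueness by a soft \emph{orthogonality} argument: a second eigenvalue with $|\hat{E}-E^*|\leq\delta_0/5$ would, by the same Poisson identity, have an eigenfunction concentrated at the single point $c_1^i$, contradicting orthogonality to $\psi_1$. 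You instead perform a Schur/Feshbach reduction onto the site $\{c_1^i\}$, reducing the eigenvalue problem to a scalar equation $f_\theta(E)=0$ with $f_\theta$ strictly decreasing on the window (slope $\leq -1/2$ after absorbing the $O(\varepsilon^2\delta_0^{-1})$ correction). Both routes are correct. Yours is more quantitative and self-contained for uniqueness (monotonicity is immediate), but requires tracking the $O(\varepsilon^2\delta_0^{-1})$ shift and the $E$-derivative of the resolvent; the paper's orthogonality argument sidesteps that bookkeeping by reusing the eigenfunction-decay estimate already established for \textbf{(b)}, which is also the mechanism that later generalizes to the two-eigenvalue \textbf{Case~2} setting where Schur reduction onto a single site no longer applies. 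One small point worth making explicit in your write-up: the existence step should note that the eigenvalue of $H_{B_1^i}(\theta^*)$ in $|E-E^*|<\delta_1$ (given by $c_1^i\in Q_1$) lies inside the window where the Schur identity is valid, so it corresponds to a zero of $f_{\theta^*}$; monotonicity then propagates that zero through the $\theta$-interval with the stated bound.
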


\begin{proof}
	Since $ B_1^i$ is singular, by definition,  $H_{B_1^i}(\theta^*)$ has an eigenvalue $E_1^i(\theta^*)$ such that  $|E_1^i(\theta^*)-E^*|<\delta_1\ll\delta_0^3$. By $|V'|\leq M_1$,   $\sigma(H_{B_1^i}(\theta))$ and  $\sigma(H_{B_1^i}(\theta^*))$   differ at most $M_1|\theta-\theta^*|<\delta_0/10$, which shows the existence of $E_1^i(\theta)$ in $|E-E^*|<\delta_0/9$. Denote $\Lambda=B_1^i\setminus\{c_1^i\}$. Let $E\in \sigma(H_{B_1^i}(\theta))$ be such that $|E-E^*|<\delta_0/5$. We determine the value of $\psi_1(x)$ by 
	$$\psi_1(x)= \sum_{\|y-c_1^i\|_1=1} G_{\Lambda}(\theta;E)(x, y) \Gamma_{y,c_1^i} \psi_1\left(c_1^i\right).$$
	Since $\Lambda$ is 0-good, we have $$|G_\Lambda(\theta;E)(x,y)|\leq\delta_0^{-1}e^{-\gamma_0\|x-y\|_1}.$$ 
	Thus, 
	\begin{equation}\label{g}
		|\psi_1(x)|\leq C\frac{\varepsilon}{\delta_0}e^{-\gamma_0\|x-c_1^i\|_1}\leq e^{-\gamma_0\|x-c_1^i\|_1}.
	\end{equation} 
	This proves \textbf{(b)}. If there is another $\hat{E}\in \sigma(H_{B_1^i}(\theta))$ satisfying $|\hat{E}-E^*|\leq\delta_0/5$, by the above argument,  its   eigenfunction $\hat{\psi}$ must also almost localize on the single point $\{c_1^i\}$, which violates the orthogonality of $\psi_1 $ and $\hat{\psi}$. Thus, we prove the uniqueness part of \textbf{(a)}. Finally, \textbf{(c)} follows from the fact that any other $\hat{E}\in \sigma(H_{B_1^i}(\theta))$ must obey $|\hat{E}-E_1^i(\theta)|\geq |\hat{E}-E^*|-|E^*-E_1^i(\theta)|\geq\delta_0/5-\delta_0/9\geq\delta_0/20$ and 
 $\|G_{B_1}^\perp(\theta;E_1^i)\|= \operatorname{dist}(\sigma(H_{B_1^i}(\theta)),E_1^i(\theta))^{-1}.$
\end{proof}
We then give upper bounds on the derivatives of $E_1^i(\theta)$. 
\begin{prop}\label{ap}
	For $|\theta-\theta^*|<\delta_0/(10M_1)$, we have 
	$$\big|\frac{d^s}{d\theta^s}(E_1^i(\theta)-E_0^i(\theta))\big|\leq\delta_0^7\ \ {\rm for\ }s=0,1,2.$$
\end{prop}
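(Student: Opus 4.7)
The plan is to reduce the computation of $E_1^i(\theta)$ to a scalar implicit equation via the Schur complement/Feshbach argument around the distinguished site $c_0^i$, and then differentiate it. Decomposing $\ell^{2}(B_1^i)=\C\delta_{c_0^i}\oplus\ell^{2}(\Lambda)$ with $\Lambda:=B_1^i\setminus\{c_0^i\}$, and setting $\eta:=\sum_{y\in\Lambda,\ \|y-c_0^i\|_1=1}\delta_y$, elimination of the $\Lambda$-component in the eigenvalue equation $(H_{B_1^i}(\theta)-E_1^i(\theta))\psi_1=0$ produces
\begin{equation*}
E_1^i(\theta)-E_0^i(\theta)\,=\,-\varepsilon^{2}\,\bigl\langle\eta,\,G_\Lambda(\theta;E_1^i(\theta))\eta\bigr\rangle,
\end{equation*}
where $E_0^i(\theta)=v(\theta+c_0^i\cdot\omega)$.

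The hypothesis $s_0>10|\log\varepsilon_0|^{2}=10l_1$, together with the fact that $B_1^i$ has size $l_1$, forces $\Lambda\cap Q_0=\emptyset$, so $\Lambda$ is $0$-good. Combined with the eigenvalue localization $|E_1^i(\theta)-E^*|<\delta_0/9$ from Proposition \ref{k1}(a), Theorem \ref{0g} applies in a neighborhood of $(\theta,E_1^i(\theta))$ on $|\theta-\theta^*|<\delta_0/(10M_1)$ and gives the uniform bound $\|G_\Lambda(\theta;E)\|\le 10\delta_0^{-1}$. For $s=0$ this at once yields $|E_1^i-E_0^i|\le C(d)\,\varepsilon^{2}\delta_0^{-1}$, which is much smaller than $\delta_0^{7}$ because $\varepsilon\le\varepsilon_0=\delta_0^{20}$.

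For $s=1,2$, I would differentiate the Feshbach equation. Writing $\tilde h(\theta,E):=\langle\eta,G_\Lambda(\theta;E)\eta\rangle$ and iterating $\partial_\theta G_\Lambda=-G_\Lambda V_\Lambda' G_\Lambda$ and $\partial_E G_\Lambda=G_\Lambda^{\,2}$, one obtains uniform mixed-derivative estimates $|\partial_\theta^{\,a}\partial_E^{\,b}\tilde h|\le C(v,d)\,\delta_0^{-(a+b+1)}$ for $a+b\le 2$. Setting $h(\theta):=\tilde h(\theta,E_1^i(\theta))$, the chain rule gives
\begin{align*}
h'(\theta)&=\tilde h_\theta+\tilde h_E\,E_1^{i\prime}(\theta),\\
h''(\theta)&=\tilde h_{\theta\theta}+2\tilde h_{\theta E}\,E_1^{i\prime}(\theta)+\tilde h_{EE}\,\bigl(E_1^{i\prime}(\theta)\bigr)^{2}+\tilde h_E\,E_1^{i\prime\prime}(\theta).
\end{align*}
The Hellmann--Feynman-type identity $E_1^{i\prime}=\langle\psi_1,H'\psi_1\rangle$ yields the a priori bound $|E_1^{i\prime}|\le M_1$, while differentiating once more and using $\psi_1'=-G^\perp_{B_1^i}(\theta;E_1^i)(H'-E_1^{i\prime})\psi_1$ together with Proposition \ref{k1}(c) gives $\|\psi_1'\|\le CM_1\delta_0^{-1}$ and hence the coarse bound $|E_1^{i\prime\prime}|\le CM_1^{\,2}\delta_0^{-1}$. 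Plugging in and using $\varepsilon\le\delta_0^{20}$, I obtain $|(E_1^i-E_0^i)^{(s)}|=\varepsilon^{2}|h^{(s)}|\le C\varepsilon^{2}\delta_0^{-(s+2)}\ll\delta_0^{7}$ for $s=1,2$.

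The only subtlety is the implicit dependence of $h$ on $E_1^i(\theta)$, which feeds $E_1^{i\prime\prime}$ back into the right-hand side of $h''$; however the coefficient $\tilde h_E$ multiplying it carries the extra factor $\varepsilon^{2}\delta_0^{-2}\ll 1$, so no true bootstrap is needed and the coarse a priori bound above suffices to close the estimate in one shot. The level-crossing difficulty driving the rest of the paper does not enter this step, since in Case 1 of \S 3.2 the eigenvalue $E_1^i(\theta)$ is isolated on the relevant $\theta$-interval (Proposition \ref{k1}(a)), and Lemma \ref{neqs} supplies the $C^{2}$ smoothness of both eigenvalue and eigenfunction required throughout the differentiation.
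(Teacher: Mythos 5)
Your proof is correct, and it takes a genuinely different route from the paper's. The paper argues at the level of the eigenfunctions: it establishes $\|\psi_1-\psi_0\|\le 2\varepsilon/\delta_0\le 2\delta_0^{10}$, uses the Hellmann--Feynman identity $\frac{d}{d\theta}E_r^i=\langle\psi_r,V'\psi_r\rangle$ for $s=0,1$, and for $s=2$ compares the second-order perturbation formulas $\frac{d^2}{d\theta^2}E_1^i=\langle\psi_1,V''\psi_1\rangle-2\langle\psi_1,V'G_{B_1}^\perp V'\psi_1\rangle$ and $\frac{d^2}{d\theta^2}E_0^i=\langle\psi_0,V''\psi_0\rangle$ directly, controlling the extra term via the bound $\|P_1^\perp(V'\psi_1)\|\lesssim\delta_0^{9}$ coming from the exponential decay of $\psi_1$ away from $c_1^i$ together with $\|G_{B_1}^\perp\|\le 20\delta_0^{-1}$. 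You instead eliminate $\Lambda=B_1^i\setminus\{c_0^i\}$ by a Schur/Feshbach reduction, obtaining the scalar identity $E_1^i-E_0^i=-\varepsilon^2\langle\eta,G_\Lambda(\theta;E_1^i)\eta\rangle$, and then differentiate. The explicit $\varepsilon^2$ prefactor makes the smallness structural: the Green's-function derivatives only contribute powers of $\delta_0^{-1}$ (via $\partial_\theta G_\Lambda=-G_\Lambda V_\Lambda'G_\Lambda$, $\partial_E G_\Lambda=G_\Lambda^2$), and the only a priori inputs needed are the crude bounds $|E_1^{i\prime}|\le M_1$ and $|E_1^{i\prime\prime}|\lesssim M_1^2\delta_0^{-1}$, the latter safe because it is multiplied by $\varepsilon^2\tilde h_E=O(\varepsilon^2\delta_0^{-2})\ll1$, so there is no bootstrap. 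What the Feshbach route buys at this step is a shorter, more mechanical computation that bypasses the second-order eigenvalue perturbation formula of Appendix~\ref{AB}; what the paper's eigenfunction-proximity argument buys is uniformity with the rest of the induction, where the "interior" is no longer a single site, the Schur complement is no longer scalar, and the paper consistently tracks $\psi_{n+1}\approx\psi_n$ and uses Theorem~\ref{daoshu} (cf.\ Proposition~\ref{apn}), so the two-scale comparison via eigenfunctions is the pattern that is actually iterated.
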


\begin{proof}
	Denote by $\psi_r$ the corresponding  eigenfunction of $E_r^i$ for $r=0,1$.
	Recalling \eqref{g}, we have $\|\psi_1-\psi_0\|\leq 2\frac{\varepsilon}{\delta_0}\leq2\delta_0^{10}.$ Thus, 
	$$|E_1^i(\theta)-E_0^i(\theta)|=|\left\langle \psi_1,H_{B_1^i}(\theta)\psi_1\right\rangle -\left\langle \psi_0,H_{B_0^i}(\theta)\psi_0\right\rangle|\leq\delta_0^{9}.$$
For $s=1,2$, we use the eigenvalue perturbation formulas from Appendix \ref{AB}. Thus  $$|\frac{d}{d \theta} E_1^i(\theta)-\frac{d}{d \theta} E_0^i(\theta)|=|\left\langle\psi_1,V' \psi_1\right\rangle-\left\langle\psi_0,V' \psi_0\right\rangle|\leq\delta_0^9,$$
and
 $$\frac{d^2}{d \theta^2} E_1^i(\theta)=\left\langle\psi_1, V^{\prime \prime} \psi_1\right\rangle-2\left\langle\psi_1, V' G_{B_1}^\perp(\theta;E_1^i) V' \psi_1\right\rangle, \ \frac{d^2}{d \theta^2} E_0^i(\theta)= \left\langle\psi_0, V^{\prime \prime} \psi_0\right\rangle. $$
	Since   $|\psi_1(x)|\leq e^{-\gamma_0\|x-c_1^i\|_1}$,  $|V'\psi_1(x)|\leq M_1e^{-\gamma_0\|x-c_1^i\|_1}$ are two functions  almost localized on $\{c_1^i\}$, we deduce  $\|P_1^\perp (V'\psi_1)\|\leq\delta_0^{9}$, where $P_1^\perp$ denotes projection  onto  the orthogonal complement  of  $\psi_1$. Thus,
	\begin{align*}
		\big|\frac{d^2}{d \theta^2} E_1^i(\theta)-\frac{d^2}{d \theta^2} E_0^i(\theta)\big|&=|\left\langle\psi_1, V^{\prime \prime} \psi_1\right\rangle-\left\langle\psi_0, V^{\prime \prime} \psi_0\right\rangle-2\left\langle\psi_1, V' G_{B_1}^\perp(\theta;E_1^i) V' \psi_1\right\rangle|\\
		&\leq \delta_0^9+2\|G_{B_1}^\perp(\theta;E_1^i)\|\cdot \|P_1^\perp (V'\psi_1)\|\\
		&<\delta_0^7,
	\end{align*}
where we have used Proposition \ref{k1} to bound the term $\|G_{B_1}^\perp(\theta;E_1^i)\|$.
\end{proof}

We can also have the lower bound on the derivatives of $E_1^i(\theta).$

\begin{prop}\label{133}
	For $|\theta-\theta^*|<\delta_0/(20M_1)$, there exists $\mu=0\text{ or }1/2$ such that 
	$$\big|\frac{d}{d \theta} E_1^i(\theta)\big|\geq\min(\delta_0^2,\|\theta+c_1^i\cdot\omega- \mu \|).$$
\end{prop}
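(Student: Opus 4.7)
The plan is to single out $\mu \in \{0, 1/2\}$ as the critical point of $v$ closer to $\theta^* + c_1^i \cdot \omega$, and to split according to whether $\theta + c_1^i \cdot \omega$ stays at distance $\geq \delta_0^2$ from $\mu$ throughout the inner interval $|\theta - \theta^*| < \delta_0/(20 M_1)$.

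In the \emph{non-resonant} case, if $\|\theta + c_1^i \cdot \omega - \mu\| \geq \delta_0^2$ for every such $\theta$, then Lemma~\ref{'0} gives $|v'(\theta + c_1^i \cdot \omega)| \geq 2\delta_0^2$, and the first-derivative comparison $|\frac{d}{d\theta} E_1^i(\theta) - v'(\theta + c_1^i \cdot \omega)| \leq \delta_0^9$ contained in the proof of Proposition~\ref{ap} immediately yields $|\frac{d}{d\theta} E_1^i(\theta)| \geq \delta_0^2$, matching the required $\min(\delta_0^2, \|\theta + c_1^i \cdot \omega - \mu\|)$.

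In the \emph{resonant} case, some $\theta_0$ in the inner interval satisfies $\|\theta_0 + c_1^i \cdot \omega - \mu\| < \delta_0^2$, and I choose $\theta_\mu$ with $\theta_\mu + c_1^i \cdot \omega \equiv \mu \pmod 1$ and $|\theta_\mu - \theta^*| < \delta_0/(10 M_1)$, so that Proposition~\ref{k1} still applies at $\theta_\mu$ and $E_1^i(\theta_\mu)$ is simple. The crux is to show that $\frac{d}{d\theta} E_1^i(\theta_\mu) = 0$ by a symmetry argument: the box $B_1^i = \Lambda_{l_1}(c_1^i)$ is invariant under the reflection $x \mapsto 2 c_1^i - x$, and at $\theta = \theta_\mu$ the identities $v(-y) = v(y)$ (for $\mu = 0$) and $v(1/2 - y) = v(1/2 + y)$ (for $\mu = 1/2$) force the on-site potential $V(x) = v(\theta_\mu + x \cdot \omega)$ to be reflection-invariant while forcing $V'(x) = v'(\theta_\mu + x \cdot \omega)$ (evenness/oddness obtained by differentiating) to be reflection-antisymmetric. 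Simplicity of $E_1^i(\theta_\mu)$ forces the normalized eigenfunction $\psi_1$ to have definite reflection parity, so $|\psi_1|^2$ is reflection-even, and pairing $x$ with $2c_1^i - x$ collapses the Hellmann--Feynman sum $\langle \psi_1, V' \psi_1\rangle$ to $0$ (the fixed point $c_1^i$ contributes nothing since $v'(\mu) = 0$). Meanwhile, throughout the inner interval one has $\|\theta + c_1^i \cdot \omega - \mu\| < a$, so Footnote~\ref{ftnote1} gives $|v''(\theta + c_1^i \cdot \omega)| > 3$ with constant sign, and the second-derivative comparison $|\frac{d^2}{d\theta^2} E_1^i(\theta) - v''(\theta + c_1^i \cdot \omega)| \leq \delta_0^7$ from Proposition~\ref{ap} upgrades this to $|\frac{d^2}{d\theta^2} E_1^i(\theta)| \geq 2$ of constant sign. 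Integrating from $\theta_\mu$, where the first derivative vanishes, yields $|\frac{d}{d\theta} E_1^i(\theta)| \geq 2|\theta - \theta_\mu| = 2\|\theta + c_1^i \cdot \omega - \mu\|$, which dominates the target $\min(\delta_0^2, \|\theta + c_1^i \cdot \omega - \mu\|)$.

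The main obstacle is pinning the unique zero of $\frac{d}{d\theta} E_1^i$ on the inner interval exactly to $\theta_\mu$: the $\delta_0^9$-level comparison of Proposition~\ref{ap} is too coarse to locate the zero once $\theta + c_1^i \cdot \omega$ is within $\delta_0^9$ of $\mu$, so one cannot avoid exploiting evenness of $v$ together with the symmetric choice $B_1^i = \Lambda_{l_1}(c_1^i)$. This is precisely the feature that makes the conclusion arithmetic --- forcing the resonant phases to lie on the lattice $\mu - c_1^i \cdot \omega \pmod 1$ --- and it is what will feed into the Center Theorem at this scale.
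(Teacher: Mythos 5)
Your proof is correct and arrives at the same conclusion, but the key step --- showing $\frac{d}{d\theta}E_1^i$ vanishes at the symmetry phase --- is done by a genuinely different route. The paper works at the level of the eigenvalue function: since $v$ is even and $B_1^i$ is symmetric about $c_1^i$, the operator satisfies $H_{B_1^i}(\theta)=H_{B_1^i}(-2c_1^i\cdot\omega-\theta)$, and the uniqueness clause of Proposition~\ref{k1} then forces $E_1^i(\theta)=E_1^i(-2c_1^i\cdot\omega-\theta)$, so the derivative vanishes at $\theta_s=-c_1^i\cdot\omega$ simply because a symmetric $C^1$ function is critical on its axis of symmetry. You instead descend to the eigenvector: simplicity forces $\psi_1$ to have definite parity under the reflection $R$, $V'(\theta_\mu)$ is $R$-odd by evenness of $v$, and Hellmann--Feynman collapses $\langle\psi_1,V'\psi_1\rangle$ to zero. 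Both arguments are valid and rest on the same two structural facts (evenness of $v$, symmetry of $B_1^i$); the paper's is slightly leaner here because it never needs to mention eigenfunctions, while yours anticipates the machinery the paper is forced to use later in the level-crossing regime (Lemma~\ref{daogroup} and Case~II of Theorem~\ref{C1}), where the curve-symmetry shortcut is unavailable because the branches cross. There is also a small organizational difference: the paper's case split is pointwise (fix $\theta$, ask whether $|\frac{d}{d\theta}E_1^i(\theta)|\leq\delta_0^2$, and if so infer via $E_0^i=v(\theta+c_1^i\cdot\omega)$ and Lemma~\ref{'0} that $\theta+c_1^i\cdot\omega$ is $\delta_0^2$-close to a critical point), whereas you fix $\mu$ a priori and dichotomize over the whole interval; both are fine since the interval is much shorter than the distance between the two critical points, so your $\mu$ and the paper's implicit one coincide. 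One last cosmetic note: the paper uses the mean value theorem rather than literal integration for the last step, which makes the constant-sign remark unnecessary for the inequality itself, though constant sign does hold by Footnote~\ref{ftnote1} and Proposition~\ref{ap} exactly as you say.
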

\begin{proof}
	Assuming  $|\frac{d}{d \theta} E_1^i(\theta)|\leq\delta_0^2$, then by Proposition \ref{ap}, we have $|v'(c_1^i\cdot \omega+\theta)|=|\frac{d}{d \theta} E_0^i(\theta)|<2\delta_0^2.$ Using Lemma \ref{'0}, we get  
$$\min(\|\theta+c_1^i\cdot \omega\|,\|\theta+c_1^i\cdot \omega-\frac{1}{2}\|)\leq \delta_0^2<a.$$ 
Without loss of generality, we can assume $\|\theta+c_1^i\cdot \omega\|\leq\delta_0^2$. Set $\mu=0$.  Thus, $\|\theta^*+c_1^i\cdot \omega\|\leq\delta_0/(20M_1)+\delta_0^2<\delta_0/(10M_1)$.  It follows that the interval of  $\T$ with endpoints  $\theta$ and $-c_1^i\cdot \omega$  is contained in $\{\theta:\ |\theta-\theta^*|<\delta_0/(10M_1)\}$. By the assumption of $v$, $|\frac{d^2}{d \theta^2} E_0^i(\xi)|=v''(\xi+c_1^i\cdot \omega)>2$ for $\xi$ belonging to the interval of $\T$ with endpoints $\theta$ and $-c_1^i \cdot \omega$.    So, by Proposition \ref{ap},  $|\frac{d^2}{d \theta^2} E_1^i(\xi)|>1$. Notice that  $E_1^i$ is  symmetric about $-c_1^i\cdot \omega$ since $H_{B_1^i}(\theta)=H_{B_1^i}(-2c_1^i\cdot \omega-\theta)$ and the uniqueness of eigenvalue in $|E-E^*|<\delta_0/5$. We have  $ \frac{d}{d \theta} E_1^i(-c_1^i\cdot \omega)=0$. Thus, 
$$\big|\frac{d}{d \theta} E_1^i(\theta)\big|=|\frac{d}{d \theta} E_1^i(\theta)-\frac{d}{d \theta} E_1^i(-c_1^i\cdot \omega)|\geq|\frac{d^2}{d \theta^2} E_1^i(\xi)|\cdot \|\theta+c_1^i\cdot \omega\|\geq \|\theta+c_1^i\cdot \omega\|.$$
\end{proof}
\begin{rem}
	We will see from the Theorem \ref{1c} that $\mu=0 \text{ or }1/2$ can be chosen independently of $c_1^i\in Q_1$. 
\end{rem}
Combining the above two propositions shows 
\begin{prop}\label{dx}
	If $|\frac{d}{d\theta}E_1^i|<\delta_0^2$ for some $|\theta-\theta^*|<\delta_0/(10M_1)$, then $|\frac{d^2}{d\theta^2}E_1^i|\geq3-\delta_0^3>2$  for all $|\theta-\theta^*|<\delta_0/(10M_1)$.
\end{prop}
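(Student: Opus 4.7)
The plan is to combine the two preceding propositions in a straightforward way: the hypothesis forces $\theta+c_1^i\cdot\omega$ to lie close to a critical point of $v$ on the \emph{entire} interval $|\theta-\theta^*|<\delta_0/(10M_1)$, so the lower bound on $|v''|$ near its critical points (see Footnote \ref{ftnote1}) transfers to a lower bound on $|\frac{d^2}{d\theta^2}E_1^i|$ via Proposition \ref{ap}.

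More concretely, suppose $|\frac{d}{d\theta}E_1^i(\theta_1)|<\delta_0^2$ at some $\theta_1$ with $|\theta_1-\theta^*|<\delta_0/(10M_1)$. As in the proof of Proposition \ref{133}, Proposition \ref{ap} gives $|v'(\theta_1+c_1^i\cdot\omega)|=|\frac{d}{d\theta}E_0^i(\theta_1)|<2\delta_0^2$, and Lemma \ref{'0} then yields some $\mu\in\{0,\tfrac12\}$ with $\|\theta_1+c_1^i\cdot\omega-\mu\|\leq\delta_0^2$. For arbitrary $\theta$ with $|\theta-\theta^*|<\delta_0/(10M_1)$, the triangle inequality gives
$$\|\theta+c_1^i\cdot\omega-\mu\|\leq \|\theta_1+c_1^i\cdot\omega-\mu\|+|\theta-\theta_1|<\delta_0^2+\frac{\delta_0}{5M_1}<a,$$
because $\delta_0=\varepsilon_0^{1/20}$ is taken small enough while $a$ and $M_1$ are fixed constants. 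By the hypothesis on $v$ recorded in Footnote \ref{ftnote1}, this places $\theta+c_1^i\cdot\omega$ in the region where $|v''|>3$, hence $|v''(\theta+c_1^i\cdot\omega)|>3$ on the whole interval.

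Finally, since $\frac{d^2}{d\theta^2}E_0^i(\theta)=v''(\theta+c_1^i\cdot\omega)$, Proposition \ref{ap} gives
$$\big|\tfrac{d^2}{d\theta^2}E_1^i(\theta)\big|\geq |v''(\theta+c_1^i\cdot\omega)|-\delta_0^7>3-\delta_0^7>3-\delta_0^3>2,$$
which is the claimed bound. The only mildly delicate point is verifying that the interval length $\delta_0/(5M_1)$ stays comfortably below $a$, but this is automatic once $\varepsilon_0=\varepsilon_0(v,d,\tau,\gamma)$ is chosen small enough; no genuine obstacle arises and the argument is a direct consequence of the two preceding propositions together with the non-degeneracy of $v''$ near its critical points.
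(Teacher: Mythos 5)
Your proof is correct and follows essentially the same route as the paper: extract the phase bound $\|\theta_1+c_1^i\cdot\omega-\mu\|\leq\delta_0^2$ from the hypothesis via Propositions \ref{ap} and Lemma \ref{'0}, extend to the whole interval by the triangle inequality (the interval has width $\lesssim\delta_0/M_1\ll a$), invoke the $|v''|>3$ assumption of Footnote \ref{ftnote1}, and transfer the bound from $E_0^i$ to $E_1^i$ via Proposition \ref{ap}. You simply make explicit the triangle-inequality step that the paper leaves implicit.
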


\begin{proof}
	From the proof of  Proposition \ref{133},  $\min(\|\theta+c_1^i\cdot \omega\|,\|\theta+c_1^i\cdot \omega-\frac{1}{2}\|)\leq\delta_0^2\ll a$, which gives  $|\frac{d^2}{d\theta^2}E_0^i(\theta)|=|v''(\theta+c_1^i\cdot \omega )|>3$ for all $|\theta-\theta^*|<\delta_0/(10M_1)$.   Thus, $|\frac{d^2}{d\theta^2}E_1^i|>3-\delta_0^3$ by Proposition \ref{ap}.
\end{proof}

\begin{prop}\label{233}
	For  $c_1^i,c_1^j\in Q_1$, we have  $m(c_1^i,c_1^j)\leq\delta_0^{3}$. Thus, $\theta^*\pm m(c_1^i,c_1^j)$ belongs to the interval  with $|\theta-\theta^*|<\delta_0/(20M_1)$. Moreover, we have $E_1^j(\theta^*)=E_1^i(\theta^*+h)$, where $h=(c_1^j-c_1^i)\cdot \omega$ or $-((c_1^i+c_1^j)\cdot \omega+2\theta^*)$ {\rm (mod $1$)} satisfying $|h|=m(c_1^i,c_1^j)$.
\end{prop}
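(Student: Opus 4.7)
The plan is to first establish the bound $m(c_1^i,c_1^j)\leq \delta_0^3$ by bootstrapping the $0$-step Center Theorem (Theorem \ref{0c}) using the proximity estimate of Proposition \ref{ap}, and then to identify $E_1^j(\theta^*)$ with $E_1^i(\theta^*+h)$ via the natural translation and reflection symmetries of $H_{B_1^i}$ together with the uniqueness clause of Proposition \ref{k1}.

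For the numerical bound, note that both $|E_1^i(\theta^*)-E^*|$ and $|E_1^j(\theta^*)-E^*|$ are strictly less than $\delta_1$ by the definition of $Q_1$, so $|E_1^i(\theta^*)-E_1^j(\theta^*)|<2\delta_1$. Proposition \ref{ap} (with $s=0$) then implies $|E_0^i(\theta^*)-E_0^j(\theta^*)|<2\delta_1+2\delta_0^7<3\delta_0^7$, using $\delta_1\ll \delta_0^{100}$. Applying Theorem \ref{0c} yields $m(c_1^i,c_1^j)\leq 2|E_0^i(\theta^*)-E_0^j(\theta^*)|^{1/2}<\delta_0^3$. Since $M_1$ is a fixed constant and $\delta_0$ is sufficiently small, $\delta_0^3<\delta_0/(20M_1)$, so $\theta^*\pm m(c_1^i,c_1^j)\in (\theta^*-\delta_0/(20M_1),\theta^*+\delta_0/(20M_1))$.

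For the eigenvalue identity, two unitary equivalences follow from the geometric structure of $B_1^i=\Lambda_{l_1}(c_1^i)$. The lattice translation $x\mapsto x+(c_1^i-c_1^j)$ maps $B_1^j$ to $B_1^i$ and gives $H_{B_1^j}(\theta)\cong H_{B_1^i}(\theta+(c_1^j-c_1^i)\cdot\omega)$. Since $v$ is even and each $B_1^i$ is symmetric about $c_1^i$, the reflection $x\mapsto 2c_1^i-x$ further gives $H_{B_1^i}(\theta)\cong H_{B_1^i}(-\theta-2c_1^i\cdot\omega)$. Composing these produces the two candidate relations $H_{B_1^j}(\theta^*)\cong H_{B_1^i}(\theta^*+h_r)$ ($r=1,2$) with $h_1\equiv (c_1^j-c_1^i)\cdot\omega$ and $h_2\equiv -((c_1^i+c_1^j)\cdot\omega+2\theta^*) \pmod{1}$. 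I pick the representative $h\in\{h_1,h_2\}$ (in $(-1/2,1/2]$) whose absolute value equals $m(c_1^i,c_1^j)$. Under this unitary conjugation, $E_1^j(\theta^*)$ transports to an eigenvalue of $H_{B_1^i}(\theta^*+h)$. Since $|h|\leq\delta_0^3<\delta_0/(10M_1)$, Proposition \ref{k1}(a) guarantees that $H_{B_1^i}(\theta^*+h)$ has a unique eigenvalue in $\{|E-E^*|<\delta_0/9\}$, namely $E_1^i(\theta^*+h)$. The inherited eigenvalue $E_1^j(\theta^*)$ sits in this window (as $|E_1^j(\theta^*)-E^*|<\delta_1\ll\delta_0/9$), so it must equal $E_1^i(\theta^*+h)$. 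The only real bookkeeping obstacle is tracking which of $h_1,h_2$ achieves the minimum modulo $1$; but since both candidates lie well within the domain of Proposition \ref{k1}, the argument is essentially symmetric in the two choices.
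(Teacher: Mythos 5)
Your proof is correct and follows essentially the same route as the paper's: bound $|E_0^i(\theta^*)-E_0^j(\theta^*)|$ by combining $|E_1^i-E_1^j|<2\delta_1$ with Proposition \ref{ap} ($s=0$), push it through the $0$-step Center Theorem to get $m(c_1^i,c_1^j)\leq\delta_0^3$, and then deduce the eigenvalue identity from the unitary equivalence $H_{B_1^j}(\theta^*)\cong H_{B_1^i}(\theta^*+h)$ together with the uniqueness clause of Proposition \ref{k1}(a). The paper's proof is more terse — it merely asserts $H_{B_1^i}(\theta^*+h)=H_{B_1^j}(\theta^*)$ and invokes uniqueness — whereas you spell out how the translation $x\mapsto x+(c_1^i-c_1^j)$ and the reflection $x\mapsto 2c_1^i-x$ (which uses evenness of $v$ and symmetry of $B_1^i$) realize the two candidate shifts $h_1,h_2$; this is a welcome unpacking of the same idea, not a different argument. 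One tiny implicit step you rely on but do not flag (the paper does): $m(c_1^i,c_1^j)=m(c_0^i,c_0^j)$ because in Case 1 we have $P_1=Q_0$, i.e. $c_1^i=c_0^i$, so Theorem \ref{0c} can be applied directly to the stage-$1$ centers.
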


\begin{proof}
	From Proposition \ref{ap} for $s=0$,  we have $|E_0^i(\theta^*)-E_0^j(\theta^*)|\leq|E_0^i(\theta^*)-E_1^i(\theta^*)|+|E_1^i(\theta^*)-E_1^j(\theta^*)|+|E_1^j(\theta^*)-E_0^j(\theta^*)|\leq2\delta_1+2\delta_0^7<\delta_0^6/2$.   Then we get $m(c_1^i,c_1^j)=m(c_0^i,c_0^j)\leq\delta_0^{3}$ by Theorem \ref{0c}. The second statement follows from $H_{B_1^i}(\theta^*+h)=H_{B_1^j}(\theta^*)$ and the uniqueness of the eigenvalue in the interval  with $|E-E^*|<\delta_0/5$.
\end{proof}

Then {\bf Center Theorem} of the $1$-th step in {\bf Case 1} is as follows.

\begin{thm}[]\label{1c}
If $c_1^i,c_1^j\in Q_1 $, then \begin{equation*}
	m(c_1^i,c_1^j)<\sqrt{2}|E_1^i(\theta^*)-E_1^j(\theta^*)|^{1/2}<2\delta_1^{1/2}.
\end{equation*}
\end{thm}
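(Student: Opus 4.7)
My plan is to leverage Proposition~\ref{233}, which supplies an $h$ with $|h|=m(c_1^i,c_1^j)$ and $E_1^j(\theta^*)=E_1^i(\theta^*+h)$ where $h$ is, modulo~$1$, either $u:=(c_1^j-c_1^i)\cdot\omega$ or $v:=-2\theta^*-(c_1^i+c_1^j)\cdot\omega$; both expressions are valid simultaneously by the reflection $H_{B_1^j}(\theta^*)=H_{B_1^j}(-\theta^*-2c_1^j\cdot\omega)$ followed by translation to $B_1^i$. The goal is to convert the smallness of $|E_1^j(\theta^*)-E_1^i(\theta^*)|<2\delta_1$ into a square-root bound on $m$, and I split into two regimes based on the size of $|\frac{d}{d\theta}E_1^i(\theta^*)|$.

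In the non-resonant regime $|\frac{d}{d\theta}E_1^i(\theta^*)|\ge \delta_0^2$, Proposition~\ref{ap} combined with $|v''|\le M_1$ gives $|\frac{d^2}{d\theta^2}E_1^i|\le 2M_1$ on the relevant interval, so since $|h|\le\delta_0^3$ by Proposition~\ref{233}, the derivative stays above $\delta_0^2/2$ on $[\theta^*,\theta^*+h]$. The mean value theorem then yields
$$m=|h|\le\frac{2|E_1^j(\theta^*)-E_1^i(\theta^*)|}{\delta_0^2},$$
which is far stronger than $\sqrt{2|E_1^j-E_1^i|}$ because $|E_1^j-E_1^i|<2\delta_1\ll\delta_0^4$.

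In the resonant regime $|\frac{d}{d\theta}E_1^i(\theta^*)|<\delta_0^2$, Proposition~\ref{133} forces $\|\theta^*+c_1^i\cdot\omega-\mu\|<\delta_0^2$ for some $\mu\in\{0,1/2\}$, so the point $\tilde\theta_0:=\mu-c_1^i\cdot\omega$ lies within $\delta_0^2$ of $\theta^*$ and is a symmetry axis of $E_1^i$ (via the reflection $x\mapsto 2c_1^i-x$ combined with $v(\mu+s)=v(\mu-s)$). Proposition~\ref{dx} then supplies $|\frac{d^2}{d\theta^2}E_1^i|\ge 2$ throughout $|\theta-\theta^*|<\delta_0/(10M_1)$. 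Setting $f(t):=E_1^i(\tilde\theta_0+t)$, so $f$ is $C^2$ and even with $|f''|\ge 2$ near $0$, I prove the elementary inequality $|f(s+h)-f(s)|\ge|h|\cdot|2s+h|$ by integrating $|f'(r)|\ge 2|r|$ from $\min(|s|,|s+h|)$ to $\max(|s|,|s+h|)$ (reducing the opposite-sign case to the same-sign one via $f(r)=f(-r)$). Applied with $s=\theta^*-\tilde\theta_0$ and taking the representative of $h\in\{u,v\}$ that satisfies $|h|\le\delta_0^3$, a short modular computation gives $2s+u\equiv -v$ and $2s+v\equiv -u$ modulo~$1$; the smallness of $|s|,|u|,|v|$ then forces $|2s+u|=|v|$ and $|2s+v|=|u|$ as real numbers. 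Therefore
$$|E_1^j(\theta^*)-E_1^i(\theta^*)|\ge|u|\cdot|v|,$$
and the elementary inequality $\min(|u|,|v|)\le\sqrt{|u|\cdot|v|}$ closes the estimate to $m\le|E_1^j(\theta^*)-E_1^i(\theta^*)|^{1/2}$.

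Combining both regimes with $|E_1^j-E_1^i|<2\delta_1$ delivers the claimed chain $m<\sqrt{2}\,|E_1^j(\theta^*)-E_1^i(\theta^*)|^{1/2}<2\delta_1^{1/2}$. The subtle point I expect to navigate is the bookkeeping of modular representatives in the resonant case, ensuring that the real identities $|2s+u|=|v|$ and $|2s+v|=|u|$ hold exactly rather than merely modulo~$1$; here the crucial input is the a priori bound $\min(|u|,|v|)\le\delta_0^3$ from Proposition~\ref{233}, which guarantees both the validity of these identities and the applicability of the $|f''|\ge 2$ estimate on the entire integration interval. Running the $\mu=1/2$ case requires using the additional reflection symmetry $v(1/2+s)=v(1/2-s)$ of $v$, but is otherwise completely parallel to $\mu=0$.
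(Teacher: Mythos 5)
Your proof is correct and follows essentially the same route as the paper: use Proposition~\ref{233} to reduce the estimate to comparing $E_1^i(\theta^*+h)$ with $E_1^i(\theta^*)$, split into a non-resonant case handled by the mean value theorem and a resonant case where the symmetry point $\theta_s=\mu-c_1^i\cdot\omega$ combined with Proposition~\ref{dx} yields a Morse-type lower bound, and then use the modular identities $2s+u\equiv-v$, $2s+v\equiv-u$ to recognize that the ``reflection distance'' equals the other of $|u|,|v|$. Your inline Morse inequality $|f(s+h)-f(s)|\geq|h|\,|2s+h|$ is a marginally sharper replacement for the paper's Lemma~\ref{C2}, saving a factor of $2$, but the core mechanism (symmetry plus second-derivative lower bound, plus the observation that the two arguments of the $\min$ correspond precisely to $|u|$ and $|v|$) is identical to what the paper does.
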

\begin{proof}
By Proposition \ref{233}, we have $E_1^j(\theta^*)-E_1^i(\theta^*)=E_1^i(\theta^*+h)-E_1^i(\theta^*)$. If $|\frac{d}{d\theta}E_1^i|\geq\delta_0^3$ for all $|\theta-\theta^*|<|h|$, where $h$ was  defined in   Proposition \ref{233}, we get  
$$|E_1^i(\theta^*+h)-E_1^i(\theta^*)|\geq\delta_0^3 |h|\geq h^2.$$
Otherwise,  $|\frac{d}{d\theta}E_1^i|<\delta_0^3$ for some  $|\theta-\theta^*|<|h|$. By 
Proposition \ref{133}, we have $\delta_0^3>|\frac{d}{d \theta} E_1^i(\theta)|\geq\min(\|\theta+c_1^i\cdot \omega\|,\|\theta+c_1^i\cdot \omega-\frac{1}{2}\|)$.  So, the symmetry point $\theta_s=-c_1^i\cdot \omega$ or $-c_1^i\cdot \omega -1/2$ (mod $1$) belongs to the interval with  $|\theta-\theta^*|<\delta_0^2$. Recalling Proposition \ref{dx}, $E_1^i$ satisfies the conditions of Lemma \ref{C2} in Appendix \ref{AA} with $\theta_2=\theta^*+h,\theta_1=\theta^*,\delta=\delta_0^2$ and $|h|\leq\delta$. Thus, we have  
$$|E_1^i(\theta^*+h)-E_1^i(\theta^*)|\geq \frac{1}{2}\min(h^2,|2\theta^*+h-2\theta_s|^2)=\frac{1}{2}h^2.$$
\end{proof}

In the following, we deal with {\bf Case 2}, in which the level crossing may  take  place. \smallskip\\
{\textbf{Case 2}}.  $s_0\leq10|\log\varepsilon_0|^2.$
First, we have
\begin{lem}\label{mirrr}
Let $c_0^I,c_0^J\in Q_0$ satisfy $\|c_0^I-c_0^J\|_1=s_0$. Then every point $c_0^i\in Q_0$ has a mirror image $\tilde{c}_0^i=c_0^i\pm(c_0^J-c_0^I)$, whose sign is uniquely determined by 
\begin{equation}\label{mirr}
	\|2\theta^*+(c_0^i+\tilde{c}_0^i)\cdot \omega\|\leq6\delta_0^{1/2}.
\end{equation}
\end{lem}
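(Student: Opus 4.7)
The plan is to prove existence of the sign via a case analysis built on Theorem \ref{0c} applied to the pairs $(c_0^I, c_0^J)$, $(c_0^i, c_0^I)$ and $(c_0^i, c_0^J)$, and to obtain uniqueness from a Diophantine bound on the doubled vector $2(c_0^J-c_0^I)$.

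First I would use the minimality hypothesis $\|c_0^I-c_0^J\|_1=s_0\leq 10|\log\varepsilon_0|^2$ to rule out the ``translation'' alternative in $m(c_0^I,c_0^J)$. Indeed, the Diophantine condition forces
\[\|(c_0^I-c_0^J)\cdot\omega\|\geq \frac{\gamma}{(10|\log\varepsilon_0|^2)^{\tau}},\]
which, for $\varepsilon_0$ small, is much larger than $2\delta_0^{1/2}=2\varepsilon_0^{1/40}$. Combined with Theorem \ref{0c}, this leaves only the ``reflection'' alternative
\[\|2\theta^*+(c_0^I+c_0^J)\cdot\omega\|\leq 2\delta_0^{1/2}.\]

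Next, fix $c_0^i\in Q_0$ and apply Theorem \ref{0c} to each of the pairs $(c_0^i,c_0^I)$ and $(c_0^i,c_0^J)$; this produces two dichotomies of the form ``translation-small or reflection-small'', each with threshold $2\delta_0^{1/2}$. I would then run through the four combinations. The two ``translation+translation'' and ``reflection+reflection'' branches both imply $\|(c_0^J-c_0^I)\cdot\omega\|\leq 4\delta_0^{1/2}$ by subtraction, contradicting the Diophantine lower bound obtained in the previous paragraph; they therefore cannot occur. In each of the two surviving branches a single triangle inequality adds the two small quantities to yield exactly the required estimate with the appropriate choice of sign: the ``$\|(c_0^i-c_0^I)\cdot\omega\|$ small'' plus ``$\|2\theta^*+(c_0^i+c_0^J)\cdot\omega\|$ small'' branch gives $\tilde c_0^i=c_0^i+(c_0^J-c_0^I)$, and the symmetric branch gives $\tilde c_0^i=c_0^i-(c_0^J-c_0^I)$, in both cases with constant $4\delta_0^{1/2}\leq 6\delta_0^{1/2}$.

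For uniqueness of the sign, I would argue by contradiction: if both choices satisfied \eqref{mirr}, then writing $a=2\theta^*+2c_0^i\cdot\omega$ and $b=(c_0^J-c_0^I)\cdot\omega$, the simultaneous bounds $\|a+b\|,\|a-b\|\leq 6\delta_0^{1/2}$ would (by choosing nearest integer representatives and subtracting) give $\|2b\|\leq 12\delta_0^{1/2}$. But the Diophantine condition applied to the nonzero lattice vector $2(c_0^J-c_0^I)$ yields
\[\|2(c_0^J-c_0^I)\cdot\omega\|\geq \frac{\gamma}{(2s_0)^{\tau}}\geq \frac{\gamma}{(20|\log\varepsilon_0|^2)^{\tau}},\]
which dominates $12\delta_0^{1/2}$ for $\varepsilon_0$ sufficiently small; contradiction.

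The only mildly subtle point is the uniqueness argument, where one must remember that $\|2b\|$ small does \emph{not} a priori imply $\|b\|$ small (the value $b\approx 1/2$ would also work), so the Diophantine lower bound must be applied to $2(c_0^J-c_0^I)$ itself rather than to $c_0^J-c_0^I$. Everything else is a direct bookkeeping exercise combining Theorem \ref{0c} with the two Diophantine estimates; there is no genuine obstacle beyond verifying that the constant $6$ in \eqref{mirr} is comfortably larger than $4$ so the triangle inequalities go through.
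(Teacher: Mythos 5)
Your proof is correct and follows essentially the same approach as the paper: use Theorem \ref{0c} with the Diophantine condition to force the reflection alternative for $(c_0^I,c_0^J)$, and then the Center Theorem again on the pair(s) involving $c_0^i$ to determine the sign. The only variation is that the paper works with the single pair $(c_0^i,c_0^I)$ and, in the translation branch, doubles $(c_0^i-c_0^I)\cdot\omega$ (giving the constant $2\delta_0^{1/2}+4\delta_0^{1/2}=6\delta_0^{1/2}$), whereas you bring in both $(c_0^i,c_0^I)$ and $(c_0^i,c_0^J)$ and get $4\delta_0^{1/2}$ with two additional cases ruled out by Diophantine — a cosmetic difference. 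Your explicit uniqueness step, including the observation that the Diophantine bound must be applied to $2(c_0^J-c_0^I)$ rather than to $c_0^J-c_0^I$, is correct and fills in a detail the paper leaves implicit.
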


\begin{proof}
 Since $s_0 \leq10|\log\varepsilon_0|^2$, by the Diophantine  condition of $\omega$ and Theorem \ref{1c},  we must have $\|c_0^I\cdot \omega+c_0^J\cdot \omega+2 \theta^*\| \leq 2 \delta_0^{1/2}$. If $\|(c_0^i-c_0^I)\cdot \omega\| \leq 2 \delta_0^{1 / 2}$, we define $\tilde{c}_0^i=c_0^i+(c_0^J-c_0^I)$ and it is easy to check that  \eqref{mirr} holds true. If $\|(c_0^i+c_0^I)\cdot \omega+2 \theta^*\| \leq 2 \delta_0^{1 / 2}$, then $\tilde{c}_0^i=c_0^i-(c_0^J-c_0^I)$ is the required mirror image.
\end{proof}
\begin{rem}
	We call $\tilde{c}_0^i$ the mirror image of $c_0^i$ because for all $x\in \Z^d$, $v(\theta^*+(c_0^i+x)\cdot \omega)=v(\theta^*+(\tilde{c}_0^i-x)\cdot \omega)+O(\delta_0^{1/2})$. The mirror image is almost singular (in the sense of $\delta_0^{1/2}$-resonance) but might not belong to $Q_0$. This lemma together with Theorem \ref{0c} shows that each set $\Lambda$ with  $\operatorname{diam}\Lambda\sim |\log\varepsilon_0|^4$ can contain no more than two points of $Q_0$ and its mirror images. A third point  is excluded by $|\log\varepsilon_0|^4\ll \gamma \delta_0^{-1/(2\tau)}$ and the Diophantine condition of $\omega$.
\end{rem}
  In this case, we define $P_1=\{c_1^i:\ c_1^i=(c_0^i+\tilde{c}_0^i)/2,\ c_0^i\in  Q_0\}$ and associate every $ c_1^i\in P_1$ an $l_1:=|\log\varepsilon_0|^4$-size block $B_1^i=\Lambda_{l_1}(c_1^i)$. Again $Q_1$ is defined as 
$$Q_1=\big\{c_1^i\in P_1:\ \operatorname{dist}(\sigma(H_{B_1^i}(\theta^*)),E^*)<\delta_1:=e^{-l_1^{2/3}}\big\}. $$

\begin{lem}\label{514}
	There exists  $\mu=0\text{ or }1/2$ such that for every $c_1^i \in Q_1$,  $\|\theta^*+ c_1^i\cdot \omega+\mu\|\leq 3\delta_0^{1/2}.$
	\end{lem}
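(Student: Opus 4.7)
The plan is to produce a single $\mu \in \{0, 1/2\}$ determined by the fixed pair $(c_0^I, c_0^J)$ used in the mirror-image construction, and then verify that this same $\mu$ works for every $c_1^i \in Q_1$ simultaneously.

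The starting input is the bound $\|2\theta^* + (c_0^I + c_0^J)\cdot\omega\| \leq 2\delta_0^{1/2}$ already established inside the proof of Lemma \ref{mirrr} (which follows from Theorem \ref{0c} together with $s_0\leq 10|\log\varepsilon_0|^2$ and the Diophantine condition of $\omega$, the alternative $\|(c_0^J - c_0^I)\cdot\omega\|\leq 2\delta_0^{1/2}$ being ruled out by $\gamma s_0^{-\tau}\gg \delta_0^{1/2}$). I will write $2\theta^* + (c_0^I + c_0^J)\cdot\omega = n + \beta$ with $n \in \Z$ and $|\beta|\leq 2\delta_0^{1/2}$, and declare $\mu = 0$ if $n$ is even, $\mu = 1/2$ if $n$ is odd.

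For a given $c_1^i \in Q_1$, I pick a source $c_0^i \in Q_0$ with $c_1^i = (c_0^i + \tilde{c}_0^i)/2$. Theorem \ref{0c} applied to the pair $(c_0^i, c_0^I)$ forces one of two matched alternatives: Case A, where $\|(c_0^i - c_0^I)\cdot\omega\|\leq 2\delta_0^{1/2}$ and, by the proof of Lemma \ref{mirrr}, $\tilde{c}_0^i = c_0^i + (c_0^J - c_0^I)$, and Case B, where $\|2\theta^* + (c_0^i + c_0^I)\cdot\omega\| \leq 2\delta_0^{1/2}$ and $\tilde{c}_0^i = c_0^i - (c_0^J - c_0^I)$. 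In each case I will substitute $c_1^i = c_0^i \pm (c_0^J - c_0^I)/2$ into $\theta^* + c_1^i\cdot\omega$ and rewrite using $2\theta^* + (c_0^I + c_0^J)\cdot\omega = n + \beta$; the result in both cases has the form $\theta^* + c_1^i\cdot\omega = n/2 + (\text{integer}) + (\text{remainder})$ with a remainder bounded by $|\beta|/2 + 2\delta_0^{1/2} \leq 3\delta_0^{1/2}$.

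Since $n/2$ is an integer when $n$ is even and a half-integer when $n$ is odd, this delivers exactly $\|\theta^* + c_1^i\cdot\omega + \mu\| \leq 3\delta_0^{1/2}$ with the prescribed $\mu$. Crucially, $\mu$ depends only on the parity of $n$, which is fixed by $\theta^*, \omega, c_0^I, c_0^J$ and not on the particular $c_1^i$, so one value of $\mu$ serves the entire set $Q_1$. The one delicate point is to check that Cases A and B genuinely produce the same parity of the integer part — this is where the sign convention built into the mirror-image construction is exactly what is needed, so the two algebraic manipulations line up. I do not anticipate a serious obstacle beyond this bookkeeping; the rest is modular arithmetic on $\T$.
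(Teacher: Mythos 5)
Your proof is correct and takes a genuinely different route from the paper's. The paper argues by pairwise exclusion: it shows $\|(c_1^i-c_1^j)\cdot\omega\|\leq 20\delta_0^{1/2}$ for every pair in $Q_1$ by first bounding $m((c_1^i)^+,(c_1^j)^+)$ and then ruling out the ``sum'' alternative with a fresh Diophantine contradiction ($\|2(c_0^J-c_0^I)\cdot\omega\|\geq\gamma/(2s_0)^\tau\gg\delta_0^{1/2}$); two centers with opposite $\mu$ would then force $\|(c_1^i-c_1^j)\cdot\omega+1/2\|\leq 6\delta_0^{1/2}$, contradicting that bound. You instead identify $\mu$ explicitly as the parity of the nearest integer $n$ to $2\theta^*+(c_0^I+c_0^J)\cdot\omega$ and verify each $c_1^i$ directly: after substituting $c_1^i=c_0^i\pm(c_0^J-c_0^I)/2$ and using the alternative from Theorem~\ref{0c} matched with the sign fixed in Lemma~\ref{mirrr}, you get $\theta^*+c_1^i\cdot\omega\equiv\pm n/2\pmod 1$ up to error at most $3\delta_0^{1/2}$, and since $n/2$ and $-n/2$ have the same residue mod $1$ (namely $0$ if $n$ is even, $1/2$ if $n$ is odd), a single $\mu$ serves both Cases A and B --- the ``delicate point'' you flag is exactly this one-line observation and it does close. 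Your route is a bit more economical (the Diophantine condition enters only through the already-proved Lemma~\ref{mirrr}, with no further exclusion step inside this proof) and it is more transparent about what $\mu$ actually is; the paper's route produces the auxiliary bound $\|(c_1^i-c_1^j)\cdot\omega\|\leq 20\delta_0^{1/2}$ as a byproduct, though that estimate is not reused elsewhere.
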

\begin{proof}
	Let $c_1^i, c_1^j \in Q_1$. Recall  the definition of mirror image in Lemma \ref{mirrr}. If we denote $(c_1^i)^{\pm}=c_1^i\pm(c_0^J-c_0^I)/2, (c_1^j)^{\pm}=c_1^j\pm(c_0^J-c_0^I)/2$, then $(c_1^i)^{+}$ (resp. $(c_1^j)^{+}$) is the mirror image of   $(c_1^i)^{-}$ (resp. $(c_1^j)^{-}$). Using  \eqref{mirr} and the  simple fact $m(k_1,k_3)\leq m(k_1,k_2)+m(k_2,k_3)$, we deduce $m((c_1^i)^{+},(c_1^j)^{+})\leq20\delta_0^{1/2}$. So, we must exclude the case  \begin{equation}\label{434}
		\|((c_1^i)^{+}+(c_1^j)^{+})\cdot \omega+2\theta^*\|\leq20\delta_0^{1/2}.
	\end{equation} 
Otherwise, assume that \eqref{434} holds true. From \eqref{mirr}, we obtain  $\|((c_1^j)^{+}-(c_1^i)^{-})\cdot \omega\|\leq26\delta_0^{1/2}$ and  $\|((c_1^j)^{-}-(c_1^i)^{+})\cdot \omega\|\leq26\delta_0^{1/2}$, which gives  us $$\|((c_1^j)^{+}-(c_1^j)^{-}+(c_1^i)^{+}-(c_1^i)^{-})\cdot \omega\|\leq52\delta_0^{1/2}.$$
However, by the Diophantine condition of $\omega$,  the left hand side of the above inequality has a lower bound 
$$\|((c_1^j)^{+}-(c_1^j)^{-}+(c_1^i)^{+}-(c_1^i)^{-})\cdot \omega\|=\|2(c_1^J-c_1^I)\cdot \omega\|\geq\frac{\gamma}{(2s_0)^\tau}\gg\delta_0^{1/2}.$$
Thus, we must have $\|((c_1^i)^{+}-(c_1^j)^{+})\cdot \omega\|\leq20\delta_0^{1/2}$ and hence,  
\begin{equation}\label{508}
	\|(c_1^i-c_1^j)\cdot \omega\|\leq20\delta_0^{1/2}. 
\end{equation} By \eqref{mirr}, we have $\|\theta^*+c\cdot \omega\|\leq3\delta_0^{1/2}$ or  $\|\theta^*+c\cdot \omega+\frac{1}{2}\|\leq3\delta_0^{1/2}$ for every $c\in Q_1 $ and exactly one of the inequalities holds true since $6\delta_0^{1/2}<\frac{1}{2}$. Assume that there exist $c_1^i, c_1^j \in Q_1$ such that $\|\theta^*+c_1^i\cdot \omega\|\leq3\delta_0^{1/2}$ and $\|\theta^*+c_1^j\cdot \omega-\frac{1}{2}\|\leq3\delta_0^{1/2}$. Then $\|(c_1^j-c_1^i)\cdot \omega+\frac{1}{2}\|\leq6\delta_0^{1/2}$, which contradicts  \eqref{508}. 
\end{proof}
From  Lemma \ref{514}, there is $\mu=0\text{ or }1/2$ such that for every $c_1^i\in Q_1$, there exists a symmetric  point $\theta_s$ satisfying  \begin{equation}\label{duichen}
	\theta_s:=-c_1^i\cdot \omega+\mu\  { \rm (mod\ 1)},\  |\theta_s-\theta^*|\leq3\delta_0^{1/2}.
\end{equation}
We  call $\theta_s$  the symmetric  point of  $H_{B_1^i}(\theta)$ since $H_{B_1^i}(\theta)=H_{B_1^i}(2\theta_s-\theta)$. 
For $c_0^i \in Q_0$, we have $|E^i_0(\theta^*)-E^*| \leq \delta_0 $. For convenience, we define $\tilde{E}_0^i(\theta)=v(\theta+\tilde{c}_0^i\cdot \omega)$. Moreover, $|\tilde{E}_0^i(\theta^*)-E^*| \leq 10M_1\delta_0^{1/2}$ since $m(c_0^i,\tilde{c}_0^i)\leq6\delta_0^{1/2}$. Thus, in each block $B_1^i$, we have two values of the potential near $E^*$ which will be used to generate two eigenvalues in $\sigma(H_{B_1^i}(\theta))$ near $E^*$. More precisely, we have 
\begin{prop}\label{55}If $c_1^i\in Q_1$, then for $|\theta-\theta^*|<10\delta_0^{1/2}$,
	\begin{itemize}
		\item [\textbf{(a)}] $H_{B_1^i}(\theta)$ has  exact two eigenvalues $E_1^i(\theta)$ and $\mathcal{E}_1^i(\theta)$ in the interval $|E-E^*|<50M_1\delta_0^{1/2}$. Moreover, any other $\hat{E}\in\sigma(H_{B_1^i}(\theta)) $ must obey $|\hat{E}-E^*|>2\delta_0^{1/8}$.
		\item [\textbf{(b)}] The corresponding   eigenfunction of $E_1^i$ {\rm (resp. $\mathcal{E}_1^i$)}, $\psi_1$ {\rm(resp. $\Psi_1$)} decays exponentially fast away from $c_0^i$ and $\tilde{c}_0^i$, i.e.,
		$$|\psi_1(x)|\leq e^{-\gamma_0\|x-c_1^i\|_1}+e^{-\gamma_0\|x-\tilde{c}_1^i\|_1},$$
		$$|\Psi_1(x)|\leq e^{-\gamma_0\|x-c_1^i\|_1}+e^{-\gamma_0\|x-\tilde{c}_1^i\|_1}.$$
	Thus, the two eigenfunctions can be expressed as 
	\begin{align}\label{y}
	\begin{split}
	\psi_1(x)=A\delta(x-c_0^i)+B\delta(x-\tilde{c}_0^i)+O(\delta_0^{10}),\\
	\Psi_1(x)=B\delta(x-c_0^i)-A\delta(x-\tilde{c}_0^i)+O(\delta_0^{10}),
	\end{split}	
	\end{align}
	where $A^2+B^2=1$.
	\item [\textbf{(c)}] $\|G_{B_1^i}^{\perp\perp}(\theta;E_1^i)\|\leq\delta_0^{-1/8}$, where $G_{B_1^i}^{\perp\perp}$ denotes the Green's function for $B_1^i$ on the orthogonal complement of the space spanned by $\psi_1$ and $\Psi_1$.
	\end{itemize}
\end{prop}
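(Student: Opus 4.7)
The plan is to adapt the strategy of Proposition \ref{k1} to the two-dimensional singular subspace $W := \operatorname{span}(\delta_{c_0^i}, \delta_{\tilde{c}_0^i})$ in place of the one-dimensional subspace used there. The first task is to establish strong non-resonance on the complement $\Lambda := B_1^i \setminus \{c_0^i, \tilde{c}_0^i\}$, namely $|v(\theta^* + x\cdot\omega) - E^*| \geq 3\delta_0^{1/8}$ for every $x \in \Lambda$. If this failed, Lemma \ref{a} applied to $\theta^* + x\cdot\omega$ and $\theta^* + c_0^i\cdot\omega$ would give $\min(\|(x - c_0^i)\cdot\omega\|, \|2\theta^* + (x+c_0^i)\cdot\omega\|) \leq 2\delta_0^{1/16}$; combined with the mirror image estimate \eqref{mirr} this forces either $\|(x - c_0^i)\cdot\omega\| \leq 2\delta_0^{1/16}$ or $\|(x - \tilde{c}_0^i)\cdot\omega\| \leq 3\delta_0^{1/16}$. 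Since $\delta_0 = \varepsilon_0^{1/20}$ makes $\delta_0^{-1/(16\tau)}$ dwarf any power of $|\log\varepsilon_0|$, the Diophantine condition on $\omega$ forces $\|x - c_0^i\|_1$ or $\|x - \tilde{c}_0^i\|_1$ to exceed $2 l_1 = 2|\log\varepsilon_0|^4$, contradicting $x \in B_1^i$.

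Granted this non-resonance, the Neumann series argument of Theorem \ref{0g} applies on $\Lambda$ in the wider window $|E-E^*| < 2\delta_0^{1/8}$, $|\theta - \theta^*| < 10\delta_0^{1/2}$, yielding $\|G_\Lambda(\theta;E)\| \leq C\delta_0^{-1/8}$ with off-diagonal exponential decay at rate $\gamma_0$. For any eigenpair $(E, \psi)$ of $H_{B_1^i}(\theta)$ with $E$ in this window, restricting the eigenvalue equation to $\Lambda$ expresses $\psi|_\Lambda$ as $G_\Lambda(\theta;E)$ applied to an $O(\varepsilon)$ boundary source depending linearly on $(\psi(c_0^i), \psi(\tilde{c}_0^i))$. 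This yields the decay estimate in \textbf{(b)} and shows that the trace map $\psi \mapsto (\psi(c_0^i), \psi(\tilde{c}_0^i)) \in \C^2$ is injective on the eigenspace in this window: if both values vanished then $\psi|_\Lambda$ would lie in the kernel of the invertible operator $H_\Lambda - E$. Hence at most two eigenvalues appear in $|E-E^*| < 2\delta_0^{1/8}$.

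To produce exactly two such eigenvalues I would invoke the Schur (Feshbach) complement. With $P$ the orthogonal projection onto $W$ and $Q = I - P$, the operator $H_{B_1^i}(\theta) - E$ is invertible iff the $2 \times 2$ matrix $S(E) = P(H_{B_1^i}(\theta) - E)P - \varepsilon^2 P \Delta Q G_\Lambda(\theta;E) Q \Delta P$ is invertible. Since $c_0^i$ and $\tilde{c}_0^i$ are non-adjacent, $P H_{B_1^i}(\theta) P = \operatorname{diag}(v(\theta + c_0^i\cdot\omega), v(\theta + \tilde{c}_0^i\cdot\omega))$, and the correction term is $O(\varepsilon^2 \delta_0^{-1/8}) = O(\delta_0^{10})$. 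The two zeros of $\det S(E)$ therefore lie within $O(\delta_0^{10})$ of the diagonal entries, each of which is within $10 M_1 \delta_0^{1/2}$ of $E^*$ by \eqref{mirr}. This furnishes the two eigenvalues $E_1^i(\theta), \mathcal{E}_1^i(\theta)$ inside $|E - E^*| < 50 M_1 \delta_0^{1/2}$ and, combined with the injectivity above, confirms that these are the only eigenvalues in the wider window, proving \textbf{(a)}. The representation \eqref{y} then follows by writing the orthonormal pair $\psi_1, \Psi_1$ as unit vectors of $W$ up to the $O(\delta_0^{10})$ tail from \textbf{(b)}, and \textbf{(c)} is immediate since $|E_1^i - E^*| \ll \delta_0^{1/8}$ gives $\operatorname{dist}(\sigma(H^{\perp\perp}_{B_1^i}), E_1^i) \geq 2\delta_0^{1/8} - 50 M_1 \delta_0^{1/2} \geq \delta_0^{1/8}$.

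The main obstacle is the non-resonance step at the coarser scale $\delta_0^{1/8}$: since $\delta_0^{1/8} > \delta_0^{1/2}$, one cannot directly invoke the remark after Lemma \ref{mirrr} (which only excludes a third $\delta_0^{1/2}$-close site) and must rerun the Diophantine/mirror analysis at this weaker scale. The argument is salvaged precisely because $\delta_0^{-1/(16\tau)}$ still beats any polynomial in $|\log\varepsilon_0|$, which keeps the exclusion radius larger than $l_1$.
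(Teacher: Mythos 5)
Your proposal is essentially correct and follows the same overall skeleton as the paper (non-resonance on $\Lambda = B_1^i\setminus\{c_0^i,\tilde c_0^i\}$, Poisson-identity decay, a dimension count for "at most two"), but the \emph{existence} step is done by a genuinely different device. You first correctly notice that the non-resonance must be re-derived at the coarser scale $\delta_0^{1/8}$ rather than quoted from the remark after Lemma~\ref{mirrr}, and you carry this out via Lemma~\ref{a}, the mirror estimate \eqref{mirr}, and the Diophantine condition; the paper does the same but phrases the estimate in terms of $m(x,c_1^i)$. For existence of \emph{exactly two} eigenvalues near $E^*$ the paper applies the trial-function Corollary~\ref{trialcor} to the pair $\delta_{c_0^i},\delta_{\tilde c_0^i}$ and then excludes a third eigenvalue by an orthogonality contradiction, whereas you use the Schur/Feshbach complement $S(E)=P(H-E)P-\varepsilon^2 P\Delta Q G_\Lambda(E)Q\Delta P$ together with injectivity of the trace map $\psi\mapsto(\psi(c_0^i),\psi(\tilde c_0^i))$. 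Both routes work; the variational route is shorter here, while the Schur complement produces the eigenvalues as explicit perturbations of the two diagonal potential values. Two small points in your Schur argument deserve cleanup: (i) you assume $c_0^i$ and $\tilde c_0^i$ are non-adjacent, but nothing prevents $s_0=1$ (the Center Theorem only forces $\|2\theta^*+(c_0^I+c_0^J)\cdot\omega\|$ to be small, not $\|c_0^I-c_0^J\|_1$ to be large); when they are adjacent $PHP$ has off-diagonal entry $\varepsilon=\delta_0^{20}$, which is harmless but should be acknowledged. (ii) You assert the two zeros of $\det S(E)$ lie within $O(\delta_0^{10})$ of the diagonal entries; this is correct but needs a one-line justification that $S(E)$ is Hermitian with $\tfrac{d}{dE}S(E)\le -I$, so each of its two eigenvalue branches is strictly decreasing and crosses zero exactly once in the window (equivalently, a Rouch\'e-type count). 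Neither point affects the conclusion.
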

\begin{proof}
From  $|v(\theta^*+c_0^i\cdot \omega)-E^*|<\delta_0$, we get $\|(H_{B_1^i}(\theta^*)-E^*)\delta(x-c_0^i)\|<\delta_0+2d\varepsilon<2\delta_0$. Since $m(c_0^i.\tilde{c}_0^i)\leq6\delta_0^{1/2}$, we have $|v(\theta^*+\tilde{c}_0^i\cdot \omega)-E^*|<6\delta_0^{1/2}+\delta_0$ and hence $\|(H_{B_1^i}(\theta^*)-E^*)\delta(x-\tilde{c}_0^i)\|<7\delta_0^{1/2}$.
	Thus, we find two orthogonal  trial wave functions of  $H_{B_1^i}(\theta^*)-E^*$, which proves the existence of $E_1^i(\theta^*),\mathcal{E}_1^i(\theta^*)$ in $|E-E^*|\leq7\sqrt{2}\delta_0^{1/2}$ by Corollary \ref{trialcor} in Appendix \ref{AD}. Using $|V'|\leq M_1$, we can extend the existence of  $E_1^i(\theta),\mathcal{E}_1^i(\theta)$ in $|E-E^*|<50M_1\delta_0^{1/2}$   for $|\theta-\theta^*|<10\delta_0^{1/2},$ which proves the existence part of \textbf{(a)}. To establish the decay of eigenfunctions, we notice that \begin{align*}
		|v(\theta+x\cdot \omega)-E^*|&\geq |v(\theta^*+x\cdot \omega)-v(\theta^*+c_1^i\cdot \omega)|\\
		&\ \ \ -|v(\theta+x\cdot \omega)-v(\theta^*+x\cdot \omega)|-|v(\theta^*+c_1^i\cdot \omega)-E^*|\\
		&\geq m(x,c_1^i)^2- 10M_1\delta_0^{1/2}-\delta_1 \\
		&\geq (\frac{\gamma}{(2l_1)^\tau}-6\delta_0^{1/2})^2-11M_1\delta_0^{1/2}\\&>10\delta_0^{1/8}
	\end{align*} for $x\in B_1^i\setminus\{c_0^i,\tilde{c}_0^i\}$ by Lemma \ref{a} and the Diophantine condition. Thus, the  Green's function of $\Lambda=B_1^i\setminus\{c_0^i,\tilde{c}_0^i\}$ satisfies 
	$$\|G_\Lambda(\theta;E_1^i)\|\leq \delta_0^{-1/8},\ |G_\Lambda(\theta;E_1^i)(x,y)|\leq\delta_0^{-1/8}e^{-\gamma_0\|x-y\|_1},$$ which along with the Poisson's identity yields the exponential decay of eigenfunctions in {\bf (b)}.  The expression \eqref{y} follows from the fact that $\psi_1$ and $\Psi_1$ are normalized and orthogonal to each other. Finally, if there exists a third  eigenvalue $|\hat{E}-E^*|<2\delta_0^{1/8}$, the same argument shows that its eigenfunction decays exponentially fast  away from $c_0^i$ and $\tilde{c}_0^i$ and hence almost localized in $\{c_0^i,\tilde{c}_0^i\}$, which violates the orthogonality. Obviously, \textbf{(c)} immediately follows from \textbf{(a)}.
\end{proof}
\begin{rem}
	In \textbf{(b)}, we express $\psi_{1}$ and $\Psi_{1}$ in terms of $\psi_{0}=\delta(x-c_0^i)$ and $\tilde{\psi}_0=\delta(x-\tilde{c}_0^i)$. This  will allow us to relate the derivatives of $E_{1}^i$ and $\mathcal{E}_{1}^i$ to those of $E_0$ and $\tilde{E}_0$. To this end, we need to prove two technical lemmas about $E_0$ and $\tilde{E}_0$.
\end{rem}
\begin{lem}\label{cha}
	For $|\theta-\theta^*|<10\delta_0^{1/2}$, we have 
\begin{equation}\label{ab}
		\big|\frac{d}{d\theta}\left(E_0^i+\tilde{E}_0^i\right)(\theta)\big|\leq 30M_1\delta_0^{1/2}.
\end{equation}
\end{lem}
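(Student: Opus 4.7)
The plan is to reduce everything to the oddness of $v'$ together with the mirror image constraint. Since $E_0^i(\theta) = v(\theta+c_0^i\cdot\omega)$ and $\tilde{E}_0^i(\theta) = v(\theta+\tilde{c}_0^i\cdot\omega)$, setting $\alpha := \theta+c_0^i\cdot\omega$ and $\beta := \theta+\tilde{c}_0^i\cdot\omega$ reduces the desired estimate to bounding $v'(\alpha)+v'(\beta)$. Because $v$ is even and $1$-periodic, $v'$ is an odd $1$-periodic function, so $v'(-\alpha)+v'(\alpha)=0$. The whole proof therefore amounts to quantifying how close $\beta$ is to $-\alpha$ modulo $1$, then applying Taylor.

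To quantify that closeness, I will combine the mirror image condition \eqref{mirr}, namely $\|2\theta^*+(c_0^i+\tilde{c}_0^i)\cdot\omega\|\leq 6\delta_0^{1/2}$, with the hypothesis $|\theta-\theta^*|<10\delta_0^{1/2}$. The triangle inequality then gives
\begin{equation*}
\|\alpha+\beta\| \leq \|2\theta^*+(c_0^i+\tilde{c}_0^i)\cdot\omega\| + 2|\theta-\theta^*| \leq 6\delta_0^{1/2}+20\delta_0^{1/2} = 26\delta_0^{1/2},
\end{equation*}
so there is an integer $m$ with $|\alpha+\beta-m|\leq 26\delta_0^{1/2}$.

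The closing step is a one-line Taylor expansion: by $1$-periodicity $v'(\beta)=v'(\beta-m)$, and expanding around $-\alpha$ using $|v''|\leq M_1$ yields
\begin{equation*}
v'(\beta) = v'(-\alpha)+v''(\xi)(\alpha+\beta-m) = -v'(\alpha)+v''(\xi)(\alpha+\beta-m)
\end{equation*}
for some $\xi$ between $-\alpha$ and $-\alpha+(\alpha+\beta-m)$. Hence $|v'(\alpha)+v'(\beta)|\leq 26M_1\delta_0^{1/2}\leq 30M_1\delta_0^{1/2}$, which is exactly \eqref{ab}. The only step requiring genuine care is the bookkeeping of the periodicity and the sign convention when invoking the oddness of $v'$; once those are straight, the estimate is immediate.
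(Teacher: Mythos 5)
Your proof is correct and the estimate it produces ($26M_1\delta_0^{1/2}$) matches what the paper's computation yields. The underlying idea is the same as the paper's — exploit evenness of $v$ (oddness of $v'$) together with the mirror-image proximity \eqref{mirr} and a first-order Taylor expansion with $|v''|\le M_1$. The presentations differ: the paper anchors the MVT argument at the symmetry point $\theta_s$ using the relation $\tilde{E}_0^i(\theta)=E_0^i(-\theta+2\theta_s)$ and the fact that $\frac{d}{d\theta}(E_0^i+\tilde{E}_0^i)$ vanishes there, whereas you expand $v'(\beta)$ directly around $-\alpha$ (mod $1$). Your route is slightly more concrete, but it relies on the explicit formula $E_0^i(\theta)=v(\theta+c_0^i\cdot\omega)$; the paper's phrasing is designed so that the identical argument carries over verbatim to the higher-stage analogues (Lemmas \ref{cham} and \ref{chan}), where the eigenvalue parameterizations $E_n^i$ are no longer given explicitly by $v$ and one must instead use $\tilde{E}_n^i(\theta)=E_n^i(-\theta+2\theta_{n+1}^i)$ together with a bound on $|E_n^{i\prime\prime}|$ from the Green's function estimates. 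For this stage-$0$ lemma, either route is fine.
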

\begin{proof}
Recalling the definition of $\theta_s$ (cf. \eqref{duichen}) and  from  $\tilde{E}_0^i(\theta)=E_0^i(-\theta+2\theta_s)$, we deduce 
 $$|\frac{d}{d\theta}\left(E_0^i+\tilde{E}_0^i\right)(\theta_s)|=0.$$
 Thus,
 \begin{align*}
 	\big|\frac{d}{d\theta}\left(E_0^i+\tilde{E}_0^i\right)(\theta)\big|&=|\frac{d}{d\theta}\left(E_0^i+\tilde{E}_0^i\right)(\theta)-\frac{d}{d\theta}\left(E_0^i+\tilde{E}_0^i\right)(\theta_s)|\\
 	&\leq\sup \left(|\frac{d^2}{d\theta^2}E_0^i|+ |\frac{d^2}{d\theta^2}\tilde{E}_0^i|\right)\cdot |\theta-\theta_s|\\
 	&\leq30M_1\delta_0^{1/2}.
 \end{align*}
\end{proof}
\begin{lem}\label{xiajie}
	For $|\theta-\theta^*|<10\delta_0^{1/2}$, we have $|\frac{d}{d\theta}E_0^i(\theta)|\geq\delta_0^{1/9}$.
\end{lem}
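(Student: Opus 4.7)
The goal is to show a lower bound on $|\frac{d}{d\theta}E_0^i(\theta)| = |v'(\theta + c_0^i\cdot\omega)|$ for every $\theta$ in the window $|\theta-\theta^*|<10\delta_0^{1/2}$. The plan is a contradiction argument: suppose $|v'(\theta + c_0^i\cdot\omega)|<\delta_0^{1/9}$ at some such $\theta$, then use Lemma \ref{'0} to conclude that $\theta+c_0^i\cdot\omega$ sits within $\delta_0^{1/9}/2$ of a critical point of $v$ (either $0$ or $1/2$). Since $|\theta-\theta^*|<10\delta_0^{1/2}\ll \delta_0^{1/9}$, the same critical point is also close to $\theta^*+c_0^i\cdot\omega$, i.e.\
\[
\min\bigl(\|\theta^*+c_0^i\cdot\omega\|,\ \|\theta^*+c_0^i\cdot\omega-\tfrac12\|\bigr)< \delta_0^{1/9}.
\]

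The next step is to exploit the mirror image structure. Since $c_0^i\in Q_0$ and we are in \textbf{Case 2}, Lemma \ref{mirrr} supplies $\tilde c_0^i=c_0^i\pm(c_0^J-c_0^I)$ with $\|2\theta^*+(c_0^i+\tilde c_0^i)\cdot\omega\|\le 6\delta_0^{1/2}$. Combining this with the bound above (treating the cases $\mu=0$ and $\mu=1/2$ symmetrically, i.e.\ $\theta^*+c_0^i\cdot\omega$ close to $0$ versus close to $1/2$) yields
\[
\min\bigl(\|\theta^*+\tilde c_0^i\cdot\omega\|,\ \|\theta^*+\tilde c_0^i\cdot\omega-\tfrac12\|\bigr)\le \delta_0^{1/9}+6\delta_0^{1/2}\le 2\delta_0^{1/9},
\]
with the same critical point $0$ or $1/2$ on both sides. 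Subtracting the two near-equalities gives
\[
\|(c_0^i-\tilde c_0^i)\cdot\omega\|\le 3\delta_0^{1/9}.
\]

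Finally, the Diophantine condition gives the reverse inequality. By construction $c_0^i-\tilde c_0^i=\mp(c_0^J-c_0^I)\neq 0$ and $\|c_0^i-\tilde c_0^i\|_1=s_0\le 10|\log\varepsilon_0|^2$, so
\[
\|(c_0^i-\tilde c_0^i)\cdot\omega\|\ge \frac{\gamma}{s_0^\tau}\ge \frac{\gamma}{(10|\log\varepsilon_0|^2)^\tau}.
\]
Recalling $\delta_0=\varepsilon_0^{1/20}$, the quantity $\delta_0^{1/9}=\varepsilon_0^{1/180}$ tends to zero polynomially in $\varepsilon_0$ while $\gamma/(10|\log\varepsilon_0|^2)^\tau$ decays only poly-logarithmically, so $3\delta_0^{1/9}\ll\gamma/(10|\log\varepsilon_0|^2)^\tau$ for $\varepsilon_0$ small (depending on $v,d,\tau,\gamma$). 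This contradicts the upper bound and completes the proof.

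The only genuinely delicate point is correctly pairing the two critical points when applying the mirror relation: one has to verify that if $\theta^*+c_0^i\cdot\omega$ is close to $0$ then the mirror identity forces $\theta^*+\tilde c_0^i\cdot\omega$ to also be close to $0$ (rather than to $1/2$), and similarly in the other case. This is a short but important case analysis modulo $1$, and once it is done, the Diophantine obstruction closes the argument cleanly. No new ingredients beyond Lemma \ref{'0}, Lemma \ref{mirrr}, and the Diophantine condition are needed.
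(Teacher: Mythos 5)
Your proof is correct and is essentially the same argument as the paper's, just organized as a proof by contradiction: the paper directly lower-bounds $\|2\theta+2c_0^i\cdot\omega\|$ using the mirror relation and the Diophantine condition, then invokes Lemma \ref{'0}, while you assume a small derivative, propagate the resulting closeness to a critical point through the mirror relation, and then contradict the Diophantine lower bound on $\|(c_0^i-\tilde c_0^i)\cdot\omega\|$. The ingredients (Lemma \ref{'0}, Lemma \ref{mirrr}, Diophantine condition) and the quantitative bookkeeping match; no gap.
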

\begin{proof}
	Since $\|2\theta^*+(c_0^i+\tilde{c}_0^i)\cdot \omega\|\leq6\delta_0^{1/2}$, we deduce from  the Diophantine  condition that $$\|2\theta+2c_0^i\cdot \omega\|\geq \|(\tilde{c}_0^i-c_0^i)\cdot \omega\|-\|2\theta^*+(c_1^i+\tilde{c}_1^i)\cdot \omega\|-2|\theta-\theta^*|>2\delta_0^{1/9}.$$
	Thus, $\min(\|\theta+c_0^i\cdot \omega\|,\|\theta+c_0^i\cdot \omega+\frac{1}{2}\|)\geq\delta_0^{1/9}.$
	The proof now follows from  Lemma \ref{'0}.
\end{proof}

Now we can prove the following proposition,  which relates the derivatives of $E_{1}^i$ and $\mathcal{E}_{1}^i$ to those of $E_0$ and $\tilde{E}_0$.

\begin{prop}\label{guji}
	Let $|\theta-\theta^*|<10\delta_0^{1/2}$. Then 
	\begin{itemize}
		\item [\textbf{(a)}] $E_1^i$ and $\mathcal{E}_1^i$ are $C^1$ functions and if $ E_1^i(\theta)\neq\mathcal{E}_1^i(\theta)$, then
		\begin{align}
			\frac{d}{d \theta} E_1^i & =(A^2-B^2) \frac{d}{d\theta} E_0^i+O(\delta_0^{1 / 2} ), \label{de}\\
			\frac{d}{d \theta} \mathcal{E}_1^i & =(B^2-A^2) \frac{d}{d\theta} E_0^i+O(\delta_0^{1 / 2} ).\nonumber
		\end{align}
	\item [\textbf{(b)}] If $ E_1^i(\theta)\neq\mathcal{E}_1^i(\theta)$, then $\frac{d^2}{d \theta^2} E_1^i(\theta)$ and 	$\frac{d^2}{d \theta^2} \mathcal{E}_1^i(\theta) $ exist. Moreover,
		\begin{align}
			\frac{d^2}{d \theta^2} E_1^i & =\frac{2\left\langle\psi_1^i, V' \Psi_1^i\right\rangle^2}{E_1^i-\mathcal{E}_1^i}+O(\delta_0^{-1 / 8} ),\label{df} \\
			\frac{d^2}{d \theta^2} \mathcal{E}_1^i & =\frac{2\left\langle\psi_1^i, V' \Psi_1^i\right\rangle^2}{\mathcal{E}_1^i-E_1^i}+O(\delta_0^{-1 / 8} ). \label{ddf}
		\end{align}
		\item [\textbf{(c)}] At the point $E_1^i(\theta)\neq \mathcal{E}_1^i(\theta)$, if $|\frac{d}{d\theta}E_1^i(\theta)|<\delta_0^{1/4}$, then $|\frac{d^2}{d\theta^2}E_1^i(\theta)|>\delta_0^{-1/4}>2$. Moreover, the  sign of $\frac{d^2}{d\theta^2}E_1^i(\theta)$ is the same as that of $ E_1^i(\theta)- \mathcal{E}_1^i(\theta)$. The analogous conclusion holds by exchanging  $E_1^i(\theta)$ and $\mathcal{E}_1^i(\theta)$.
	\end{itemize}
\end{prop}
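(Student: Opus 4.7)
The plan is to combine Rellich's theorem (for $C^1$ regularity of the eigenvalue labels across possible crossings) with Kato's first-derivative representation and the approximate eigenfunction formulas \eqref{y} from Proposition~\ref{55}(b). Because $H_{B_1^i}(\theta)$ is $C^2$ in $\theta$ and the pair $\{E_1^i(\theta),\mathcal{E}_1^i(\theta)\}$ is separated from the rest of $\sigma(H_{B_1^i}(\theta))$ by a gap of at least $\delta_0^{1/8}$ on $|\theta-\theta^*|<10\delta_0^{1/2}$ by Proposition~\ref{55}(a), Rellich's theorem produces globally $C^1$ labels; on the open set $\{E_1^i\neq \mathcal{E}_1^i\}$ each is a simple eigenvalue, so Lemma~\ref{neqs} upgrades the regularity to $C^2$ with $C^2$ eigenfunctions and the standard Rayleigh--Schr\"odinger expansions apply. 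This gives the $C^1$ claim and reduces the task to verifying the stated formulas on $\{E_1^i\neq \mathcal{E}_1^i\}$.

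For (a), Kato's formula yields $\frac{d}{d\theta}E_1^i=\langle\psi_1,V'\psi_1\rangle$, where $V'$ is the diagonal operator $v'(\theta+x\cdot\omega)\delta_{x,y}$. Substituting \eqref{y} and using that $V'$ is diagonal so the cross term $\langle\delta(\cdot-c_0^i),V'\delta(\cdot-\tilde c_0^i)\rangle$ vanishes, one obtains
\begin{equation*}
\langle\psi_1,V'\psi_1\rangle=A^2 v'(\theta+c_0^i\cdot\omega)+B^2 v'(\theta+\tilde c_0^i\cdot\omega)+O(\delta_0^{10}).
\end{equation*}
Lemma~\ref{cha} gives $v'(\theta+\tilde c_0^i\cdot\omega)=-v'(\theta+c_0^i\cdot\omega)+O(\delta_0^{1/2})$; combined with $A^2+B^2=1$ and $\frac{d}{d\theta}E_0^i=v'(\theta+c_0^i\cdot\omega)$, this collapses to \eqref{de}. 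The analogous expansion for $\Psi_1=B\delta(\cdot-c_0^i)-A\delta(\cdot-\tilde c_0^i)+O(\delta_0^{10})$ produces the companion formula with $A^2-B^2$ replaced by $B^2-A^2$.

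For (b), on the crossing-free set the second-order Rayleigh--Schr\"odinger expansion, with the contribution of the nearby eigenvalue $\mathcal{E}_1^i$ singled out from the rest of the spectrum, reads
\begin{equation*}
\frac{d^2}{d\theta^2}E_1^i=\langle\psi_1,V''\psi_1\rangle+\frac{2\langle\psi_1,V'\Psi_1\rangle^2}{E_1^i-\mathcal{E}_1^i}-2\langle\psi_1,V'G_{B_1^i}^{\perp\perp}(\theta;E_1^i)V'\psi_1\rangle.
\end{equation*}
The first and third terms are controlled by $\|V''\|\leq M_1$ and the estimate $\|G_{B_1^i}^{\perp\perp}\|\leq\delta_0^{-1/8}$ of Proposition~\ref{55}(c), contributing $O(\delta_0^{-1/8})$ in total; this establishes \eqref{df}, and \eqref{ddf} follows by swapping the two labels.

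To extract (c), combine the hypothesis $|\frac{d}{d\theta}E_1^i|<\delta_0^{1/4}$ with \eqref{de} and the lower bound $|\frac{d}{d\theta}E_0^i|\geq\delta_0^{1/9}$ of Lemma~\ref{xiajie} to force $|A^2-B^2|\lesssim\delta_0^{1/4-1/9}\ll 1$, hence $|AB|\geq 1/4$ via $A^2+B^2=1$. A parallel substitution of \eqref{y} into $\langle\psi_1,V'\Psi_1\rangle$ yields $2AB\,v'(\theta+c_0^i\cdot\omega)+O(\delta_0^{1/2})$ by Lemma~\ref{cha}, whose square is $\gtrsim\delta_0^{2/9}$. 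Combined with the crude upper bound $|E_1^i-\mathcal{E}_1^i|\leq 100M_1\delta_0^{1/2}$ from Proposition~\ref{55}(a), the singular term in (b) exceeds $\delta_0^{-5/18}\gg\delta_0^{-1/4}$ in magnitude, overwhelming the $O(\delta_0^{-1/8})$ remainder, and its sign matches $E_1^i-\mathcal{E}_1^i$ because the numerator is a nonnegative square. The main obstacle I expect is justifying the $C^1$ claim across a crossing without any a priori smoothness of eigenfunctions: this is exactly what the Rellich--Kato pair of theorems achieves, and one must verify additionally that the decomposition \eqref{y} is applied with a consistent choice of labels $\{\psi_1,\Psi_1\}$ on both sides of any crossing so that the formulas in (a)--(b) refer to the same $C^1$ branches.
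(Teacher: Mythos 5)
Your proposal is correct and follows essentially the same route as the paper: the $C^1$ regularity via Rellich's theorem combined with $C^2$ regularity on the crossing-free set via Lemma~\ref{neqs}, the first-derivative formula $\frac{d}{d\theta}E_1^i=\langle\psi_1,V'\psi_1\rangle$ expanded using \eqref{y} and Lemma~\ref{cha}, the second-derivative formula from Theorem~\ref{daoshu}\textbf{(3)} with the remainder controlled by Proposition~\ref{55}\textbf{(c)}, and the derivation of \textbf{(c)} by forcing $A^2\approx B^2\approx1/2$ from Lemma~\ref{xiajie}, bounding $\langle\psi_1,V'\Psi_1\rangle$ below, and comparing the singular term $\gtrsim\delta_0^{-5/18}$ against the $O(\delta_0^{-1/8})$ remainder. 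Your closing caution about consistent labeling of $\{\psi_1,\Psi_1\}$ across a crossing is reasonable but not actually needed for (a)--(b), since those formulas are asserted only on the open set $\{E_1^i\neq\mathcal{E}_1^i\}$.
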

\begin{proof}We only  give the proof concerning  $E_1^i$. The $C^1$ smoothness of the eigenvalues is a remarkable result of perturbation theory for self-adjoint operators (cf. \cite{Rel69} and \cite{Kat95}). By \eqref{y} and Lemma \ref{cha}, we refer to  Appendix \ref{AB} to obtain 
	\begin{align*}
		\frac{d}{d \theta} E_1^i & =\left\langle\psi_1^i, V' \psi_1^i\right\rangle=A^2 \frac{d}{d \theta} E_0^i+B^2 \frac{d}{d \theta} \tilde{E}_0^i+O(\delta_0^2 ) \\
		& =(A^2-B^2) \frac{d}{d \theta} E_0^i+B^2(\frac{d}{d \theta} E_0^i+\frac{d}{d \theta} \tilde{E}_0^i)+O(\delta_0^2 )\\
		& =(A^2-B^2) \frac{d}{d\theta} E_0^i+O(\delta_0^{1 / 2} ),
	\end{align*}
	where we  have used \eqref{ab} in the last identity. This  completes  the proof of  \textbf{(a)}.
	To prove  \textbf{(b)},  we use the formula 
	$$
	\frac{d^2}{d \theta^2} E_1^i=\left\langle\psi_1^i, V'' \psi_1^i\right\rangle+2 \frac{\left\langle\psi_1^i, V' \psi_1^i\right\rangle^2}{E_1^i-\mathcal{E}_1^i}-2\left\langle V' \psi_1^i,G^{\perp \perp}_{B_1^i}(\theta;E_1^i) V' \psi_1^i\right\rangle .
	$$
	The last term is bounded by $2\|G^{\perp \perp}_{B_1^i}(\theta;E_1^i)\|\cdot\|V' \psi_1^i\|^2$, where we can use the estimate $\|G^{\perp \perp}_{B_1^i}(\theta;E_1^i)\| \leq\delta_0^{-1 / 8}$in  \textbf{(c)} of Proposition \ref{55}.  Now we turn to the proof of \textbf{(c)}. If $|\frac{d}{d\theta}E_1^i(\theta)|<\delta_0^{1/4}$, then by \eqref{de},  we have 
	$$|A^2-B^2|\cdot  |\frac{d}{d\theta}E_0^i(\theta)|<2\delta_0^{1/4},$$
	which implies $A^2\approx B^2\approx \frac{1}{2}$ by Lemma \ref{xiajie}. Thus,
	\begin{align}\label{2110}
\begin{split}
		|\langle\psi_1^i, V' \Psi_1^i\rangle|&=|AB \frac{d}{d \theta} E_0^i-AB \frac{d}{d \theta} \tilde{E}_0^i+O(\delta_0^2 )| \\
	& \geq2AB |\frac{d}{d\theta} E_0^i|-O(\delta_0^{1 / 2} )\\
	&\geq\frac{1}{2}\delta_0^{1/9}.
\end{split}
	\end{align}
By Proposition \ref{55} \textbf{(a)}, we have $|E_1^i-\mathcal{E}_1^i|\leq100M_1\delta_0^{1/2}$. Combining  \eqref{df}, we obtain  $|\frac{d}{d\theta}E_1^i(\theta)|\geq\frac{1}{4}\delta_0^{2/9}(100M_1\delta_0^{1/2})^{-1}-O(\delta_0^{-1/8})>\delta_0^{-1/4}$, whose sign is determined by that of $ E_1^i(\theta)- \mathcal{E}_1^i(\theta)$.
\end{proof}
\begin{rem}
	We will see in the proof of Theorem \ref{C1} that under the hypothesis of $|\frac{d}{d\theta}E_1^i(\theta)|<\delta_0^{1/4}$ for some $|\theta-\theta^*|<10\delta_0^{1/2}$, then $E_1^i(\theta)\neq \mathcal{E}_1^i(\theta)$ for all $|\theta-\theta^*|<10\delta_0^{1/2}$.
\end{rem}
\begin{rem}\label{diyici}
	From $H_{B_1^i}(\theta)=H_{B_1^i}(2\theta_s-\theta)$, we deduce that the union of two eigenvalue curves is symmetric about $\theta_s$ for  $|\theta-\theta^*|<10\delta_0^{1/2}$. Moreover, if there is no eigenvalue level crossing, then each curve itself is symmetric.
\end{rem}

We are ready to prove the {\bf Center Theorem} for $n=1$ in {\bf Case 2}.

\begin{thm}[]\label{C1}
If $c_1^i,c_1^j\in Q_1$, then
 \begin{align}\label{chang}
	m(c_1^i,c_1^j)&\leq\sqrt{2}\min(|E_1^i(\theta^*)-E_1^j(\theta^*)|^{1/2},|\mathcal{E}_1^i(\theta^*)-\mathcal{E}_1^j(\theta^*)|^{1/2},\nonumber\\
	&|E_1^i(\theta^*)-\mathcal{E}_1^j(\theta^*)|^{1/2},|\mathcal{E}_1^i(\theta^*)-E_1^j(\theta^*)|^{1/2})\\\nonumber &\leq2\delta_1^\frac{1}{2}.
\end{align}
\end{thm}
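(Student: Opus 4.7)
The plan is to adapt Theorem \ref{1c} to the two-eigenvalue regime of Case 2. Since $c_1^i, c_1^j \in Q_1$, Lemma \ref{514} supplies a common $\mu \in \{0, 1/2\}$ forcing $m(c_1^i, c_1^j) \leq 6\delta_0^{1/2}$, placing everything inside $|\theta - \theta^*| < 10\delta_0^{1/2}$ where Proposition \ref{guji} is valid. Translation invariance of the blocks together with the evenness of $v$ then produces $h \in \R$ with $|h| = m(c_1^i, c_1^j)$ such that $H_{B_1^i}(\theta^*+h)$ is unitarily equivalent to $H_{B_1^j}(\theta^*)$, giving the unordered equality
\begin{equation*}
\{E_1^i(\theta^*+h),\, \mathcal{E}_1^i(\theta^*+h)\} = \{E_1^j(\theta^*),\, \mathcal{E}_1^j(\theta^*)\}.
\end{equation*}
Each of the four differences inside the $\min$ in \eqref{chang} therefore coincides with one of the four intra-block quantities $|F(\theta^*) - G(\theta^*+h)|$ for $F, G \in \{E_1^i, \mathcal{E}_1^i\}$, so it suffices to show each such quantity is $\geq h^2/2$.

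For the two same-curve quantities $|E_1^i(\theta^*+h) - E_1^i(\theta^*)|$ and $|\mathcal{E}_1^i(\theta^*+h) - \mathcal{E}_1^i(\theta^*)|$, I would replicate the dichotomy of Theorem \ref{1c}. Either $|\tfrac{d}{d\theta} E_1^i| \geq \delta_0^{1/4}$ on the whole segment $[\theta^*, \theta^*+h]$, in which case the mean value theorem (available thanks to the $C^1$ Rellich--Kato regularity) yields $|E_1^i(\theta^*+h) - E_1^i(\theta^*)| \geq \delta_0^{1/4}|h| \geq h^2$ since $|h| \leq 6\delta_0^{1/2}$; or, at some point in the segment the derivative drops below $\delta_0^{1/4}$, at which point Proposition \ref{guji}(c) forces $|\tfrac{d^2}{d\theta^2} E_1^i| > 2$ with sign equal to $\mathrm{sgn}(E_1^i - \mathcal{E}_1^i)$. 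Combined with the reflection symmetry of the spectral union about $\theta_s$, which lies within $3\delta_0^{1/2}$ of $\theta^*$ by \eqref{duichen}, this activates Lemma \ref{C2} in Appendix \ref{AA} and delivers the quadratic lower bound $\geq h^2/2$. The argument for $\mathcal{E}_1^i$ is identical.

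The main obstacle lies with the cross-curve quantities $|E_1^i(\theta^*) - \mathcal{E}_1^i(\theta^*+h)|$ and $|\mathcal{E}_1^i(\theta^*) - E_1^i(\theta^*+h)|$, which could a priori be small if a level crossing occurs in $[\theta^*, \theta^*+h]$. The resolution rests on the sign-pinned second-derivative bound in Proposition \ref{guji}(c): whenever the two curves approach each other, the concavities point in opposite directions and push them apart, so at most one crossing can occur inside $[\theta^*, \theta^*+h]$, and after relabelling across that crossing point each cross-curve quantity reduces to a same-curve quantity controlled in the previous paragraph. Making this relabelling rigorous using only $C^1$ regularity---without the analytic eigenfunction framework exploited by Surace \cite{Sur90}---is the delicate technical point, relying crucially on the first- and second-order derivative formulas of Proposition \ref{guji}(a)--(b), together with the continuity of the gap $E_1^i - \mathcal{E}_1^i$ that they provide.
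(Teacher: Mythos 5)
Your setup is sound and matches the paper's: the preliminary bound $m(c_1^i,c_1^j)\leq 6\delta_0^{1/2}$ from Lemma~\ref{514}, the reduction to the intra-block quantities via $\{E_1^i(\theta^*+h),\mathcal{E}_1^i(\theta^*+h)\}=\{E_1^j(\theta^*),\mathcal{E}_1^j(\theta^*)\}$, and the use of Lemma~\ref{C2} on individual $C^2$ branches are all correct. However, there is a genuine gap at the very place you flag as ``the delicate technical point.'' Your proposed resolution relies on Proposition~\ref{guji}(a)--(b), but both of those formulas carry the hypothesis $E_1^i(\theta)\neq \mathcal{E}_1^i(\theta)$; they are silent exactly at the degenerate point where the two curves meet. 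Likewise, Proposition~\ref{guji}(c) only gives the second-derivative lower bound at points where the eigenvalues are distinct \emph{and} the first derivative is small. You never establish that those two constraints (small derivative, distinct eigenvalues) can be simultaneously arranged, nor how to extract derivative information at the crossing itself. This is precisely what the paper needs a separate tool for: Lemma~\ref{daogroup} (Kato's theorem on the derivative group at a non-simple eigenvalue), combined with the symmetry/antisymmetry decomposition of $PV'P$ at $\theta_s$, which yields $\frac{d}{d\theta}E_1^i(\theta_s)=-\frac{d}{d\theta}\mathcal{E}_1^i(\theta_s)=\langle\psi_s,V'\psi_a\rangle$ with magnitude $\geq \delta_0^{1/4}$. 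Your proposal never invokes this, and without it there is no way to see that the derivatives are large at the degenerate point.

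Beyond that omission, the organization of your dichotomy is subtly off. You dichotomize on whether $|\frac{d}{d\theta}E_1^i|$ dips below $\delta_0^{1/4}$ somewhere in $[\theta^*,\theta^*+h]$, and you allow a crossing to occur at an arbitrary point of this interval, to be fixed by ``relabelling.'' The paper instead dichotomizes on whether $E_1^i(\theta_s)=\mathcal{E}_1^i(\theta_s)$ \emph{at the symmetry point $\theta_s$}, and shows (Case~I) that if they are distinct there, then they stay distinct on the whole interval $|\theta-\theta^*|<10\delta_0^{1/2}$ (so no crossing anywhere, and Proposition~\ref{guji}(c) is globally available), and (Case~II) that if they cross at $\theta_s$, then the Kato computation forces $|\frac{d}{d\theta}E_1^i|\geq\delta_0^{1/4}$ everywhere, and the identity $E_1^i(\theta)=\mathcal{E}_1^i(2\theta_s-\theta)$ converts cross-curve quantities into same-curve quantities evaluated at the reflected point $2\theta_s-\theta^*$. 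Your relabelled branch, obtained by switching labels at an interior crossing, would merely be continuous, not $C^1$ (the $C^1$ Rellich branches are the ones that \emph{cross transversally}, not the ones that kink), so the Morse-type Lemma~\ref{C2} would not be directly applicable to it. The paper sidesteps this by never relabelling at an interior point: the only possible degeneracy is at $\theta_s$, and there the derivative is large, so no further massaging is needed.

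In short: the missing ingredient is Lemma~\ref{daogroup} (Kato's derivative-group theorem) for computing first derivatives at the non-simple eigenvalue, and the dichotomy must be pivoted on the symmetry point $\theta_s$ rather than on the interval $[\theta^*,\theta^*+h]$, since only $\theta_s$ can host a level crossing and only there does the symmetry structure give you enough to compute.
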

\begin{proof}
Applying Lemma \ref{514} gives us a preliminary bound 
 \begin{equation}\label{preb}
	m(c_1^i,c_1^j)\leq6\delta_0^{1/2},
\end{equation}
which implies that $\theta^*\pm m(c_1^i,c_1^j)$ belongs to  the interval of $|\theta-\theta^*|<10\delta_0^{1/2}$, where $E_1^i$ and $\mathcal{E}_1^i$ are well defined. Recall the definition of $\theta_s$ (cf. \eqref{duichen}),
to establish {\bf Center Theorem}, we consider two cases.\\

	\textbf{Case I}.  $E_1^i(\theta_s) \neq \mathcal{E}_1^i(\theta_s)$ (cf.  FIGURE 1). 
	\begin{center}
\begin{tikzpicture}[relative]
\draw [thick] plot [smooth] coordinates { (0,0) (1,1) (3,2) (5,1) (6,0)};
\draw [thick] plot [smooth] coordinates { (0,5) (1,4) (3,3) (5,4) (6,5)};
\draw (3,-2) node[]   {FIGURE 1.};
\draw [thick](0,-1) -- (6,-1);
\draw (3,-1) node[below]   {$\theta_s$};
\draw (3.8,-1) node[below]   {$\theta^*$};
\draw [dashed] (3.8,-1) -- (3.8,1.76);
\draw (4.8,-1) node[below]   {$\theta^*+h$};
\draw [dashed] (4.8,-1) -- (4.8,1.15);
\draw (3,-1) node[circle]{};
\draw [dashed] (3,-1) -- (3,3);
\draw (5,3.5) node[below]   {$E_1^i(\theta)$};
\draw (5,2) node[below]   {$\mathcal{E}_1^i(\theta)$};
\end{tikzpicture}
\end{center}
	Without loss of generality,  we may assume $E_1^i(\theta_s)>\mathcal{E}_1^i(\theta_s)$. We must have by Remark \ref{diyici},
	$$E_1^i(\theta_s+\Delta \theta)=E_1^i(\theta_s-\Delta \theta),\ \mathcal{E}_1^i(\theta_s+\Delta \theta)=\mathcal{E}_1^i(\theta_s-\Delta \theta)
	$$
	for $\Delta \theta$ small. Therefore,
	$$
	\frac{d}{d \theta} E_1^i(\theta_s)=\frac{d}{d \theta} \mathcal{E}_1^i(\theta_s)=0 .
	$$ 
	By Proposition \ref{guji} (cf. \textbf{(b)} and \textbf{(c)}), we see that $\theta_s$ is a local minimum point of $ E_1^i$ (resp. a local maximum point of $ \mathcal{E}_1^i$). It follows that  $\frac{d}{d \theta} E_1^i$ is increasing and $\frac{d}{d \theta} \mathcal{E}_1^i$ is decreasing whenever $|\frac{d}{d \theta} E_1^i|\leq\delta_0^{1/4}$.  Thus,  $E_1^i>\mathcal{E}_1^i$ continues to hold for all $|\theta-\theta^*|<10\delta_0^{1/2}$, which implies that $\frac{d^2}{d\theta^2}E_1^i(\theta)>2$ whenever $|\frac{d}{d\theta}E_1^i(\theta)|<\delta_0^{1/4}$. Moreover,  $\frac{d}{d\theta}E_1^i$ (resp. $\frac{d}{d\theta}\mathcal{E}_1^i$) cannot reenter the band $|\frac{d}{d\theta}E|<\delta_0^{1/4}$ since it is increasing (resp. decreasing) there.  From  the preliminary bound  \eqref{preb}, we deduce that    both ${E}_1^i(\theta)$ and $\mathcal{E}_1^i(\theta)$ satisfy the condition of  Lemma \ref{C2} with $\theta_2=\theta^*+h,\theta_1=\theta^*,\delta=\delta_0^{1/4},|h|\leq\delta.$ Thus, we get
	$$|E_1^i(\theta^*+h)-E_1^i(\theta^*)|\geq \frac{1}{2}\min(h^2,|2\theta^*+h-2\theta_s|^2)=\frac{1}{2}h^2$$
	and the same estimate holds  true for $\mathcal{E}_1^i$,
	where $h=(c_1^j-c_1^i)\cdot \omega$ or $-((c_1^i+c_1^j)\cdot \omega+2\theta^*)$  {\rm (mod $1$)} satisfying $|h|=m(c_1^i,c_1^j)$.
	An easy inspection gives us 
	\begin{align*}
	|\mathcal{E}_1^i(\theta^*+h)-E_1^i(\theta^*)|&\geq\min(|E_1^i(\theta^*+h)-E_1^i(\theta^*)|,|\mathcal{E}_1^i(\theta^*+h)-\mathcal{E}_1^i(\theta^*)|)\\
	&\geq\frac{1}{2}h^2,\\
	|E_1^i(\theta^*+h)-\mathcal{E}_1^i(\theta^*)|&\geq\min(|E_1^i(\theta^*+h)-E_1^i(\theta^*)|,|\mathcal{E}_1^i(\theta^*+h)-\mathcal{E}_1^i(\theta^*)|)\\
	&\geq\frac{1}{2}h^2.
	\end{align*}
	Now \eqref{chang} follows from  $\{E_1^j(\theta^*),\mathcal{E}_1^j(\theta^*)\}=\{E_1^i(\theta^*+h),\mathcal{E}_1^i(\theta^*+h)\}$, since $H_{B_1^i}(\theta^*+h)=H_{B_1^j}(\theta^*)$, and one of the eigenvalue differences must be bounded above by $2\delta_1$ from the definition of $Q_1$. This proves the theorem. \\
	
	\textbf{Case II}. $E_1^i(\theta_s)= \mathcal{E}_1^i(\theta_s)$ (cf. FIGURE 2). 	
	\begin{center}
\begin{tikzpicture}[relative]
\draw (3,-2) node[]   {FIGURE 2.};
\draw[thick] (0,-1) -- (6,-1);
\draw (3,-1) node[below]   {$\theta_s$};
\draw (3.8,-1) node[below]   {$\theta^*$};
\draw [dashed] (3.8,-1) -- (3.8,1.78);
\draw (4.8,-1) node[below]   {$\theta^*+h$};
\draw [dashed] (4.8,-1) -- (4.8,1.2);
\draw (3,-1) node[circle]{};
\draw [dashed] (3,-1) -- (3,3);
\draw (5,4.5) node[below]   {$E_1^i(\theta)$};
\draw (5,2) node[below]   {$\mathcal{E}_1^i(\theta)$};
\draw[thick] (0,5) to [out=30,in=150] (3,3);
\draw[thick] (3,3) to [out=-30,in=-150] (6,1);
\draw[thick] (0,1) to [out=-30,in=-150] (3,3);
\draw[thick] (3,3) to [out=30,in=150] (6,5);

\end{tikzpicture}
\end{center}
	This means the level crossing occurs. In this case,  we claim that $|\frac{d}{d \theta} E_1^i|\geq\delta_0^{1/4}$ and $|\frac{d}{d \theta} \mathcal{E}_1^i|\geq\delta_0^{1/4}$ hold for $|\theta-\theta^*|<10\delta_0^{1/2}.$ Moreover, they have opposite signs. First,  we show that it is true for $\theta=\theta_s$. Since $E_1^i(\theta_s)$ is not simple, the first order eigenvalue perturbation formula in Theorem \ref{daoshu} can not be used. However, we still can compute $\frac{d}{d \theta} E_1^i(\theta_s),\frac{d}{d \theta} \mathcal{E}_1^i(\theta_s)$ by the remarkable result originated from Kato \cite{Kat95}. 
	
\begin{lem}\label{daogroup}
	The derivative group  $\{\frac{d}{d \theta} E_1^i(\theta_s),\frac{d}{d \theta} \mathcal{E}_1^i(\theta_s)\}$ of the non-simple eigenvalue $E_1^i(\theta_s)$ is equal to the eigenvalues of $PH'_{B_1^i}(\theta_s)P$, where $H'$ is the derivative of the self-adjoint operator $H$ and  $P$ is the total projection onto the two dimensional  eigenspace of $E_1^i(\theta_s)$. Namely,
	$$\big\{\frac{d}{d \theta} E_1^i(\theta_s),\frac{d}{d \theta} \mathcal{E}_1^i(\theta_s)\big\}=\{\text{Eigenvalues of the }2\times2\text{ matrix } PH'_{B_1^i}(\theta_s)P\}.$$
\end{lem} 

\begin{proof}The ideas of the proof come from  Theorem 5.4 in \cite{Kat95}.
		It suffices to show 
	$$ E_1^i(\theta)=E_1^i(\theta_s)+\lambda_1(\theta-\theta_s)+o(\theta-\theta_s),$$
	$$ \mathcal{E}_1^i(\theta)=\mathcal{E}_1^i(\theta_s)+\lambda_2(\theta-\theta_s)+o(\theta-\theta_s),$$
	where $\lambda_1,\lambda_2$ are the eigenvalues of $PH'_{B_1^i}(\theta_s)P$.
	Denote $P(\theta)=\int_{\Gamma}(\zeta-H_{B_1^i}(\theta))^{-1}d\zeta$ the $(C^2)$ total projection on the eigenvalue group $\{ E_1^i(\theta),\mathcal{E}_1^i(\theta)\}$, where $\Gamma$ is a small circle centered at $E_1^i(\theta_s)$ such that $E_1^i(\theta_s)$ is the unique eigenvalue of $H_{B_1^i}(\theta_s)$ inside $\Gamma$ and $\Gamma\cap\sigma(H_{B_1^i}(\theta))=\emptyset$ for all $\theta$ in a small neighborhood of $\theta_s$. Thus, the eigenvalue group  is just the eigenvalue of $P(\theta)H_{B_1^i}(\theta)P(\theta)$ restricting to the small neighborhood of $\theta_s$, namely, 
		$$\{ E_1^i(\theta),\mathcal{E}_1^i(\theta)\}=\{\text{Eigenvalues of the }2\times2\text{ matrix } P(\theta)H_{B_1^i}(\theta)P(\theta)\}\text{ for $\theta$ near $\theta_s$}.$$ 
	 Denote $E=E_1^i(\theta_s)=\mathcal{E}_1^i(\theta_s)$.  Then
	$$\{ E_1^i(\theta)-E,\mathcal{E}_1^i(\theta)-E\}=\{\text{Eigenvalues of the }2\times2\text{ matrix } P(\theta)(H_{B_1^i}(\theta)-E)P(\theta)\}.$$ 
	To finish the proof, it remains to show $(P(\theta)(H_{B_1^i}(\theta)-E)P(\theta))/(\theta-\theta_s)\to PH'_{B_1^i}(\theta_s)P$ as $\theta\to\theta_s$. Direct  computation gives  
	\begin{align*}
		 &\ \ \ \lim_{\theta\to \theta_s}\frac{P(\theta)(H_{B_1^i}(\theta)-E)P(\theta)}{\theta-\theta_s}\\
		&=P'(\theta_s)(H_{B_1^i}(\theta_s)-E)P(\theta_s)\\
		&\ \ \ +P(\theta_s)H'_{B_1^i}(\theta_s)P(\theta_s)+P(\theta_s)(H_{B_1^i}(\theta_s)-E)P'(\theta_s)\\
		&=P(\theta_s)H'_{B_1^i}(\theta_s)P(\theta_s),
	\end{align*} where we have used $(H_{B_1^i}(\theta_s)-E)P(\theta_s)=P(\theta_s)(H_{B_1^i}(\theta_s)-E)=0$.
\end{proof}

  To calculate these eigenvalues, we represent $PV'P:=PH'P$ in a special basis. Notice that $H_{B_1^i}(\theta_s)$ commutes with the reflect operator $(R\psi)(x):=\psi(2c_1^i-x)$. It follows that $\operatorname*{Range}P$ is a two dimensional  invariant subspace of $R$, which can be spanned by two eigenfunctions of $R$ since $R$ is diagonalizable. All the eigenfunctions of $R$ are symmetric functions $\{\psi_s\}$ and antisymmetric functions $\{\psi_a\}$. We note that   $\operatorname*{Range}P$ cannot be spanned by only symmetric functions (resp. antisymmetric functions), otherwise $\psi_{1}$ and $\Psi_{1}$ are  symmetric (resp. antisymmetric), contradicting  the expression \eqref{y}. This allows us to express $PV'P$ in the basis $\{\psi_s,\psi_a\}$, which consists of one symmetric function and one  antisymmetric function: 
  $$P V' P=\left(\begin{array}{ccc}
  	\left\langle\psi_s, V' \psi_s\right\rangle & \left\langle\psi_s, V' \psi_a\right\rangle \\
  	\left\langle\psi_s, V' \psi_a\right\rangle & \left\langle\psi_a, V' \psi_a\right\rangle
  \end{array}\right)\ (\text {at } \theta=\theta_s) .
  $$
  Since $v$ is even and $1$-periodic, we deduce that $(V'(\theta_s))(2c_1^i-x)=v'(\theta_s+(2c_1^i-x)\cdot \omega)=-v'(\theta_s+x\cdot \omega)=-(V'(\theta_s))(x)$, yielding $V'(\theta_s)$ is antisymmetric.  By the symmetry and anti-symmetry  properties of $\psi_s, \psi_a$ and $V'(\theta_s)$, we have $\langle\psi_s, V' \psi_s\rangle=$ $\langle\psi_a, V' \psi_a\rangle=0$, which gives us
  $$
  P V' P=\left(\begin{array}{cc}
  	0 & \left\langle\psi_s, V' \psi_a\right\rangle \\
  	\left\langle\psi_s, V' \psi_a\right\rangle & 0
  \end{array}\right)
  $$
  and therefore,
  $$
  \frac{d}{d\theta} E_1^i(\theta_s)=-\frac{d}{d\theta} \mathcal{E}_1^i(\theta_s)=\langle\psi_s, V' \psi_a\rangle.
  $$
  We choose $E_1^i$ to satisfy $\frac{d}{d\theta} E_1^i(\theta_s)\geq0$ and  will show that it is not too small and then extend this for $|\theta-\theta^*| \leq 10 \delta_0^{1 / 2}$. Using the symmetry properties and the decay of the eigenfunctions, we have
 $\frac{d}{d\theta} E_1^i(\theta_s)=2 \psi_s(c_0^i) \psi_a(c_0^i)\frac{d}{d\theta} E_0^i(\theta_s)+O(\delta_0^2)$, 
  where $|\psi_s(c_0^i)| \approx 1 / \sqrt{2}$ and $|\psi_a(c_0^i)| \approx 1 / \sqrt{2}$. By Lemma \ref{xiajie}, we get
  $$
  \frac{d}{d\theta}E_1^i(\theta_s) \geq \delta_0^{1 / 4}.
  $$
  We now  show that this continues to hold for all $\theta$ in the interval $|\theta-\theta^*| \leq 10 \delta_0^{1 / 2}$. Since $E_1^i$ is increasing and $\mathcal{E}_1^i$ is decreasing, we deduce  $E_1^i>\mathcal{E}_1^i$ for $\theta>\theta_s$.  If $\frac{d}{d\theta}E_1^i(\theta)\leq \delta_0^{1 / 4} $ for some smallest  $\theta>\theta_s$, by Proposition \ref{guji} (cf. \textbf{(c)}), we have  $\frac{d^2}{d\theta^2}E_1^i(\theta)>2$. This is impossible. The same argument shows that there is no $\theta<\theta_s$ such that $\frac{d}{d\theta}E_1^i(\theta)\leq \delta_0^{1 / 4} $, which proves our claim. In this case, we have  $E_1^i(\theta)=\mathcal{E}_1^i(2\theta_s-\theta)$ by the symmetry property of the eigenvalue curve. Thus, by the preliminary bound  \eqref{preb}, we obtain 
  \begin{align*}
  |E_1^i(\theta^*+h)-E_1^i(\theta^*)|&\geq \delta_0^{1/4}|h|\geq h^2,\\
   |\mathcal{E}_1^i(\theta^*+h)-\mathcal{E}_1^i(\theta^*)&|\geq \delta_0^{1/4}|h|\geq h^2,\\
  |E_1^i(\theta^*+h)-\mathcal{E}_1^i(\theta^*)|&=|E_1^i(\theta^*+h)-E_1^i(2\theta_s-\theta^*)|\\
  &\geq \delta_0^{1/4}|2\theta^*+h-2\theta_s|\geq h^2,\\
  |\mathcal{E}_1^i(\theta^*+h)-E_1^i(\theta^*)|&=|\mathcal{E}_1^i(\theta^*+h)-\mathcal{E}_1^i(2\theta_s-\theta^*)|\\
  &\geq \delta_0^{1/4}|2\theta^*+h-2\theta_s|\geq h^2,
  \end{align*}
  where $h=(c_1^j-c_1^i)\cdot \omega$ or $-((c_1^i+c_1^j)\cdot \omega+2\theta^*)$ {\rm (mod $1$)} satisfying $|h|=m(c_1^i,c_1^j)$. Now \eqref{chang}  follows from $E_1^j(\theta^*),\mathcal{E}_1^j(\theta^*)=E_1^i(\theta^*+h)\text{ or  }\mathcal{E}_1^i(\theta^*+h)$ and one of the eigenvalue differences must be bounded above by $2\delta_1$ by the definition of $Q_1$. This finishes the proof of Theorem \ref{C1}.
\end{proof}
 
We end the discussions of \textbf{Case 2} with  two  theorems, which are  significant in the follow-up  inductive process. 

\begin{thm}\label{dao2}
		For $|\theta-\theta^*|<10\delta_0^{1/2}$, we have 
		$$|\frac{d}{d \theta} E_1^i(\theta)|\geq\min(\delta_0^2,|\theta-\theta_s|).$$
\end{thm}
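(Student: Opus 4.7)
The plan is to split into the two sub-cases from the proof of Theorem \ref{C1}, namely whether or not the level crossing $E_1^i(\theta_s)=\mathcal{E}_1^i(\theta_s)$ occurs, and then read off the derivative bound from structural facts already established there.

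In the level-crossing sub-case (\textbf{Case II} of Theorem \ref{C1}), the proof of that theorem already shows $|\frac{d}{d\theta}E_1^i(\theta)|\geq \delta_0^{1/4}$ throughout $|\theta-\theta^*|<10\delta_0^{1/2}$. Since $\delta_0^{1/4}\gg \delta_0^2$, the inequality $\min(\delta_0^2,|\theta-\theta_s|)\leq \delta_0^2\leq |\frac{d}{d\theta}E_1^i(\theta)|$ is immediate and nothing further is needed.

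In the non-crossing sub-case (\textbf{Case I}), I would first combine the symmetry $H_{B_1^i}(\theta)=H_{B_1^i}(2\theta_s-\theta)$ with the fact that each individual eigenvalue curve is then symmetric about $\theta_s$ (since no crossing takes place) to obtain $\frac{d}{d\theta}E_1^i(\theta_s)=0$. The key nonlinear input, already appearing inside the proof of Theorem \ref{C1} via Proposition \ref{guji}(c), is the pointwise convexity control: whenever $|\frac{d}{d\theta}E_1^i(\theta)|<\delta_0^{1/4}$, one has $|\frac{d^2}{d\theta^2}E_1^i(\theta)|>2$ with sign matching that of $E_1^i(\theta)-\mathcal{E}_1^i(\theta)$, which does not change on the non-crossing interval. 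Consequently $\frac{d}{d\theta}E_1^i$, starting from zero at $\theta_s$, is strictly monotone on each side of $\theta_s$ as long as its absolute value remains below $\delta_0^{1/4}$, and hence cannot reenter that band once it has exited.

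Given these ingredients, I would finish by a single integration. Fix $\theta$ in the admissible interval and take $\theta>\theta_s$ (the side $\theta<\theta_s$ being identical). If $|\frac{d}{d\theta}E_1^i(\theta)|\geq \delta_0^{1/4}$ the bound is trivial; otherwise the trap-stability just recalled forces $|\frac{d}{d\theta}E_1^i(\theta')|<\delta_0^{1/4}$ for every $\theta'\in[\theta_s,\theta]$, so $|\frac{d^2}{d\theta^2}E_1^i|>2$ on that whole interval and
$$\Big|\frac{d}{d\theta}E_1^i(\theta)\Big|=\Big|\int_{\theta_s}^{\theta}\frac{d^2}{d\theta'^2}E_1^i(\theta')\,d\theta'\Big|>2|\theta-\theta_s|\geq |\theta-\theta_s|.$$
Combining the two alternatives yields the desired bound $\min(\delta_0^2,|\theta-\theta_s|)$. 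The only subtle point I anticipate is making sure the band-exit monotonicity propagates globally on $|\theta-\theta^*|<10\delta_0^{1/2}$ rather than only infinitesimally near $\theta_s$; this is handled by the non-crossing feature of Case I, which keeps the sign of $E_1^i-\mathcal{E}_1^i$ constant and hence the sign of the second derivative uniform in the low-derivative band throughout the interval, so no new estimates beyond those already proved for Theorem \ref{C1} are required.
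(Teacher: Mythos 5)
Your proposal is correct and follows essentially the paper's own route: the level-crossing case is disposed of identically via the $\delta_0^{1/4}$ lower bound from the proof of Theorem~\ref{C1}, and in the non-crossing case you simply unroll, step by step, exactly the content of Lemma~\ref{C2} (symmetry giving $\frac{d}{d\theta}E_1^i(\theta_s)=0$, the second-derivative bound and uniform sign from Proposition~\ref{guji}(c), the band-exit monotonicity, and the integration), which the paper just cites directly. Your explicit remark that the constancy of the sign of $E_1^i-\mathcal{E}_1^i$ on the whole interval is what makes the second-derivative sign uniform is the same point the paper settles inside the proof of Theorem~\ref{C1}, so nothing new is introduced.
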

\begin{proof}We consider two cases.\smallskip\\
	\textbf{Case I}. $E_1^i(\theta_s)> \mathcal{E}_1^i(\theta_s)$. It immediately follows from Lemma \ref{C2} and  \textbf{(c)} in Proposition \ref{guji}.\\
		\textbf{Case II}. $E_1^i(\theta_s)= \mathcal{E}_1^i(\theta_s)$. In this case,  we have $|\frac{d}{d \theta} E_1^i(\theta)|\geq\delta_0^{1/4}\geq \delta_0^2.$
\end{proof}
From the proof of  Theorem \ref{C1}, we see that the eigenvalues $E_1^i(\theta)$ and $\mathcal{E}_1^i(\theta)$ may cross only at the symmetry point $\theta_s$  (\textbf{Case II}), and their separation distance grows as $\theta$ moves away from $\theta_s$. These observations were originated from \cite{Sur90}.  The following theorem  gives us a lower bound of the separation distance. 
 \begin{thm}
If $c_1^i \in Q_1$, then
    $$
 	|E_1^i(\theta)-\mathcal{E}_1^i(\theta)| \geq \delta_0^2 |\theta-\theta_s|
 	$$
 	for all $\theta$ in the interval of  $|\theta-\theta^*| \leq 10 \delta_0^{1 / 2}$.
 \end{thm}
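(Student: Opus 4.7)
The plan follows the two-case dichotomy from the proof of Theorem \ref{C1}. In Case II ($E_1^i(\theta_s) = \mathcal{E}_1^i(\theta_s)$), the bound is immediate from what was already established there: with the labeling chosen in Case II of Theorem \ref{C1}, we have $\frac{d}{d\theta} E_1^i(\theta) \geq \delta_0^{1/4}$ and $\frac{d}{d\theta}\mathcal{E}_1^i(\theta) \leq -\delta_0^{1/4}$ throughout $|\theta - \theta^*| \leq 10\delta_0^{1/2}$, together with the swap-symmetry $E_1^i(\theta) = \mathcal{E}_1^i(2\theta_s - \theta)$. Writing
\[
E_1^i(\theta) - \mathcal{E}_1^i(\theta) \;=\; E_1^i(\theta) - E_1^i(2\theta_s - \theta) \;=\; \int_{2\theta_s - \theta}^{\theta} (E_1^i)'(t)\,dt
\]
and applying the derivative bound gives $|E_1^i(\theta) - \mathcal{E}_1^i(\theta)| \geq 2\delta_0^{1/4}|\theta - \theta_s| \geq \delta_0^2 |\theta - \theta_s|$.

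The nontrivial case is Case I ($E_1^i(\theta_s) > \mathcal{E}_1^i(\theta_s)$). Set $f(\theta) := E_1^i(\theta) - \mathcal{E}_1^i(\theta) > 0$; this is $C^2$ on the whole interval (no crossing), and $f'(\theta_s) = 0$ because each eigenvalue curve is individually symmetric about $\theta_s$. I propose to establish the uniform lower bound $(f^2)''(\theta) \geq 4\delta_0^{2/9}$ and then integrate twice from $\theta_s$, using $f^2(\theta_s) \geq 0$ and $(f^2)'(\theta_s) = 2f(\theta_s) f'(\theta_s) = 0$. This yields $f^2(\theta) \geq 2\delta_0^{2/9}(\theta - \theta_s)^2$, hence $f(\theta) \geq \sqrt{2}\,\delta_0^{1/9}|\theta - \theta_s| \geq \delta_0^2 |\theta - \theta_s|$.

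The heart of the argument is the uniform lower bound on $(f^2)'' = 2(f')^2 + 2 f f''$. Proposition \ref{guji}(a) gives $f'(\theta) = 2(A^2 - B^2)(E_0^i)'(\theta) + O(\delta_0^{1/2})$; expanding $\langle\psi_1, V'\Psi_1\rangle$ through \eqref{y} and using $\frac{d}{d\theta}(E_0^i + \tilde E_0^i) = O(\delta_0^{1/2})$ from Lemma \ref{cha} produces $\langle\psi_1, V'\Psi_1\rangle = 2AB(E_0^i)'(\theta) + O(\delta_0^{1/2})$. Combined with $f''(\theta) = 4\langle\psi_1, V'\Psi_1\rangle^2 / f + O(\delta_0^{-1/8})$ from Proposition \ref{guji}(b) and the a priori bound $|f| \leq 100 M_1 \delta_0^{1/2}$ (so $|ff''| \lesssim \delta_0^{3/8}$ after cancellation), one computes
\[
(f^2)''(\theta) \;=\; 8\bigl[(A^2 - B^2)^2 + (2AB)^2\bigr]\bigl((E_0^i)'(\theta)\bigr)^2 + O(\delta_0^{3/8}) \;=\; 8\bigl((E_0^i)'(\theta)\bigr)^2 + O(\delta_0^{3/8}),
\]
the last equality being the Pythagorean identity $(A^2 - B^2)^2 + (2AB)^2 = (A^2 + B^2)^2 = 1$. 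Lemma \ref{xiajie} gives $\bigl((E_0^i)'(\theta)\bigr)^2 \geq \delta_0^{2/9}$, and since $\delta_0^{3/8} \ll \delta_0^{2/9}$ for small $\delta_0$, the claimed uniform bound follows. The main obstacle is that neither $f'$ nor $\langle\psi_1, V'\Psi_1\rangle$ is bounded below alone: the former vanishes when $A^2 = B^2$ (typically near $\theta_s$), the latter when $|AB| \to 0$ (far from $\theta_s$, where $\psi_1,\Psi_1$ approximately localize on a single site). The Pythagorean identity is precisely what trades these two degenerate regimes against each other, which is why we must work with $f^2$ rather than $f$; the naive approach via $f'' \geq 4\langle\psi_1, V'\Psi_1\rangle^2/f$ alone breaks down as $|AB| \to 0$.
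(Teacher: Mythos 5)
Your proof is correct, and it takes a genuinely different route through Case I. The paper works directly with $f = E_1^i - \mathcal{E}_1^i$ and splits Case I into two regimes according to a cutoff $\theta_d$: for $\theta_s \le \theta \le \theta_d$, where $|\langle\psi_1^i, V'\Psi_1^i\rangle| \ge \delta_0^{1/8}$, it Taylor-expands $f$ and uses $f''(\xi) \ge 2\delta_0^{1/4}/f(\theta)$ (from \eqref{df}--\eqref{ddf}) to get $f^2 \ge \delta_0^{1/4}(\theta-\theta_s)^2$; for $\theta \ge \theta_d$ it switches to the separate derivative bounds $\frac{d}{d\theta}E_1^i \ge \delta_0^{1/4}$, $\frac{d}{d\theta}\mathcal{E}_1^i \le -\delta_0^{1/4}$ established in the proof of Theorem \ref{C1} and integrates linearly from $\theta_d$. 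That cutoff $\theta_d$ is precisely the point where $|\langle\psi_1^i, V'\Psi_1^i\rangle|$ (equivalently $|AB|$) becomes small and the $ff''$ bound degenerates, forcing a hand-off to the $(f')^2$ bound. Your observation — that $(f^2)'' = 2(f')^2 + 2ff''$ combines both quantities, and after the expansions $f' \approx 2(A^2-B^2)(E_0^i)'$, $\langle\psi_1,V'\Psi_1\rangle \approx 2AB(E_0^i)'$, $ff'' \approx 4\langle\psi_1,V'\Psi_1\rangle^2$ the coefficients assemble via $(A^2-B^2)^2 + (2AB)^2 = 1$ into a uniform bound $(f^2)'' \ge 8\bigl((E_0^i)'\bigr)^2 + O(\delta_0^{3/8}) \ge 4\delta_0^{2/9}$ — eliminates the case split entirely and lets a single double integration from $\theta_s$ (using $f(\theta_s)\ge 0$, $f'(\theta_s)=0$ by the no-crossing symmetry) finish the job. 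Both arguments use exactly the same ingredients (Proposition \ref{guji}, Lemmas \ref{cha} and \ref{xiajie}, the a priori bound $|f| \lesssim \delta_0^{1/2}$); what your version buys is structural clarity, a mild numerical improvement ($\delta_0^{1/9}$ vs.\ the paper's $\delta_0^{1/4}$ effective constant), and a proof that makes the ``trade-off'' between the two degeneracies visible as an algebraic identity rather than a geometric cutoff. Your handling of Case II coincides with the paper's.
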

\begin{proof}
 We  consider two cases.\smallskip\\
\textbf{Case I.} $E_1^i(\theta_s)>\mathcal{E}_1^i(\theta_s)$. Then 
$$\frac{d}{d\theta}E_1^i(\theta_s)=\frac{d}{d\theta}\mathcal{E}_1^i(\theta_s)=0$$
and by \eqref{2110}, 
$$
|\langle\psi_1^i, V' \Psi_1^i\rangle(\theta_s)| > \delta_0^{1 / 8}.
$$
Therefore, there must be a largest  interval $\theta_s \leq \theta \leq \theta_d$  on which  $|\langle\psi_1^i, V' \Psi_1^i\rangle(\theta)| \geq \delta_0^{1 / 8} $. If $\theta$ is in this interval, then
\begin{align*}
	(E_1^i-\mathcal{E}_1^i)(\theta)&= (E_1^i-\mathcal{E}_1^i)(\theta_s)+\frac{d}{d \theta}(E_1^i-\mathcal{E}_1^i)(\theta_s)\cdot (\theta-\theta_s) \\
	&\ \ \ +\frac{1}{2} \frac{d^2}{d \theta^2}(E_1^i-\mathcal{E}_1^i)(\xi)\cdot (\theta-\theta_s)^2\\
	&\geq\frac{1}{2} \frac{d^2}{d \theta^2}(E_1^i-\mathcal{E}_1^i)(\xi)\cdot (\theta-\theta_s)^2.
\end{align*}
By \eqref{df} and \eqref{ddf}, we have since $(E_1^i-\mathcal{E}_1^i)(\theta)= O(\delta_0^{1/2})$
\begin{align*}
	\frac{d^2}{d \theta^2}(E_1^i-\mathcal{E}_1^i)(\xi) & =\frac{4\langle\psi_1^i, V' \Psi_1^i\rangle^2(\xi)}{(E_1^i-\mathcal{E}_1^i)(\xi)}+O(\delta_0^{-1/8}) \\
	& \geq\frac{2(\delta_0^{1 / 8} )^2}{(E_1^i-\mathcal{E}_1^i)(\theta)},
\end{align*}
 which implies
$$
(E_1^i-\mathcal{E}_1^i)(\theta) \geq  \frac{\delta_0^{1 /4 } }{(E_1^i-\mathcal{E}_1^i)(\theta)}(\theta-\theta_s)^2
$$
and proves the theorem.

We now consider the case when $\theta \geq \theta_d$. By the argument in the proof of  Theorem \ref{C1} (cf. \textbf{Case I}), we have 
$$
\frac{d}{d \theta} E_1^i \geq \delta_0^{1 / 4} \text{ and }  \frac{d}{d \theta} \mathcal{E}_1^i \leq-\delta_0^{1 / 4}
$$
assuming  $\theta\geq\theta_d$, which gives us 
\begin{align*}
	(E_1^i-\mathcal{E}_1^i)(\theta) & =(E_1^i-\mathcal{E}_1^i)(\theta_d)+\frac{d}{d \theta}(E_1^i-\mathcal{E}_1^i)(\xi)\cdot (\theta-\theta_d) \\
	& \geq(E_1^i-\mathcal{E}_1^i)(\theta_d)+2 \delta_0^{1 / 4} (\theta-\theta_d)\\
	&\geq\delta_0^{1 / 8} (\theta_d-\theta_s)+2 \delta_0^{1 / 4} (\theta-\theta_d)\\
	&\geq\delta_0^2 (\theta-\theta_s).
\end{align*}

\noindent\textbf{Case II}. $E_1^i(\theta_s)= \mathcal{E}_1^i(\theta_s)$.
In this case,   we have $\frac{d}{d \theta} E_1^i\geq\delta_0^{1/4}$ and $\frac{d}{d \theta} \mathcal{E}_1^i\leq-\delta_0^{1/4}$, thus,
\begin{align*}
	|(E_1^i-\mathcal{E}_1^i)(\theta)| & =|(E_1^i-\mathcal{E}_1^i)(\theta_s)+\frac{d}{d \theta}(E_1^i-\mathcal{E}_1^i)(\xi)\cdot (\theta-\theta_s)| \\
	& \geq 2 \delta_0^{1 / 4} |\theta-\theta_s|.
\end{align*}
\end{proof}

Finally, we give  estimates  on  the Green's function restricted to $1$-good sets  by using the resolvent identity. 
\begin{thm}[] \label{1g}If $\Lambda$ is $1$-good, then for all $|\theta-\theta^*|<\delta_1/(10M_1)$ and $ |E-E^*|<\delta_1/5$, 
\begin{align*}
	\|G_\Lambda(\theta;E)\|&\leq10\delta_1^{-1},\\
	|G_\Lambda(\theta;E)(x,y)&<e^{-\gamma_1\|x-y\|_1}\ {\rm for}\  \|x-y\|_1\geq l_1^\frac{5}{6},
\end{align*}
 where $\gamma_1=(1-O(l_1^{-\frac{1}{30}}))\gamma_0.$
\end{thm}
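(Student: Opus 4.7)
The plan is a standard multi-scale resolvent iteration, using Theorem~\ref{0g} as the base step and a new block-level estimate on $1$-regular blocks as the inductive input. The key new ingredient is a Green's function estimate on a single $1$-regular block $B_1^i$: for such a block, $1$-regularity gives $\mathrm{dist}(\sigma(H_{B_1^i}(\theta^*)), E^*) \geq \delta_1$, which, combined with the Lipschitz bound $\|H_{B_1^i}(\theta) - H_{B_1^i}(\theta^*)\| \leq M_1|\theta-\theta^*|$, yields $\|G_{B_1^i}(\theta;E)\| \leq 2\delta_1^{-1}$ throughout the parameter window $|\theta-\theta^*|<\delta_1/(10M_1)$, $|E-E^*|<\delta_1/5$. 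For the off-diagonal decay on $B_1^i$ I would split $G_{B_1^i}$ via the spectral projection associated with the near-$E^*$ eigenvalues into a rank-$\le 2$ part and a complementary part: the complementary part is handled by a Neumann series on the $0$-good region $B_1^i \setminus \{c_0^i, \tilde{c}_0^i\}$, while the rank-$\le 2$ part decays off-diagonally because its kernel is built from the exponentially localized eigenfunctions $\psi_1, \Psi_1$ provided by Proposition~\ref{55}(b). In Case 1, where there is only one near-$E^*$ eigenvalue, the same argument works using Proposition~\ref{k1} in place of Proposition~\ref{55}.

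With this block estimate in hand, I would apply the geometric resolvent identity to pass from $B_1^i$ (or a $0$-regular single point if $x \notin Q_0$) to all of $\Lambda$. Concretely, for any $x \in \Lambda$, let $W_x$ be the enveloping regular region given by $1$-goodness, and write
\begin{align*}
G_\Lambda(x,y) = \mathbf{1}_{y \in W_x}\, G_{W_x}(x,y) - \sum_{\substack{u \in W_x,\ v \notin W_x \\ \|u-v\|_1 = 1}} G_{W_x}(x,u)\, \varepsilon\, G_\Lambda(v,y).
\end{align*}
Iterating, each step either terminates (once we land in the same regular region as $y$) or advances from $x$ to some $v$ at distance roughly $\mathrm{diam}(W_x)$ while contributing an exponentially small factor $|G_{W_x}(x,u)|$, thereby chaining together the decay rates and producing an exponential bound in $\|x-y\|_1$. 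The operator norm bound $\|G_\Lambda\| \leq 10 \delta_1^{-1}$ then follows by summing the pointwise bounds, the sum being dominated by the worst block-level norm $2\delta_1^{-1}$ times an absolutely convergent geometric series generated by the boundary contributions.

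The main obstacle is the bookkeeping of the small loss in the decay rate at each scale change. Each application of the block-level identity costs a multiplicative factor $\|G_{W_x}\| \cdot |\partial W_x| \cdot \varepsilon \le 2\delta_1^{-1}(2l_1)^{d-1}\varepsilon$, which, amortized over the travelled distance of at least $l_1^{5/6}$, produces a rate reduction of order $\log(\delta_1^{-1}(2l_1)^{d-1})/(\gamma_0\cdot l_1^{5/6})$. With $\delta_1 = e^{-l_1^{2/3}}$ and $\gamma_0 = |\log\varepsilon|/4$, this ratio is $O(l_1^{-1/30})$, as claimed. The threshold $l_1^{5/6}$ in the statement is exactly what is needed so that short distances are absorbed inside a single regular block and only longer distances trigger an iteration, ensuring that the rate loss is paid only \emph{once} per scale rather than accumulating along the chain.
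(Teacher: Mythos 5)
Your overall architecture (block estimate on a $1$-regular $B_1^i$, then geometric resolvent iteration to a general $1$-good $\Lambda$) matches the paper's, and your accounting of the per-step rate loss is in the right spirit. However there are two genuine gaps in the proposed block-level and norm-bound arguments.

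For the off-diagonal decay on a $1$-regular block $B_1^i$, you propose to split $G_{B_1^i}$ by the spectral projection onto the near-$E^*$ eigenvalues, invoking Propositions~\ref{k1}/\ref{55} for localization of those eigenfunctions and "a Neumann series on the $0$-good region $B_1^i\setminus\{c_0^i,\tilde c_0^i\}$" for the complement. Two issues. First, Propositions~\ref{k1} and \ref{55} are stated for $c_1^i\in Q_1$, i.e. \emph{singular} blocks; for a $1$-regular block they do not apply as written (and there may be no eigenvalue within $\delta_0/5$ of $E^*$ at all). Second and more seriously, the complementary part $(1-P)G_{B_1^i}(1-P)$ is \emph{not} the restricted resolvent $G_{B_1^i\setminus\{c_0^i,\tilde c_0^i\}}$; its off-diagonal decay does not follow from a Neumann series on the punctured set, and relating the two requires an extra perturbative argument that you have not supplied. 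The paper sidesteps the spectral decomposition entirely: it writes the resolvent identity between $G_{B_1^i}$ and $G_{B_1^i\setminus I_1^i}$ (the latter $0$-good, $I_1^i$ an $l_1^{2/3}$-cube about $c_1^i$), and combines the off-diagonal decay of the punctured resolvent with the Neumann-series bound $\|G_{B_1^i}\|\le 2\delta_1^{-1}$. That is both simpler and already rigorous; your version needs the missing perturbative step to be on the same footing.

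For the norm bound, you claim $\|G_\Lambda\|\le 10\delta_1^{-1}$ "follows by summing the pointwise bounds". This cannot work: the diagonal and near-diagonal entries contribute $\|G_{B_1^i}\|\cdot(\text{cardinality of a block}) \sim l_1^d\delta_1^{-1}$ to a column sum, so a Schur-type argument at best gives $\|G_\Lambda\|\lesssim\delta_1^{-2}$, which is what the paper obtains at this stage. The paper then recovers $10\delta_1^{-1}$ by a separate self-adjointness trick: they prove existence (finiteness) of $G_\Lambda(\theta;E)$ on the \emph{wider} window $|E-E^*|<\tfrac25\delta_1$, deduce $\operatorname{dist}(\sigma(H_\Lambda(\theta)),E^*)\ge\tfrac25\delta_1$, and then on the narrower window $|E-E^*|<\tfrac15\delta_1$ conclude $\|G_\Lambda\|=1/\operatorname{dist}(\sigma(H_\Lambda(\theta)),E)\le 5\delta_1^{-1}$. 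Your proposal omits this step, and without it the stated $10\delta_1^{-1}$ bound is not justified.
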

\begin{proof} The proof is based on the application of resolvent identity,  which can be divided into three steps.

 First,  we prove the case when $\Lambda=B_1^i$ is a $1$-regular block. By the definition of $1$-regular block, we have 
\begin{align*}
		\|G_{B_1^i}(\theta^*;E^*)\|\leq\delta_1^{-1}.
\end{align*}
Hence,  by the Neumann series argument, for $|\theta-\theta^*|<\delta_1/(10M_1)$ and  $|E-E^*|<\frac{2}{5}\delta_1$, we have 
	\begin{equation*}
		\|G_{B_1^i}(\theta;E)\|\leq2\delta_1^{-1}.
	\end{equation*}
For convenience, we omit the dependence of Green's  functions on  $\theta$ and  $E$. Let $x,y\in B_1^i$ satisfy $\|x-y\|_1\geq l_1^\frac{4}{5}$. Since $G_{B_1^i}$ is self-adjoint,  we may  assume $\|x-c_1^i\|_1\geq l_1^\frac{3}{4}$. Let $I_1^i$ be the $l_1^\frac{2}{3}$-size cube centered at $c_1^i$.  Then $B_1^i\setminus I_1^i$ is $0$-good and hence we have estimates \eqref{542} and \eqref{542.} for  its Green's function. Using the resolvent identity shows
 \begin{align*}
 	|G_{B_1^i}(x,y)|&=|G_{B_1^i\setminus I_1^i}(x,y)\chi(y)+\sum_{z,z'}G_{B_1^i\setminus I_1^i}(x,z)\Gamma_{z,z'}G_{B_1^i}(z',y)|\\
 	&\leq e^{-\gamma_0\|x-y\|_1}+C(d)\sup_{z,z'}e^{-\gamma_0\|x-z\|_1}|G_{B_1^i}(z',y)|\\
 	&\leq e^{-\gamma_0\|x-y\|_1}+C(d)\sup_{z,z'}e^{-\gamma_0\|x-z\|_1}e^{-\gamma_0(\|z'-y\|_1-l_1^\frac{3}{4})}\delta_1^{-1}\\
 	&\leq e^{-\gamma'_0\|x-y\|_1}
 \end{align*}
with  $\gamma'_0=(1-O(l_1^{-\frac{1}{30}}))\gamma_0$, where we have used  the fact that for $\|z'-y\|_1\leq l_{1}^\frac{3}{4}$, 
\begin{equation*}
	|G_{B_{1}^i}(z',y)|\leq 	\|G_{B_{1}^i}\|\leq2\delta_{1}^{-1}\leq 2e^{-\gamma_0(\|z'-y\|_1-l_{1}^\frac{3}{4})}\delta_{1}^{-1}, 
\end{equation*}
and  for  $\|z'-y\|_1\geq l_{1}^\frac{3}{4}$,  
\begin{align*}
	|G_{B_{1}^i}(z',y)|&=|G_{B_{1}^i}(y,z')|\\
	&\leq \sum_{w,w'}|G_{B_1^i\setminus I_1^i}(y,w)\Gamma_{w,w'}G_{B_{1}^i}(w',z')|\\
	&\leq C(d)e^{-\gamma_0\|y-w\|_1}\|G_{B_{1}^i}\| \\
	&\leq C(d)	e^{-\gamma_0(\|z'-y\|_1-l_{1}^\frac{3}{4})}\delta_{1}^{-1}
\end{align*} and  eventually $\delta_1^{-1}=e^{l_1^\frac{2}{3}}\ll e^{\gamma_0\|x-y\|_1}$.

Second,  we establish the upper bound on norms of  Green's functions on general $1$-good sets. Now  assume that $\Lambda$ is an arbitrary $1$-good set. So, all the  blocks $B_1^i$ inside $\Lambda$ are $1$-regular by the definition of $1$-good sets. We must show that $G_\Lambda$ exists. 
By  the Schur's test, it suffices to show
\begin{equation}\label{Schur}
	\sup_x\sum_{y}|G_\Lambda(\theta; E+i0)(x,y)|<C<\infty.
\end{equation}
Denote $P'_1=\{c_1^i\in P_1:\ B_1^i\subset\Lambda\} \text{ and } \Lambda'=\Lambda\setminus\cup_{c_1^i\in P'_1} I_1^i$. Then $\Lambda'$ is $0$-good since $Q_0$ is contained in  the square root-size kernel in $B_1^i\ (c_1^i\in P_1)$ by our construction.  For $x\in \Lambda\setminus\cup_{c_1^i\in P'_1} 2I_1^i$ ($2I_1^i$ denotes a $2l_1^\frac{2}{3}$-size cube centered at $c_1^i$), we have by using the resolvent  identity
\begin{align*}
	\sum_y|G_\Lambda(x,y)|&\leq \sum_y|G_{\Lambda'}(x,y)|+\sum_{z,z',y}|G_{\Lambda'}(x,z)\Gamma_{z,z'}G_{\Lambda}(z',y)|\\
	&\leq	C(d)\delta_0^{-1}+	C(d)e^{-l_1^\frac{2}{3}}\sup_{z'}\sum_y|G_{\Lambda}(z',y)|.
\end{align*}
For $x\in 2I_1^i$, we have also by using the resolvent  identity
\begin{align*}
	\sum_y|G_\Lambda(x,y)|&\leq \sum_y|G_{B_1^i}(x,y)|+\sum_{z,z',y}|G_{B_1^i}(x,z)\Gamma_{z,z'}G_{\Lambda}(z',y)|\\
	&\leq \delta_1^{-2}+C(d)e^{-\frac{1}{2}l_1}\sup_{z'}\sum_y|G_{\Lambda}(z',y)|.
\end{align*}
By taking supremum in  $x$, we get $$\sup_x\sum_y|G_\Lambda(x,y)|\leq \delta_1^{-2}+\frac{1}{2}\sup_x\sum_y|G_\Lambda(x,y)|,$$ 
and then
$$\sup_x\sum_y|G_\Lambda(x,y)|\leq 2\delta_1^{-2},$$
which gives \eqref{Schur}.
So, it follows that for $|\theta-\theta^*|<\delta_1/(10M_1)$ and $|E-E^*|<\frac{2}{5}\delta_1$, $G_\Lambda(\theta;E)$ exists, from which we get  $\operatorname{dist}(\sigma(H_\Lambda(\theta)),E^*)\geq \frac{2}{5}\delta_1$ and hence $\operatorname{dist}(\sigma(H_\Lambda(\theta)),E)\geq \frac{1}{5}\delta_1$ for $|E-E^*|<\frac{1}{5}\delta_1$, giving  the desired  bound 
	$$\|G_\Lambda(\theta;E)\|=\frac{1}{\operatorname{dist}(\sigma(H_\Lambda(\theta)),E)}\leq10\delta_1^{-1}.$$

Finally,  we use the  bound above and the resolvent identity to prove the  exponential off-diagonal decay of Green's functions  via  the  standard iteration argument.
Let $x,y\in \Lambda$ such that $\|x-y\|_1\geq l_1^\frac{5}{6}$. We define 
	\[B_x=\left\{\begin{aligned}
	&\Lambda_{\l_1^\frac{1}{2}}(x)\cap\Lambda  \quad \text{if }   x\in \Lambda\setminus\cup_{c_1^i\in P'_1} 2I_1^i\text{ \ (Choice 1)},  \\
	&B_1^i\   \quad \quad\quad \quad\quad\quad \text{if }  x \in2I_1^i \text{\ (Choice 2)}. \ 
\end{aligned}\right. \]
The set $B_x$ has the following two properties: \textbf{(1)} $B_x$ is either   a   $0$-good  set  or a $1$-regular block; \textbf{(2)}  The $x$ is close to the center of $B_x$ and away from its relative boundary with $\Lambda$. So, we can iterate the resolvent identity  to obtain 
\begin{align}
	|G_\Lambda(x,y)|&\leq\prod_{s=0}^{L-1} (C(d) l_{1}^d e^{-\gamma_0'\|x_{s}-x_{s+1}\|_1})|G_\Lambda(x_L,y)|\nonumber \\
	&\leq e^{-\gamma_0''\|x-x_L\|_1}|G_\Lambda(x_L,y)|, \label{728}
\end{align}
where $x_0:=x$ and  $x_{s+1}\in \partial B_{x_{s}}$ ( $\|x_{s+1}-x_s\|_1\geq l_1^\frac{1}{2}$ in Choice 1 and  $\|x_{s+1}-x_s\|_1\geq \frac{1}{2}l_1$ in Choice 2). We can stop the iteration until  $y\in B_{x_L}$. 
Using the resolvent identity again, we get 
 \begin{align}
|G_\Lambda(x_L,y)|&\leq|G_{B_{x_L}}(x_L,y)|+\sum_{z,z'}|G_{B_{x_L}}(x_L,z)\Gamma_{z,z'}G_{\Lambda}(z',y)|\nonumber\\
&\leq C(d) e^{-\gamma_0'(\|x_L-y\|_1-l_1^\frac{4}{5})}\delta_1^{-1},\label{728.}
\end{align}
where we have used the  exponential off-diagonal decay of $G_{B_{x_L}}$ and the bound   $\|G_{\Lambda}\|\leq10\delta_1^{-1}$. Then \eqref{728} together with  \eqref{728.} gives the desired off-diagonal estimate
$$|G_\Lambda(x,y)|\leq e^{-\gamma_1\|x-y\|_1}$$
with $\gamma_1=(1-O(l_1^{-\frac{1}{30}}))\gamma_0$. We complete the proof.
\end{proof}

\subsection{Induction hypothesis}  Now, we can lay down the induction  hypothesis. 
We first list the most important properties of $Q_n$ in our induction hypothesis. Assume that $Q_{n-1}$ has been constructed,  and then we define $$s_{n-1}=\inf\big\{\|c_{n-1}^i-c_{n-1}^j\|_1:\ {c_{n-1}^i\neq c_{n-1}^j\in Q_{n-1} }\big\}.$$
Then we have two cases.\smallskip\\
\textbf{Case 1.}  $s_{n-1}\geq10l_{n-1}^2$. Then  $P_n$ consists of  the centers  of $n$-th stage resonant  blocks and  is defined to be $Q_{n-1}$. We associate every $c_n^i\in P_n$ a block $B_{n}^i$ satisfying
\begin{itemize}
 \item[{\bf (1)}] $\Lambda_{l_{n-1}^2}(c_n^i)	\subset B_{n}^i\subset	\Lambda_{l_{n-1}^2+50l_{n-1}}(c_n^i)$.
 \item[{\bf (2)}]  If $B_m^{j}\cap B_{n}^i\neq \emptyset\ (1\leq m<n)$, then $B_m^{j}\subset B_{n}^i$.
 \item[{\bf (3)}] $B_n^i$ is symmetric about $c_n^i$ (i.e., $x\in B_n^i\Rightarrow 2c_n^i-x\in B_n^i$).
 \item[{\bf (4)}]  The set $B_n^i-c_n^i$ is independent of $i$, i.e.   $B_n^j=B_n^i+(c_n^j-c_n^i)$.
\end{itemize}
 \textbf{Case 2.} $s_{n-1}<10l_{n-1}^2$. Then $P_n$ is defined as $$\{c_n^i=(c_{n-1}^i+\tilde{c}_{n-1}^i)/2:\ c_{n-1}\in Q_{n-1}\}, $$ where $\tilde{c}_{n-1}^i$ is the mirror image of $c_{n-1}^i$ satisfying $\|c_{n-1}^i-\tilde{c}_{n-1}^i\|_1=s_{n-1}$ and 	$\|2\theta^*+(c_{n-1}^i+\tilde{c}_{n-1}^i)\cdot \omega\|\leq6\delta_{n-1}^{1/2}$  (cf.  Lemma \ref{mirrr} for an analog). 
The  block $B_n^i$ is required to satisfy the same properties  as in \textbf{Case 1} except \textbf{(1)} replaced by $\Lambda_{l_{n-1}^4}(c_n^i)	\subset B_{n}^i\subset	\Lambda_{l_{n-1}^4+50l_{n-1}}(c_n^i)$. 

From the above constructions,  we have $B_n^i\cap B_n^j=\emptyset$ for $i\neq j$ in both cases and every singular block of stage $n-1$ is  contained in the  square root-size kernel of a unique block from stage $n$.  This is not a trivial  issue, which will be handled  in the Appendix \ref{AC}.  

Finally, the $n$-th stage singular points set  $Q_n$ is defined as 
$$Q_n=\big\{c_n^i\in P_n:\ \operatorname{dist}(\sigma(H_{B_n^i}(\theta^*)),E^*)<\delta_n:=e^{-l_n^{2/3}}\big\}.$$

Now, we assume that every $c_n^i\in Q_n$ belongs to the following either {\bf Class A} or {\bf B}: \\

\fbox{\bf Class A}: For every  $|\theta-\theta^*|<\delta_{n-1}/(10M_1)$, we have 
\begin{itemize}
	\item [\textbf{(H1)}] There is a unique eigenvalue $E_{n}^i(\theta)\in \sigma(H_{B_n^i}(\theta))$ satisfying   $|E_{n}^i(\theta)-E^*|<\delta_{n-1}/9$. Moreover, any other $\hat{E}\in\sigma(H_{B_n^i}(\theta)) $ must obey $|\hat{E}-E^*|\geq\delta_{n-1}/5$.
	\item [\textbf{(H2)}] The corresponding eigenfunction $\psi_n^i$ satisfies $|\psi_n^i(x)|\leq e^{-(\gamma_0/4)\|x-c_n^i\|_1}$ for $\|x-c_n^i\|_1\geq l_n^{6/7}.$
	\item [\textbf{(H3)}] If $|\frac{d}{d \theta} E_n^i(\theta)|\leq\delta_{n-1}^2$, then $|\frac{d^2}{d \theta^2} E_n^i(\theta)|\geq3-\sum_{l=0}^{n-1}\delta_l^3\geq2$ and $\frac{d^2}{d \theta^2} E_n^i(\theta)$ has a unique sign. 
	 \item [\textbf{(H4)}] There exists $\mu_n=0\text{ or }1/2$, such that for all $c_n^i$ belonging to {\bf Class A} and $|\theta-\theta^*|<\delta_{n-1}/(20M_1)$, $$|\frac{d}{d \theta} E_n^i(\theta)|\geq\min(\delta_{n-1}^2,\|\theta+c_n^i\cdot \omega-\mu_n\|).$$
	\item [\textbf{(H5)}]    If $c_n^j\in Q_n$, then  $$m(c_n^i,c_n^j)\leq\sqrt{2}|E_{n}^i(\theta^*)-E_{n}^i(\theta^*+h)|^{1/2}=\sqrt{2}|E_{n}^i(\theta^*)-E_{n}^j(\theta^*)|^{1/2}\leq2\delta_{n}^{1/2},$$
	where $h=(c_n^j-c_n^i)\cdot \omega$ or $-((c_n^i+c_n^j)\cdot \omega+2\theta^*)$  (mod $1$) satisfying $|h|=m(c_n^i,c_n^j)$.
\end{itemize}
\medskip

\fbox{\bf Class B}: For every  $|\theta-\theta^*|<10\delta_{n-1}^{1/2}$, we have \begin{itemize}
	\item [\textbf{(H6)}] There is $\mu_n=0\text{ or }1/2$, such that for all $c_n^i$ in {\bf Class B}, the symmetric point $\theta_n^i:=-c_n^i\cdot \omega+\mu_n$ (mod $1$) belongs to the interval  of $|\theta-\theta^*|<3\delta_{n-1}^{1/2}$.   
	\item [\textbf{(H7)}] There are exact two  eigenvalues $E_{n}^i(\theta),\mathcal{E}_{n}^i(\theta)\in \sigma(H_{B_n^i}(\theta))$ satisfying   $|E_{n}^i(\theta)-E^*|<50M_1\delta_{n-1}^{1/2}$ and $|\mathcal{E}_{n}^i(\theta)-E^*|<50M_1\delta_{n-1}^{1/2}$. Moreover, any other $\hat{E}\in\sigma(H_{B_n^i}(\theta)) $ must obey $|\hat{E}-E^*|\geq\delta_{n-2}/6$. ({\bf Note}: $\delta_{-1}=\delta_0^{1/8}$).
	\item [\textbf{(H8)}] The corresponding eigenfunction $\psi_n^i$ (resp. $\Psi_n$) for $E_n^i$  (resp. $\mathcal{E}_n^i$) satisfies  $|\psi_n^i(x)|\leq e^{-(\gamma_0/4)\|x-c_n^i\|_1}$ (resp.  $|\Psi_n^i(x)|\leq e^{-(\gamma_0/4)\|x-c_n^i\|_1}$)  for $\|x-c_n^i\|_1\geq l_n^{6/7}$.
	\item [\textbf{(H9)}]  If $|\frac{d}{d \theta} E_n^i(\theta)|\leq10\delta_{n-1}^{1/2}$, then $|\frac{d^2}{d \theta^2} E_n^i(\theta)|\geq3-\sum_{l=0}^{n-1}\delta_l^3\geq2$ and $\frac{d^2}{d \theta^2} E_n^i(\theta)$ has a unique sign.
	\item [\textbf{(H10)}] $|\frac{d}{d \theta} E_n^i(\theta)|\geq\min(\delta_{n-1}^2,|\theta-\theta_n^i|).$
	\item [\textbf{(H11)}] $|E_n^i(\theta)-\mathcal{E}_n^i(\theta)| \geq \delta_{n-1}^2 |\theta-\theta_n^i|$.
	\item [\textbf{(H12)}] If $c_n^j\in Q_n$, then $$\{E_{n}^i(\theta^*+h),\mathcal{E}_{n}^i(\theta^*+h)\} =\{E_{n}^j(\theta^*),\mathcal{E}_{n}^j(\theta^*)\},$$ 
		where $h=(c_n^j-c_n^i)\cdot \omega$ or $-((c_n^i+c_n^j)\cdot \omega+2\theta^*)$  (mod $1$) satisfying $|h|=m(c_n^i,c_n^j)$. Moreover, we have  $$m(c_n^i,c_n^j)\leq\sqrt{2}|E_{n}^i(\theta^*)-E_{n}^j(\theta^*)|^{1/2}.$$ The same estimate holds for $|E_{n}^i(\theta^*)-\mathcal{E}_{n}^j(\theta^*)|$, $|\mathcal{E}_{n}^i(\theta^*)-E_{n}^j(\theta^*)|$ and  $|\mathcal{E}_{n}^i(\theta^*)-\mathcal{E}_{n}^j(\theta^*)|$. 
\end{itemize}
\begin{rem}
\textbf{(H5)} and \textbf{(H12)} are stronger versions of {\bf Center Theorem}. The Hypotheses are still true if we enlarge  the $\theta$'s interval to  a $\delta_n^{1/2}$ size. Thus if one $c_n^i \in Q_n$ belongs to certain Class, then all the points in $Q_n$ belong to this Class.  However, the two Classes need  not  be  incompatible.
\end{rem}
We also assume that we have established Green's function estimates at stage $n$.  It remains to verify the induction hypothesis of the stage $n+1$, which will be finished in the following subsection. 

\subsection{Definition and properties of $Q_{n+1}$}
In this section, we will assume that the induction hypothesis is true at stage $l$ for $0\leq l\leq n$,  and then prove that it holds at stage $n+1$. We distinguish two cases.

\subsubsection{}{\bf Case 1}.  $s_n\geq10l_n^2$.  
In this case, $l_{n+1}=l_n^2$ and we define
$$Q_{n+1}=\big\{c_{n+1}^i\in P_{n+1}=Q_n:\ \operatorname{dist}(\sigma(H_{B_{n+1}^i}(\theta^*)),E^*)<\delta_{n+1}:=e^{-l_{n+1}^{2/3}}\big\}.$$ This case will be further distinguished  into two subcases, according to the number of eigenvalues of $H_{B_n^i}(\theta^*)$ that are near $E^*$. We list  all eigenvalues counting  multiplicities. The following notation ``$-$'' means deleting an  element from  the set. \\

\fbox{{\bf Subcase A}}.  We have  $c_{n+1}^i=c_n^i\in Q_{n+1}$ satisfying  
\begin{equation}\label{SA}
	\operatorname{dist}(\sigma(H_{B_n^i}(\theta^*))- E_n^i(\theta^*),E^*)>\delta_{n}.
\end{equation}
We will show  how to get back to  {\bf Class A}  of the  induction hypothesis from \textbf{Subcase A}.

\begin{prop}\label{kn+1}
	Assume that \eqref{SA} holds true.  Then  for $|\theta-\theta^*|<\delta_n/(10M_1)$,
	\begin{itemize}
		\item[\textbf{(a)}]  $H_{B_{n+1}^i}(\theta)$ has a unique eigenvalue $E_{n+1}^i(\theta)$ such that $|E_{n+1}^i(\theta)-E^*|<\delta_n/9$. Moreover, any other $\hat{E}\in\sigma(H_{B_{n+1}^i}(\theta)) $ must obey $|\hat{E}-E^*|>\delta_{n}/5$.
		\item[\textbf{(b)}] The corresponding eigenfunction of $E_{n+1}^i(\theta)$,  $\psi_{n+1}$  satisfies  
	\begin{equation}\label{dec}
			|\psi_{n+1}(x)|\leq e^{-(\gamma_0/4)\|x-c_{n+1}^i\|_1}\ {\rm for}\   \|x-c_{n+1}^i\|_1\geq l_n^{6/7}.
	\end{equation}
		\item[\textbf{(c)}] Let $\psi_{n}$ be  the eigenfunction of $E_n(\theta)$ for $H_{B_{n}^i}(\theta)$. Then 
		\begin{equation}\label{closed}
			\|\psi_{n+1}-\psi_{n}\|\leq \delta_{n}^{10}.
		\end{equation}
		\item[\textbf{(d)}] $\|G_{B_{n+1}^i}^\perp(E_{n+1}^i)\|\leq20\delta_n^{-1}$, where $G_{B_{n+1}^i}^\perp$ denotes the Green's function for $B_{n+1}^i$ on the orthogonal complement of $\psi_{n+1}$.
	\end{itemize} 
\end{prop}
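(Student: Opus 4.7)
The strategy mirrors that of Proposition \ref{k1}, but at scale $n+1$: we use the non--level-crossing hypothesis \eqref{SA} together with the induction hypothesis at stage $n$ to recover Class A structure. The inputs we need are (i) that $c_n^i$ is the unique element of $Q_n$ contained in $B_{n+1}^i$ (which follows from the geometric property (1) giving $\operatorname{diam}(B_{n+1}^i)\lesssim l_n^2$ together with $s_n\geq 10 l_n^2$), and (ii) the induction hypothesis for $c_n^i\in Q_n$, which by \eqref{SA} must lie in \textbf{Class A} (so \textbf{(H1)}--\textbf{(H5)} are at our disposal at stage $n$). In particular, by (i) the set $\Lambda := B_{n+1}^i\setminus B_n^i$ is $n$-good, so the Green's function estimates at stage $n$ give $\|G_\Lambda(\theta;E)\|\leq 10\delta_n^{-1}$ and $|G_\Lambda(\theta;E)(x,y)|\leq e^{-\gamma_n\|x-y\|_1}$ for $\|x-y\|_1\geq l_n^{5/6}$, whenever $|\theta-\theta^*|<\delta_n/(10M_1)$ and $|E-E^*|<\delta_n/5$.

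For the existence half of \textbf{(a)}: since $c_{n+1}^i\in Q_{n+1}$, there is $\widetilde E\in\sigma(H_{B_{n+1}^i}(\theta^*))$ with $|\widetilde E-E^*|<\delta_{n+1}\ll\delta_n^{100}$, and the bound $|V'|\leq M_1$ propagates this to a nearby eigenvalue $E_{n+1}^i(\theta)$ with $|E_{n+1}^i(\theta)-E^*|<\delta_n/10+\delta_{n+1}<\delta_n/9$ throughout $|\theta-\theta^*|<\delta_n/(10M_1)$. The decay estimate \textbf{(b)} is the key input and will be obtained via the Poisson formula: writing the eigenvalue equation $(H_{B_{n+1}^i}(\theta)-E_{n+1}^i)\psi_{n+1}=0$ and restricting to $\Lambda$ gives
\[
\psi_{n+1}(x)=\sum_{y\in\Lambda,\,z\in B_n^i,\,\|y-z\|_1=1} G_\Lambda(\theta;E_{n+1}^i)(x,y)\,\Gamma_{y,z}\,\psi_{n+1}(z),\quad x\in\Lambda,
\]
and the exponential decay of $G_\Lambda$ combined with $\|\psi_{n+1}\|_\infty\leq 1$ and the factor $\varepsilon$ in $\Gamma$ yields $|\psi_{n+1}(x)|\leq e^{-(\gamma_0/4)\|x-c_{n+1}^i\|_1}$ once $\|x-c_{n+1}^i\|_1\geq l_n^{6/7}$. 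For the remaining region $x\in B_n^i$ with $\|x-c_n^i\|_1\geq l_n^{6/7}$, we will use the closeness statement \textbf{(c)} together with the stage-$n$ decay \textbf{(H2)} for $\psi_n$.

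Uniqueness in \textbf{(a)} then follows by the standard orthogonality argument of Proposition \ref{k1}: any second eigenvalue $\widehat E$ with $|\widehat E-E^*|<\delta_n/5$ would produce, by the same Poisson argument, an eigenfunction almost localized on $B_n^i$, contradicting orthogonality with $\psi_{n+1}$ (which also localizes on $B_n^i$), once one combines with the $n$-stage fact that $H_{B_n^i}(\theta)$ has only the simple eigenvalue $E_n^i(\theta)$ near $E^*$ (by \textbf{(H1)} applied to Class A). For \textbf{(c)}, I would take $\widetilde\psi_n$ to be $\psi_n$ extended by zero to $B_{n+1}^i$ and compute
\[
(H_{B_{n+1}^i}(\theta)-E_n^i(\theta))\widetilde\psi_n=\varepsilon\bigl(\Delta\widetilde\psi_n-\widetilde{\Delta\psi_n}\bigr),
\]
whose right-hand side is supported on $\partial B_n^i$ and hence of size $O(\varepsilon e^{-(\gamma_0/4)l_n})\ll\delta_n^{10}$ by \textbf{(H2)}. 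A Neumann-series / spectral-gap argument (using the not-yet-established \textbf{(d)}, but only for a $\delta_n/20$ gap which is already guaranteed by the existence part of (a)) then forces $|E_{n+1}^i-E_n^i|\lesssim\delta_n^{10}$ and $\|\psi_{n+1}-\psi_n\|\leq\delta_n^{10}$. Finally, \textbf{(d)} is immediate from \textbf{(a)}: for any other $\widehat E\in\sigma(H_{B_{n+1}^i}(\theta))$ one has $|\widehat E-E_{n+1}^i|\geq\delta_n/5-\delta_n/9\geq\delta_n/20$, giving $\|G_{B_{n+1}^i}^\perp\|\leq 20\delta_n^{-1}$.

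The main obstacle is the mild circular dependency among \textbf{(a)}--\textbf{(d)}: the Poisson-formula derivation of \textbf{(b)} needs a bound on $G_\Lambda$ at the energy $E_{n+1}^i$ (which is fine, it is built from $n$-good data), but the uniqueness in \textbf{(a)} requires the localization in \textbf{(b)}, and the approximation in \textbf{(c)} requires a spectral gap of the type provided by \textbf{(d)}. I would resolve this by proving them in the order: existence of $E_{n+1}^i$ in (a) $\to$ decay (b) in $\Lambda$ $\to$ uniqueness in (a) $\to$ approximation (c) $\to$ decay (b) on the remaining inner region $\to$ gap (d). The remaining technicality is to verify from the inductive construction of $B_{n+1}^i$ (properties (1)--(4)) that indeed $B_n^i\subset B_{n+1}^i$ and that $\Lambda=B_{n+1}^i\setminus B_n^i$ is $n$-good, which is where the delicate geometry of the resonant blocks (deferred to the Appendix) plays its role.
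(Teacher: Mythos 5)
Your overall strategy matches the paper's (Poisson formula on an $n$-good annulus for \textbf{(b)}, restriction to $B_n^i$ with the \eqref{SA} spectral gap for \textbf{(c)}, orthogonality for uniqueness), but there is a quantitative error in your treatment of \textbf{(b)} that the rest of the argument does not repair.

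You remove the entire block $B_n^i$, so your $\Lambda=B_{n+1}^i\setminus B_n^i$ only contains points with $\|x-c_{n+1}^i\|_1\gtrsim l_n$, and the Poisson-formula argument can only deliver decay there. The statement requires decay down to $\|x-c_{n+1}^i\|_1\geq l_n^{6/7}$, which is far smaller than $l_n$. You propose to fill the annular region $l_n^{6/7}\leq\|x-c_{n+1}^i\|_1\lesssim l_n$ using \textbf{(c)} together with \textbf{(H2)} for $\psi_n$; but \textbf{(c)} gives only the $\ell^2$ bound $\|\psi_{n+1}-\psi_n\|\leq\delta_n^{10}=e^{-10l_n^{2/3}}$, and the pointwise target in that region is of size $e^{-(\gamma_0/4)l_n^{6/7}}$. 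Since $6/7>2/3$, one has $e^{-(\gamma_0/4)l_n^{6/7}}\ll e^{-10l_n^{2/3}}=\delta_n^{10}$, so the error from \textbf{(c)} swamps the bound you are trying to prove. The paper avoids this by deleting only a much smaller hole $\hat B_n^i$ of size $O(l_n^{2/3})$ centered at $c_{n+1}^i$ (chosen so that $B_{n+1}^i\setminus\hat B_n^i$ is still $n$-good, which works because all lower-stage singular blocks inside $B_n^i$ live within $O(l_{n-1})\ll l_n^{2/3}$ of $c_{n+1}^i$); then the Poisson formula directly yields the decay for all $\|x-c_{n+1}^i\|_1\geq l_n^{6/7}$, because $\operatorname{dist}(x,\partial\hat B_n^i)\geq\frac{2}{3}\|x-c_{n+1}^i\|_1>l_n^{5/6}$ there.

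A second, milder issue: you try to use a stage-$(n+1)$ spectral gap (citing \textbf{(d)}, or an eigenvalue-separation you say is ``already guaranteed by the existence part of (a)'') to run a Neumann-series argument for \textbf{(c)}. Existence of $E_{n+1}^i$ near $E^*$ gives no gap at stage $n+1$, so that step as written is circular. The paper's proof of \textbf{(c)} needs only the stage-$n$ gap coming directly from assumption \eqref{SA}: one bounds $\|G_{B_n^i}^\perp(E_{n+1}^i)\|\leq 2\delta_n^{-1}$ and then $\|P_n^\perp\psi_{n+1}\|=\|G_{B_n^i}^\perp P_n^\perp\Gamma_{B_n^i}\psi_{n+1}\|=O(\delta_n^{-1}e^{-\gamma_0 l_n/4})$, using \textbf{(b)} to control the boundary term. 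So the correct order of dependence is: existence in \textbf{(a)} $\to$ \textbf{(b)} on the annulus with the small hole $\to$ \textbf{(c)} via \eqref{SA} $\to$ uniqueness in \textbf{(a)} $\to$ \textbf{(d)}. Once you replace $B_n^i$ by $\hat B_n^i$ as the removed set, the rest of your argument lines up with the paper.
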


\begin{proof}
	Since $ B_{n+1}^i$ is singular, by definition,  $H_{B_{n+1}^i}(\theta^*)$ has an eigenvalue $E_{n+1}^i(\theta^*)$ such that  $|E_{n+1}^i(\theta^*)-E^*|<\delta_{n+1}$. By $|V'|\leq M_1$,   $\sigma(H_{B_{n+1}^i}(\theta))$ and  $\sigma(H_{B_{n+1}^i}(\theta^*))$   differ at most $M_1|\theta-\theta^*|<\delta_n/10$, which shows the existence of $E_{n+1}^i(\theta)$ in $|E-E^*|<\delta_n/9$. Define  $\Lambda=B_{n+1}^i\setminus\hat{B}_n^i$, where  $\hat{B}_n^i$ is a  $O(l_n^{2/3})$-size block with the center $c_{n+1}^i$ so that $\Lambda$ is $n$-good.   Let $E\in \sigma(H_{B_{n+1}^i}(\theta))$ be such that $|E-E^*|<\delta_n/5$. We determine the value of $\psi_{n+1}(x)$ by 
	$$\psi_{n+1}(x)= \sum_{z,z'} G_{\Lambda}(\theta;E)(x, z) \Gamma_{z,z'} \psi_{n+1}(z').$$
	For  $\|x-c_{n+1}^i\|_1\geq l_n^{6/7}$, we have $\operatorname{dist}(x,\partial\hat{B}_n^i)\geq \|x-c_{n+1}^i\|_1-O(l_n^{2/3})\geq\frac{2}{3} \|x-c_{n+1}^i\|_1>l_n^{5/6}. $ Using the exponential off-diagonal  decay of  $G_\Lambda(\theta; E)$,  we get 
	\begin{align*}
	|\psi_{n+1}(x)|&\leq C(d)\sum_{z'\in \partial^+ \hat{B}_n^i}e^{-\frac{1}{3}\gamma_0 \|x-c_{n+1}^i\|_1}   |\psi_{n+1}(z')|\\
	&\leq e^{-\frac{1}{4}\gamma_0 \|x-c_{n+1}^i\|_1}.
	\end{align*}
	 Thus, we finish the proof of  \textbf{(b)}. To establish \textbf{(c)}, we must show that $\psi_{n+1}$ is close to  $\psi_{n}$ inside $B_n^i$. To see this, we restrict $H_{B_{n+1}^i}(\theta)\psi_{n+1}=E_{n+1}^i(\theta)\psi_{n+1}$ to $B_n^i$ to obtain
		 $$\left(H_{B_{n}^i}-E_{n+1}^i\right)\psi_{n+1}=\Gamma_{B_{n}^i}\psi_{n+1}.$$
	Combining \eqref{SA}, \eqref{dec} and the above equation, we get  $$\|P_n^\perp\psi_{n+1}\|=\|G_{B_n^i}^\perp(E_{n+1}^i)P_n^\perp\Gamma_{B_{n}^i}\psi_{n+1}\|= O(\delta_n^{-1}e^{-\frac{1}{4}\gamma_0l_n})<\frac{1}{2}\delta_n^{10},$$
   where $P_n^\perp$ is the projection onto the orthogonal complement of $\psi_n$ and $G_{B_n^i}^\perp(E_{n+1}^i)$ is the Green's function for $B_n^i$ on $\operatorname{Range}P_n^\perp$ with upper bound 
   \begin{align*}
   \|G_{B_n^i}^\perp(E_{n+1}^i)\|&\leq\operatorname{dist}(\sigma(H_{B_n^i}(\theta))- E_n^i(\theta),E_{n+1})^{-1}\\
   &\leq(\operatorname{dist}(\sigma(H_{B_n^i}(\theta^*))- E_n^i(\theta^*),E^*)-\delta_{n}/10-\delta_n/5)^{-1}\\
   &\leq2\delta_n^{-1}
   \end{align*}
   by the assumption \eqref{SA}. Since $\psi_{n+1}$ is normalized, we obtain $\|\psi_{n+1}-\psi_{n}\|\leq \delta_{n}^{10}$.
	  If there is another $\hat{E}\in \sigma(H_{B_{n+1}^i}(\theta))$ satisfying $|\hat{E}-E^*|\leq\delta_n/5$, the same argument shows that the corresponding eigenfunction $\hat{\psi}$  must also almost localize on $B_n^i$ and be close to $\psi_n$ inside $B_n^i$, which violates the  orthogonality. Thus, we prove the uniqueness part of \textbf{(a)}. Finally, \textbf{(d)} follows from the fact that any other $\hat{E}\in \sigma(H_{B_{n+1}^i}(\theta))$ must obey $|\hat{E}-E_{n+1}^i(\theta)|\geq |\hat{E}-E^*|-|E^*-E_{n+1}^i(\theta)|\geq\delta_n/5-\delta_n/9\geq\delta_n/20$.
\end{proof}
Next, we estimate the upper bounds on derivatives of eigenvalues parameterizations. 
	  \begin{prop}\label{apn}
	  	For $|\theta-\theta^*|<\delta_n/(10M_1)$, we have 
	  	$$|\frac{d^s}{d\theta^s}(E_{n+1}^i(\theta)-E_n^i(\theta))|\leq\delta_n^7\ \ {\rm for\ }s=0,1,2.$$
	  \end{prop}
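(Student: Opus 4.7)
The plan is to mirror the argument of Proposition \ref{ap}, treating $\psi_{n+1}$ as a small perturbation of $\psi_n$ (extended by zero from $B_n^i$ to $B_{n+1}^i$). The key inputs I would invoke are the closeness estimate $\|\psi_{n+1}-\psi_n\|\leq\delta_n^{10}$ from \eqref{closed}, the exponential decay \eqref{dec}, the gap bound $\|G^\perp_{B_{n+1}^i}\|\leq 20\delta_n^{-1}$ from Proposition \ref{kn+1}(d), and the analogous bound $\|G^\perp_{B_n^i}\|=O(\delta_n^{-1})$ furnished by the inductive hypothesis in combination with the Subcase A assumption \eqref{SA}.

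For $s=0$ I would first exploit that $\psi_n$ is supported in $B_n^i$ to observe $\langle\psi_n,H_{B_{n+1}^i}(\theta)\psi_n\rangle=E_n^i(\theta)$, and then expand $E_{n+1}^i(\theta)=\langle\psi_{n+1},H_{B_{n+1}^i}\psi_{n+1}\rangle$ via $\psi_{n+1}=\psi_n+\phi$ with $\|\phi\|\leq\delta_n^{10}$. The residual $H_{B_{n+1}^i}\psi_n-E_n^i\psi_n$ is supported on $\partial^+B_n^i$ and is exponentially small by \eqref{dec}, so a direct Rayleigh quotient expansion will yield $|E_{n+1}^i-E_n^i|=O(\delta_n^{20})\ll\delta_n^7$. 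For $s=1$ I would invoke the Hellmann-Feynman formula from Appendix \ref{AB} to reduce the difference to $\langle\psi_{n+1},V'\psi_{n+1}\rangle-\langle\psi_n,V'\psi_n\rangle$, which is bounded directly by $2M_1\|\psi_{n+1}-\psi_n\|\leq 2M_1\delta_n^{10}$.

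The main work will be for $s=2$. I would start from the second-order formula
\[
\tfrac{d^2}{d\theta^2}E_{n+1}^i=\langle\psi_{n+1},V''\psi_{n+1}\rangle-2\langle\psi_{n+1},V'G^\perp_{n+1}V'\psi_{n+1}\rangle
\]
and its analogue for $E_n^i$ with $G^\perp_n$. The $V''$ pieces differ by $O(M_1\delta_n^{10})$. For the resolvent pieces I would decompose $V'\psi_{n+1}=V'\psi_n+V'(\psi_{n+1}-\psi_n)$; the correction involving $\psi_{n+1}-\psi_n$ contributes only $O(\|G^\perp_{n+1}\|\cdot M_1^2\|\psi_{n+1}-\psi_n\|)=O(\delta_n^9)$. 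The hard part will be estimating the remaining difference $\langle V'\psi_n,(G^\perp_{n+1}-G^\perp_n)V'\psi_n\rangle$ (with $G^\perp_n$ acting on the zero extension from $B_n^i$). My plan is to use a resolvent identity: since $\psi_n$ extended by zero is an almost-eigenvector of $H_{B_{n+1}^i}$ at the energy $E_n^i\approx E_{n+1}^i$ with residual supported on $\partial^+B_n^i$ and of size $O(e^{-\gamma_0 l_n/9})$, the two resolvents should agree on vectors supported in $B_n^i$ up to errors $O(\|G^\perp_n\|\,\|G^\perp_{n+1}\|\cdot e^{-\gamma_0 l_n/9})=O(\delta_n^{-2}e^{-\gamma_0 l_n/9})$, which is super-exponentially small given $l_n\gtrsim|\log\varepsilon_0|^2$. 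The chief obstacle is absorbing the factor $\delta_n^{-2}$ coming from the two large resolvent norms, which I expect to beat comfortably thanks to the exponential decay at $\partial B_n^i$.
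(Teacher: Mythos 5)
Your plan follows the paper's route closely for $s=0,1$, and you also correctly identify the resolvent difference $\langle V'\psi_n,(G^\perp_{B_{n+1}^i}-G^\perp_{B_n^i})V'\psi_n\rangle$ as the technical crux of the $s=2$ case. However, the justification you sketch for that term has a genuine gap. You write that because the residual $(H_{B_{n+1}^i}-E_n^i)\psi_n=\Gamma_{B_n^i}\psi_n$ is supported on $\partial^+ B_n^i$ and is $O(e^{-\gamma_0 l_n/9})$, the two projected resolvents must agree on vectors supported in $B_n^i$ up to $O(\|G^\perp_n\|\|G^\perp_{n+1}\|e^{-\gamma_0 l_n/9})$. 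This does not follow. The resolvent identity for $G^\perp_{B_{n+1}^i}-G^\perp_{B_n^i}$ produces the term $G^\perp_{B_{n+1}^i}\Gamma_n G^\perp_{B_n^i}V'\psi_n$, and the quantity that must be small near $\partial B_n^i$ is $G^\perp_{B_n^i}(E_n^i)V'\psi_n$, not $\psi_n$. The projected Green's function $G^\perp_{B_n^i}(E_n^i)$ has norm $O(\delta_n^{-1})$ but no a priori off-diagonal decay (the energy $E_n^i$ lies in the spectrum of $H_{B_n^i}$ and the spectral gap to the rest of the spectrum is only of order $\delta_n$), so even though $V'\psi_n$ is concentrated near $c_n^i$, the vector $G^\perp_{B_n^i}V'\psi_n$ could have mass of size $\delta_n^{-1}$ on $\partial^- B_n^i$. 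The naive bound $\|\Gamma_n\|\|G^\perp_{B_n^i}\|\|V'\psi_n\|=O(\varepsilon\delta_n^{-1})$, multiplied by $\|G^\perp_{B_{n+1}^i}\|=O(\delta_n^{-1})$, gives $O(\varepsilon\delta_n^{-2})$, which is not $\leq\delta_n^7$ once $n$ is large.

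To close the gap you actually need the paper's estimate \eqref{928}: $\|\Gamma_n G^\perp_{B_n^i}(E_n^i)V'\psi_n\|=O(\delta_n^9)$. Proving this requires more than the decay of $\psi_n$: one cuts off $V'\psi_n$ by a characteristic function $\chi_n$ of the block $\Lambda_{l_n/4}(c_n^i)$ (so that $\|(1-\chi_n)V'\psi_n\|\leq\delta_n^{10}$), then compares $G^\perp_{B_n^i}$ to the Green's function $G_A$ of an $(n-1)$-good annulus $A=B_n^i\setminus\hat B$ centered at $c_n^i$; it is the exponential off-diagonal decay of $G_A$ (a consequence of the $(n-1)$-th stage Green's function estimates, not of the decay of $\psi_n$) that kills the factor coming from $\Gamma_n$ at $\partial B_n^i$. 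A further correction term involving $P_n$ must then be handled using $\|(1-\chi_n)\psi_n\|\leq\delta_n^{10}$. Your proposal also omits the projection bookkeeping needed because $G^\perp_{B_n^i}$ and $G^\perp_{B_{n+1}^i}$ act on different orthogonal complements; in the paper this yields extra terms $G^\perp_{B_{n+1}^i}P_{n+1}^\perp P_n$ and $P_{n+1}P_n^\perp G^\perp_{B_n^i}$ (equation \eqref{545}), which are small thanks to $\|P_{n+1}^\perp P_n\|\leq 2\|\psi_{n+1}-\psi_n\|\leq 2\delta_n^{10}$, and this step does use the closeness of eigenfunctions in the way you intend. So the strategy and key inputs you list are right, but the mechanism by which the boundary coupling is made small is misattributed, and without the $(n-1)$-good annulus argument the crucial term cannot be controlled.
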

	  \begin{proof}
        Using \eqref{closed}, we get 
	  	$$|E_{n+1}^i(\theta)-E_n^i(\theta)|=|\left\langle \psi_{n+1},H_{B_{n+1}^i}(\theta)\psi_{n+1}\right\rangle -\left\langle \psi_n,H_{B_n^i}(\theta)\psi_n\right\rangle|= O(\delta_n^{10})$$
	  	and $$|\frac{d}{d \theta} E_{n+1}^i(\theta)-\frac{d}{d \theta} E_n^i(\theta)|=|\left\langle\psi_{n+1},V' \psi_{n+1}\right\rangle-\left\langle\psi_n,V' \psi_n\right\rangle|=O(\delta_n^{10}).$$
	  	For $s=2$, we use the formulas from Theorem \ref{daoshu},
	  	\begin{align*}
		\frac{d^2}{d \theta^2} E_{n}^i(\theta)&=\left\langle\psi_{n}, V'' \psi_{n}\right\rangle-2\left\langle\psi_{n}, V' G_{B_{n}^i}^{\perp}(E_{n}^i) V' \psi_{n}\right\rangle,\\
	  	\frac{d^2}{d \theta^2} E_{n+1}^i(\theta)&=\left\langle\psi_{n+1}, V'' \psi_{n+1}\right\rangle-2\left\langle\psi_{n+1}, V' G_{B_{n+1}^i}^{\perp}(E_{n+1}^i) V' \psi_{n+1}\right\rangle. 
		\end{align*}
	  Thus, it suffices to estimate 
	  \begin{align*}
	  	&\Big|\left\langle\psi_{n+1}, V' G_{B_{n+1}^i}^{\perp}(E_{n+1}^i) V' \psi_{n+1}\right\rangle-\left\langle\psi_{n}, V' G_{B_{n}^i}^{\perp}(E_{n}^i) V' \psi_{n}\right\rangle\Big|\\
	  	\leq&\Big|\left\langle\psi_{n+1}, V' G_{B_{n+1}^i}^{\perp}(E_{n+1}^i) V' \psi_{n+1}\right\rangle-\left\langle\psi_{n}, V' G_{B_{n+1}^i}^{\perp}(E_{n+1}^i) V' \psi_{n}\right\rangle\Big|\\
		&\ \  +\Big|\left\langle\psi_{n}, V' G_{B_{n+1}^i}^{\perp}(E_{n+1}^i) V' \psi_{n}\right\rangle-\left\langle\psi_{n}, V' G_{B_{n}^i}^{\perp}(E_{n}^i) V' \psi_{n}\right\rangle\Big|\\
	  	\leq&\delta_n^8+\Big|\left\langle V'\psi_{n},  \left(G_{B_{n+1}^i}^{\perp}(E_{n+1}^i)-G_{B_{n}^i}^{\perp}(E_{n}^i)\right) V' \psi_{n}\right\rangle\Big|.
	  \end{align*}
  We must estimate 
  \begin{align}
  	\nonumber&\ \ \ \ G_{B_{n+1}^i}^{\perp}(E_{n+1}^i)-G_{B_{n}^i}^{\perp}(E_{n}^i)\\
	\nonumber&=G_{B_{n+1}^i}^{\perp}(E_{n+1}^i)P_n^\perp-P_{n+1}^\perp G_{B_{n}^i}^{\perp}(E_{n}^i)\\
	\nonumber&\ \ +G_{B_{n+1}^i}^{\perp}(E_{n+1}^i)P_n-P_{n+1}G_{B_{n}^i}^{\perp}(E_{n}^i)
	\\
  	\nonumber&=G_{B_{n+1}^i}^{\perp}(E_{n+1}^i)\left(-\Gamma_n +(E_{n+1}^i-E_n^i)\right)G_{B_{n}^i}^{\perp}(E_{n}^i)\\
&\ \ +G_{B_{n+1}^i}^{\perp}(E_{n+1}^i)P_{n+1}^\perp P_n-P_{n+1}P_n^\perp G_{B_{n}^i}^{\perp}(E_{n}^i) \label{545}
  \end{align}
restricted to $B_n^i$. This equation follows from the resolvent identity. We have used the orthogonal projections $P_n$ and $P_{n+1}$ onto $\psi_{n}$ and $\psi_{n+1}$ respectively and the relation $P_n+P_n^\perp=\operatorname{Id}_{n}$, $P_{n+1}+P_{n+1}^\perp=\operatorname{Id}_{n+1}$. The last two terms of \eqref{545} are bounded by $\delta_n^8$ using 
\begin{align*}
\|P_{n+1}^\perp P_n\|&=\|P_{n+1}^\perp\psi_{n}\|\leq2\|\psi_{n+1}-\psi_{n}\|\leq2\delta_n^{10},\\
\|P_{n+1}P_n^\perp\|&=\|P_n^\perp P_{n+1}\|=\|P_{n}^\perp\psi_{n+1}\|\leq2\|\psi_{n+1}-\psi_{n}\|\leq2\delta_n^{10},
\end{align*}
and (by the assumption \eqref{SA}) 
$$\|G_{B_{n}^i}^{\perp}(E_{n}^i)\| \ , \ \|G_{B_{n+1}^i}^{\perp}(E_{n+1}^i)\|= O(\delta_n^{-1}).$$
The second term on the right hand side of \eqref{545} is bounded by $O(\delta_n^{10})$ since $|E_{n+1}-E_n|\leq\delta_n^9$. Therefore, the case $s=2$ follows if we prove 
\begin{equation}\label{928}
	\|\Gamma_n G_{B_n^i}^\perp(E_n^i)V'\psi_{n}\|= O(\delta_n^9).
\end{equation}
Let $\chi_n$ be the characteristic function of the block $\Lambda_{\frac{l_n}{4}}(c_n^i)\subset B_n^i$.  By the estimate \eqref{dec}, we have 
$$\|(1-\chi_n)V'\psi_{n}\|= O(e^{-\frac{\gamma_0}{16}l_n})\leq\delta_n^{10}.$$
Thus, in order to prove \eqref{928}, it suffices to show $\|\Gamma_nG_{B_n^i}^\perp \chi_n\|= O(\delta_n^9)$. To do this, we choose a $O(l_n^{2/3})$-size block $\hat{B}$ with the center $c_n^i$ so that $A=B_n^i\setminus \hat{B}$ is $(n-1)$-good. Using the resolvent identity, we get 
\begin{align*}
	\Gamma_nG_{B_n^i}^\perp\chi_n&=\Gamma_nG_A\chi_n+\Gamma_n(\chi_AG_{B_n^i}^\perp-G_AP_n^\perp)\chi_n-\Gamma_nG_AP_n\chi_n\\
	&=\Gamma_nG_A\chi_n+\Gamma_nG_A\Gamma_AG_{B_n^i}^\perp\chi_n-\Gamma_nG_AP_n\chi_n.
\end{align*}
Since $A$ is $(n-1)$-good and $|E_n^i(\theta)-E^*|\leq2\delta_n$, we deduce that $\|G_A(E_n^i)\|\leq10\delta_{n-1}^{-1}$ and   $G_A(E_n^i)(x,y)$ decays exponentially fast for $\|x-y\|_1\geq l_{n-1}^\frac{5}{6}$. Thus, $\|\Gamma_nG_A\chi_n\|= O(l_ne^{-\frac{1}{5}\gamma_0l_n})\leq\delta_n^{10}$ and  $\|\Gamma_nG_A\Gamma_A\|= O(l_ne^{-\frac{1}{5}\gamma_0l_n})\leq\delta_n^{10}$. To estimate the final term, we use $\|P_n\chi_n-\chi_nP_n\|\leq \|P_n\chi_n-P_n\|+\|P_n-\chi_nP_n\|\leq2\|(1-\chi_n)P_n\|= 2\|(1-\chi_n)\psi_{n}\|\leq\delta_n^{10}$ to obtain 
$$\|\Gamma_nG_AP_n\chi_n\|\leq\|\Gamma_nG_A\chi_nP_n\|+\|\Gamma_nG_A\chi_n(P_n\chi_n-\chi_nP_n)\|= O(\delta_n^9).$$
	  \end{proof}
  We also have the transversality type estimates. 
  \begin{prop}\label{ds}
  If $|\frac{d}{d \theta} E_{n+1}^i(\theta)|\leq\delta_{n}^2$ for some $|\theta-\theta^*|<\delta_n/(10M_1)$, then $|\frac{d^2}{d \theta^2} E_{n+1}^i(\theta)|\geq3-\sum_{l=0}^{n}\delta_l^3\geq2$ and $\frac{d^2}{d \theta^2} E_{n+1}^i(\theta)$ has a unique sign.
  \end{prop}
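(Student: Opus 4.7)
My plan is to prove Proposition \ref{ds} by transferring the transversality estimate from $E_n^i$ to $E_{n+1}^i$ via Proposition \ref{apn}, exactly in the spirit of how Proposition \ref{dx} was obtained from the base case. Since we are in \textbf{Subcase A} of \textbf{Case 1}, the center $c_{n+1}^i = c_n^i$ belongs to $Q_n$ and, by Proposition \ref{kn+1}, the local picture near $E^*$ is of a \emph{single} eigenvalue $E_n^i$ in $H_{B_n^i}(\theta)$; in particular $c_n^i$ belongs to \textbf{Class A}, so the induction hypothesis \textbf{(H3)} is available.

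First I would use Proposition \ref{apn} with $s=1$ to deduce from $|\frac{d}{d\theta}E_{n+1}^i(\theta)|\leq \delta_n^2$ that
\[
\Big|\frac{d}{d\theta}E_n^i(\theta)\Big|\leq \delta_n^2 + \delta_n^7 < \delta_{n-1}^2,
\]
where the last inequality holds since $\delta_n \ll \delta_{n-1}$ (recall $|\log \delta_n| = l_n^{2/3}$ and $l_n = l_{n-1}^2$). Next, since $|\theta-\theta^*| < \delta_n/(10M_1) < \delta_{n-1}/(10M_1)$, the interval is contained in the domain on which \textbf{(H3)} applies, so
\[
\Big|\frac{d^2}{d\theta^2}E_n^i(\theta)\Big|\geq 3-\sum_{l=0}^{n-1}\delta_l^3,
\]
and $\frac{d^2}{d\theta^2}E_n^i$ takes a single sign on the set $\{\theta:|\frac{d}{d\theta}E_n^i(\theta)|\leq \delta_{n-1}^2\}$.

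Then I would invoke Proposition \ref{apn} with $s=2$ to transfer this lower bound:
\[
\Big|\frac{d^2}{d\theta^2}E_{n+1}^i(\theta)\Big|\geq \Big|\frac{d^2}{d\theta^2}E_n^i(\theta)\Big|-\delta_n^7 \geq 3-\sum_{l=0}^{n-1}\delta_l^3-\delta_n^3 = 3-\sum_{l=0}^{n}\delta_l^3 \geq 2.
\]
For uniqueness of the sign, note that at every $\theta$ satisfying the hypothesis $|\frac{d}{d\theta}E_{n+1}^i(\theta)|\leq \delta_n^2$, the same argument forces $|\frac{d}{d\theta}E_n^i(\theta)|\leq \delta_{n-1}^2$, so $\frac{d^2}{d\theta^2}E_n^i$ has the common fixed sign prescribed by \textbf{(H3)}; since $\frac{d^2}{d\theta^2}E_{n+1}^i$ differs from $\frac{d^2}{d\theta^2}E_n^i$ by at most $\delta_n^7 \ll 2$, it inherits that same sign.

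There is essentially no obstacle here beyond bookkeeping: the main work has already been done in Proposition \ref{apn} (the $C^2$-closeness of $E_{n+1}^i$ to $E_n^i$, which itself required controlling Green's function differences via the resolvent identity) and in the induction hypothesis \textbf{(H3)}. The only point meriting care is the book-keeping of the series $\sum_{l=0}^n \delta_l^3$, which converges geometrically since $\delta_{l+1} < \delta_l^{100}$, so the bound $3-\sum_{l=0}^n \delta_l^3 \geq 2$ is comfortably maintained for all $n$ provided $\varepsilon_0$ is chosen small enough.
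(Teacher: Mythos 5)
Your overall strategy coincides with the paper's: transfer the hypothesis $|\frac{d}{d\theta}E_{n+1}^i|\le\delta_n^2$ to a bound on $|\frac{d}{d\theta}E_n^i|$ via Proposition~\ref{apn} with $s=1$, invoke the inductive transversality hypothesis for $E_n^i$, and transfer back to $E_{n+1}^i$ via Proposition~\ref{apn} with $s=2$. The bookkeeping of $\sum_{l=0}^n\delta_l^3$ and the sign argument are fine.

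However, your claim that ``$c_n^i$ belongs to \textbf{Class A}, so the induction hypothesis \textbf{(H3)} is available'' does not follow. The classification of $c_n^i\in Q_n$ into Class A or Class B was fixed at the previous inductive step and is \emph{not} determined by whether $c_{n+1}^i=c_n^i$ falls under \textbf{Subcase A} or \textbf{Subcase B} at the current step. Condition \eqref{SA} (the defining condition of \textbf{Subcase A}) only says that the rest of the spectrum of $H_{B_n^i}(\theta^*)$ stays at distance $>\delta_n$ from $E^*$; that is entirely compatible with $c_n^i$ belonging to \textbf{Class B}, where by \textbf{(H7)} the second eigenvalue $\mathcal{E}_n^i(\theta^*)$ lies somewhere in the range $|\mathcal{E}_n^i(\theta^*)-E^*|<50M_1\delta_{n-1}^{1/2}$, a window far larger than $\delta_n$. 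Also, Proposition~\ref{kn+1}, which you cite as justification, concerns $H_{B_{n+1}^i}(\theta)$ rather than $H_{B_n^i}(\theta)$, and its uniqueness statement is at scale $\delta_n$, not the scale $\delta_{n-1}/5$ relevant to Class~A membership via \textbf{(H1)}.

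The good news is that your argument survives with a minor fix: when $c_n^i$ is in \textbf{Class B}, the relevant transversality hypothesis is \textbf{(H9)} rather than \textbf{(H3)}, and \textbf{(H9)} has the \emph{weaker} premise $|\frac{d}{d\theta}E_n^i|\le 10\delta_{n-1}^{1/2}$, which is certainly implied by the bound $|\frac{d}{d\theta}E_n^i|\le 2\delta_n^2$ you derived. The paper's own proof is careful to invoke both \textbf{(H3)} and \textbf{(H9)} for exactly this reason. So the correct statement is: $c_n^i$ may lie in either Class, but \textbf{(H3)} or \textbf{(H9)} applies accordingly and yields the same conclusion, after which the transfer via Proposition~\ref{apn} with $s=2$ finishes the proof as you describe.
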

  
  \begin{proof}
  	Assume $|\frac{d}{d \theta} E_{n+1}^i(\theta)|\leq\delta_{n}^2$. By Proposition \ref{apn}, we have \begin{equation}\label{51}
  		|\frac{d}{d \theta} E_{n}^i(\theta)|\leq|\frac{d}{d \theta} E_{n+1}^i(\theta)|+O(\delta_{n}^7)\leq2\delta_{n}^2.
  	\end{equation} 
	So, applying the induction hypothesis \textbf{(H3)} and \textbf{(H9)} gives 
  	$$|\frac{d^2}{d \theta^2} E_n^i(\theta)|\geq3-\sum_{l=0}^{n-1}\delta_l^3$$  with a unique sign for these $\theta$. Using Proposition \ref{apn} with $s=2$ finishes  the proof.
  \end{proof}
 Moreover, we have 
  \begin{prop}\label{shuang}
  If $|\frac{d}{d \theta} E_{n+1}^i(\theta)|\leq\delta_{n}^2$ for some $|\theta-\theta^*|<\delta_n/(20M_1)$,  then we have 
  	$$|\frac{d}{d \theta} E_{n+1}^i(\theta)|\geq\|\theta+c_{n+1}^i\cdot \omega-\mu_{n+1}\|,$$
  	where $\mu_{n+1}:=\mu_n\  (=0 \text{ or }1/2)$ is given by  the induction hypothesis \textbf{(H4)} or \textbf{(H10)}.
  \end{prop}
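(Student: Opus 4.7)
The plan is to reduce to the induction hypothesis on $E_n^i$ via the $C^2$-closeness provided by Proposition \ref{apn}, use the exact symmetry of $E_{n+1}^i$ to produce a zero of the first derivative at the symmetric point $\theta_{n+1}^i$, and then obtain the linear lower bound by integrating a uniform second-derivative estimate over the short segment $[\theta_{n+1}^i,\theta]$.

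First I would invoke Proposition \ref{apn} with $s=1$ to get $|\frac{d}{d\theta}E_n^i(\theta)|\leq \delta_n^2+\delta_n^7\ll \delta_{n-1}^2$; since $|\theta-\theta^*|<\delta_n/(20M_1)<\delta_{n-1}/(20M_1)$, the induction hypothesis \textbf{(H4)} (if $c_n^i$ belongs to Class A) or \textbf{(H10)} (if Class B) forces the minimum to be realised by the norm term, whence $\|\theta+c_n^i\cdot\omega-\mu_n\|\leq 2\delta_n^2$. Setting $\mu_{n+1}:=\mu_n$ and $\theta_{n+1}^i:=-c_{n+1}^i\cdot\omega+\mu_{n+1}\pmod{1}$ (using $c_{n+1}^i=c_n^i$), the point $\theta_{n+1}^i$ lies within $2\delta_n^2$ of $\theta$, hence well inside $(\theta^*-\delta_n/(10M_1),\theta^*+\delta_n/(10M_1))$. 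The evenness and $1$-periodicity of $v$ combined with the symmetry of $B_{n+1}^i$ about $c_{n+1}^i$ yield the unitary equivalence $H_{B_{n+1}^i}(\xi)=RH_{B_{n+1}^i}(2\theta_{n+1}^i-\xi)R^{-1}$ via reflection about $c_{n+1}^i$ (the two choices $\mu_{n+1}\in\{0,1/2\}$ account for the two axes of symmetry of $v$ on $\T$). By the uniqueness of $E_{n+1}^i$ in Proposition \ref{kn+1}\textbf{(a)}, this gives $E_{n+1}^i(\xi)=E_{n+1}^i(2\theta_{n+1}^i-\xi)$, whence $\frac{d}{d\theta}E_{n+1}^i(\theta_{n+1}^i)=0$.

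Next, iterating Proposition \ref{apn} with $s=2$ down to $E_0^i=v$ shows $\sup|\frac{d^2}{d\theta^2}E_n^i|\leq M_1+1\leq 2M_1$, so on the segment $[\theta_{n+1}^i,\theta]$ of length at most $2\delta_n^2$ the first derivative satisfies $|\frac{d}{d\theta}E_n^i(\xi)|\leq |\frac{d}{d\theta}E_n^i(\theta)|+4M_1\delta_n^2\leq 5M_1\delta_n^2\ll \delta_{n-1}^2$ throughout. The induction hypothesis \textbf{(H3)} or \textbf{(H9)} therefore gives $|\frac{d^2}{d\theta^2}E_n^i(\xi)|\geq 2$ with a uniform sign on $[\theta_{n+1}^i,\theta]$, and Proposition \ref{apn} for $s=2$ upgrades this to $|\frac{d^2}{d\theta^2}E_{n+1}^i(\xi)|\geq 2-\delta_n^7\geq 1$ with the same sign. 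The fundamental theorem of calculus then yields
$$\Big|\frac{d}{d\theta}E_{n+1}^i(\theta)\Big|=\Bigl|\int_{\theta_{n+1}^i}^{\theta}\frac{d^2}{d\xi^2}E_{n+1}^i(\xi)\,d\xi\Bigr|\geq |\theta-\theta_{n+1}^i|=\|\theta+c_{n+1}^i\cdot\omega-\mu_{n+1}\|,$$
which is the desired estimate.

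The main obstacle is this uniformity step: a priori, the first derivative of $E_{n+1}^i$ could leave and re-enter the ``bad band'' $\{|\cdot|\leq \delta_n^2\}$ along $[\theta_{n+1}^i,\theta]$, jeopardising the uniform sign of the second derivative needed for the integration. The crucial quantitative input that resolves this is that the segment is extremely short (length $\leq 2\delta_n^2$) and the global a priori $C^2$ bound on $E_n^i$ is mild ($\leq 2M_1$, uniform in $n$), so the first derivative of $E_n^i$ cannot grow past $\delta_{n-1}^2$ on this segment, and hypothesis \textbf{(H3)}/\textbf{(H9)} remains in force throughout.
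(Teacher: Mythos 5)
Your argument is correct and is essentially the paper's proof unpacked: the paper simply cites Proposition \ref{ds} (which packages your chain ``small $E_n'$ on the segment $\Rightarrow$ \textbf{(H3)}/\textbf{(H9)} $\Rightarrow |\frac{d^2}{d\theta^2}E_{n+1}^i|\geq 2$ with a uniform sign'' into a ready-made statement at stage $n+1$) and then applies Lemma \ref{C2}, the Morse-type lemma, whose proof is precisely the symmetric-point-plus-integration argument you carry out by hand. The one step I would tighten is the claim that iterating Proposition \ref{apn} with $s=2$ ``down to $E_0^i=v$'' gives $\sup|\frac{d^2}{d\theta^2}E_n^i|\leq 2M_1$: that proposition holds only at Subcase A transitions, and if some earlier stage $m\leq n$ went through Case 2 or Subcase B the available bound is only $|\frac{d^2}{d\theta^2}E_m^i|=O(\delta_{m-2}^{-1})$ (see Lemma \ref{cham} and Proposition \ref{gujim}(b)/\ref{gujin}(b)). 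Fortunately this does not affect your argument, since any polynomial-in-$\delta_{n-1}^{-1}$ bound on $|E_n''|$ (which follows from the resolvent bounds in \textbf{(H1)}/\textbf{(H7)}) ensures the drift of $E_n'$ over the $O(\delta_n^2)$-length segment $[\theta_{n+1}^i,\theta]$ is $\ll\delta_{n-1}^2$, so \textbf{(H3)}/\textbf{(H9)} remain in force throughout and the integration goes through unchanged.
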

  \begin{proof}
 Assume  $|\frac{d}{d \theta} E_{n+1}^i(\theta)|\leq\delta_{n}^2$. Then \eqref{51} holds. So, we deduce from \textbf{(H4)} and \textbf{(H10)} that 
 $$\|\theta+c_{n}^i\cdot \omega-\mu_{n}\|\leq2\delta_n^2.$$
 Since $c_{n+1}^i=c_{n}^i$ and $\mu_{n+1}=\mu_{n}$, it follows that the symmetric point $\theta_{n}^i=\theta_{n+1}^i=-c_{n+1}^i\cdot \omega+\mu_{n+1}$ (mod $1$) belongs to the interval of $|\theta-\theta^*|\leq \delta_n/(20M_1)+2\delta_n^2<\delta_n/(10M_1)$. We can now apply Proposition \ref{ds} and Lemma \ref{C2} with $\theta_s=\theta_{n+1}^i,\delta=\delta_n^2$ to complete the proof.
  \end{proof}
  We then prove a preliminary upper bound concerning the {\bf Center Theorem}. 
\begin{lem}\label{qiang}
	For all $c_{n+1}^i,c_{n+1}^j\in Q_{n+1}$, we have 
	\begin{equation}\label{changa}
		m(c_{n+1}^i,c_{n+1}^j):=\min(\|(c_{n+1}^i-c_{n+1}^j)\omega\|,\|2\theta^*+(c_{n+1}^i+c_{n+1}^j)\omega\|)\leq\delta_n^3.
	\end{equation}  
Thus, $\theta^*+h$ belongs to the interval of  $|\theta-\theta^*|<\delta_n/(10M_1)$, where $h=(c_{n+1}^j-c_{n+1}^i)\cdot \omega$ or $-((c_{n+1}^i+c_{n+1}^j)\cdot \omega+2\theta^*)$ {\rm (mod $1$)} satisfying $|h|=m(c_{n+1}^i,c_{n+1}^j)$. 
\end{lem}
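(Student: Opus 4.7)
The plan is to bootstrap from the Center Theorem at stage $n$ (already built into the induction hypothesis) using the much sharper eigenvalue proximity that becomes available once we know both indices are singular at stage $n+1$. Since we are in \textbf{Case 1}, we have $c_{n+1}^i = c_n^i$ and $c_{n+1}^j = c_n^j$, both lying in $Q_n$; therefore the hypothesis \textbf{(H5)} (if $Q_n$ belongs to \textbf{Class A}) or \textbf{(H12)} (if $Q_n$ belongs to \textbf{Class B}) applies and yields
\[
m(c_{n+1}^i, c_{n+1}^j) \leq \sqrt{2}\, \min\bigl(|E_n^i(\theta^*) - E_n^j(\theta^*)|, \ldots \bigr)^{1/2},
\]
where the minimum is taken over the appropriate pair or the four pairings of the eigenvalue curves attached to $c_n^i$ and $c_n^j$. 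The task therefore reduces to exhibiting some eigenvalue of $H_{B_n^i}(\theta^*)$ lying within $O(\delta_n^7)$ of $E^*$, and symmetrically for $j$.

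To obtain this, I would use $c_{n+1}^i \in Q_{n+1}$, which furnishes an eigenvalue $\lambda^i \in \sigma(H_{B_{n+1}^i}(\theta^*))$ with $|\lambda^i - E^*| < \delta_{n+1}$. In \textbf{Subcase A} this is exactly $E_{n+1}^i(\theta^*)$, and Proposition \ref{apn} gives $|E_{n+1}^i(\theta^*) - E_n^i(\theta^*)| \leq \delta_n^7$; in \textbf{Subcase B} (\textbf{Class B} inherited), the same kind of stability argument, carried out with the two-eigenvalue perturbation formulas already used in \S 3.2, pairs $\lambda^i$ with either $E_n^i(\theta^*)$ or $\mathcal{E}_n^i(\theta^*)$ within the same accuracy. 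In either situation,
\[
\operatorname{dist}\bigl(\{E_n^i(\theta^*), \mathcal{E}_n^i(\theta^*)\},\, E^*\bigr) \leq \delta_{n+1} + \delta_n^7 \leq 2\delta_n^7,
\]
and analogously for $j$.

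Combining, pick the eigenvalue curves $\widetilde{E}_n^i, \widetilde{E}_n^j \in \{E_n^{\,\cdot}, \mathcal{E}_n^{\,\cdot}\}$ that lie within $2\delta_n^7$ of $E^*$. By the triangle inequality $|\widetilde{E}_n^i(\theta^*) - \widetilde{E}_n^j(\theta^*)| \leq 4\delta_n^7$, and since \textbf{(H5)}/\textbf{(H12)} bound $m(c_n^i, c_n^j)$ by $\sqrt{2}$ times the square root of the minimum eigenvalue gap over admissible pairings, we obtain
\[
m(c_{n+1}^i, c_{n+1}^j) \leq \sqrt{2}\, (4\delta_n^7)^{1/2} = 2\sqrt{2}\, \delta_n^{7/2} \leq \delta_n^3,
\]
which is \eqref{changa}. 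The second assertion is immediate: $|h| = m(c_{n+1}^i, c_{n+1}^j) \leq \delta_n^3 \ll \delta_n/(10 M_1)$, so $\theta^* + h$ lies well inside the stated interval.

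The only delicate step is the \textbf{Class B} matching, because a priori $\lambda^i$ might be close to either $E_n^i(\theta^*)$ or $\mathcal{E}_n^i(\theta^*)$; but since \textbf{(H12)} takes the minimum over all four cross-pairings, any matching works and the estimate closes uniformly. Note that although the fully detailed \textbf{Subcase B} treatment is deferred in the paper, only the $C^2$ stability of the (pair of) eigenvalue(s) near $E^*$ is needed here, which is the direct analogue of Proposition \ref{apn} and will be proved in the same manner when \textbf{Subcase B} is developed.
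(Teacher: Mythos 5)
Your overall strategy—reduce to the stage-$n$ Center Theorem via \textbf{(H5)}/\textbf{(H12)} and exhibit an eigenvalue of $H_{B_n^r}(\theta^*)$ within $O(\delta_n^7)$ of $E^*$—is the right idea, and the reduction via the induction hypothesis is exactly what the paper does. But the mechanism you invoke to produce that nearby eigenvalue diverges from the paper's, and the divergence creates a real gap in \textbf{Subcase B}.

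For \textbf{Subcase A} your argument is sound: Proposition~\ref{apn} gives $|E_{n+1}^i(\theta^*)-E_n^i(\theta^*)|\leq\delta_n^7$. For \textbf{Subcase B}, however, you appeal to "the same kind of stability argument, carried out with the two-eigenvalue perturbation formulas already used in \S 3.2 \dots which will be proved in the same manner when \textbf{Subcase B} is developed." No such analogue of Proposition~\ref{apn} exists in the paper, and it is not merely deferred: the eigenfunction expansion that \emph{is} established in \textbf{Subcase B} (Proposition~\ref{325}\textbf{(c)}) relates $\psi_{n+1}$ to the stage-$m$ eigenfunctions $\psi_m,\tilde\psi_m$ (where $m\leq n-1$ is the last level-crossing stage), not to stage-$n$ objects, so it does not directly yield $|E_{n+1}^i(\theta^*)-E_n^i(\theta^*)|\leq\delta_n^7$. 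Moreover, Proposition~\ref{325} itself appears \emph{after} Lemma~\ref{qiang} in the paper's logical order, so invoking even an analogue of it here would be premature. The claim that "only the $C^2$ stability of the (pair of) eigenvalue(s) near $E^*$ is needed" is plausible as intuition but is not a proof.

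The paper bypasses the case split entirely with a much more elementary argument: the exponential decay estimate~\eqref{dec} is valid for every $c_{n+1}^r\in Q_{n+1}$ (its derivation in Proposition~\ref{kn+1}\textbf{(b)} uses only that $B_{n+1}^r\setminus\hat B_n^r$ is $n$-good, not the Subcase~A assumption~\eqref{SA}). Restricting $H_{B_{n+1}^r}(\theta^*)\psi_{n+1}^r=E_{n+1}^r(\theta^*)\psi_{n+1}^r$ to $B_n^r$ makes $\psi_{n+1}^r\chi_{B_n^r}$ an approximate eigenfunction of $H_{B_n^r}(\theta^*)$ with boundary error $\|\Gamma_{B_n^r}\psi_{n+1}^r\|\leq\delta_n^{10}$, and Corollary~\ref{trialcor} then produces an exact eigenvalue of $H_{B_n^r}(\theta^*)$ within $2\delta_n^{10}$ of $E_{n+1}^r(\theta^*)$. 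This trial-function route needs no derivative estimates, no $\psi_{n+1}\approx\psi_n$ comparison, and no knowledge of whether one is in Subcase~A or~B—which is precisely what makes Lemma~\ref{qiang} provable at the point in the development where it is stated. To repair your proof you would need to replace the Subcase~B hand-wave with the trial-function argument (or prove the missing stage-$n$ eigenvalue stability estimate from scratch, which would essentially reproduce it).
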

\begin{proof}
	Since $c_{n+1}^i=c_{n}^i$, $c_{n+1}^j=c_{n}^j$  and from \textbf{(H5)}, \textbf{(H12)},  it suffices to show that there exist $E_n^i\in \sigma(H_{B_n^i}(\theta^*))$ and $ E_n^j\in \sigma(H_{B_n^j}(\theta^*))$ such that $|E_n^i-E_n^j|\leq \delta_n^6/\sqrt{2}$.
	Note that  \eqref{dec} holds for all $c_{n+1}^i\in Q_{n+1}$ (in the proof of this property,   the assumption \eqref{SA} is not necessary).   So, restricting  the equation $H_{B_{n+1}^r}(\theta^*)\psi_{n+1}^r=E_{n+1}^r(\theta^*)\psi_{n+1}^r$ to $B_n^r\ (r=i,j)$ implies 
\begin{align*}
		 \|\left(H_{B_n^r}(\theta^*)-E_{n+1}^r(\theta^*)\right)\psi_{n+1}^r\|=\|\Gamma_{B_{n}^r}\psi_{n+1}^r\|\leq\delta_n^{10},
\end{align*}
which shows  $|E_n^r-E_{n+1}^r(\theta^*)|\leq2\delta_n^{10}$ for some $ E_n^r\in \sigma (H_{ B_n^r}(\theta^*))$ by Corollary \ref{trialcor} and $\|\psi_{n+1}^r\chi_{B_n^r}\|\approx1$. Since $c_{n+1}^i,c_{n+1}^j\in Q_{n+1}$, we get
\begin{align*}
	|E_n^i-E_n^j|&\leq |E_n^i-E_{n+1}^i(\theta^*)|+|E_n^j-E_{n+1}^j(\theta^*)|+|E_{n+1}^i(\theta^*)-E_{n+1}^j(\theta^*)|\\
	&\leq2\delta_n^{10}+2\delta_{n+1}\leq \delta_n^6/\sqrt{2}.
\end{align*}
\end{proof}

We are in a position to prove the {\bf Center Theorem} of stage $n+1$ in {\bf Subcase A} of {\bf Case 1}. 

  \begin{thm}[] Assume $c_n^i$ satisfies \eqref{SA}. Then for  any $c_{n+1}^j\in Q_{n+1}$, we have 
 	$$m(c_{n+1}^i,c_{n+1}^j)\leq\sqrt{2}|E_{n+1}^i(\theta^*)-E_{n+1}^j(\theta^*)|^{1/2}\leq2\delta_{n+1}^{1/2}.$$
  \end{thm}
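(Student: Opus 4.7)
My plan is to mirror the proof of the $n=1$, Case 1 Center Theorem (Theorem \ref{1c}), promoting each ingredient to its stage-$n{+}1$ analogue already proved in this subsection. Specifically, I will combine the single-curve eigenvalue parameterization $E_{n+1}^i(\theta)$ from Proposition \ref{kn+1}(a), the transversality dichotomy of Propositions \ref{ds}--\ref{shuang}, the preliminary small-$m$ bound of Lemma \ref{qiang}, and the abstract curve estimate of Lemma \ref{C2}. The first step is to choose $h \in \{(c_{n+1}^j - c_{n+1}^i)\cdot\omega,\ -(c_{n+1}^i + c_{n+1}^j)\cdot\omega - 2\theta^*\}$ (mod $1$) realizing $|h| = m(c_{n+1}^i, c_{n+1}^j)$; by Lemma \ref{qiang}, $|h| \leq \delta_n^3$, so $\theta^* + h$ lies safely inside the domain $|\theta - \theta^*| < \delta_n/(10M_1)$ on which $E_{n+1}^i$ is defined. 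Using translation invariance and reflection symmetry of $B_{n+1}^i$ (properties (3)--(4) of the construction) together with the evenness of $v$, $H_{B_{n+1}^j}(\theta^*)$ is unitarily equivalent to $H_{B_{n+1}^i}(\theta^* + h)$, and by the uniqueness clause of Proposition \ref{kn+1}(a) this forces $E_{n+1}^j(\theta^*) = E_{n+1}^i(\theta^* + h)$.

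The core task is then to lower-bound $|E_{n+1}^i(\theta^* + h) - E_{n+1}^i(\theta^*)|$ by $\tfrac12 h^2$. I will split on whether $|\tfrac{d}{d\theta} E_{n+1}^i| \geq \delta_n^3$ holds throughout the segment from $\theta^*$ to $\theta^*+h$. If it does, the mean value theorem combined with $|h| \leq \delta_n^3$ gives $|E_{n+1}^i(\theta^*+h) - E_{n+1}^i(\theta^*)| \geq \delta_n^3 |h| \geq h^2$ directly. Otherwise there is some $\theta_0$ in the segment with $|\tfrac{d}{d\theta} E_{n+1}^i(\theta_0)| < \delta_n^3 < \delta_n^2$; then Proposition \ref{shuang} pins the symmetric point $\theta_s = -c_{n+1}^i\cdot\omega + \mu_{n+1}$ (mod $1$) within $2\delta_n^3$ of $\theta^*$, while Proposition \ref{ds} guarantees $|\tfrac{d^2}{d\theta^2} E_{n+1}^i| \geq 2$ with a fixed sign whenever the first derivative drops below $\delta_n^2$. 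These are exactly the hypotheses of Lemma \ref{C2} with $\theta_1 = \theta^*$, $\theta_2 = \theta^* + h$, $\delta = \delta_n^2$, yielding $|E_{n+1}^i(\theta^* + h) - E_{n+1}^i(\theta^*)| \geq \tfrac12 \min(h^2, |2\theta^* + h - 2\theta_s|^2)$. A direct calculation shows that $\|2\theta^* + h - 2\theta_s\|$ equals precisely the \emph{other} of the two candidates in the definition of $m(c_{n+1}^i, c_{n+1}^j)$, so by the minimality of $|h|$ it is at least $|h|$; the $\min$ therefore equals $h^2$.

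Combining yields $|h|^2 \leq 2|E_{n+1}^j(\theta^*) - E_{n+1}^i(\theta^*)|$, which is at most $4\delta_{n+1}$ since both $E_{n+1}^i(\theta^*)$ and $E_{n+1}^j(\theta^*)$ lie within $\delta_{n+1}$ of $E^*$ by the definition of $Q_{n+1}$; this gives both the intermediate bound $\sqrt{2}|E_{n+1}^i(\theta^*) - E_{n+1}^j(\theta^*)|^{1/2}$ and the final bound $2\delta_{n+1}^{1/2}$. I do not anticipate a structural obstacle---the entire argument is an upgrade of the $n=1$ Case 1 proof---but the part most worth double-checking is the second subcase: matching the thresholds $\delta_n^3$ (from Lemma \ref{qiang}), $\delta_n^2$ (from Propositions \ref{ds}--\ref{shuang}), and the hypotheses of Lemma \ref{C2}, and verifying that in the sum-type choice of $h$ the evenness of $v$ together with the reflection symmetry of $B_{n+1}^i$ produces the required unitary equivalence. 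Both are routine given the induction setup.
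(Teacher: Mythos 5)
Your proposal follows the paper's proof essentially step for step: Lemma \ref{qiang} for the preliminary bound, the uniqueness clause of Proposition \ref{kn+1}(a) to identify $E_{n+1}^j(\theta^*)=E_{n+1}^i(\theta^*+h)$, the dichotomy on whether $|\frac{d}{d\theta}E_{n+1}^i|\geq\delta_n^3$ throughout, and, in the borderline case, Propositions \ref{shuang} and \ref{ds} feeding Lemma \ref{C2} with $\delta=\delta_n^2$. The only addition is your explicit verification that $\|2\theta^*+h-2\theta_{n+1}^i\|$ equals the other candidate in the definition of $m$ (so the $\min$ collapses to $h^2$); the paper asserts this without comment, but your justification is correct and worth spelling out.
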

\begin{proof}
By \eqref{changa}, $\theta^*+h$ belongs to the interval of  $|\theta-\theta^*|<\delta_n/(10M_1)$ on which $E_{n+1}^i$ is defined.  By {\bf (a)} of Proposition \ref{kn+1}, there is a unique eigenvalue of $H_{B_{n+1}^i}(\theta^*+h)$ with $|E-E^*|<\delta_{n}/5$. Since $H_{B_{n+1}^i}(\theta^*+h)=H_{B_{n+1}^j}(\theta^*)$ and $c_n^j\in Q_n$, we  must have  $E_{n+1}^j(\theta^*)=E_{n+1}^i(\theta^*+h)$.  If $|\frac{d}{d\theta}E_{n+1}^i|\geq\delta_n^3$ for all $|\theta-\theta^*|<|h|$, we get  
$$|E_{n+1}^i(\theta^*+h)-E_{n+1}^i(\theta^*)|\geq\delta_n^3 |h|\geq h^2.$$
Otherwise,  $|\frac{d}{d\theta}E_{n+1}^i|<\delta_n^3$ for some  $|\theta-\theta^*|<|h|$. By 
Proposition \ref{shuang}, we have $\delta_n^3>|\frac{d}{d \theta} E_{n+1}^i(\theta)|\geq\|\theta+c_{n+1}^i\cdot \omega-\mu_{n+1}\|$. Thus,  the symmetry point $\theta_{n+1}^i=-c_{n+1}^i\cdot \omega+\mu_{n+1}$ (mod $1$) belongs to the interval of $|\theta-\theta^*|<\delta_n^2$. Recalling Proposition \ref{ds}, $E_{n+1}^i$ satisfies the condition of Lemma \ref{C2} with $\theta_s=\theta_{n+1}^i,\theta_2=\theta^*+h,\theta_1=\theta^*,\delta=\delta_n^2,|h|\leq\delta$. Thus we have  
$$|E_{n+1}^i(\theta^*+h)-E_{n+1}^i(\theta^*)|\geq \frac{1}{2}\min(h^2,|2\theta^*+h-2\theta_{n+1}^i|^2)=\frac{1}{2}h^2.$$
\end{proof}

\fbox{\textbf{Subcase B}}.  The negation of \eqref{SA}, i.e.,  $c_{n+1}^i=c_n^i \in Q_{n+1}$ satisfies  \begin{equation}\label{SB}
  \operatorname{dist}(\sigma(H_{B_n^i}(\theta^*))- E_n^i(\theta^*),E^*)\leq\delta_{n}.
 \end{equation}
\begin{rem}
	In the one dimension case,  \textbf{Subcase B} is excluded by splitting lemma of \cite{FSW90}. However, this  lemma restricts to the  one  dimension case.  So,  we must  deal with this subcase in higher dimensions. 
\end{rem}

We will show  how to get back to  {\bf Class B} of the  induction hypothesis from {\bf Subcase B}.

First, we notice that \eqref{SB} can not  be in the case in \textbf{(H1)} of {\bf Class A}. Thus, such $c_n^i$ belongs to {\bf Class B} and \textbf{(H6)}--\textbf{(H12)} hold true. Second,  as we have seen, \textbf{Case 1} along with \textbf{Subcase A} at stage $n$ implies {\bf Class A}, and hence \textbf{Subcase A} at stage $n+1$. Thus, if \eqref{SB} holds, then there must be some largest $m\leq n-1$ such that $s_m\leq10l_m^2$. So,  we have $c_m^i\in Q_m$ and its mirror image   $\tilde{c}_m^i$ together with   two blocks $B_m^i,\tilde{B}_m^i$ such that 
$$B_m^i,\tilde{B}_m^i\subset B_{m+1}^i\subset \cdots \subset B_n^i\subset B_{n+1}^i.$$
Note that \begin{equation}\label{mton}
	c_{n+1}^i=c_n^i=\cdots =c_{m+1}^i=(c_m^i+\tilde{c}_m^i)/2.
\end{equation}
Since  \eqref{SB}, there is another eigenvalue $\mathcal{E}_n^i(\theta^*)$ of $H_{B_n^i}(\theta^*)$ in the interval of $|E-E^*|\leq\delta_{n}$. Hence by  \textbf{(H11)}, we have $$\delta_{n-1}^2|\theta_n^i-\theta^*|\leq|E_n^i(\theta^*)-\mathcal{E}_n^i(\theta^*)|\leq2\delta_{n},$$
where $\theta_n^i=-c_n^i\cdot \omega+\mu_n$ (mod 1). Thus, the symmetric point $\theta_{n+1}^i:=\theta_n^i$ satisfies  \begin{equation}\label{symn}
	|\theta_{n+1}^i-\theta^*|\leq2\delta_n/\delta_{n-1}^2<\delta_n^{1/2}.
\end{equation} 
Recalling \eqref{mton}, we obtain
$$\min(\|c_{m+1}^i\cdot \omega+\theta^*\|,\|c_{m+1}^i\cdot \omega+\theta^*-\frac{1}{2}\|)< \delta_n^{1/2}$$ 
and hence, 
\begin{equation}\label{dif}
	\|(c_m^i+\tilde{c}_m^i)\cdot \omega+2\theta^*\|<2\delta_n^{1/2}.
\end{equation} 
Based on the Diophantine condition, we have
\begin{align}\label{1209}
\begin{split}
		\|2c_m^i\cdot \omega +2\theta^*\|&\geq\|2(c_m^i-\tilde{c}_m^i)\cdot \omega\|-\|(c_m^i+\tilde{c}_m^i)\cdot \omega+2\theta^*\|\\ &\geq\frac{\gamma}{(20l_m^2)^\tau}-2\delta_{n}^{1/2}\\
	&>\delta_{m-1}^{1/3}.
\end{split}
\end{align}
The above inequality excludes the possibility of \textbf{(H6)} in {\bf Class B} at stage $m$. Thus,  we deduce that $c_m^i$ belongs to {\bf Class A} and \textbf{(H1)}--\textbf{(H5)} hold for $c_m^i$. We  let  $E_m^i(\theta)$ be the  unique eigenvalue of $H_{B_m^i}(\theta)$ in the interval of  $|E_{m}^i(\theta)-E^*|<\delta_{m-1}/9$ and  let $\psi_m$ be  its eigenfunction. From \eqref{dif}, we obtain for $|\theta-\theta^*|=O(\delta_n^{1/2})$,  \begin{equation}\label{chabie}
	H_{\tilde{B}_m^i}(\theta)=H_{B_m^i}(-\theta-(c_m^i+\tilde{c}_m^i)\cdot \omega)=H_{B_m^i}(\theta+O(\delta_n^{1/2})).
\end{equation}
Since $\delta_{n}^{1/2}\ll\delta_{m-1}$, by \textbf{(H1)} and \textbf{(H2)}, there is also a unique eigenvalue $\tilde{E}_m^i$ of $H_{\tilde{B}_m^i}(\theta)$ satisfying $|\tilde{E}_{m}^i(\theta)-E^*|<\delta_{m-1}/9$ so that its  eigenfunction $\tilde{\psi}_m^i$  decays exponentially fast away from $\tilde{c}_m^i$.

\begin{prop}\label{325} Assume \eqref{SB} holds true.  Then for $|\theta-\theta^*|<10\delta_{n}^{1/2}$, 
	\begin{itemize}
		\item [\textbf{(a)}] $H_{B_{n+1}^i}(\theta)$ has  exactly  two eigenvalues $E_{n+1}^i(\theta)$ and $\mathcal{E}_{n+1}^i(\theta)$ in the interval of $|E-E^*|<50M_1\delta_{n}^{1/2}$. Moreover, any other $\hat{E}\in\sigma(H_{B_{n+1}^i}(\theta)) $ must obey $|\hat{E}-E^*|\geq\delta_{n-1}/6$.
		\item [\textbf{(b)}] The corresponding eigenfunction of $E_{n+1}^i$ {\rm (resp. $\mathcal{E}_{n+1}^i$)}, $\psi_{n+1}$ {\rm(resp. $\Psi_{n+1}$)} decays exponentially fast away from $c_m^i$ and $\tilde{c}_m^i$, 
		\begin{align}\label{1421}
			\begin{split}
	|\psi_{n+1}(x)|\leq e^{-(\gamma_0/4)\|x-c_m^i\|_1}+e^{-(\gamma_0/4)\|x-\tilde{c}_m^i\|_1},\\
	|\Psi_{n+1}(x)|\leq e^{-(\gamma_0/4)\|x-c_m^i\|_1}+e^{-(\gamma_0/4)\|x-\tilde{c}_m^i\|_1},		
			\end{split}
		\end{align}
for $\operatorname{dist}(x,\{c_m^i,\tilde{c}_m^i\})\geq l_m^{6/7}.$	
	\item [\textbf{(c)}] The two eigenfunctions can be expressed as 
		\begin{align}\label{1224}
			\begin{split}
				\psi_{n+1}=A\psi_m+B\tilde{\psi}_m+O(\delta_m^{10}),\\
				\Psi_{n+1}=B\psi_m-A\tilde{\psi}_m+O(\delta_m^{10}),
			\end{split}	
		\end{align}
		where $A^2+B^2=1$.
		\item [\textbf{(d)}] $\|G_{B_{n+1}^i}^{\perp\perp}(E_{n+1}^i)\|\leq10\delta_{n-1}^{-1}$, where $G_{B_{n+1}^i}^{\perp\perp}$ denotes the Green's function for $B_{n+1}^i$ on the orthogonal complement of the space spanned by $\psi_{n+1}$ and $\Psi_{n+1}$. 
	\end{itemize}
\end{prop}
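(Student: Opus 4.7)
My plan is to mirror the proof of Proposition~\ref{55} (its stage-$1$ analog), with the multi-scale structure handled via the standing induction hypothesis; the four assertions will be established in the order existence-part-of-(a), then (b), then (c), and finally uniqueness-part-of-(a) together with (d), each feeding into the next.

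For the existence part of (a), I would take $\psi_m$ and $\tilde\psi_m$ (extended by zero to $B_{n+1}^i$) as two trial wavefunctions. Since $c_m^i\in Q_m$ and Class~A applies at stage $m$, the induction gives $|E_m^i(\theta^*)-E^*|<\delta_m$, while \eqref{chabie} combined with the uniqueness part of \textbf{(H1)} at stage $m$ yields $|\tilde E_m^i(\theta^*)-E^*|=O(M_1\delta_n^{1/2})$. A direct computation then gives
\[
\|(H_{B_{n+1}^i}(\theta)-E_m^i(\theta^*))\psi_m\|\le M_1|\theta-\theta^*|+\|\Gamma_{B_m^i}\psi_m\|=O(\delta_n^{1/2}),
\]
and analogously for $\tilde\psi_m$. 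These two trial functions are nearly orthogonal thanks to their disjoint exponential concentration around $c_m^i$ and $\tilde c_m^i$ (whose separation is controlled by \eqref{1209}). Corollary~\ref{trialcor} then produces two orthogonal eigenvalues of $H_{B_{n+1}^i}(\theta)$ within $O(M_1\delta_n^{1/2})$ of $E^*$, and continuity in $\theta$ extends the existence to all of $|\theta-\theta^*|<10\delta_n^{1/2}$.

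For (b), I would exploit the fact that in Subcase~B the blocks $B_k^i$ for $m+1\le k\le n+1$ are concentric around $c_{m+1}^i$, so inside $B_{n+1}^i$ the only resonant cores one must puncture out are $B_m^i$ and $\tilde B_m^i$. Concretely, for each intermediate stage $k\in\{m+1,\dots,n\}$ I would choose an $O(l_k^{2/3})$-size sub-block $\hat{B}_k^i$ centered at $c_{m+1}^i$ so that each annular piece $B_{k+1}^i\setminus\hat{B}_k^i$ is $k$-good and the stage-$k$ Green's function bound from the induction applies. Restricting $H_{B_{n+1}^i}(\theta)\psi_{n+1}=E_{n+1}^i(\theta)\psi_{n+1}$ to the complement $\Lambda=B_{n+1}^i\setminus(\hat{B}_m^i\cup\tilde{\hat{B}}_m^i)$ and iterating the resolvent/Poisson identity across these annuli then yields the $(\gamma_0/4)$-exponential decay of $\psi_{n+1}$ (and analogously $\Psi_{n+1}$) away from $\{c_m^i,\tilde c_m^i\}$. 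The representation \eqref{1224} in (c) is a direct consequence: (b) forces $\psi_{n+1}$ and $\Psi_{n+1}$ to be essentially supported on $B_m^i\cup\tilde B_m^i$ (up to an $O(\delta_m^{10})$ tail), where by Class~A at stage $m$ the only normalized eigenfunctions near $E^*$ are $\psi_m$ and $\tilde\psi_m$, and orthonormality pins down coefficients with $A^2+B^2=1$.

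Finally, the uniqueness statement in (a) and the bound (d) both follow from the same decay analysis: any hypothetical third eigenvalue $\hat E$ of $H_{B_{n+1}^i}(\theta)$ with $|\hat E-E^*|<\delta_{n-1}/6$ would, by a Poisson identity on $\Lambda$ with an analogous Green's function bound inherited from the induction (in particular \textbf{(H7)} at stage $n$), admit an eigenfunction almost supported on $\{c_m^i,\tilde c_m^i\}$, contradicting its orthogonality to $\psi_{n+1}$ and $\Psi_{n+1}$; the resulting spectral gap immediately yields $\|G_{B_{n+1}^i}^{\perp\perp}(E_{n+1}^i)\|\le 10\delta_{n-1}^{-1}$. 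The main obstacle I anticipate is calibrating the multi-scale resolvent iteration on $\Lambda$: since the nested blocks $B_{m+1}^i,\dots,B_n^i$ are all singular but concentric, the successive annular pieces are governed by different Green's function scales, and the intermediate cores $\hat{B}_k^i$ have to be chosen carefully so that the accumulated losses from the resolvent identities stay below the $\delta_{n-1}^{-1}$ tolerance; the remaining steps are largely bookkeeping once this estimate is in hand.
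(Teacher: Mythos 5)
Your choice of trial wavefunctions for the existence part of \textbf{(a)} does not yield the required bound. You propose to use $\psi_m$ and $\tilde\psi_m$ (extended by zero) and claim this puts two eigenvalues of $H_{B_{n+1}^i}(\theta)$ within $O(M_1\delta_n^{1/2})$ of $E^*$. But Corollary~\ref{trialcor} only locates eigenvalues near $E_m^i(\theta^*)$ and $\tilde E_m^i(\theta^*)$, and the induction only provides $|E_m^i(\theta^*)-E^*|<\delta_m$. Since $m\le n-1$, we have $\delta_m\ge\delta_{n-1}\gg 50M_1\delta_n^{1/2}$, so this argument only places two eigenvalues in the much larger window $|E-E^*|=O(\delta_m)$, not in $|E-E^*|<50M_1\delta_n^{1/2}$. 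Your intermediate claim $|\tilde E_m^i(\theta^*)-E^*|=O(M_1\delta_n^{1/2})$ does not follow from \eqref{chabie} and \textbf{(H1)}: these give $|\tilde E_m^i(\theta^*)-E_m^i(\theta^*)|=O(M_1\delta_n^{1/2})$, which together with $|E_m^i(\theta^*)-E^*|<\delta_m$ only yields $|\tilde E_m^i(\theta^*)-E^*|=O(\delta_m)$. There is also no reason for $E_m^i(\theta^*)$ itself to be $\delta_n^{1/2}$-close to $E^*$; the spectral information at scale $\delta_n$ is that $A_n^2E_m^i+B_n^2\tilde E_m^i\approx E^*$ and $B_n^2E_m^i+A_n^2\tilde E_m^i\approx E^*$, which is consistent with $E_m^i$ and $\tilde E_m^i$ being $\sim\delta_m$ away from $E^*$ on opposite sides.

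The paper sidesteps this by taking $\psi_n$ and $\Psi_n$ — the stage-$n$ Class-B eigenfunctions of $H_{B_n^i}(\theta^*)$ — as trial wavefunctions. Because $c_n^i\in Q_n$ gives $|E_n^i(\theta^*)-E^*|<\delta_n$ and the Subcase-B hypothesis \eqref{SB} gives $|\mathcal{E}_n^i(\theta^*)-E^*|\le\delta_n$, one gets $\|(H_{B_{n+1}^i}(\theta^*)-E^*)\psi_n\|\le 2\delta_n$ and likewise for $\Psi_n$, hence two eigenvalues within $2\sqrt{2}\delta_n$ of $E^*$, and after the $\theta$-perturbation the claimed interval $|E-E^*|<50M_1\delta_n^{1/2}$. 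Your remaining steps — the nested-annuli Poisson iteration for \textbf{(b)}, the coefficient pinning for \textbf{(c)}, and the orthogonality contradiction for uniqueness and \textbf{(d)} — track the paper's route in spirit (note that in \textbf{(b)} you should not just puncture $\hat B_m^i\cup\hat{\tilde B}_m^i$ once but iterate over the concentric annuli $B_{m+2}^i\setminus(B_{m-1}^i\cup\tilde B_{m-1}^i)$, $B_{r+2}^i\setminus B_r^i$, $B_{n+1}^i\setminus B_{n-1}^i$ at the appropriate regularity stages, since a single small kernel does not remove the stage-$k$ singular blocks for $m\le k\le n$), but the trial-function step must be replaced before the rest can be made precise.
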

\begin{proof}
By the exponential decay of $\psi_{n}$ and $\Psi_n$, we have
$$\|(H_{B_{n+1}^i}(\theta^*)-E^*)\psi_n\|\leq|E_n^i-E^*|+\|\Gamma_{B_n^i}\psi_{n}\|\leq2\delta_n,$$ 
$$\|(H_{B_{n+1}^i}(\theta^*)-E^*)\Psi_n\|\leq|\mathcal{E}_n^i-E^*|+\|\Gamma_{B_n^i}\Psi_{n}\|\leq2\delta_n.$$
 The two orthogonal trial wave functions  give two eigenvalues of  $H_{B_{n+1}^i}(\theta^*)$ in $|E-E^*|<2\sqrt2\delta_n$ by Corollary \ref{trialcor}. Using $|V'|\leq M_1$,  we deduce that $H_{B_{n+1}^i}(\theta)$ has  at least two eigenvalues  in $|E-E^*|<50M_1\delta_{n}^{1/2}$, which proves the existence part of \textbf{(a)}. The proof of \textbf{(b)} is an application of Green' function estimates  by restricting the equation $H_{B_{n+1}}(\theta)\psi_{n+1}=E_{n+1}(\theta)\psi_{n+1}$ to some good annuals $A$. Thus, the value of $\psi_{n+1}$ inside $A$ can be given by the Green's function $G_A(E_{n+1})$ and the values of $\psi_{n+1}$ on $\partial A$:
$$ \psi_{n+1}(x)=\sum_{z,z'}G_A(x,z)\Gamma_A\psi_{n+1}(z').$$
We use the fact that $B_{m+2}^i\setminus (B_{m-1}^i\cup \tilde{B}_{m-1}^i)$ is $(m-1)$-good to estimate the value at $x$  satisfying $\operatorname{dist}(x,\{c_m^i,\tilde{c}_m^i\})\geq l_m^{6/7}$ and 	 $\|x-c_{n+1}^i\|_1\leq l_{m+2}/2$, the fact that $B_{r+2}^i\setminus B_{r}^i$ is $(r+1)$-good to estimate the value at $x$ satisfying $l_{r+1}^{6/7}\leq\|x-c_{n+1}^i\|_1\leq l_{r+2}/2$ for $m+1\leq r\leq  n-2$,  and the fact that  $B_{n+1}^i\setminus B_{n-1}^i$ is $(n-1)$-good to estimate the value at  $x$ satisfying $l_{n}^{6/7} \leq\|x-c_{n+1}^i\|_1$. We should emphasize the first fact is because a third $(m-1)$-singular block inside  $B_{m+2}^i$ will be  excluded by the {\bf Center Theorem} of  stage $m-1$ and the Diophantine condition,  and the last fact is because \eqref{dif} implies for $  c_{n-1}^i\neq x\in B_{n+1}^i$,
\begin{align*}\label{aha}
	\|(c_{n-1}^i+x)\cdot \omega+2\theta^*\|&\geq\|(c_{n-1}^i-x)\cdot \omega\|-\|2c_{n-1}^i\cdot \omega+2\theta^*\|\\
	&=\|(c_{n-1}^i-x)\cdot \omega\|-\|(c_m^i+\tilde{c}_m^i)\cdot \omega+2\theta^*\|\\
	&\geq \frac{\gamma}{(2l_{n+1})^\tau}-\delta_n^{1/2}\\
	&>\delta_{n-1}^{1/3},
\end{align*}
which excludes a second $(n-1)$-singular block  inside  $B_{n+1}^i$ by {\bf Center Theorem} of stage $n-1$ and the Diophantine condition. Notice that all the  annuals are good sets of   stage no more than $n-1$. Thus, the  Green's function estimates  hold  for 
\begin{equation*}
	|\theta-\theta^*|<10\delta_{n}^\frac{1}{2}<\delta_{n-1}/(10M_1),\  |E_{n+1}^i-E^*|<50M_1\delta_{n}^{1/2}<\delta_{n-1}/5.
\end{equation*}
Thus, we finish the proof of \textbf{(b)}. Now we establish \textbf{(c)}. It suffices to show $\psi_{n+1}$ and $\Psi_{n+1}$ are close to  a linear combination of $\psi_m$ and $\tilde{\psi}_m$ inside $B_m^i\cup \tilde{B}_m^i$.
We restrict the equation $H_{B_{n+1}^i}(\theta)\psi_{n+1}=E_{n+1}^i(\theta)\psi_{n+1}$ to $B_m^i$ to get 
$$\left(H_{B_{m}^i}-E_{n+1}^i\right)\psi_{n+1}=\Gamma_{B_{m}^i}\psi_{n+1}.$$
	Combining \eqref{1421} and the above equation, we get  $$\|P_m^\perp\psi_{n+1}\|=\|G_{B_m^i}^\perp(E_{n+1})P_m^\perp\Gamma_{B_{m}^i}\psi_{n+1}\|= O(\delta_{m-1}^{-1}e^{-\frac{1}{4}\gamma_0l_m})\leq\frac{1}{2}\delta_m^{10},$$
where $P_m^\perp$ is the projection onto the orthogonal complement of $\psi_m$ and $G_{B_m^i}^\perp(E_{n+1}^i)$ is the Green's function of  $B_m^i$ on $\operatorname{Range}P_m^\perp$ with the upper bound
 \begin{align}\label{839}
	\|G_{B_m^i}^\perp(E_{n+1}^i)\|&\leq\operatorname{dist}(\sigma(H_{B_m^i}(\theta))- E_m^i(\theta),E_{n+1}^i)^{-1}\\
	\nonumber &\leq(\frac{\delta_{m-1}}{5}-\frac{\delta_{n-1}}{6})^{-1}\leq\frac{30}{\delta_{m-1}}
\end{align}by \textbf{(H1)} of stage $m$. Therefore, inside $B_m^i$, we have 
$$P_m^\perp\psi_{n+1}=O(\delta_m^{10})$$and hence, 
$$\psi_{n+1}\chi_{B_m^i}=a\psi_m+O(\delta_m^{10}),$$
where $a=\langle \psi_{n+1},\psi_m\rangle$.
By the approximation \eqref{chabie}, we get a similar estimate in $\tilde{B}_m^i$
$$\psi_{n+1}\chi_{\tilde{B}_m^i}=b\tilde{\psi}_m+O(\delta_m^{10})$$
with $b=\langle \psi_{n+1},\tilde{\psi}_m\rangle$. By \eqref{1421}, we have $\|\psi_{n+1}\chi_{\tilde{B}_{n+1}^i\setminus (B_m^i\cup \tilde{B}_m^i)}\|\leq \delta_m^{10}$, and thus
$$\psi_{n+1}=a\psi_m+b\tilde{\psi}_m+O(\delta_m^{10}).$$
Taking the norm gives $k:=a^2+b^2=1-O(\delta_m^{10})$. We set $A=a/k$ and $B=b/k$. Hence, $A^2+B^2=1$ and $|A-a|,|B-b|= O(\delta_m^{10})$, which gives the desired expression of $\psi_{n+1}$.  Similar arguments give $\Psi_{n+1}=C\psi_m+D\tilde{\psi}_m+O(\delta_m^{10})$ with $C^2+D^2=1$. For convenience, we write $A=\cos\alpha, B=\sin\alpha, C=\sin\beta, D=-\cos\beta$. Using $\langle\psi_{n+1} ,\Psi_{n+1}\rangle=0$, we get  $|\sin(\beta-\alpha)|=O(\delta_m^{10})$. We can choose $\beta$ satisfying $|\beta-\alpha|\leq O(\delta_m^{10})$. Thus, $|B-C|=|\sin\alpha-\sin\beta|=O(\delta_m^{10})$ and $|A+D|=|\cos\alpha-\cos\beta|=O(\delta_m^{10})$, giving the desired expression $\Psi_{n+1}=B\psi_m-A\tilde{\psi}_m+O(\delta_m^{10})$. Now assume that $\hat{E}\in\sigma(H_{B_{n+1}^i}(\theta))$ is a third eigenvalue in the interval of $|\hat{E}-E^*|<\delta_{n-1}/6$. The Green's function estimates  and \eqref{839} still hold if we replace $E_{n+1}$ by $\hat{E}$. Thus, by a similar argument, the eigenfunction of $\hat{E}$ can be expressed as 
$$\hat{\psi}=\hat{A}\psi_m+\hat{B}\tilde{\psi}_m+O(\delta_m^{10})$$ 
with $\hat{A}^2+\hat{B}^2=1.$ By orthogonality, we have $A\hat{A}+B\hat{B}=O(\delta_m^{10})$ and $B\hat{A}-A\hat{B}=O(\delta_m^{10})$. This is impossible since  $(A\hat{A}+B\hat{B})^2+(B\hat{A}-A\hat{B})^2=1$.
 Hence a third eigenvalue $\hat{E} $ must obey $|\hat{E}-E^*|\geq\delta_{n-1}/6$. Finally,  \textbf{(d)} follows from \textbf{(a)} immediately.
\end{proof}

We need the upper bound on derivatives of eigenvalues parameterizations of stage $m.$

\begin{lem}\label{cham}
	For $|\theta-\theta^*|<10\delta_n^{1/2}$, we have 
	\begin{equation}\label{m+}
		|\frac{d}{d\theta}\left(E_m^i+\tilde{E}_m^i\right)(\theta)|\leq \delta_n^{1/3}.
	\end{equation}
\end{lem}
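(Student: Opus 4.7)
The plan is to mimic the approach of Lemma \ref{cha}, exploiting the mirror symmetry between $B_m^i$ and $\tilde{B}_m^i$ to locate a symmetric point $\theta_s$ at which the first derivative of $E_m^i+\tilde{E}_m^i$ vanishes, then estimate the derivative elsewhere via the mean value theorem. Using \eqref{chabie} together with the uniqueness of the eigenvalue near $E^*$ guaranteed by \textbf{(H1)} at stage $m$, I would first identify $\tilde{E}_m^i(\theta)=E_m^i(-\theta-(c_m^i+\tilde{c}_m^i)\cdot\omega)$ on the interval $|\theta-\theta^*|<10\delta_n^{1/2}$; this is legitimate because the shifted argument still lies well inside $\{|\theta-\theta^*|<\delta_{m-1}/(10M_1)\}$ thanks to the enormous scale gap between $\delta_n$ and $\delta_{m-1}$. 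Setting $f(\theta):=E_m^i(\theta)+\tilde{E}_m^i(\theta)$, this identification makes $f$ symmetric about $\theta_s:=-(c_m^i+\tilde{c}_m^i)\cdot\omega/2$ (modulo $1/2$, chosen near $\theta^*$), so $f'(\theta_s)=0$; the bound \eqref{dif} then places $\theta_s$ within $\delta_n^{1/2}$ of $\theta^*$, giving $|\theta-\theta_s|<11\delta_n^{1/2}$ throughout.

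Applying the mean value theorem reduces the task to an upper bound on $|f''|$. For this I would use the second-order perturbation formula from Appendix \ref{AB},
\begin{equation*}
\tfrac{d^2}{d\theta^2}E_m^i=\langle\psi_m,V''\psi_m\rangle-2\langle V'\psi_m,G_{B_m^i}^{\perp}(E_m^i)\,V'\psi_m\rangle,
\end{equation*}
estimate the first term trivially by $M_1$, and estimate the second via $\|V'\psi_m\|^2\cdot\|G_{B_m^i}^{\perp}(E_m^i)\|\leq 20M_1^2/\delta_{m-1}$, where the resolvent norm bound follows directly from \textbf{(H1)} at stage $m$: any eigenvalue of $H_{B_m^i}(\theta)$ other than $E_m^i(\theta)$ is at distance at least $\delta_{m-1}/5-\delta_{m-1}/9\geq\delta_{m-1}/20$ from $E_m^i(\theta)$. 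The same bound applies to $\tilde{E}_m^i$ by the mirror relation, yielding $|f''|\lesssim M_1^2/\delta_{m-1}$ throughout the relevant interval.

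Combining the two steps gives $|f'(\theta)|\lesssim M_1^2\,\delta_n^{1/2}/\delta_{m-1}$, and the target bound $\delta_n^{1/3}$ follows from the super-exponential separation built into the scales $l_k$: since $l_{k+1}\in\{l_k^2,l_k^4\}$, the exponents $a_k:=l_k^{2/3}$ at least square at each step, forcing $\delta_n^{1/2}/\delta_{m-1}=\exp(-a_n/2+a_{m-1})\ll\exp(-a_n/3)=\delta_n^{1/3}$. The main technical delicacy I anticipate is not conceptual but rather the careful bookkeeping of scales: one must verify at each point that the $\delta_n^{1/2}$-sized interval we work on actually sits well inside the $\delta_{m-1}$-sized interval on which \textbf{(H1)} supplies both uniqueness and the required spectral gap, and that the target $\delta_n^{1/3}$ can indeed absorb the factor $1/\delta_{m-1}$ appearing in the resolvent bound. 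Both comparisons reduce to inequalities of the form $\delta_n\ll\delta_{m-1}^{C}$ for a modest constant $C$, which is built into the construction since $m\leq n-1$.
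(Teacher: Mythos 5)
Your proof is correct and follows essentially the same strategy as the paper's: use the mirror relation $\tilde{E}_m^i(\theta)=E_m^i(-\theta+2\theta_n^i)$ from \eqref{chabie}, note that the derivative of the sum is a difference that vanishes at the symmetric point, apply the mean value theorem, and bound the second derivative via the perturbation formula of Theorem \ref{daoshu} together with the spectral gap from \textbf{(H1)} at stage $m$, yielding $O(\delta_{m-1}^{-1}\delta_n^{1/2})\leq\delta_n^{1/3}$. The only cosmetic difference is that you bound $f''$ for $f=E_m^i+\tilde{E}_m^i$ directly (requiring the bound for both summands), whereas the paper writes $\frac{d}{d\theta}(E_m^i+\tilde{E}_m^i)(\theta)=\frac{d}{d\theta}E_m^i(\theta)-\frac{d}{d\theta}E_m^i(-\theta+2\theta_n^i)$ and applies the mean value theorem to that single difference, so only the bound on $|E_m^i{}''|$ is needed.
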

\begin{proof}
	From \eqref{chabie}, we obtain  $\tilde{E}_m^i(\theta)=E_m^i(-\theta+2\theta_{n}^i)$.
	Thus,
	\begin{align*}
		|\frac{d}{d\theta}\left(E_m^i+\tilde{E}_m^i\right)(\theta)|&=|\frac{d}{d\theta}E_m^i(\theta)-\frac{d}{d\theta}E_m^i(-\theta+2\theta_{n}^i)|\\
		&= |\frac{d^2}{d\theta^2}E_m^i(\xi)| \cdot |2\theta-2\theta_{n}^i|\\
		&\leq O(\delta_{m-1}^{-1}\delta_n^{1/2})
		\\
		&\leq \delta_n^{1/3},
	\end{align*}
where on the third line we used the estimate 
	\begin{align*}
		|\frac{d^2}{d \theta^2} E_{m}^i(\theta)|&=|\left\langle\psi_{m}, V'' \psi_{m}\right\rangle-2\left\langle\psi_{m}, V' G_{B_m^i}^{\perp}(E_{m}^i) V' \psi_{m}\right\rangle|\\
		&=O(\|G_{B_m^i}^{\perp}(E_{m}^i)\|)\\
		&=O(\delta_{m-1}^{-1})
	\end{align*}
for $|\theta-\theta^*|<10\delta_n^{1/2}<\delta_{m-1}/(10M_1)$ by \textbf{(H1)} of  stage $m$.
\end{proof}

We also have the lower bound on the derivatives. 

\begin{lem}\label{xiajiem}
	For $|\theta-\theta^*|<10\delta_n^{1/2}$, we have $|\frac{d}{d\theta}E_m^i(\theta)|\geq\delta_{m-1}^{2}$.
\end{lem}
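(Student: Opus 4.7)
The plan is to invoke hypothesis \textbf{(H4)} applied to the Class A point $c_m^i$ at stage $m$, and then show that on the interval $|\theta-\theta^*|<10\delta_n^{1/2}$ the quantity $\|\theta+c_m^i\cdot\omega-\mu_m\|$ is bounded below by $\delta_{m-1}^2$ (so that the minimum in \textbf{(H4)} is attained by $\delta_{m-1}^2$).

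First I would recall the discussion just preceding Proposition \ref{325}: the inequality \eqref{1209} shows $\|2c_m^i\cdot\omega+2\theta^*\|\geq\delta_{m-1}^{1/3}$, which rules out \textbf{(H6)} of Class B at stage $m$. Hence $c_m^i$ belongs to Class A at stage $m$, so \textbf{(H4)} applies on $|\theta-\theta^*|<\delta_{m-1}/(20M_1)$ and gives
\begin{equation*}
\big|\tfrac{d}{d\theta}E_m^i(\theta)\big|\geq\min\bigl(\delta_{m-1}^2,\|\theta+c_m^i\cdot\omega-\mu_m\|\bigr).
\end{equation*}
Since $m\leq n-1$, we have $10\delta_n^{1/2}\ll\delta_{m-1}/(20M_1)$, so the interval $|\theta-\theta^*|<10\delta_n^{1/2}$ sits inside the domain of \textbf{(H4)}.

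Next I would translate \eqref{1209} into a lower bound on $\|c_m^i\cdot\omega+\theta^*-\mu_m\|$ using the elementary identity $\min(\|x\|,\|x-\tfrac12\|)=\tfrac12\|2x\|$, which gives
\begin{equation*}
\min\bigl(\|c_m^i\cdot\omega+\theta^*\|,\|c_m^i\cdot\omega+\theta^*-\tfrac12\|\bigr)\geq\tfrac12\delta_{m-1}^{1/3}.
\end{equation*}
In particular, for either admissible value $\mu_m\in\{0,1/2\}$ we obtain $\|c_m^i\cdot\omega+\theta^*-\mu_m\|\geq\tfrac12\delta_{m-1}^{1/3}$. Then, for $|\theta-\theta^*|<10\delta_n^{1/2}$, the triangle inequality yields
\begin{equation*}
\|\theta+c_m^i\cdot\omega-\mu_m\|\geq\tfrac12\delta_{m-1}^{1/3}-10\delta_n^{1/2}\geq\tfrac14\delta_{m-1}^{1/3}\geq\delta_{m-1}^2,
\end{equation*}
where the penultimate step uses $\delta_n\ll\delta_{m-1}$ (since $m\leq n-1$ and $\delta_{l+1}\ll\delta_l^{100}$ by construction) and the last step uses $0<\delta_{m-1}<1$ with $1/3<2$. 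Combining this with \textbf{(H4)} gives $|\tfrac{d}{d\theta}E_m^i(\theta)|\geq\delta_{m-1}^2$ on the full interval $|\theta-\theta^*|<10\delta_n^{1/2}$.

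I do not anticipate a real obstacle here: the proof is essentially bookkeeping between the Diophantine-type separation coming from \eqref{1209} and the Class A transversality \textbf{(H4)}, with the size hierarchy $\delta_n\ll\delta_{m-1}$ ensuring the error from shifting $\theta^*$ to $\theta$ is negligible. The only point requiring care is verifying that $c_m^i$ really is in Class A rather than Class B, but this was already established in the paragraph preceding Proposition \ref{325} precisely from \eqref{1209}.
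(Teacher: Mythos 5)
Your proof is correct and uses essentially the same ingredients as the paper's: invoking \textbf{(H4)} for the Class A point $c_m^i$ at stage $m$, the lower bound \eqref{1209} on $\|2c_m^i\cdot\omega+2\theta^*\|$, the identity $\min(\|x\|,\|x-\tfrac12\|)=\tfrac12\|2x\|$, and the hierarchy $\delta_n\ll\delta_{m-1}$. The only presentational difference is that you argue directly (showing the minimum in \textbf{(H4)} is attained by $\delta_{m-1}^2$) whereas the paper argues by contradiction (assuming $|\frac{d}{d\theta}E_m^i|<\delta_{m-1}^2$ and deducing $|\frac{d}{d\theta}E_m^i|>\delta_{m-1}$); this is cosmetic.
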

\begin{proof}
Assume  that it is not true. Then by \textbf{(H4)} and recalling \eqref{1209}, we have  
\begin{align*}
	|\frac{d}{d\theta}E_m^i(\theta)|&\geq \min(\|\theta+c_m^i\cdot \omega\|,\|\theta+c_m^i\cdot \omega-\frac{1}{2}\|)\\
	&\geq \frac{1}{2}\|2\theta+2c_m^i\cdot \omega\|\\
	&>\delta_{m-1},
\end{align*}
which leads to a contradiction. 
\end{proof} 

We can also establish estimates of derivatives of stage $n+1.$

\begin{prop}\label{gujim}
	Let $|\theta-\theta^*|<10\delta_n^{1/2}$. Then 
	\begin{itemize}
		\item [\textbf{(a)}] $E_{n+1}^i$ and $\mathcal{E}_{n+1}^i$ are $C^1$ functions and if $ E_{n+1}^i(\theta)\neq\mathcal{E}_{n+1}^i(\theta)$, then
		\begin{align}
			\frac{d}{d \theta} E_{n+1}^i & =(A^2-B^2) \frac{d}{d\theta} E_m^i+O(\delta_{m}^{1/3} ), \label{dem}\\
			\frac{d}{d \theta} \mathcal{E}_{n+1}^i & =(B^2-A^2) \frac{d}{d\theta} E_m^i+O(\delta_{m}^{1/3} ).\nonumber
		\end{align}
		\item [\textbf{(b)}] If $ E_{n+1}^i(\theta)\neq\mathcal{E}_{n+1}^i(\theta)$, then both $\frac{d^2}{d \theta^2} E_{n+1}^i(\theta)$ and 	$\frac{d^2}{d \theta^2} \mathcal{E}_{n+1}^i(\theta) $ exist. Moreover,
		\begin{align}
			\frac{d^2}{d \theta^2} E_{n+1}^i & =\frac{2\left\langle\psi_{n+1}^i, V' \Psi_{n+1}^i\right\rangle^2}{E_{n+1}^i-\mathcal{E}_{n+1}^i}+O(\delta_{n-1}^{-1} ),\label{dfm} \\
			\frac{d^2}{d \theta^2} \mathcal{E}_{n+1}^i & =\frac{2\left\langle\psi_{n+1}^i, V' \Psi_{n+1}^i\right\rangle^2}{\mathcal{E}_{n+1}^i-E_{n+1}^i}+O(\delta_{n-1}^{-1} ). \label{ddfm}
		\end{align}
		\item [\textbf{(c)}] At the point $E_{n+1}^i(\theta)\neq \mathcal{E}_{n+1}^i(\theta)$, if $|\frac{d}{d\theta}E_{n+1}^i(\theta)|\leq10\delta_n^{1/2}$, then $|\frac{d^2}{d\theta^2}E_{n+1}^i(\theta)|>\delta_n^{-1/3}>2$. Moreover, the  sign of $\frac{d^2}{d\theta^2}E_{n+1}^i(\theta)$ is the same as that of $ E_{n+1}^i(\theta)- \mathcal{E}_{n+1}^i(\theta)$. The analogous conclusion holds by exchanging  $E_{n+1}^i(\theta)$ and $\mathcal{E}_{n+1}^i(\theta)$.
	\end{itemize}
\end{prop}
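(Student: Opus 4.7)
The plan is to mimic the strategy used in Proposition \ref{guji} at the first step, with the roles of $\psi_0,\tilde\psi_0,E_0^i,\tilde E_0^i$ now played by $\psi_m,\tilde\psi_m,E_m^i,\tilde E_m^i$, and to combine this with the expansions established in Proposition \ref{325}(c) and the auxiliary Lemmas \ref{cham} and \ref{xiajiem}.

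For part (a), away from the crossing set $\{E_{n+1}^i=\mathcal E_{n+1}^i\}$ the two eigenvalues are simple, so Lemma \ref{neqs} gives $C^2$ smoothness locally; at a crossing point one invokes the Rellich/Kato $C^1$ theorem as in Lemma \ref{daogroup} to extend $C^1$ smoothness through the interval. With simplicity in hand, the Feynman--Hellmann formula yields
$\frac{d}{d\theta}E_{n+1}^i=\langle\psi_{n+1}^i,V'\psi_{n+1}^i\rangle$.
Plugging in the expansion $\psi_{n+1}^i=A\psi_m+B\tilde\psi_m+O(\delta_m^{10})$ from Proposition \ref{325}(c), the diagonal pieces reproduce $A^2\frac{d}{d\theta}E_m^i+B^2\frac{d}{d\theta}\tilde E_m^i$. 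The cross piece $2AB\langle\psi_m,V'\tilde\psi_m\rangle$ is exponentially small because $\psi_m$ and $\tilde\psi_m$ are localized around $c_m^i$ and $\tilde c_m^i$ which are separated by at least $s_m$; the $O(\delta_m^{10})$ remainder contributes the same order. Writing $B^2\frac{d}{d\theta}\tilde E_m^i=-B^2\frac{d}{d\theta}E_m^i+B^2\frac{d}{d\theta}(E_m^i+\tilde E_m^i)$ and invoking Lemma \ref{cham} (which gives $O(\delta_n^{1/3})\le O(\delta_m^{1/3})$) yields the desired identity; the analogous computation for $\Psi_{n+1}^i=B\psi_m-A\tilde\psi_m+O(\delta_m^{10})$ gives the companion formula.

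For part (b), the standard second-order perturbation formula reads
\begin{equation*}
\frac{d^2}{d\theta^2}E_{n+1}^i=\langle\psi_{n+1}^i,V''\psi_{n+1}^i\rangle+\frac{2\langle\psi_{n+1}^i,V'\Psi_{n+1}^i\rangle^2}{E_{n+1}^i-\mathcal E_{n+1}^i}-2\langle V'\psi_{n+1}^i,G_{B_{n+1}^i}^{\perp\perp}(E_{n+1}^i)V'\psi_{n+1}^i\rangle.
\end{equation*}
The first term is $O(M_1)$; the last term is controlled by $\|G_{B_{n+1}^i}^{\perp\perp}(E_{n+1}^i)\|\le 10\delta_{n-1}^{-1}$ from Proposition \ref{325}(d) and $\|V'\|\le M_1$, giving an $O(\delta_{n-1}^{-1})$ contribution. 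Both are absorbed into $O(\delta_{n-1}^{-1})$, leaving the resonant fraction as the main term. The formula for $\mathcal E_{n+1}^i$ follows by symmetry (switching the roles of the two eigenfunctions just flips the sign of the denominator).

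For part (c), assume $|\frac{d}{d\theta}E_{n+1}^i(\theta)|\le 10\delta_n^{1/2}$ at a non-crossing $\theta$. From (a), $(A^2-B^2)\frac{d}{d\theta}E_m^i=O(\delta_m^{1/3})$, and Lemma \ref{xiajiem} gives $|\frac{d}{d\theta}E_m^i|\ge\delta_{m-1}^2$, so $|A^2-B^2|\ll 1$ and thus $A^2,B^2\approx\tfrac12$, hence $|AB|\ge\tfrac14$. The matrix element $\langle\psi_{n+1}^i,V'\Psi_{n+1}^i\rangle$ expands (again the cross terms $\langle\psi_m,V'\tilde\psi_m\rangle$ are exponentially small) to $AB(\frac{d}{d\theta}E_m^i-\frac{d}{d\theta}\tilde E_m^i)+O(\delta_m^{10})$, which by Lemma \ref{cham} equals $2AB\frac{d}{d\theta}E_m^i+O(\delta_n^{1/3})$, and therefore is bounded below in absolute value by $\tfrac14\delta_{m-1}^2$. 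Combined with $|E_{n+1}^i-\mathcal E_{n+1}^i|\le 100M_1\delta_n^{1/2}$ from Proposition \ref{325}(a) and the formula in (b), one obtains $|\frac{d^2}{d\theta^2}E_{n+1}^i|\gtrsim \delta_{m-1}^4/\delta_n^{1/2}\gg\delta_n^{-1/3}$, whose sign is determined by that of $E_{n+1}^i-\mathcal E_{n+1}^i$ because the other contributions are of lower order; the statement with the roles of $E_{n+1}^i$ and $\mathcal E_{n+1}^i$ swapped is identical.

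The main obstacle is the bookkeeping between the three scales $\delta_m,\delta_{n-1},\delta_n$: one must consistently verify that the exponentially small cross terms, the $O(\delta_{n-1}^{-1})$ Green's function error, and the $O(\delta_m^{1/3})$ symmetry error are all dominated by the genuine resonant contribution, and in particular that the $C^1$ regularity produced by Kato's theorem is sufficient for the Feynman--Hellmann identity to make sense through the possible crossing point.
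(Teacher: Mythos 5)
Your proposal is correct and follows essentially the same route as the paper: Feynman--Hellmann plus the expansion $\psi_{n+1}^i=A\psi_m+B\tilde\psi_m+O(\delta_m^{10})$ from Proposition~\ref{325}(c) together with Lemma~\ref{cham} for part (a), the second-order perturbation formula of Theorem~\ref{daoshu} with the $\|G^{\perp\perp}\|\le 10\delta_{n-1}^{-1}$ bound for part (b), and the lower bound on $|A^2-B^2|$ via Lemma~\ref{xiajiem} combined with $|E_{n+1}^i-\mathcal E_{n+1}^i|\le 100M_1\delta_n^{1/2}$ for part (c). Your error bookkeeping ($O(\delta_n^{1/3})$ rather than the paper's $O(\delta_m^{1/3})$ in the intermediate bound, both valid since $\delta_n\le\delta_m$) and your explicit mention of the exponentially small cross term $\langle\psi_m,V'\tilde\psi_m\rangle$ are slightly more detailed than the paper's account, but the argument is the same; note also that you correctly write the conclusion of (c) as a bound on the second derivative, fixing a typographical slip in the paper's proof.
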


\begin{proof}The proof is similar to that of Proposition \ref{guji}.  The $C^1$ smoothness of the eigenvalues parameterizations is  a remarkable result of perturbations theory for self-adjoint operator \cite{Rel69, Kat95}. When $E_{n+1}^i$ is simple, by \eqref{1224} and  Theorem \ref{daoshu}, we have 
	\begin{align*}
		\frac{d}{d \theta} E_{n+1}^i & =\left\langle\psi_{n+1}^i, V' \psi_{n+1}^i\right\rangle\\
		&=A^2 \frac{d}{d \theta} E_m^i+B^2 \frac{d}{d \theta} \tilde{E}_m^i+O(\delta_m^{10} ) \\
		& =(A^2-B^2) \frac{d}{d \theta} E_m^i+B^2(\frac{d}{d \theta} E_m^i+\frac{d}{d \theta} \tilde{E}_m^i)+O(\delta_m^{10} )\\
		& =(A^2-B^2) \frac{d}{d\theta} E_m^i+O(\delta_m^{1/3} ),
	\end{align*}
	where we have used \eqref{m+} in the last equality. This  completes  the proof of  \textbf{(a)}.
	
	To prove  \textbf{(b)},  we use the formula 
	\begin{align*}
	\frac{d^2}{d \theta^2} E_{n+1}^i&=\left\langle\psi_{n+1}^i, V'' \psi_{n+1}^i\right\rangle+2 \frac{\left\langle\psi_{n+1}^i, V' \Psi_{n+1}^i\right\rangle^2}{E_{n+1}^i-\mathcal{E}_{n+1}^i}\\
	&\ \  -2\left\langle V' \psi_{n+1}^i,G^{\perp \perp}_{B_{n+1}^i}(E_{n+1}^i) V' \psi_{n+1}^i\right\rangle 
	\end{align*}
	from Theorem \ref{daoshu}.
	The remainder term is bounded by $2\|G^{\perp \perp}_{B_{n+1}^i}(E_{n+1}^i)\|\cdot\|V' \psi_{n+1}^i\|^2$, where we can use the estimate $\|G^{\perp \perp}_{B_{n+1}^i}(E_{n+1}^i)\| \leq10\delta_{n-1}^{-1}$ {\bf (d)} in Proposition \ref{325}. 
	
	Now we turn to the proof of \textbf{(c)}. If $|\frac{d}{d\theta}E_{n+1}^i(\theta)|\leq10\delta_n^{1/2}$, then by \eqref{dem},  we have 
	$$|A^2-B^2|\cdot |\frac{d}{d\theta}E_m^i(\theta)|\leq10\delta_n^{1/2}+O(\delta_m^{1/3}) \leq\delta_m^{1/2},$$
	which implies $A^2\approx B^2\approx \frac{1}{2}$ by Lemma \ref{xiajiem}. Thus,
	\begin{align}\label{2108}
\begin{split}
			|\langle\psi_{n+1}^i, V' \Psi_{n+1}^i\rangle|&=|AB \frac{d}{d \theta} E_m^i-AB \frac{d}{d \theta} \tilde{E}_m^i+O(\delta_m^{1 / 3} )| \\
		& \geq|2AB \frac{d}{d\theta} E_m^i|-O(\delta_m^{1 / 3} )\\
		&\geq\frac{1}{2}\delta_{m-1}^{2}.
\end{split}
	\end{align}
	By Proposition \ref{325} \textbf{(a)}, we have $|E_{n+1}^i-\mathcal{E}_{n+1}^i|\leq100M_1\delta_{n}^{1/2}$.  By using  \eqref{dfm}, we obtain  $|\frac{d}{d\theta}E_{n+1}^i(\theta)|\geq\frac{1}{4}\delta_{m-1}^{4}(100M_1\delta_n^{1/2})^{-1}-O(\delta_{n-1}^{-1})>\delta_n^{-1/3}$, whose sign is determined by that of $ E_{n+1}^i(\theta)- \mathcal{E}_{n+1}^i(\theta)$.
\end{proof}
 Since $H_{B_{n+1}^i}(\theta^*+h)=H_{B_{n+1}^j}(\theta^*)$, we deduce from  \textbf{(a)}  in Proposition \ref{325} and  Lemma \ref{qiang} that  $H_{B_{n+1}^j}(\theta^*)$ also has exactly  two eigenvalues $E_n^j, \mathcal{E}_n^j$ in the interval of $|E-E^*|\leq 50M_1\delta_n^{1/2}$ satisfying $\{E_{n}^j,\mathcal{E}_{n}^j\}=\{E_{n}^i(\theta^*+h),\mathcal{E}_{n}^i(\theta^*+h)\} .$\smallskip
 
We are ready to prove the {\bf Center Theorem} of stage $n+1$ in {\bf Subcase B} of {\bf Case 1}. 

\begin{thm}[]\label{C1m}  Assume $c_n^i$ satisfies \eqref{SB}.  Then for  any $c_{n+1}^j\in Q_{n+1}$, we have
	\begin{align}\label{changm}
		m(c_{n+1}^i,c_{n+1}^j)&\leq\sqrt{2}\min(|E_{n+1}^i(\theta^*)-E_{n+1}^j(\theta^*)|^{1/2},|\mathcal{E}_{n+1}^i(\theta^*)-\mathcal{E}_{n+1}^j(\theta^*)|^{1/2},\nonumber\\
		&\ \ |E_{n+1}^i(\theta^*)-\mathcal{E}_{n+1}^j(\theta^*)|^{1/2},|\mathcal{E}_{n+1}^i(\theta^*)-E_{n+1}^j(\theta^*)|^{1/2})\\
		\nonumber&\leq2\delta_{n+1}^\frac{1}{2}.
	\end{align}
\end{thm}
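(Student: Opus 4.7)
The plan is to mirror the Case 2 argument of Theorem \ref{C1}, now with the stage-$m$ data $(\psi_m,\tilde\psi_m,E_m^i,\tilde E_m^i)$ playing the role that the point masses $(\delta(x-c_0^i),\delta(x-\tilde c_0^i),E_0^i,\tilde E_0^i)$ played there. First I would invoke Lemma \ref{qiang} to secure the preliminary bound $m(c_{n+1}^i,c_{n+1}^j)\le\delta_n^3$, so that $\theta^*+h$ stays inside the window $|\theta-\theta^*|<10\delta_n^{1/2}$ on which Propositions \ref{325} and \ref{gujim} are both valid. Because $H_{B_{n+1}^i}(\theta^*+h)=H_{B_{n+1}^j}(\theta^*)$ and $c_{n+1}^j\in Q_{n+1}$, Proposition \ref{325}(a) identifies
\[
\{E_{n+1}^j(\theta^*),\mathcal{E}_{n+1}^j(\theta^*)\}=\{E_{n+1}^i(\theta^*+h),\mathcal{E}_{n+1}^i(\theta^*+h)\},
\]
so it suffices to derive a $\tfrac12 h^2$ lower bound on each of the four differences appearing in \eqref{changm}, since the $Q_{n+1}$ condition forces one of them to be at most $2\delta_{n+1}$.

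The argument then splits according to whether a level crossing occurs at the symmetric point $\theta_{n+1}^i:=-c_{n+1}^i\cdot\omega+\mu_{n+1}$ (mod $1$). In Case I, $E_{n+1}^i(\theta_{n+1}^i)\ne \mathcal{E}_{n+1}^i(\theta_{n+1}^i)$, so by uniqueness each curve is separately symmetric about $\theta_{n+1}^i$ and the two curves remain disjoint on the whole interval. Both first derivatives vanish at $\theta_{n+1}^i$, and Proposition \ref{gujim}(c) forces $|\tfrac{d^2}{d\theta^2}E_{n+1}^i|>2$ with a fixed sign whenever the first derivative is below $10\delta_n^{1/2}$; hence $\tfrac{d}{d\theta}E_{n+1}^i$ cannot reenter the small band, and similarly for $\mathcal{E}_{n+1}^i$. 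Lemma \ref{C2} of Appendix \ref{AA}, applied with $\theta_s=\theta_{n+1}^i$, $\theta_1=\theta^*$, $\theta_2=\theta^*+h$, $\delta=\delta_n^{1/2}$, then yields the required $\tfrac12 h^2$ bound for each curve, and the cross comparisons follow by taking the minimum.

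In Case II, $E_{n+1}^i(\theta_{n+1}^i)=\mathcal{E}_{n+1}^i(\theta_{n+1}^i)$ is a genuine level crossing, so first-order perturbation theory fails and I would repeat the Kato-type argument behind Lemma \ref{daogroup}. At $\theta_{n+1}^i$ the reflection $(R\psi)(x)=\psi(2c_{n+1}^i-x)$ commutes with $H_{B_{n+1}^i}$, so the two-dimensional total eigenspace $\operatorname{Range}P$ admits a basis $\{\psi_s,\psi_a\}$ consisting of one symmetric and one antisymmetric eigenfunction of $R$; such a basis exists precisely because the expansion \eqref{1224} forbids $\operatorname{Range}P$ from being spanned by symmetric (or antisymmetric) functions alone. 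Evenness of $v$ together with $\mu_{n+1}\in\{0,1/2\}$ makes $V'(\theta_{n+1}^i)$ antisymmetric, so the diagonal entries of $PV'P$ vanish and
\[
\tfrac{d}{d\theta}E_{n+1}^i(\theta_{n+1}^i)=-\tfrac{d}{d\theta}\mathcal{E}_{n+1}^i(\theta_{n+1}^i)=\langle\psi_s,V'\psi_a\rangle.
\]
Writing $\psi_s,\psi_a$ as $(\psi_m\pm\tilde\psi_m)/\sqrt2$ up to $O(\delta_m^{10})$ from \eqref{1224} and combining Lemmas \ref{cham} and \ref{xiajiem} exactly as in the derivation of \eqref{2108} yields $|\langle\psi_s,V'\psi_a\rangle|\gtrsim\delta_{m-1}^2$. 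Propagating this lower bound to the whole interval via Proposition \ref{gujim}(c) (by the same continuity/monotonicity argument used in the proof of Theorem \ref{C1}) and using the global symmetry $E_{n+1}^i(\theta)=\mathcal{E}_{n+1}^i(2\theta_{n+1}^i-\theta)$, all four differences are then bounded below by $\delta_{m-1}^2\min(|h|,|2\theta^*+h-2\theta_{n+1}^i|)\ge h^2$, since $|h|\le\delta_n^3\ll\delta_{m-1}^2$.

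The main obstacle is Case II: one must justify that the derivative lower bound from the single point $\theta_{n+1}^i$ extends to the whole $|\theta-\theta^*|<10\delta_n^{1/2}$ interval without the derivative curves recrossing the critical band, and that the rough expansion \eqref{1224}, with its $O(\delta_m^{10})$ tails, is still accurate enough to beat the $O(\delta_{n-1}^{-1})$ loss in the second-derivative formula \eqref{dfm}. Everything else is bookkeeping against the template of Theorem \ref{C1}.
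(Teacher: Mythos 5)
Your proposal tracks the paper's own proof step by step: the preliminary bound from Lemma \ref{qiang}, the two-case split at the symmetric point $\theta_{n+1}^i$, Lemma \ref{C2} for the non-crossing case, and the Kato-type derivative-group computation plus the reflection-symmetry/antisymmetric-$V'$ argument for the crossing case, with the same propagation via Proposition \ref{gujim}(c). The only cosmetic difference is that you quote a $\delta_{m-1}^2$ lower bound on $|\frac{d}{d\theta}E_{n+1}^i|$ over the whole interval where the continuity argument strictly yields the weaker threshold $10\delta_n^{1/2}$, but since $|h|\leq\delta_n^3$ is tiny compared to either quantity this does not affect the conclusion.
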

\begin{proof}
	The proof is similar to that of Theorem \ref{C1}. The preliminary bound  \eqref{changa} implies that $\theta^*\pm m(c_{n+1}^i,c_{n+1}^j)$ belong to  the interval of $|\theta-\theta^*|<10\delta_n^{1/2}$ on which $E_{n+1}^i$ and $\mathcal{E}_{n+1}^i$ are defined.  We also  recall \eqref{symn} that the symmetric point $\theta_{n+1}^i$ belongs to the interval of $|\theta-\theta^*|<10\delta_n^{1/2}$. So there will be two cases.\smallskip\\
	\textbf{Case I}.
	 $E_{n+1}^i(\theta_{n+1}^i) \neq \mathcal{E}_{n+1}^i(\theta_{n+1}^i)$. 
	Without loss of generality,  we may assume $E_{n+1}^i(\theta_{n+1}^i)>\mathcal{E}_{n+1}^i(\theta_{n+1}^i)$. Notice that the union of two eigenvalue curves is symmetric about $\theta_{n+1}^i$. Thus  we must have
	$$
	\frac{d}{d \theta} E_{n+1}^i(\theta_{n+1}^i)=\frac{d}{d \theta} \mathcal{E}_{n+1}^i(\theta_{n+1}^i)=0 .
	$$ 
	By \textbf{(b)} and \textbf{(c)} of Proposition \ref{gujim},  we see that $\theta_{n+1}^i$ is a local minimum point of $ E_{n+1}^i$ and a local maximum point of $ \mathcal{E}_{n+1}^i$. Moreover, $\frac{d}{d \theta} E_{n+1}^i$ is increasing and $\frac{d}{d \theta} \mathcal{E}_{n+1}^i$ is decreasing whenever $|\frac{d}{d \theta} E_{n+1}^i|\leq10\delta_{n}^{1/2}$. Thus, $E_{n+1}^i>\mathcal{E}_{n+1}^i$ continues to hold for all $|\theta-\theta^*|<10\delta_n^{1/2}$, which implies $\frac{d^2}{d\theta^2}E_{n+1}^i(\theta)>2$ whenever $|\frac{d}{d\theta}E_{n+1}^i(\theta)|<10\delta_n^{1/2}$. Moreover,  $\frac{d}{d\theta}E_{n+1}^i$(resp. $\frac{d}{d\theta}\mathcal{E}_{n+1}^i$) cannot reenter the band $|\frac{d}{d\theta}E|<10\delta_n^{1/2}$ since it is increasing (resp. decreasing) there.   It follows that     $E_{n+1}^i(\theta),\mathcal{E}_{n+1}^i(\theta)$ satisfy the condition of  Lemma \ref{C2} with $\theta_2=\theta^*+h,\theta_1=\theta^*,\delta=10\delta_n^{1/2},|h|\leq\delta.$ Thus, we get
	$$|E_{n+1}^i(\theta^*+h)-E_{n+1}^i(\theta^*)|\geq \frac{1}{2}\min(h^2,|2\theta^*+h-2\theta_{n+1}^i|^2)=\frac{1}{2}h^2$$
	and the same estimate holds  true for $\mathcal{E}_{n+1}^i$,
	where $h=(c_{n+1}^j-c_{n+1}^i)\cdot \omega$ or $-((c_{n+1}^i+c_{n+1}^j)\cdot \omega+2\theta^*)$ {\rm (mod $1$)} satisfying $|h|=m(c_{n+1}^i,c_{n+1}^j)$.
	An easy inspection gives us 
	\begin{align*}
	&\ \ \ |\mathcal{E}_{n+1}^i(\theta^*+h)-E_{n+1}^i(\theta^*)|\\
	&\geq\min(|E_{n+1}^i(\theta^*+h)-E_{n+1}^i(\theta^*)|,|\mathcal{E}_{n+1}^i(\theta^*+h)-\mathcal{E}_{n+1}^i(\theta^*)|)\\
	&\geq\frac{1}{2}h^2,\\
	&\ \ \  |E_{n+1}^i(\theta^*+h)-\mathcal{E}_{n+1}^i(\theta^*)|\\
	&\geq\min(|E_{n+1}^i(\theta^*+h)-E_{n+1}^i(\theta^*)|,|\mathcal{E}_{n+1}^i(\theta^*+h)-\mathcal{E}_{n+1}^i(\theta^*)|)\\
	&\geq\frac{1}{2}h^2.
	\end{align*}
	Now \eqref{changm}  follows from  $\{E_{n+1}^j(\theta^*),\mathcal{E}_{n+1}^j(\theta^*)\}=\{E_{n+1}^i(\theta^*+h),\mathcal{E}_{n+1}^i(\theta^*+h)\}$ since $H_{B_{n+1}^i}(\theta^*+h)=H_{B_{n+1}^j}(\theta^*)$, and one of the eigenvalue differences must be bounded above by $2\delta_{n+1}$ by the definition of $Q_{n+1}$.\smallskip\\
		\textbf{Case II}. $E_{n+1}^i(\theta_{n+1}^i)= \mathcal{E}_{n+1}^i(\theta_{n+1}^i)$.
	In this case,  we claim that $|\frac{d}{d \theta} E_{n+1}^i|\geq10\delta_n^{1/2}$ and $|\frac{d}{d \theta} \mathcal{E}_{n+1}^i|\geq10\delta_n^{1/2}$ hold for $|\theta-\theta^*|<10\delta_n^{1/2}.$ Moreover, they have opposite signs. First, we show it is true for $\theta=\theta_{n+1}^i$. An analog of  Lemma \ref{daogroup} gives us 
	\begin{align*}
	&\ \ \ \big\{\frac{d}{d \theta} E_{n+1}^i(\theta_{n+1}^i),\frac{d}{d \theta} \mathcal{E}_{n+1}^i(\theta_{n+1}^i)\big\}\\
	&=\{\text{Eigenvalues of the }2\times2\text{ matrix } PH'_{B_{n+1}^i}(\theta_{n+1}^i)P\},
	\end{align*}
	 where $P$ is the projection onto  the two dimensional eigenspace of $E_{n+1}^i(\theta_{n+1}^i)$.
	To calculate these eigenvalues,  we represent $PV'P:=PH'P$ in a special basis. Notice that $H_{B_{n+1}^i}(\theta_{n+1}^i)$ commutes with the reflect operator $(R\psi)(x):=\psi(2c_{n+1}^i-x)$. It follows that $\operatorname*{Range}P$ is a two dimensional  invariant subspace of $R$, which can be spanned by two eigenfunctions of $R$ since $R$ is  diagonalizable. All the eigenfunctions of $R$ are symmetric functions $\{\psi_s\}$ and antisymmetric functions $\{\psi_a\}$. We note that   $\operatorname*{Range}P$ cannot be spanned by only symmetric functions (resp. antisymmetric functions). Otherwise,  $\psi_{n+1}$ and $\Psi_{n+1}$ are  symmetric (resp. antisymmetric), contradicting  the expression \eqref{1224}. This allows us to express $PV'P$ in the basis $\{\psi_s,\psi_a\}$, which consists of one symmetric function and  one antisymmetric function
	$$P V' P=\left(\begin{array}{ccc}
		\left\langle\psi_s, V' \psi_s\right\rangle & \left\langle\psi_s, V' \psi_a\right\rangle \\
		\left\langle\psi_s, V' \psi_a\right\rangle & \left\langle\psi_a, V' \psi_a\right\rangle
	\end{array}\right)\ (\text {at } \theta=\theta_{n+1}^i) .
	$$
 Since $v$ is even and $1$-periodic , we deduce $(V'(\theta_{n+1}^i))(2c_1^i-x)=v'(\theta_{n+1}^i+(2c_1^i-x)\cdot \omega)=-v'(\theta_{n+1}^i+x\cdot \omega)=-(V'(\theta_{n+1}^i))(x)$, yielding $V'(\theta_{n+1}^i)$ is antisymmetric. Now by the symmetry and anti-symmetry properties of $\psi_s, \psi_a$, and $V'(\theta_{n+1}^i)$, we have $\langle\psi_s, V' \psi_s\rangle=$ $\langle\psi_a, V' \psi_a\rangle=0$, which gives us
	$$
	P V' P=\left(\begin{array}{cc}
		0 & \left\langle\psi_s, V' \psi_a\right\rangle \\
		\left\langle\psi_s, V' \psi_a\right\rangle & 0
	\end{array}\right)
	$$
	and therefore
	$$
	\frac{d}{d\theta} E_{n+1}^i(\theta_{n+1}^i)=-\frac{d}{d\theta} \mathcal{E}_{n+1}^i(\theta_{n+1}^i)=\langle\psi_s, V' \psi_a\rangle.
	$$
	We choose $E_{n+1}^i$ to satisfy $\frac{d}{d\theta} E_{n+1}^i(\theta_{n+1}^i)\geq0$ and  will show that it is not too small and then extend this for $|\theta-\theta^*| \leq 10 \delta_n^{1 / 2}$. Using the symmetry properties and the decay of the eigenfunctions, we have  $\psi_{m}= \pm R\tilde{\psi}_m+O(\delta_{m}^{10})$, $\psi_s=1/\sqrt{2}(\psi_m+R\psi_m) +O(\delta_{m}^{10})$ and $\psi_a=1/\sqrt{2}(\psi_m-R\psi_m) +O(\delta_{m}^{10})$, and thus 
	$$\frac{d}{d\theta} E_{n+1}^i(\theta_{n+1}^i)=\langle\psi_m,V'\psi_{m}\rangle+O(\delta_m^{10})=\frac{d}{d\theta} E_{m}^i(\theta_{n+1}^i)+O(\delta_m^{10}).$$
 By Lemma \ref{xiajiem}, we get
	$$
	\frac{d}{d\theta}E_{n+1}^i(\theta_{n+1}^i) \geq \frac{1}{2}\delta_{m-1}^2\geq10\delta_n^{1/2}.
	$$
	We now  show that this continues to hold for all $\theta$ in the interval of $|\theta-\theta^*| \leq 10 \delta_n^{1 / 2}$. Since $E_{n+1}^i$ is increasing and $\mathcal{E}_{n+1}^i$ is decreasing, we deduce  $E_{n+1}^i>\mathcal{E}_{n+1}^i$ for $\theta>\theta_{n+1}^i$.  If $\frac{d}{d\theta}E_{n+1}^i(\theta)\leq 10\delta_n^{1/2} $ for some smallest  $\theta>\theta_{n+1}^i$, by  \textbf{(c)} of Proposition \ref{gujim}, we have  $\frac{d^2}{d\theta^2}E_{n+1}^i(\theta)>0$. This is impossible. The same argument shows there is no $\theta<\theta_{n+1}^i$ such that $\frac{d}{d\theta}E_{n+1}^i(\theta)\leq 10\delta_n^{1/2} $, which proves our claim. In this case, we have  $E_{n+1}^i(\theta)=\mathcal{E}_{n+1}^i(2\theta_{n+1}^i-\theta)$ by the symmetry property of the eigenvalue curve. Thus, by the preliminary bound  \eqref{changa}, we obtain 
	\begin{align*}
	|E_{n+1}^i(\theta^*+h)-E_{n+1}^i(\theta^*)|&\geq 10\delta_n^{1/2}|h|\geq h^2,\\
	 |\mathcal{E}_{n+1}^i(\theta^*+h)-\mathcal{E}_{n+1}^i(\theta^*)|&\geq 10\delta_n^{1/2}|h|\geq h^2,\\
	|E_{n+1}^i(\theta^*+h)-\mathcal{E}_{n+1}^i(\theta^*)|&=|E_{n+1}^i(\theta^*+h)-E_{n+1}^i(2\theta_{n+1}^i-\theta^*)|\\
	&\geq 10\delta_n^{1/2}|2\theta^*+h-2\theta_{n+1}^i|\geq h^2,\\
|\mathcal{E}_{n+1}^i(\theta^*+h)-E_{n+1}^i(\theta^*)|&=|\mathcal{E}_{n+1}^i(\theta^*+h)-\mathcal{E}_{n+1}^i(2\theta_{n+1}^i-\theta^*)\\
&\geq 10\delta_n^{1/2}|2\theta^*+h-2\theta_{n+1}^i|\geq h^2,
	\end{align*}
	where $h=(c_{n+1}^j-c_{n+1}^i)\cdot \omega$ or $-((c_{n+1}^i+c_{n+1}^j)\cdot \omega+2\theta^*)$ {\rm (mod $1$)} satisfying $|h|=m(c_{n+1}^i,c_{n+1}^j)$. Now \eqref{changm}  follows from  $\{E_{n+1}^j(\theta^*),\mathcal{E}_{n+1}^j(\theta^*)\}=\{E_{n+1}^i(\theta^*+h),\mathcal{E}_{n+1}^i(\theta^*+h)\}$ since $H_{B_{n+1}^i}(\theta^*+h)=H_{B_{n+1}^j}(\theta^*)$ and one of the eigenvalue differences must be bounded above by $2\delta_{n+1}$ by the definition of $Q_{n+1}$.
\end{proof}

Finally, we also have 

\begin{thm}\label{daomb}
	For $|\theta-\theta^*|<10\delta_n^{1/2}$, we have 
	$$|\frac{d}{d \theta} E_{n+1}^i(\theta)|\geq\min(\delta_n^2,|\theta-\theta_{n+1}^i|).$$
\end{thm}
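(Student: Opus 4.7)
The plan is to mirror the two–case analysis carried out in the first step (Theorem \ref{dao2}) and in the proof of Theorem \ref{C1m}, splitting according to whether the level crossing occurs at the symmetry point $\theta_{n+1}^i$ or not. In both cases the required estimate will follow from information already assembled in Proposition \ref{gujim} together with the qualitative picture of the two curves $E_{n+1}^i, \mathcal{E}_{n+1}^i$ established in Theorem \ref{C1m}; in particular I do not expect any genuinely new computation, only a careful bookkeeping of the derivative bounds.

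\textbf{Case II, $E_{n+1}^i(\theta_{n+1}^i)=\mathcal{E}_{n+1}^i(\theta_{n+1}^i)$.} This is the easy case. Inside the proof of Theorem \ref{C1m} (Case II) it was shown that across the whole interval $|\theta-\theta^*|<10\delta_n^{1/2}$ one has $|\frac{d}{d\theta}E_{n+1}^i(\theta)|\geq 10\delta_n^{1/2}\geq \delta_n^2$, with $\frac{d}{d\theta}E_{n+1}^i$ and $\frac{d}{d\theta}\mathcal{E}_{n+1}^i$ of opposite signs. In particular $|\frac{d}{d\theta}E_{n+1}^i(\theta)|\geq \min(\delta_n^2,|\theta-\theta_{n+1}^i|)$, and the claim is immediate.

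\textbf{Case I, $E_{n+1}^i(\theta_{n+1}^i)\neq \mathcal{E}_{n+1}^i(\theta_{n+1}^i)$.} Here the curve $E_{n+1}^i$ is symmetric about $\theta_{n+1}^i$ (since the union of the two curves is symmetric and there is no crossing, each curve is itself symmetric), so
\[
\frac{d}{d\theta}E_{n+1}^i(\theta_{n+1}^i)=0.
\]
I would then invoke Proposition \ref{gujim}(c): whenever $|\frac{d}{d\theta}E_{n+1}^i(\theta)|\leq 10\delta_n^{1/2}$ we have $|\frac{d^2}{d\theta^2}E_{n+1}^i(\theta)|>\delta_n^{-1/3}>2$ with a fixed sign (here the sign is fixed by the assumption $E_{n+1}^i>\mathcal{E}_{n+1}^i$ throughout the interval, as established in the proof of Theorem \ref{C1m}, Case I). Since $\delta_n^2\ll 10\delta_n^{1/2}$, the hypothesis of Proposition \ref{gujim}(c) certainly holds on the (open) set $\{|\frac{d}{d\theta}E_{n+1}^i|<\delta_n^2\}$, and on this set $\frac{d}{d\theta}E_{n+1}^i$ is strictly monotone with derivative $\geq 2$.

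Combining this with the vanishing of the derivative at $\theta_{n+1}^i$, a mean-value-theorem estimate (this is the content of the auxiliary Lemma \ref{C2} used repeatedly above) gives, for every $\theta$ with $|\frac{d}{d\theta}E_{n+1}^i(\theta)|<\delta_n^2$,
\[
\Bigl|\frac{d}{d\theta}E_{n+1}^i(\theta)\Bigr|=\Bigl|\frac{d}{d\theta}E_{n+1}^i(\theta)-\frac{d}{d\theta}E_{n+1}^i(\theta_{n+1}^i)\Bigr|\geq 2|\theta-\theta_{n+1}^i|\geq |\theta-\theta_{n+1}^i|,
\]
while for the remaining $\theta$ we have by definition $|\frac{d}{d\theta}E_{n+1}^i(\theta)|\geq \delta_n^2$. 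In either situation $|\frac{d}{d\theta}E_{n+1}^i(\theta)|\geq \min(\delta_n^2,|\theta-\theta_{n+1}^i|)$, completing the proof.

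The only point that needs a bit of care is the persistence of the sign of $\frac{d^2}{d\theta^2}E_{n+1}^i$ in Case I throughout the whole $\delta_n^{1/2}$-interval — it is what guarantees that $\frac{d}{d\theta}E_{n+1}^i$ cannot re-enter the band $\{|\cdot|<\delta_n^2\}$ after leaving it. This is exactly the monotonicity argument already carried out in Theorem \ref{C1m} (Case I), so no additional work is required. I would therefore simply cite that discussion rather than re-derive it.
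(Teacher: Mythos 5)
Your proof is correct and follows the paper's own argument: Case II is handled by the lower bound $|\frac{d}{d\theta}E_{n+1}^i|\geq 10\delta_n^{1/2}$ already established in the proof of Theorem \ref{C1m}, and Case I is an application of Lemma \ref{C2} with the hypotheses supplied by the symmetry of $E_{n+1}^i$ about $\theta_{n+1}^i$ and Proposition \ref{gujim}(c). The paper states this in one line; you have merely unpacked the same citations.
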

\begin{proof}We consider two cases:\smallskip \\
	\textbf{Case I}. $E_{n+1}^i(\theta_{n+1}^i)> \mathcal{E}_{n+1}^i(\theta_{n+1}^i)$. It immediately follows from Lemma \ref{C2} and  \textbf{(c)} of Proposition \ref{gujim}.\\
	\textbf{Case II}. $E_{n+1}^i(\theta_{n+1}^i)= \mathcal{E}_{n+1}^i(\theta_{n+1}^i)$. In this case,  we have $|\frac{d}{d \theta} E_{n+1}^i(\theta)|\geq10\delta_n^{1/2}\geq \delta_n^2.$
\end{proof}
\begin{thm}
	If $c_{n+1}^i \in Q_{n+1}$, then
	$$
	|E_{n+1}^i(\theta)-\mathcal{E}_{n+1}^i(\theta)| \geq \delta_n^2 |\theta-\theta_{n+1}^i|
	$$
	for all $\theta$ in the interval of $|\theta-\theta^*| \leq 10 \delta_n^{1 / 2}$.
\end{thm}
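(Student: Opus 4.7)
The plan is to mimic, at the inductive stage $n+1$, the argument used in the analogous $n=1$ theorem in Section 3.2, replacing $\delta_0$-scale quantities by their stage-$n$ analogs coming from Propositions \ref{325} and \ref{gujim} and from the intermediate stage $m$ defined in \textbf{Subcase B}. Two cases arise depending on whether a level crossing occurs at the symmetry point $\theta_{n+1}^i$. The preliminary input is the lower bound \eqref{2108}, namely $|\langle\psi_{n+1}^i,V'\Psi_{n+1}^i\rangle(\theta_{n+1}^i)|\geq\tfrac12\delta_{m-1}^2$, which I already have at the symmetry point.

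First I treat \textbf{Case I}: $E_{n+1}^i(\theta_{n+1}^i)>\mathcal{E}_{n+1}^i(\theta_{n+1}^i)$. By the symmetry $E_{n+1}^i(\theta_{n+1}^i+\Delta\theta)=E_{n+1}^i(\theta_{n+1}^i-\Delta\theta)$ established in the proof of Theorem \ref{C1m}, both first derivatives vanish at $\theta_{n+1}^i$. I take the largest interval $[\theta_{n+1}^i,\theta_d]$ on which $|\langle\psi_{n+1}^i,V'\Psi_{n+1}^i\rangle(\theta)|\geq\tfrac12\delta_{m-1}^2$ (and symmetrically on the left); such an interval is nonempty by continuity. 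On this interval a Taylor expansion around $\theta_{n+1}^i$ combined with \eqref{dfm} and \eqref{ddfm} yields
\[
(E_{n+1}^i-\mathcal{E}_{n+1}^i)(\theta)\ \geq\ \tfrac12\,\frac{d^{2}}{d\theta^{2}}(E_{n+1}^i-\mathcal{E}_{n+1}^i)(\xi)(\theta-\theta_{n+1}^i)^2\ \geq\ \frac{\delta_{m-1}^{4}/2}{(E_{n+1}^i-\mathcal{E}_{n+1}^i)(\theta)}(\theta-\theta_{n+1}^i)^2,
\]
using that $(E_{n+1}^i-\mathcal{E}_{n+1}^i)(\theta)=O(\delta_n^{1/2})$ so the singular term in \eqref{dfm}--\eqref{ddfm} dominates the $O(\delta_{n-1}^{-1})$ remainder. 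This gives $(E_{n+1}^i-\mathcal{E}_{n+1}^i)(\theta)\geq \tfrac{1}{\sqrt 2}\delta_{m-1}^{2}|\theta-\theta_{n+1}^i|\geq\delta_n^2|\theta-\theta_{n+1}^i|$ on $[\theta_{n+1}^i,\theta_d]$. For $\theta\in[\theta_d,\theta^*+10\delta_n^{1/2}]$ I extend using the derivative bounds $\frac{d}{d\theta}E_{n+1}^i\geq 10\delta_n^{1/2}$, $\frac{d}{d\theta}\mathcal{E}_{n+1}^i\leq -10\delta_n^{1/2}$ proved in \textbf{Case I} of Theorem \ref{C1m}, which give
\[
(E_{n+1}^i-\mathcal{E}_{n+1}^i)(\theta)\ \geq\ (E_{n+1}^i-\mathcal{E}_{n+1}^i)(\theta_d)+20\delta_n^{1/2}(\theta-\theta_d)\ \geq\ \delta_n^2(\theta-\theta_{n+1}^i).
\]

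Next, in \textbf{Case II} where the level crossing occurs at $\theta_{n+1}^i$, the proof of Theorem \ref{C1m} has already established $\frac{d}{d\theta}E_{n+1}^i\geq 10\delta_n^{1/2}$ and $\frac{d}{d\theta}\mathcal{E}_{n+1}^i\leq -10\delta_n^{1/2}$ on the whole interval $|\theta-\theta^*|\leq 10\delta_n^{1/2}$, with opposite signs. A single application of the mean value theorem then yields
\[
|(E_{n+1}^i-\mathcal{E}_{n+1}^i)(\theta)|\ =\ \bigl|\tfrac{d}{d\theta}(E_{n+1}^i-\mathcal{E}_{n+1}^i)(\xi)\bigr|\,|\theta-\theta_{n+1}^i|\ \geq\ 20\delta_n^{1/2}|\theta-\theta_{n+1}^i|\ \geq\ \delta_n^2|\theta-\theta_{n+1}^i|.
\]

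The only delicate point is Case I: I must ensure that $|\langle\psi_{n+1}^i,V'\Psi_{n+1}^i\rangle|$ remains at least $\tfrac12\delta_{m-1}^2$ on some explicit neighborhood of $\theta_{n+1}^i$ — otherwise the Taylor/ODE bound is vacuous — and that the transition point $\theta_d$ is far enough from $\theta_{n+1}^i$ for the two estimates to match. Both follow from the fact that $(E_{n+1}^i-\mathcal{E}_{n+1}^i)(\theta_{n+1}^i)>0$ together with \eqref{dem}, giving continuity of the matrix element, and from the scale hierarchy $\delta_{m-1}\gg\delta_n^{1/2}$, which makes the two regions overlap continuously.
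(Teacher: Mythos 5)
Your proposal follows exactly the same two-case strategy as the paper's proof: in Case I you Taylor-expand around $\theta_{n+1}^i$ using \eqref{dfm}--\eqref{ddfm} on the largest interval where the matrix element $\langle\psi_{n+1}^i,V'\Psi_{n+1}^i\rangle$ remains large (the paper uses threshold $\delta_{n-1}^2$ rather than $\frac12\delta_{m-1}^2$, but either works), then extend past $\theta_d$ via the derivative lower bounds from Theorem \ref{C1m}, and in Case II you use the global derivative bounds directly. This is the same argument the paper gives; the extra worry you raise about $\theta_d$ being far enough from $\theta_{n+1}^i$ is not actually needed, since the matching at $\theta_d$ works for any $\theta_d$ as the first estimate gives $(E_{n+1}^i-\mathcal{E}_{n+1}^i)(\theta_d)\geq\delta_{n-1}^2(\theta_d-\theta_{n+1}^i)$ and the slope $20\delta_n^{1/2}$ is also at least $\delta_n^2$.
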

\begin{proof}
	We  consider two cases.\smallskip\\
	\textbf{Case I.} $E_{n+1}^i(\theta_{n+1}^i)>\mathcal{E}_{n+1}^i(\theta_{n+1}^i)$. Then 
	$$\frac{d}{d\theta}E_{n+1}^i(\theta_{n+1}^i)=\frac{d}{d\theta}\mathcal{E}_{n+1}^i(\theta_{n+1}^i)=0$$
	and by \eqref{2108},
	$$
	|\langle\psi_{n+1}^i, V' \Psi_{n+1}^i\rangle(\theta_{n+1}^i)| \geq \frac{1}{2}\delta_{m-1}^{2}\geq\delta_{n-1}^{2}.
	$$
	Therefore, there must be a largest  interval $\theta_{n+1}^i \leq \theta \leq \theta_d$,  where $|\langle\psi_{n+1}^i, V' \Psi_{n+1}^i\rangle(\theta)| \geq  \delta_{n-1}^{2} $. If $\theta$ is in this interval, then
	\begin{align*}
		(E_{n+1}^i-\mathcal{E}_{n+1}^i)(\theta)&=  (E_{n+1}^i-\mathcal{E}_{n+1}^i)(\theta_{n+1}^i)+\frac{d}{d \theta}(E_{n+1}^i-\mathcal{E}_{n+1}^i)(\theta_{n+1}^i)\cdot (\theta-\theta_{n+1}^i) \\
		&\ \  +\frac{1}{2} \frac{d^2}{d \theta^2}(E_{n+1}^i-\mathcal{E}_{n+1}^i)(\xi)\cdot (\theta-\theta_{n+1}^i)^2\\
		&\geq\frac{1}{2} \frac{d^2}{d \theta^2}(E_{n+1}^i-\mathcal{E}_{n+1}^i)(\xi)\cdot (\theta-\theta_{n+1}^i)^2.
	\end{align*}
	By \eqref{dfm} and \eqref{ddfm}, we have
	\begin{align*}
		\frac{d^2}{d \theta^2}(E_{n+1}^i-\mathcal{E}_{n+1}^i)(\xi) & =\frac{4\langle\psi_{n+1}^i, V' \Psi_{n+1}^i\rangle^2(\xi)}{(E_{n+1}^i-\mathcal{E}_{n+1}^i)(\xi)}+O(\delta_{n-1}^{-1}) \\
		& \geq\frac{2\delta_{n-1}^4}{(E_{n+1}^i-\mathcal{E}_{n+1}^i)(\theta)},
	\end{align*}
	 which implies
	$$
	(E_{n+1}^i-\mathcal{E}_{n+1}^i)(\theta) \geq  \frac{\delta_{n-1}^4 }{(E_{n+1}^i-\mathcal{E}_{n+1}^i)(\theta)}(\theta-\theta_{n+1}^i)^2
	$$
	and proves the theorem.
	We now consider the case when $\theta \geq \theta_d$. By the argument in the proof of  Theorem \ref{C1m} (\textbf{Case I}), we have 
	$$
	\frac{d}{d \theta} E_{n+1}^i \geq 10\delta_n^{1/2} \text{ and }  \frac{d}{d \theta} \mathcal{E}_{n+1}^i \leq-10\delta_n^{1/2}, 
	$$
	for $\theta\geq\theta_d$, which gives us 
	\begin{align*}
		(E_{n+1}^i-\mathcal{E}_{n+1}^i)(\theta) & =(E_{n+1}^i-\mathcal{E}_{n+1}^i)(\theta_d)+\frac{d}{d \theta}(E_{n+1}^i-\mathcal{E}_{n+1}^i)(\xi)\cdot (\theta-\theta_d) \\
		& \geq(E_{n+1}^i-\mathcal{E}_{n+1}^i)(\theta_d)+20 \delta_n^{1/2} (\theta-\theta_d)\\
		&\geq\delta_{n-1}^{2} (\theta_d-\theta_{n+1}^i)+20 \delta_n^{1/2} (\theta-\theta_d)\\
		&\geq\delta_n^2 (\theta-\theta_{n+1}^i).
	\end{align*}
	\textbf{Case II}. $E_{n+1}^i(\theta_{n+1}^i)= \mathcal{E}_{n+1}^i(\theta_{n+1}^i)$.
	In this case,   we have $\frac{d}{d \theta} E_{n+1}^i\geq10\delta_n^{1/2}$, $\frac{d}{d \theta} \mathcal{E}_{n+1}^i\leq-10\delta_n^{1/2}$ and
	\begin{align*}
		|(E_{n+1}^i-\mathcal{E}_{n+1}^i)(\theta)| & =|(E_{n+1}^i-\mathcal{E}_{n+1}^i)(\theta_{n+1}^i)+\frac{d}{d \theta}(E_{n+1}^i-\mathcal{E}_{n+1}^i)(\xi)\cdot (\theta-\theta_{n+1}^i)| \\
		& \geq 20 \delta_n^{1/2} |\theta-\theta_{n+1}^i|\geq \delta_n^2|\theta-\theta_{n+1}^i|.
	\end{align*}
\end{proof}

\subsubsection{} {\bf Case 2.}  $s_n<10l_n^2$. In this case, $l_{n+1}=l_n^4$.  Every $c_n^i\in Q_n$ has a mirror image $\tilde{c}_n^i$ such that $m(c_n^i,\tilde{c}_n^i)=\|(c_n^i+\tilde{c}_n^i)\cdot \omega+2\theta^*\|\leq6\delta_n^{1/2}$ and $\|c_n^i-\tilde{c}_n^i\|_1=s_n$. The center set  of the $(n+1)$-th stage blocks is defined as 
$$P_{n+1}=\big\{c_{n+1}^i=(c_n^i+\tilde{c}_n^i)/2:\ c_n^i\in Q_n \big\},$$ 
and 
$$Q_{n+1}=\big\{c_{n+1}^i\in P_{n+1}:\ \operatorname{dist}(\sigma(H_{B_{n+1}^i}(\theta^*)),E^*)<\delta_{n+1}:=e^{-l_{n+1}^{2/3}}\big\}. $$
An analog of Lemma \ref{514} shows that there exists $\mu_{n+1}=0\text{ or }1/2$ such that for every $c_{n+1}^i \in Q_{n+1}$, we have \begin{equation}\label{pren}
	\|\theta^*- c_{n+1}^i\cdot \omega+\mu_{n+1}\|\leq 3\delta_n^{1/2},
\end{equation}
 which implies that there exists a symmetric point $\theta_{n+1}^i$ satisfying  \begin{equation}\label{smn}
\theta_{n+1}^i:=-c_{n+1}^i\cdot \omega+\mu_{n+1} \ { \rm (mod\ 1)},\  |\theta_{n+1}^i-\theta^*|\leq3\delta_n^{1/2}.
\end{equation}
In this case, we must have $s_{n-1}\geq 10l_{n-1}^2$ since by the {\bf Center Theorem} (of stage $n-1$), a third $(n-1)$-singular block inside the $l_{n+1}(\sim l_n^4)$-size block $B_{n+1}^i$ is  excluded. Thus $c_{n-1}^i=c_n^i,\tilde{c}_{n-1}^i=\tilde{c}_n^i$ and moreover the set  $\Lambda=B_{n+1}^i\setminus(B_{n-1}^i\cup \tilde{B}_{n-1}^i)$ is $(n-1)$-good. Notice that by the Diophantine condition 
\begin{align}\label{1209n}
	\|2c_n^i\cdot \omega+2\theta^*\|&\geq\|c_n^i-\tilde{c}_n^i\|-\|(c_n^i+\tilde{c}_n^i)\cdot \omega+2\theta^*\|\\
	\nonumber&\geq \frac{\gamma}{s_n^\tau}-6\delta_n^{1/2}>3\delta_{n-1}^{1/2}.
\end{align}
So it is not the case of \textbf{(H6)} in {\bf Class B}. Thus, $c_n^i$ belongs to {\bf Class A} and \textbf{(H1)}--\textbf{(H5)} hold true. For $|\theta-\theta^*|=O(\delta_n^{1/2})$, since \begin{equation}\label{chabien}
	H_{\tilde{B}_n^i}(\theta)=H_{B_n^i}(-\theta-(c_n^i+\tilde{c}_n^i)\cdot \omega)=H_{B_n^i}(\theta+O(\delta_n^{1/2})),
\end{equation}
 there is also a unique eigenvalue $\tilde{E}_n^i(\theta)$ of $H_{\tilde{B}_n^i}(\theta)$ so that $|\tilde{E}_n^i(\theta)-E^*|=O(\delta_n^{1/2})$ and 
   the corresponding eigenfunction $\tilde{\psi}_n^i$  decays exponentially fast away from $\tilde{c}_n^i$. We are now in a similar setting as \textbf{Subcase B} of \textbf{Case 1} and the analogs   of the proposition hold true if we replace $m$ by $n$. We will list these propositions, however,  sketch the proofs that can be trivially  established from replacing $m$ by $n$.  We only concentrate on the nontrivial ones.  Now we  show  how to get back to  {\bf Class B}  of the  induction hypothesis from \textbf{Case 2}.
\begin{prop}\label{325n}Let $c_{n+1}^i\in Q_{n+1}.$ Then for $|\theta-\theta^*|<10\delta_{n}^{1/2}$, 
	\begin{itemize}
		\item [\textbf{(a)}] $H_{B_{n+1}^i}(\theta)$ has  exactly two eigenvalues $E_{n+1}^i(\theta)$ and $\mathcal{E}_{n+1}^i(\theta)$ in the interval of $|E-E^*|<50M_1\delta_{n}^{1/2}$. Moreover, any other $\hat{E}\in\sigma(H_{B_{n+1}^i}(\theta)) $ must obey $|\hat{E}-E^*|\geq\delta_{n-1}/6$.
		\item [\textbf{(b)}] The corresponding eigenfunction of $E_{n+1}^i$\ {\rm (resp. $\mathcal{E}_{n+1}^i$)}, $\psi_{n+1}$\ {\rm(resp. $\Psi_{n+1}$)} decays exponentially fast away from $c_n^i$ and $\tilde{c}_n^i$, 
		\begin{align}\label{1421n}
			\begin{split}
				|\psi_{n+1}(x)|\leq e^{-(\gamma_0/4)\|x-c_n^i\|_1}+e^{-(\gamma_0/4)\|x-\tilde{c}_n^i\|_1},\\
				|\Psi_{n+1}(x)|\leq e^{-(\gamma_0/4)\|x-c_n^i\|_1}+e^{-(\gamma_0/4)\|x-\tilde{c}_n^i\|_1}		
			\end{split}
		\end{align}
		for $\operatorname{dist}(x,\{c_n^i,\tilde{c}_n^i\})\geq l_n^{6/7}.$	
		\item [\textbf{(c)}] The two eigenfunctions can be expressed as 
		\begin{align}\label{1224n}
			\begin{split}
				\psi_{n+1}=A\psi_n+B\tilde{\psi}_n+O(\delta_n^{10}),\\
				\Psi_{n+1}=B\psi_n-A\tilde{\psi}_n+O(\delta_n^{10}),
			\end{split}	
		\end{align}
		where $A^2+B^2=1$.
		\item [\textbf{(d)}] $\|G_{B_{n+1}^i}^{\perp\perp}(E_{n+1}^i)\|\leq10\delta_{n-1}^{-1}$, where $G_{B_{n+1}^i}^{\perp\perp}$ denotes the Green's function for $B_{n+1}^i$ on the orthogonal complement of the space spanned by $\psi_{n+1}$ and $\Psi_{n+1}$. 
	\end{itemize}
\end{prop}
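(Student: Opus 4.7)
The plan is to mimic the proof of Proposition \ref{325} essentially verbatim, with the role of scale $m$ there played here by scale $n$; all the structural ingredients are already in place from the Case 2 discussion preceding the statement. For the existence half of \textbf{(a)}, I use $\psi_n^i$ and $\tilde{\psi}_n^i$ as two orthogonal trial states. Because \eqref{1209n} rules out \textbf{(H6)}, each $c_n^i \in Q_n$ here sits in \textbf{Class A}, so \textbf{(H1)}--\textbf{(H2)} furnish the eigenvalue $E_n^i$ with $|E_n^i-E^*|\le \delta_n$ and the exponential decay of $\psi_n^i$ away from $c_n^i$; the approximate reflection identity \eqref{chabien} transfers the analogous information to $\tilde{\psi}_n^i$ up to $O(\delta_n^{1/2})$. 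The bounds $\|(H_{B_{n+1}^i}(\theta^*) - E^*)\psi_n^i\| \le 2\delta_n$ and its analog for $\tilde{\psi}_n^i$ then follow from the boundary term $\|\Gamma_{B_n^i}\psi_n^i\|$ being exponentially small. Corollary \ref{trialcor} yields two eigenvalues of $H_{B_{n+1}^i}(\theta^*)$ within $O(\delta_n^{1/2})$ of $E^*$, and $|V'|\le M_1$ propagates this to $|\theta-\theta^*|<10\delta_n^{1/2}$, keeping both inside $|E-E^*|<50M_1\delta_n^{1/2}$.

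For \textbf{(b)}, I apply the Poisson identity on a good annulus. The set $\Lambda = B_{n+1}^i \setminus (B_{n-1}^i \cup \tilde{B}_{n-1}^i)$ is $(n-1)$-good: a third $(n-1)$-singular block inside $B_{n+1}^i$ would contradict the $(n-1)$-st \textbf{Center Theorem} together with the Diophantine condition, exactly as in the argument using \eqref{1209n} given just before the statement. Applying Theorem \ref{1g} at scale $n-1$ gives $\|G_\Lambda(\theta; E_{n+1}^i)\|\le 10\delta_{n-1}^{-1}$ and off-diagonal decay with rate $\gamma_{n-1}\ge\gamma_0/2$, and the representation $\psi_{n+1}(x)=\sum G_\Lambda(x,z)\Gamma_\Lambda\psi_{n+1}(z')$ combined with a boundary sum produces \eqref{1421n}. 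For \textbf{(c)}, I restrict $H_{B_{n+1}^i}(\theta)\psi_{n+1} = E_{n+1}^i \psi_{n+1}$ to $B_n^i$, obtaining $(H_{B_n^i}-E_{n+1}^i)\psi_{n+1} = \Gamma_{B_n^i}\psi_{n+1}$. The bound $\|G_{B_n^i}^\perp(E_{n+1}^i)\|\le 30\delta_{n-1}^{-1}$ (the direct analog of \eqref{839}, using \textbf{(H1)} at stage $n$ together with $|E_{n+1}^i-E^*|<50M_1\delta_n^{1/2}\ll\delta_{n-1}/5$) combined with the boundary decay from \textbf{(b)} gives $\|P_n^\perp\psi_{n+1}\|_{B_n^i} = O(\delta_n^{10})$, and \eqref{chabien} yields the symmetric estimate on $\tilde{B}_n^i$. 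This produces $\psi_{n+1} = A\psi_n^i + B\tilde{\psi}_n^i + O(\delta_n^{10})$ with $A^2+B^2 = 1$ after normalization; the analogous expansion for $\Psi_{n+1}$ plus the orthogonality $\langle\psi_{n+1},\Psi_{n+1}\rangle=0$ fixes the coefficients as in \eqref{1224n}.

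The principal obstacle is the exclusion of a third eigenvalue in \textbf{(a)}, from which \textbf{(d)} follows at once. Suppose for contradiction that some $\hat E\in\sigma(H_{B_{n+1}^i}(\theta))$ satisfies $|\hat E-E^*|<\delta_{n-1}/6$. The Green's function bound on $\Lambda$ and the bound $\|G_{B_n^i}^\perp(\hat E)\|\le 30\delta_{n-1}^{-1}$ still apply, so the argument for \textbf{(c)} runs verbatim for its eigenfunction $\hat\psi$, producing $\hat\psi = \hat A\psi_n^i + \hat B\tilde{\psi}_n^i + O(\delta_n^{10})$ with $\hat A^2+\hat B^2=1$. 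Orthogonality with both $\psi_{n+1}$ and $\Psi_{n+1}$ then forces $A\hat A+B\hat B = O(\delta_n^{10})$ and $B\hat A-A\hat B = O(\delta_n^{10})$, contradicting $(A\hat A+B\hat B)^2+(B\hat A-A\hat B)^2 = 1$. This yields exact multiplicity two and simultaneously \textbf{(d)}, since any other eigenvalue then lies at distance at least $\delta_{n-1}/6 - 50M_1\delta_n^{1/2}\ge\delta_{n-1}/10$ from $E_{n+1}^i$, giving $\|G_{B_{n+1}^i}^{\perp\perp}(E_{n+1}^i)\| \le 10\delta_{n-1}^{-1}$.
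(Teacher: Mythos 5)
Your proof is correct and follows essentially the same route as the paper: the same trial states $\psi_n,\tilde\psi_n$ with Corollary \ref{trialcor} for existence, the same $(n-1)$-good annulus $B_{n+1}^i\setminus(B_{n-1}^i\cup\tilde B_{n-1}^i)$ and Poisson identity for \textbf{(b)}, restriction to $B_n^i$ with the $G_{B_n^i}^\perp$ bound via \textbf{(H1)} for \textbf{(c)}, and the same orthogonality contradiction for excluding a third eigenvalue. The only quibble is a reference slip: the Green's function estimate on the annulus comes from the inductive hypothesis (the general analogue proved in Theorem \ref{ng}), not Theorem \ref{1g}, which is the $n=1$ instance.
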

\begin{proof}
	By the exponential decay of $\psi_{n}$ and $\tilde{\psi}_n$, we have
	\begin{align*}
	\|(H_{B_{n+1}^i}(\theta^*)-E^*)\psi_n\|&\leq|E_n^i(\theta^*)-E^*|+\|\Gamma_{B_n^i}\psi_{n}\|\leq2\delta_n,\\
	\|(H_{B_{n+1}^i}(\theta^*)-E^*)\tilde{\psi}_n\|&\leq|\tilde{E}_n^i(\theta^*)-E^*|+\|\Gamma_{B_n^i}\tilde{\psi}_{n}\|\leq6\delta_{n}^{1/2}+2\delta_n.
	\end{align*}
 The two orthogonal trial wave functions  give two eigenvalues of $H_{B_{n+1}^i}(\theta^*)$  in $|E-E^*|<10\delta_n^{1/2}$ by Corollary \ref{trialcor}. Using $|V'|\leq M_1$,  we deduce $H_{B_{n+1}^i}(\theta)$  has  at least two eigenvalues in $|E-E^*|<50M_1\delta_{n}^{1/2}$, which proves the existence part of \textbf{(a)}. To prove  \textbf{(b)}, we  restrict the equation  $H_{B_{n+1}}(\theta)\psi_{n+1}=E_{n+1}(\theta)\psi_{n+1}$ to the $(n-1)$-good  set $A=B_{n+1}^i\setminus(B_{n-1}^i\cup \tilde{B}_{n-1}^i)$ to obtain 
	$$ \psi_{n+1}(x)=\sum_{z,z'}G_A(x,z)\Gamma_A\psi_{n+1}(z'),$$ which gives \eqref{1421n}. 
	Now we establish \textbf{(c)}. It suffices to show $\psi_{n+1}$ and $\Psi_{n}$ are close to  a linear combination of $\psi_n$ and $\tilde{\psi}_n$ inside $B_n^i\cup \tilde{B}_n^i$.
	We restrict the equation $H_{B_{n+1}^i}(\theta)\psi_{n+1}=E_{n+1}^i(\theta)\psi_{n+1}$ to $B_n^i$ to get 
	$$\left(H_{B_{n}^i}-E_{n+1}^i\right)\psi_{n+1}=\Gamma_{B_{n}^i}\psi_{n+1}.$$
	Combining \eqref{1421n} and the above equation, we get  $$\|P_n^\perp\psi_{n+1}\|=\|G_{B_n^i}^\perp(E_{n+1}^i)P_n^\perp\Gamma_{B_{n}^i}\psi_{n+1}\|= O(\delta_{n-1}^{-1}e^{-\frac{1}{4}\gamma_0l_n})\leq\frac{1}{2}\delta_n^{10},$$
	where $P_n^\perp$ is the projection onto the orthogonal complement of $\psi_n$ and $G_{B_n^i}^\perp(E_{n+1}^i)$ is the Green's function of  $B_n^i$ on $\operatorname{Range}P_n^\perp$ with the upper bound
	\begin{align}\label{839n}
		\|G_{B_n^i}^\perp(E_{n+1}^i)\|&\leq\operatorname{dist}(\sigma(H_{B_n^i}(\theta))- E_n^i(\theta),E_{n+1})^{-1}\\
		\nonumber&\leq(\frac{\delta_{n-1}}{5}-\frac{\delta_{n-1}}{6})^{-1}\leq\frac{30}{\delta_{n-1}}
	\end{align}by \textbf{(H1)} of stage $n$. 
	Therefore inside $B_n^i$, we have 
	$$P_n^\perp\psi_{n+1}=O(\delta_n^{10})$$
	and hence,
	$$\psi_{n+1}\chi_{B_n^i}=a\psi_n+O(\delta_n^{10}),$$
	where $a=\langle \psi_{n+1},\psi_n\rangle$.
	By the approximation \eqref{chabien}, we get a similar estimate in $\tilde{B}_n^i$
	$$\psi_{n+1}\chi_{\tilde{B}_n^i}=b\tilde{\psi}_n+O(\delta_n^{10})$$
	with $b=\langle \psi_{n+1},\tilde{\psi}_n\rangle$. By \eqref{1421n}, we have $\|\psi_{n+1}\chi_{\tilde{B}_{n+1}^i\setminus (B_n^i\cup \tilde{B}_n^i)}\|\leq \delta_n^{10}$. Thus, we can write 
	$$\psi_{n+1}=a\psi_n+b\tilde{\psi}_n+O(\delta_n^{10}).$$
	Taking norm gives $k:=a^2+b^2=1-O(\delta_n^{10})$. We set $A=a/k$ and $B=b/k$. Hence, $A^2+B^2=1$ and $|A-a|,|B-b|= O(\delta_n^{10})$, which gives the desired expression of $\psi_{n+1}$. A similar argument gives $\Psi_{n+1}=C\psi_m+D\tilde{\psi}_n+O(\delta_n^{10})$ with $C^2+D^2=1$. For convenience, we write $A=\cos\alpha, B=\sin\alpha, C=\sin\beta, D=-\cos\beta$. Using $\langle\psi_{n+1} ,\Psi_{n+1}\rangle=0$, we get  $|\sin(\beta-\alpha)|=O(\delta_n^{10})$. We can choose $\beta$ satisfying $|\beta-\alpha|=O(\delta_n^{10})$. Thus $|B-C|=|\sin\alpha-\sin\beta|=O(\delta_n^{10})$ and $|A+D|=|\cos\alpha-\cos\beta|=O(\delta_n^{10})$, giving the desired expression $\Psi_{n+1}=B\psi_n-A\tilde{\psi}_n+O(\delta_n^{10})$. Now assume that $\hat{E}\in\sigma(H_{B_{n+1}^i}(\theta))$ is a third eigenvalue in the interval of $|\hat{E}-E^*|<\delta_{n-1}/6$. The Green's function estimates and   \eqref{839n} still hold if we replace $E_{n+1}$ by $\hat{E}$. Thus, by a similar argument, the eigenfunction of $\hat{E}$ can be expressed as 
	$$\hat{\psi}=\hat{A}\psi_n+\hat{B}\tilde{\psi}_n+O(\delta_n^{10})$$ 
	with $\hat{A}^2+\hat{B}^2=1.$ By the orthogonality, we have $A\hat{A}+B\hat{B}=O(\delta_n^{10})$ and $B\hat{A}-A\hat{B}=O(\delta_n^{10})$. This is impossible since  $(A\hat{A}+B\hat{B})^2+(B\hat{A}-A\hat{B})^2=1$.
	So,  a third eigenvalue must obey $|\hat{E}-E^*|\geq\delta_{n-1}/6$. Finally,  \textbf{(d)} follows from \textbf{(a)} immediately.
\end{proof}
 We also have 
\begin{lem}\label{chan}
	For $|\theta-\theta^*|<10\delta_n^{1/2}$, we have 
	\begin{equation}\label{n+}
		|\frac{d}{d\theta}\left(E_n^i+\tilde{E}_n^i\right)(\theta)|\leq \delta_n^{1/3}.
	\end{equation}
\end{lem}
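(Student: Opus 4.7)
The plan is to mimic the proof of Lemma \ref{cham} almost verbatim, with $n$ playing the role of $m$. The starting observation is that relation \eqref{chabien} together with the symmetric point condition \eqref{smn} gives, up to $O(\delta_n^{1/2})$ error, the identity $\tilde{E}_n^i(\theta) = E_n^i(-\theta + 2\theta_{n+1}^i)$ on the relevant $\theta$-interval. Consequently,
\begin{align*}
\Bigl|\frac{d}{d\theta}\bigl(E_n^i+\tilde{E}_n^i\bigr)(\theta)\Bigr|
&= \Bigl|\frac{d}{d\theta}E_n^i(\theta)-\frac{d}{d\theta}E_n^i(-\theta+2\theta_{n+1}^i)\Bigr| \\
&\le \sup_\xi\Bigl|\frac{d^2}{d\theta^2}E_n^i(\xi)\Bigr|\cdot |2\theta-2\theta_{n+1}^i|,
\end{align*}
so the problem reduces to bounding the second derivative uniformly and controlling the distance to the symmetric point.

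For the second derivative, I would invoke the fact established earlier (just before the statement of the lemma, via \eqref{1209n}) that $c_n^i$ must lie in \textbf{Class A} at stage $n$; hence \textbf{(H1)} applies, yielding a spectral gap of at least $\delta_{n-1}/5$ around $E_n^i(\theta)$ on the interval $|\theta-\theta^*|<\delta_{n-1}/(10M_1)$. Since $10\delta_n^{1/2}\ll \delta_{n-1}/(10M_1)$ for $l_n$ large, this interval contains the domain of the lemma. Then the Kato--Rellich formula
$$
\frac{d^2}{d\theta^2}E_n^i(\theta) = \langle\psi_n,V''\psi_n\rangle - 2\langle\psi_n, V'G_{B_n^i}^{\perp}(E_n^i)V'\psi_n\rangle
$$
together with $\|G_{B_n^i}^\perp(E_n^i)\|\le 20\delta_{n-1}^{-1}$ gives $|\frac{d^2}{d\theta^2}E_n^i|=O(\delta_{n-1}^{-1})$. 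For the distance, \eqref{smn} gives $|\theta-\theta_{n+1}^i|\le |\theta-\theta^*|+|\theta^*-\theta_{n+1}^i|= O(\delta_n^{1/2})$.

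Combining these two estimates yields
$$
\Bigl|\frac{d}{d\theta}\bigl(E_n^i+\tilde{E}_n^i\bigr)(\theta)\Bigr|
\le O\bigl(\delta_{n-1}^{-1}\delta_n^{1/2}\bigr)\le \delta_n^{1/3},
$$
where the last inequality follows from the recursion $l_n\ge l_{n-1}^2$, which forces $\delta_n^{1/6}\le \delta_{n-1}$ (indeed $-l_n^{2/3}/6\le -l_{n-1}^{2/3}$ once $l_{n-1}^{2/3}\ge 6$, so $l_n^{2/3}\ge l_{n-1}^{4/3}\ge 6 l_{n-1}^{2/3}$). There is no real obstacle here; the only bookkeeping point worth emphasizing is verifying that the domain $|\theta-\theta^*|<10\delta_n^{1/2}$ is strictly contained in the $\textbf{(H1)}$-domain at stage $n$, which is transparent from the scale separation $\delta_n\ll\delta_{n-1}^2$.
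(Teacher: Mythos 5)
Your proof is correct and takes essentially the same route as the paper: rewrite $\tilde{E}_n^i(\theta)=E_n^i(-\theta+2\theta_{n+1}^i)$ via \eqref{chabien}, apply the mean value theorem to get $|\frac{d^2}{d\theta^2}E_n^i(\xi)|\cdot|2\theta-2\theta_{n+1}^i|$, bound the second derivative by $O(\delta_{n-1}^{-1})$ using {\bf (H1)} and Theorem~\ref{daoshu}, and use \eqref{smn} for the distance. One minor remark: the relation $\tilde{E}_n^i(\theta)=E_n^i(-\theta+2\theta_{n+1}^i)$ is in fact an exact identity (since $2\theta_{n+1}^i=-(c_n^i+\tilde c_n^i)\cdot\omega$ mod $1$), not merely an identity up to $O(\delta_n^{1/2})$ error, and the paper uses it as such; your hedging there is unnecessary, though your ensuing argument is unaffected.
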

\begin{proof}
	From \eqref{chabien}, we obtain  $\tilde{E}_n^i(\theta)=E_n^i(-\theta+2\theta_{n+1}^i)$.
	Thus,
	\begin{align*}
		|\frac{d}{d\theta}\left(E_n^i+\tilde{E}_n^i\right)(\theta)|&=|\frac{d}{d\theta}E_n^i(\theta)-\frac{d}{d\theta}E_n^i(-\theta+2\theta_{n+1}^i)|\\
		&= |\frac{d^2}{d\theta^2}E_n^i(\xi)|\cdot |2\theta-2\theta_{n+1}^i|\\
		&\leq O(\delta_{n-1}^{-1}\delta_n^{1/2})
		\\
		&\leq \delta_n^{1/3},
	\end{align*}
	where on the third line we used the estimate 
	\begin{align*}
		|\frac{d^2}{d \theta^2} E_{n}^i(\theta)|&=|\left\langle\psi_{n}, V'' \psi_{n}\right\rangle-2\left\langle\psi_{n}, V' G_{B_{n}^i}^{\perp}(E_{n}^i) V' \psi_{n}\right\rangle|\\
		&\leq O(\|G_{B_{n}^i}^{\perp}(E_{n}^i)\|)\\
		&\leq O(\delta_{n-1}^{-1})
	\end{align*}
	for $|\theta-\theta^*|<10\delta_n^{1/2}<\delta_{n-1}/(10M_1)$ by \textbf{(H1)} of  stage $n$.
\end{proof}
\begin{lem}\label{xiajien}
	For $|\theta-\theta^*|<10\delta_n^{1/2}$, we have $|\frac{d}{d\theta}E_n^i(\theta)|\geq\delta_{n-1}^{2}$.
\end{lem}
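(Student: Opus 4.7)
The plan is to argue by contradiction, mirroring the strategy of Lemma \ref{xiajiem} but using the stage-$n$ information that is available in the current \textbf{Case 2}. Suppose, for some $\theta$ with $|\theta-\theta^*|<10\delta_n^{1/2}$, we have $|\frac{d}{d\theta}E_n^i(\theta)|<\delta_{n-1}^2$. In \textbf{Case 2} we already observed that $s_{n-1}\geq 10 l_{n-1}^2$ and that the lower bound \eqref{1209n} rules out possibility \textbf{(H6)}, so $c_n^i\in Q_n$ belongs to \textbf{Class A}. In particular, the transversality-type bound \textbf{(H4)} is available on the interval $|\theta-\theta^*|<\delta_{n-1}/(20M_1)$, which comfortably contains the range $|\theta-\theta^*|<10\delta_n^{1/2}$ since $\delta_n\ll \delta_{n-1}^2$.

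Applying \textbf{(H4)} to the assumed $\theta$ then gives
\[
\delta_{n-1}^2>|\tfrac{d}{d\theta}E_n^i(\theta)|\geq \min\bigl(\delta_{n-1}^2,\|\theta+c_n^i\cdot\omega-\mu_n\|\bigr),
\]
so necessarily $\|\theta+c_n^i\cdot\omega-\mu_n\|\leq \delta_{n-1}^2$. Since $\mu_n\in\{0,1/2\}$ implies $2\mu_n\in\mathbb{Z}$, one has $\|2\theta+2c_n^i\cdot\omega\|\leq 2\|\theta+c_n^i\cdot\omega-\mu_n\|\leq 2\delta_{n-1}^2$.

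On the other hand, combining the stage-$n$ separation estimate \eqref{1209n}, namely $\|2c_n^i\cdot\omega+2\theta^*\|>3\delta_{n-1}^{1/2}$, with $|\theta-\theta^*|<10\delta_n^{1/2}$ gives
\[
\|2\theta+2c_n^i\cdot\omega\|\geq \|2c_n^i\cdot\omega+2\theta^*\|-2|\theta-\theta^*|>3\delta_{n-1}^{1/2}-20\delta_n^{1/2}>2\delta_{n-1}^{1/2},
\]
which is incompatible with the previous upper bound $2\delta_{n-1}^2$. This contradiction forces $|\frac{d}{d\theta}E_n^i(\theta)|\geq \delta_{n-1}^2$.

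The only genuine thing to be careful about is verifying that one is in \textbf{Class A} (so that \textbf{(H4)} applies and not the \textbf{Class B} bound \textbf{(H10)}, which references the wrong symmetry point); once this is in place, the argument is essentially a one-line application of the Diophantine-induced separation in \eqref{1209n}. No step requires genuinely new ideas beyond those already assembled in the induction.
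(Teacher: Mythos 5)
Your proof is correct and uses the same ingredients as the paper's: ruling out \textbf{Class B} via \eqref{1209n}, applying \textbf{(H4)}, and contradicting the Diophantine separation from \eqref{1209n}. The paper's version phrases the contradiction as a lower bound $|\frac{d}{d\theta}E_n^i(\theta)|\geq\frac12\|2\theta+2c_n^i\cdot\omega\|>\delta_{n-1}$, while you derive incompatible upper and lower bounds on $\|2\theta+2c_n^i\cdot\omega\|$, but these are the same argument rearranged.
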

\begin{proof}
	Assume  it is not true. By \textbf{(H4)} and recalling \eqref{1209n}, we have  
	\begin{align*}
		|\frac{d}{d\theta}E_n^i(\theta)|&\geq \min(\|\theta+c_n^i\cdot \omega\|,\|\theta+c_n^i\cdot \omega-\frac{1}{2}\|)\\
		&\geq \frac{1}{2}\|2\theta+2c_n^i\cdot \omega\|\\
		&>\delta_{n-1},
	\end{align*}
which leads to a contradiction.
\end{proof} 

\begin{prop}\label{gujin}
	Let $|\theta-\theta^*|<10\delta_n^{1/2}$. Then 
	\begin{itemize}
		\item [\textbf{(a)}] $E_{n+1}^i$ and $\mathcal{E}_{n+1}^i$ are $C^1$ functions and if $ E_{n+1}^i(\theta)\neq\mathcal{E}_{n+1}^i(\theta)$, then
		\begin{align}
			\frac{d}{d \theta} E_{n+1}^i & =(A^2-B^2) \frac{d}{d\theta} E_n^i+O(\delta_{n}^{1/3} ), \label{den}\\
			\frac{d}{d \theta} \mathcal{E}_{n+1}^i & =(B^2-A^2) \frac{d}{d\theta} E_n^i+O(\delta_{n}^{1/3} ).\nonumber
		\end{align}
		\item [\textbf{(b)}] If $ E_{n+1}^i(\theta)\neq\mathcal{E}_{n+1}^i(\theta)$, then $\frac{d^2}{d \theta^2} E_{n+1}^i(\theta)$ and 	$\frac{d^2}{d \theta^2} \mathcal{E}_{n+1}^i(\theta) $ exist. Moreover,
		\begin{align}
			\frac{d^2}{d \theta^2} E_{n+1}^i & =\frac{2\left\langle\psi_{n+1}^i, V' \Psi_{n+1}^i\right\rangle^2}{E_{n+1}^i-\mathcal{E}_{n+1}^i}+O(\delta_{n-1}^{-1} ),\label{dfn} \\
			\frac{d^2}{d \theta^2} \mathcal{E}_{n+1}^i & =\frac{2\left\langle\psi_{n+1}^i, V' \Psi_{n+1}^i\right\rangle^2}{\mathcal{E}_{n+1}^i-E_{n+1}^i}+O(\delta_{n-1}^{-1} ). \label{ddfn}
		\end{align}
		\item [\textbf{(c)}] At the point $E_{n+1}^i(\theta)\neq \mathcal{E}_{n+1}^i(\theta)$, if $|\frac{d}{d\theta}E_{n+1}^i(\theta)|\leq10\delta_n^{1/2}$, then $|\frac{d^2}{d\theta^2}E_{n+1}^i(\theta)|>\delta_n^{-1/3}>2$. Moreover, the  sign of $\frac{d^2}{d\theta^2}E_{n+1}^i(\theta)$ is the same as that of $ E_{n+1}^i(\theta)- \mathcal{E}_{n+1}^i(\theta)$. The analogous conclusion holds by exchanging  $E_{n+1}^i(\theta)$ and $\mathcal{E}_{n+1}^i(\theta)$.
	\end{itemize}
\end{prop}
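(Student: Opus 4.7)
The plan is to mirror the argument of Proposition~\ref{gujim} with $m$ replaced by $n$, using the ingredients already assembled for Case~2 (namely Proposition~\ref{325n}, Lemma~\ref{chan} and Lemma~\ref{xiajien}). The $C^1$ regularity of $E_{n+1}^i$ and $\mathcal{E}_{n+1}^i$ on the interval $|\theta-\theta^*|<10\delta_n^{1/2}$ is a direct consequence of the Rellich--Kato theorem applied to the two-dimensional eigenvalue group separated from the rest of the spectrum by Proposition~\ref{325n}\textbf{(a)}; on any open subset where $E_{n+1}^i\neq \mathcal{E}_{n+1}^i$ each eigenvalue is simple and hence $C^2$ by Lemma~\ref{neqs}.

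For part \textbf{(a)}, at any point of simplicity I would invoke Theorem~\ref{daoshu} to write $\frac{d}{d\theta}E_{n+1}^i=\langle\psi_{n+1}^i,V'\psi_{n+1}^i\rangle$, then insert the expansion \eqref{1224n} and expand:
\begin{align*}
\frac{d}{d\theta}E_{n+1}^i
&=A^2\frac{d}{d\theta}E_n^i+B^2\frac{d}{d\theta}\tilde E_n^i+O(\delta_n^{10})\\
&=(A^2-B^2)\frac{d}{d\theta}E_n^i+B^2\frac{d}{d\theta}\bigl(E_n^i+\tilde E_n^i\bigr)+O(\delta_n^{10}),
\end{align*}
and the cross term is $O(\delta_n^{1/3})$ by Lemma~\ref{chan}. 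The expression for $\mathcal{E}_{n+1}^i$ follows identically by swapping the roles of the coefficients as dictated by \eqref{1224n}.

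For part \textbf{(b)}, still at a simple point I would apply the second-order formula from Appendix~\ref{AB},
\[
\frac{d^2}{d\theta^2}E_{n+1}^i=\langle\psi_{n+1}^i,V''\psi_{n+1}^i\rangle+\frac{2\langle\psi_{n+1}^i,V'\Psi_{n+1}^i\rangle^2}{E_{n+1}^i-\mathcal{E}_{n+1}^i}-2\langle V'\psi_{n+1}^i,G^{\perp\perp}_{B_{n+1}^i}(E_{n+1}^i)V'\psi_{n+1}^i\rangle,
\]
and absorb the first and third terms into $O(\delta_{n-1}^{-1})$ using the bound $\|G^{\perp\perp}_{B_{n+1}^i}(E_{n+1}^i)\|\leq 10\delta_{n-1}^{-1}$ from Proposition~\ref{325n}\textbf{(d)} together with $\|V''\|,\|V'\|\leq M_1$. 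The formula for $\mathcal{E}_{n+1}^i$ is obtained by interchanging the two eigenvalues.

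The one step that requires actual work, and the main obstacle, is part \textbf{(c)}. Assuming $|\frac{d}{d\theta}E_{n+1}^i(\theta)|\leq 10\delta_n^{1/2}$ at a point of non-degeneracy, \eqref{den} gives $|A^2-B^2|\cdot|\frac{d}{d\theta}E_n^i|\leq 10\delta_n^{1/2}+O(\delta_n^{1/3})\leq\delta_n^{1/4}$. Combined with Lemma~\ref{xiajien}, which forces $|\frac{d}{d\theta}E_n^i|\geq \delta_{n-1}^2\gg\delta_n^{1/4}$, this yields $|A^2-B^2|\leq \delta_n^{1/8}$, and hence $A^2,B^2\approx 1/2$ so that $|AB|\geq 1/4$. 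Then by the formula analogous to \eqref{2108},
\[
|\langle\psi_{n+1}^i,V'\Psi_{n+1}^i\rangle|
=\bigl|AB\tfrac{d}{d\theta}E_n^i-AB\tfrac{d}{d\theta}\tilde E_n^i+O(\delta_n^{10})\bigr|
\geq 2|AB|\cdot|\tfrac{d}{d\theta}E_n^i|-O(\delta_n^{1/3})\geq \tfrac12\delta_{n-1}^2,
\]
where the intermediate cancellation is again handled by Lemma~\ref{chan}. Plugging this lower bound together with $|E_{n+1}^i-\mathcal{E}_{n+1}^i|\leq 100M_1\delta_n^{1/2}$ (from Proposition~\ref{325n}\textbf{(a)}) into \eqref{dfn} yields
\[
\bigl|\tfrac{d^2}{d\theta^2}E_{n+1}^i(\theta)\bigr|\geq \frac{\delta_{n-1}^4/2}{100M_1\delta_n^{1/2}}-O(\delta_{n-1}^{-1})>\delta_n^{-1/3},
\]
since $\delta_{n-1}^4/\delta_n^{1/2}\gg\delta_n^{-1/3}$ by the relation $\delta_n=e^{-l_n^{2/3}}$ with $l_n\gg l_{n-1}$. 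The sign of $\frac{d^2}{d\theta^2}E_{n+1}^i$ agrees with that of $E_{n+1}^i-\mathcal{E}_{n+1}^i$ because the dominant term in \eqref{dfn} is $2\langle\psi_{n+1}^i,V'\Psi_{n+1}^i\rangle^2/(E_{n+1}^i-\mathcal{E}_{n+1}^i)$, whose numerator is non-negative. The statement for $\mathcal{E}_{n+1}^i$ is proved symmetrically.
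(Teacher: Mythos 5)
Your proposal is correct and follows the same route as the paper's own proof: for (a) you use the eigenvalue-derivative formula from Appendix~\ref{AB} together with the expansion \eqref{1224n} and Lemma~\ref{chan}; for (b) you use the second-order formula and the Green's function bound from Proposition~\ref{325n}\textbf{(d)}; and for (c) you establish $A^2\approx B^2\approx 1/2$ via Lemma~\ref{xiajien}, then bound $|\langle\psi_{n+1}^i,V'\Psi_{n+1}^i\rangle|$ from below and feed it into \eqref{dfn} together with the $O(\delta_n^{1/2})$ eigenvalue gap from Proposition~\ref{325n}\textbf{(a)}. The minor numerical choices (e.g.\ $\delta_n^{1/4}$ in place of the paper's $\delta_n^{1/2}$ as the bound for $|A^2-B^2|\cdot|\tfrac{d}{d\theta}E_n^i|$) are in fact slightly more careful, since the $O(\delta_n^{1/3})$ error dominates $\delta_n^{1/2}$; this does not change the argument.
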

\begin{proof} When $E_{n+1}^i$ is simple, by \eqref{1224n} and Lemma \ref{chan}, we have 
	\begin{align*}
		\frac{d}{d \theta} E_{n+1}^i & =\left\langle\psi_{n+1}^i, V' \psi_{n+1}^i\right\rangle=A^2 \frac{d}{d \theta} E_n^i+B^2 \frac{d}{d \theta} \tilde{E}_n^i+O(\delta_n^{10} ) \\
		& =(A^2-B^2) \frac{d}{d \theta} E_n^i+B^2(\frac{d}{d \theta} E_n^i+\frac{d}{d \theta} \tilde{E}_n^i)+O(\delta_n^{10} )\\
		& =(A^2-B^2) \frac{d}{d\theta} E_n^i+O(\delta_n^{1/3} ),
	\end{align*}
	where we used \eqref{n+} in the last estimate and complete the proof of  \textbf{(a)}.
	To prove  \textbf{(b)},  we use the formula 
	\begin{align*}
	\frac{d^2}{d \theta^2} E_{n+1}^i&=\left\langle\psi_{n+1}^i, V'' \psi_{n+1}^i\right\rangle+2 \frac{\left\langle\psi_{n+1}^i, V' \Psi_{n+1}^i\right\rangle^2}{E_{n+1}^i-\mathcal{E}_{n+1}^i}\\
	&\ \ -2\left\langle V' \psi_{n+1}^i,G^{\perp \perp}_{B_{n+1}^i}(E_{n+1}^i) V' \psi_{n+1}^i\right\rangle .
	\end{align*}
	The remainder term is bounded by $2\|G^{\perp \perp}_{B_{n+1}^i}(E_{n+1}^i)\|\cdot\|V' \psi_{n+1}^i\|^2$, where we can use the estimate $\|G^{\perp \perp}_{B_{n+1}^i}(E_{n+1}^i)\| \leq10\delta_{n-1}^{-1}$ in \textbf{(d)} of Proposition \ref{325n}. Now we turn to the proof of \textbf{(c)}. If $|\frac{d}{d\theta}E_{n+1}^i(\theta)|\leq10\delta_n^{1/2}$, then by \eqref{den},  we have 
	$$|A^2-B^2|\cdot  |\frac{d}{d\theta}E_n^i(\theta)|\leq10\delta_n^{1/2}+O(\delta_n^{1/3}) \leq\delta_n^{1/2},$$
	which implies $A^2\approx B^2\approx \frac{1}{2}$ by Lemma \ref{xiajien}. Thus,
	\begin{align}\label{2107}
\begin{split}
			|\langle\psi_{n+1}^i, V' \Psi_{n+1}^i\rangle|&=|AB \frac{d}{d \theta} E_n^i-AB \frac{d}{d \theta} \tilde{E}_n^i+O(\delta_n^{1 / 3} )| \\
		& \geq2AB |\frac{d}{d\theta} E_n^i|-O(\delta_n^{1 / 3} )\\
		&\geq\frac{1}{2}\delta_{n-1}^{2}.
\end{split}
	\end{align}
	By \textbf{(a)} of Proposition \ref{325n}, we have $|E_{n+1}^i-\mathcal{E}_{n+1}^i|\leq100M_1\delta_{n}^{1/2}$.  Using  \eqref{dfn}, we obtain  $|\frac{d}{d\theta}E_{n+1}^i(\theta)|\geq\frac{1}{4}\delta_{n-1}^{4}(100M_1\delta_n^{1/2})^{-1}-O(\delta_{n-1}^{-1})>\delta_n^{-1/3}$, whose the sign is determined by that of $ E_{n+1}^i(\theta)- \mathcal{E}_{n+1}^i(\theta)$.
\end{proof}
Since $H_{B_{n+1}^i}(\theta^*+h)=H_{B_{n+1}^j}(\theta^*)$, we deduce from \textbf{(a)} of Proposition \ref{325n}  and  Lemma \ref{qiang} that  $H_{B_{n+1}^j}(\theta^*)$ also has exactly  two eigenvalues $E_n^j, \mathcal{E}_n^j$ in the interval of $|E-E^*|\leq 50M_1\delta_n^{1/2}$ satisfying $\{E_{n}^j,\mathcal{E}_{n}^j\}=\{E_{n}^i(\theta^*+h),\mathcal{E}_{n}^i(\theta^*+h)\} .$\smallskip

The {\bf Center Theorem} of stage $n+1$ in {\bf Case 2} is 
\begin{thm}[]\label{C1n} For  any $c_{n+1}^i,c_{n+1}^j\in Q_{n+1}$ we have 
	\begin{align}\label{changn}
		m(c_{n+1}^i,c_{n+1}^j)&\leq\sqrt{2}\min(|E_{n+1}^i(\theta^*)-E_{n+1}^j(\theta^*)|^{1/2},|\mathcal{E}_{n+1}^i(\theta^*)-\mathcal{E}_{n+1}^j(\theta^*)|^{1/2},\nonumber\\
		&\ \ |E_{n+1}^i(\theta^*)-\mathcal{E}_{n+1}^j(\theta^*)|^{1/2},|\mathcal{E}_{n+1}^i(\theta^*)-E_{n+1}^j(\theta^*)|^{1/2})\\
		\nonumber&\leq2\delta_{n+1}^\frac{1}{2}.
	\end{align}
\end{thm}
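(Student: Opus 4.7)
The proof of Theorem \ref{C1n} is structurally identical to that of Theorem \ref{C1m}, with $m$ replaced by $n$ throughout. The plan is to first obtain a preliminary upper bound $m(c_{n+1}^i,c_{n+1}^j) \leq \delta_n^3$ by the argument of Lemma \ref{qiang} (restricting the eigenfunctions of $H_{B_{n+1}^i}(\theta^*)$ to $B_n^i$ and $\tilde{B}_n^i$ via \eqref{1421n} to produce pairs of eigenvalues of $H_{B_n^i}(\theta^*)$ and $H_{B_n^j}(\theta^*)$ lying within $\delta_n^6/\sqrt{2}$ of each other, then invoking the induction hypothesis \textbf{(H5)} for $c_n^i,c_n^j$). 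Since $\delta_n^3 \ll 10\delta_n^{1/2}$, the shift $\theta^*+h$, with $|h|=m(c_{n+1}^i,c_{n+1}^j)$ and $h=(c_{n+1}^j-c_{n+1}^i)\cdot\omega$ or $-((c_{n+1}^i+c_{n+1}^j)\cdot\omega+2\theta^*)$ (mod $1$), lies in the interval where $E_{n+1}^i$ and $\mathcal{E}_{n+1}^i$ are well-defined $C^1$ functions (Proposition \ref{gujin}). Because $H_{B_{n+1}^i}(\theta^*+h)=H_{B_{n+1}^j}(\theta^*)$, Proposition \ref{325n}\textbf{(a)} forces $\{E_{n+1}^j(\theta^*),\mathcal{E}_{n+1}^j(\theta^*)\}=\{E_{n+1}^i(\theta^*+h),\mathcal{E}_{n+1}^i(\theta^*+h)\}$, and by the definition of $Q_{n+1}$ at least one of the four eigenvalue differences is bounded above by $2\delta_{n+1}$. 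So it suffices to lower-bound each of the four differences by $\frac{1}{2}h^2$.

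Recall from \eqref{smn} that the symmetric point $\theta_{n+1}^i$ lies within $3\delta_n^{1/2}$ of $\theta^*$, and the union of the two eigenvalue curves is symmetric about $\theta_{n+1}^i$. I will split into two cases exactly as in Theorem \ref{C1m}. In \textbf{Case I}, $E_{n+1}^i(\theta_{n+1}^i)\neq\mathcal{E}_{n+1}^i(\theta_{n+1}^i)$: symmetry of each individual curve forces $\frac{d}{d\theta}E_{n+1}^i(\theta_{n+1}^i)=\frac{d}{d\theta}\mathcal{E}_{n+1}^i(\theta_{n+1}^i)=0$, and Proposition \ref{gujin}\textbf{(c)} together with Proposition \ref{325n}\textbf{(a)} shows that the ordering $E_{n+1}^i>\mathcal{E}_{n+1}^i$ propagates throughout $|\theta-\theta^*|<10\delta_n^{1/2}$, with $\frac{d^2}{d\theta^2}E_{n+1}^i>2$ and $\frac{d^2}{d\theta^2}\mathcal{E}_{n+1}^i<-2$ whenever the respective first derivatives are below $10\delta_n^{1/2}$. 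Applying Lemma \ref{C2} with $\delta=10\delta_n^{1/2}$ and $\theta_s=\theta_{n+1}^i$ yields $|E_{n+1}^i(\theta^*+h)-E_{n+1}^i(\theta^*)|\geq \frac{1}{2}h^2$ and the same for $\mathcal{E}_{n+1}^i$; cross differences are then controlled by $\min$ of the two straight ones.

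In \textbf{Case II}, $E_{n+1}^i(\theta_{n+1}^i)=\mathcal{E}_{n+1}^i(\theta_{n+1}^i)$ (level crossing). Here I would invoke the analog of Lemma \ref{daogroup}: the two derivatives at $\theta_{n+1}^i$ are the eigenvalues of the $2\times 2$ matrix $PV'P$, where $P$ is the rank-two spectral projection. Because $H_{B_{n+1}^i}(\theta_{n+1}^i)$ commutes with the reflection $(R\psi)(x)=\psi(2c_{n+1}^i-x)$, $\operatorname{Range}P$ has a basis $\{\psi_s,\psi_a\}$ of one symmetric and one antisymmetric function (both cannot have the same parity, else \eqref{1224n} would be contradicted). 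Since $V'(\theta_{n+1}^i)$ is antisymmetric, the diagonal of $PV'P$ vanishes, giving $\frac{d}{d\theta}E_{n+1}^i(\theta_{n+1}^i)=-\frac{d}{d\theta}\mathcal{E}_{n+1}^i(\theta_{n+1}^i)=\langle\psi_s,V'\psi_a\rangle$. Using $\psi_n\approx\pm R\tilde\psi_n+O(\delta_n^{10})$ together with \eqref{1224n}, this quantity equals $\frac{d}{d\theta}E_n^i(\theta_{n+1}^i)+O(\delta_n^{10})$, which by Lemma \ref{xiajien} is at least $\frac{1}{2}\delta_{n-1}^2\geq 10\delta_n^{1/2}$. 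A monotonicity argument using Proposition \ref{gujin}\textbf{(c)} extends the inequalities $\frac{d}{d\theta}E_{n+1}^i\geq 10\delta_n^{1/2}$ and $\frac{d}{d\theta}\mathcal{E}_{n+1}^i\leq -10\delta_n^{1/2}$ to the whole interval $|\theta-\theta^*|<10\delta_n^{1/2}$. Combined with the curve symmetry $E_{n+1}^i(\theta)=\mathcal{E}_{n+1}^i(2\theta_{n+1}^i-\theta)$, this yields all four lower bounds of $h^2$, completing the proof.

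The only mildly nontrivial step is the $C^1$ derivative computation at the crossing point via Kato's theorem, but this was already carried out in detail in Theorem \ref{C1m} and Lemma \ref{daogroup}, and the transplantation from $m$ to $n$ is mechanical; the sole arithmetic point to check is that the lower bound $\delta_{n-1}^2$ coming from Lemma \ref{xiajien} still dominates $10\delta_n^{1/2}$, which holds since $\delta_n\ll\delta_{n-1}^{100}$ by the growth law $|\log\delta_{n+1}|=l_{n+1}^{2/3}\sim l_n^{8/3}$.
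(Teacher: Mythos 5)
Your overall plan matches the paper's proof: a preliminary $m$-bound to place $\theta^*+h$ in the domain of the eigenvalue parameterizations, then the two-case analysis (non-crossing via Lemma \ref{C2}; crossing via the Kato reflection-symmetry argument) exactly parallel to Theorem \ref{C1m}. The Case I and Case II bodies of your sketch are correct and agree with the paper.

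There is, however, a genuine mistake in how you obtain the preliminary bound. You propose to get $m(c_{n+1}^i,c_{n+1}^j)\leq\delta_n^3$ by "the argument of Lemma \ref{qiang}," i.e. by restricting $\psi_{n+1}^i,\psi_{n+1}^j$ to $B_n^i,B_n^j$, obtaining close eigenvalues of $H_{B_n^i}(\theta^*)$ and $H_{B_n^j}(\theta^*)$, and invoking \textbf{(H5)}/\textbf{(H12)} for $c_n^i,c_n^j$. That argument works in \textbf{Case 1} precisely because there $c_{n+1}^i=c_n^i$, so a bound on $m(c_n^i,c_n^j)$ \emph{is} a bound on $m(c_{n+1}^i,c_{n+1}^j)$. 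Here you are in \textbf{Case 2}, where $c_{n+1}^i=(c_n^i+\tilde c_n^i)/2\neq c_n^i$, so that transfer simply does not exist; controlling $m(c_n^i,c_n^j)$ does not control $m(c_{n+1}^i,c_{n+1}^j)$. The paper instead derives the preliminary bound directly from the analog of Lemma \ref{514}, i.e.\ from \eqref{pren}: $\|\theta^*+c_{n+1}^i\cdot\omega-\mu_{n+1}\|\leq 3\delta_n^{1/2}$ for every $c_{n+1}^i\in Q_{n+1}$, which immediately yields $m(c_{n+1}^i,c_{n+1}^j)\leq 6\delta_n^{1/2}$. Fortunately this weaker preliminary bound suffices: the remainder of your argument only needs $\theta^*+h$ to lie in $|\theta-\theta^*|<10\delta_n^{1/2}$, which $6\delta_n^{1/2}$ provides. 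So the error is in the justification of the preliminary bound, not in whether the rest of the proof goes through, but as written your first step does not actually produce the needed inequality in Case 2.
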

\begin{proof}
Using \eqref{pren} gives us $m(c_{n+1}^i,c_{n+1}^j)\leq6\delta_n^{1/2}$, which implies that $\theta^*\pm m(c_{n+1}^i,c_{n+1}^j)$ belongs to the interval of $|\theta-\theta^*|<10\delta_n^{1/2}$,  where $E_{n+1}^i$ and $\mathcal{E}_{n+1}^i$ are well defined.  We also  recall \eqref{smn} that the symmetric point $\theta_{n+1}^i$ belongs to the interval of $|\theta-\theta^*|<10\delta_n^{1/2}$. So there will be two cases.\smallskip\\
	\textbf{Case I}.
	$E_{n+1}^i(\theta_{n+1}^i) \neq \mathcal{E}_{n+1}^i(\theta_{n+1}^i)$. 
	Without loss of generality,  we may assume $E_{n+1}^i(\theta_{n+1}^i)>\mathcal{E}_{n+1}^i(\theta_{n+1}^i)$. Notice that union of two eigenvalue curves is symmetric about $\theta_{n+1}^i$. Thus,  we must have
	$$
	\frac{d}{d \theta} E_{n+1}^i(\theta_{n+1}^i)=\frac{d}{d \theta} \mathcal{E}_{n+1}^i(\theta_{n+1}^i)=0 .
	$$ 
	By  \textbf{(b)} and \textbf{(c)} of Proposition \ref{gujin}, we see that $\theta_{n+1}^i$ is a local minimum point of $ E_{n+1}^i$ and a local maximum one of $ \mathcal{E}_{n+1}^i$. Moreover, $\frac{d}{d \theta} E_{n+1}^i$ is increasing and $\frac{d}{d \theta} \mathcal{E}_{n+1}^i$ is decreasing whenever $|\frac{d}{d \theta} E_{n+1}^i|\leq10\delta_{n}^{1/2}$. Thus, $E_{n+1}^i>\mathcal{E}_{n+1}^i$ continues to hold for all $|\theta-\theta^*|<10\delta_n^{1/2}$, which implies, in particular, $\frac{d^2}{d\theta^2}E_{n+1}^i(\theta)>2$ whenever $|\frac{d}{d\theta}E_{n+1}^i(\theta)|<10\delta_n^{1/2}$. Moreover, $\frac{d}{d\theta}E_{n+1}^i$ (resp. $\frac{d}{d\theta}\mathcal{E}_{n+1}^i$) cannot reenter the band $|\frac{d}{d\theta}E|<10\delta_n^{1/2}$ since it is increasing (resp. decreasing) there.   From  the preliminary bound   $m(c_{n+1}^i,c_{n+1}^j)\leq6\delta_n^{1/2}$, we deduce that     $E_{n+1}^i(\theta),\mathcal{E}_{n+1}^i(\theta)$ satisfy the condition of  Lemma \ref{C2} with $\theta_2=\theta^*+h,\theta_1=\theta^*,\delta=10\delta_n^{1/2},|h|\leq\delta.$ Thus, we get
	\begin{align*}
	|E_{n+1}^i(\theta^*+h)-E_{n+1}^i(\theta^*)|&\geq \frac{1}{2}\min(h^2,|2\theta^*+h-2\theta_{n+1}^i|^2)\\
	&=\frac{1}{2}h^2
	\end{align*}
	and the same estimate holds  true for $\mathcal{E}_{n+1}^i$,
	where $h=(c_{n+1}^j-c_{n+1}^i)\cdot \omega$ or $-((c_{n+1}^i+c_{n+1}^j)\cdot \omega+2\theta^*)$ {\rm (mod $1$)} satisfying $|h|=m(c_{n+1}^i,c_{n+1}^j)$.
	An easy inspection gives us 
	\begin{align*}
	&\ \ \ |\mathcal{E}_{n+1}^i(\theta^*+h)-E_{n+1}^i(\theta^*)|\\
	&\geq\min(|E_{n+1}^i(\theta^*+h)-E_{n+1}^i(\theta^*)|,|\mathcal{E}_{n+1}^i(\theta^*+h)-\mathcal{E}_{n+1}^i(\theta^*)|)\\
	&\geq\frac{1}{2}h^2,\\ 
	&\ \ \ |E_{n+1}^i(\theta^*+h)-\mathcal{E}_{n+1}^i(\theta^*)|\\
	&\geq\min(|E_{n+1}^i(\theta^*+h)-E_{n+1}^i(\theta^*)|,|\mathcal{E}_{n+1}^i(\theta^*+h)-\mathcal{E}_{n+1}^i(\theta^*)|)\\&\geq\frac{1}{2}h^2.
	\end{align*}
	Now \eqref{changn}  follows from  $\{E_{n+1}^j(\theta^*),\mathcal{E}_{n+1}^j(\theta^*)\}=\{E_{n+1}^i(\theta^*+h),\mathcal{E}_{n+1}^i(\theta^*+h)\}$ since $H_{B_{n+1}^i}(\theta^*+h)=H_{B_{n+1}^j}(\theta^*)$, and one of the eigenvalue differences must be bounded above by $2\delta_{n+1}$ by the definition of $Q_{n+1}$.\smallskip\\
	\textbf{Case II}. $E_{n+1}^i(\theta_{n+1}^i)= \mathcal{E}_{n+1}^i(\theta_{n+1}^i)$. 
	In this case,  we claim that $|\frac{d}{d \theta} E_{n+1}^i|\geq10\delta_n^{1/2}$ and $|\frac{d}{d \theta} \mathcal{E}_{n+1}^i|\geq10\delta_n^{1/2}$ hold for $|\theta-\theta^*|<10\delta_n^{1/2}.$ Moreover, they have opposite signs. First,  we show it is true for $\theta=\theta_{n+1}^i$. An analog of  Lemma \ref{daogroup} gives us 
	\begin{align*}
	&\ \ \ \{\frac{d}{d \theta} E_{n+1}^i(\theta_{n+1}^i),\frac{d}{d \theta} \mathcal{E}_{n+1}^i(\theta_{n+1}^i)\}\\
	&=\{\text{Eigenvalues of the }2\times2\text{ matrix } PH'_{B_{n+1}^i}(\theta_{n+1}^i)P\},
	\end{align*}
	 where $P$ is the projection onto the two dimensional eigenspace of $E_{n+1}^i(\theta_{n+1}^i)$.
	To calculate these eigenvalues,  we represent $PV'P:=PH'P$ in a special basis. Notice that $H_{B_{n+1}^i}(\theta_{n+1}^i)$ commutes with the reflect operator $(R\psi)(x):=\psi(2c_{n+1}^i-x)$. It follows that $\operatorname*{Range}P$ is a two dimensional  invariant subspace of $R$, which can be spanned by two eigenfunctions of $R$ since $R$ is  diagonalizable. All the eigenfunctions of $R$ are symmetric functions $\{\psi_s\}$ and antisymmetric functions $\{\psi_a\}$. We note that   $\operatorname*{Range}P$ cannot be spanned by only symmetric functions (resp. antisymmetric functions), otherwise $\psi_{n+1}$ and $\Psi_{n+1}$ are  symmetric  (resp. antisymmetric), contradicting  the expression \eqref{1224}. This allows us to express $PV'P$ in the basis $\{\psi_s,\psi_a\}$, which consists of one symmetric function and antisymmetric function
	$$P V' P=\left(\begin{array}{ccc}
		\left\langle\psi_s, V' \psi_s\right\rangle & \left\langle\psi_s, V' \psi_a\right\rangle \\
		\left\langle\psi_s, V' \psi_a\right\rangle & \left\langle\psi_a, V' \psi_a\right\rangle
	\end{array}\right)\ (\text {at } \theta=\theta_{n+1}^i) .
	$$
	Since $v$ is even and $1$-periodic , we deduce $(V'(\theta_{n+1}^i))(2c_1^i-x)=v'(\theta_{n+1}^i+(2c_1^i-x)\cdot \omega)=-v'(\theta_{n+1}^i+x\cdot \omega)=-(V'(\theta_{n+1}^i))(x)$, yielding $V'(\theta_{n+1}^i)$ is antisymmetric. Now by the symmetry (anti-symmetry) properties of $\psi_s, \psi_a$, and $V'(\theta_{n+1}^i)$, we have $\langle\psi_s, V' \psi_s\rangle=$ $\langle\psi_a, V' \psi_a\rangle=0$, which gives us
	$$
	P V' P=\left(\begin{array}{cc}
		0 & \left\langle\psi_s, V' \psi_a\right\rangle \\
		\left\langle\psi_s, V' \psi_a\right\rangle & 0
	\end{array}\right)
	$$
	and therefore
	$$
	\frac{d}{d\theta} E_{n+1}^i(\theta_{n+1}^i)=-\frac{d}{d\theta} \mathcal{E}_{n+1}^i(\theta_{n+1}^i)=\langle\psi_s, V' \psi_a\rangle.
	$$
	We choose $E_{n+1}^i$ to satisfy $\frac{d}{d\theta} E_{n+1}^i(\theta_{n+1}^i)\geq0$ and  show that it is not too small and then extend this for $|\theta-\theta^*| \leq 10 \delta_n^{1 / 2}$. Using the symmetry properties and the decay of the eigenfunctions, we have  $\psi_{n}= \pm R\tilde{\psi}_n+O(\delta_{n}^{10})$,  $\psi_s=1/\sqrt{2}(\psi_n+R\psi_n) +O(\delta_{n}^{10})$ and $\psi_a=1/\sqrt{2}(\psi_n-R\psi_n) +O(\delta_{n}^{10})$. So,
	$$\frac{d}{d\theta} E_{n+1}^i(\theta_{n+1}^i)=\langle\psi_n,V'\psi_{n}\rangle+O(\delta_n^{10})=\frac{d}{d\theta} E_{n}^i(\theta_{n+1}^i)+O(\delta_n^{10}).$$
	By Lemma \ref{xiajien}, we get
	$$
	\frac{d}{d\theta}E_{n+1}^i(\theta_{n+1}^i) \geq \frac{1}{2}\delta_{n-1}^2\geq10\delta_n^{1/2}.
	$$
	We now  show that this continues to hold for all $\theta$ in the interval $|\theta-\theta^*| \leq 10 \delta_n^{1 / 2}$. Since $E_{n+1}^i$ is increasing and $\mathcal{E}_{n+1}^i$ is decreasing, we deduce  $E_{n+1}^i>\mathcal{E}_{n+1}^i$ for $\theta>\theta_{n+1}^i$.  If $\frac{d}{d\theta}E_{n+1}^i(\theta)\leq 10\delta_n^{1/2} $ for some smallest $\theta>\theta_{n+1}^i$, by  \textbf{(c)} of Proposition \ref{gujin}, we have  $\frac{d^2}{d\theta^2}E_{n+1}^i(\theta)>0$. This is impossible. The same argument shows there is no $\theta<\theta_{n+1}^i$ such that $\frac{d}{d\theta}E_{n+1}^i(\theta)\leq 10\delta_n^{1/2} $, which proves our claim. In this case, we have  $E_{n+1}^i(\theta)=\mathcal{E}_{n+1}^i(2\theta_{n+1}^i-\theta)$ by the symmetry property of the eigenvalue curve. Thus, by the preliminary bound  $m(c_{n+1}^i,c_{n+1}^j)\leq6\delta_n^{1/2}$, we obtain 
	\begin{align*}
	|E_{n+1}^i(\theta^*+h)-E_{n+1}^i(\theta^*)|&\geq 10\delta_n^{1/2}|h|\geq h^2,\\
	 |\mathcal{E}_{n+1}^i(\theta^*+h)-\mathcal{E}_{n+1}^i(\theta^*)|&\geq 10\delta_n^{1/2}|h|\geq h^2,\\
	|E_{n+1}^i(\theta^*+h)-\mathcal{E}_{n+1}^i(\theta^*)|&=|E_{n+1}^i(\theta^*+h)-E_{n+1}^i(2\theta_{n+1}^i-\theta^*)|\\
	&\geq 10\delta_n^{1/2}|2\theta^*+h-2\theta_{n+1}^i|\geq h^2,\\
	|\mathcal{E}_{n+1}^i(\theta^*+h)-E_{n+1}^i(\theta^*)|&=|\mathcal{E}_{n+1}^i(\theta^*+h)-\mathcal{E}_{n+1}^i(2\theta_{n+1}^i-\theta^*)|\\
	&\geq 10\delta_n^{1/2}|2\theta^*+h-2\theta_{n+1}^i|\geq h^2,
	\end{align*}
	where $h=(c_{n+1}^j-c_{n+1}^i)\cdot \omega$ or $-((c_{n+1}^i+c_{n+1}^j)\cdot \omega+2\theta^*)$ {\rm (mod $1$)} satisfying $|h|=m(c_{n+1}^i,c_{n+1}^j)$. Now \eqref{changn} follows from  $\{E_{n+1}^j(\theta^*),\mathcal{E}_{n+1}^j(\theta^*)\}=\{E_{n+1}^i(\theta^*+h),\mathcal{E}_{n+1}^i(\theta^*+h)\}$ since $H_{B_{n+1}^i}(\theta^*+h)=H_{B_{n+1}^j}(\theta^*)$, and one of the eigenvalue differences must be bounded above by $2\delta_{n+1}$ by the definition of $Q_{n+1}$.
\end{proof}

\begin{thm}\label{daonb}
	For $|\theta-\theta^*|<10\delta_n^{1/2}$, we have 
	$$|\frac{d}{d \theta} E_{n+1}^i(\theta)|\geq\min(\delta_n^2,|\theta-\theta_{n+1}^i|).$$
\end{thm}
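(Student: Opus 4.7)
The plan is to mirror exactly the two-case structure used in the proof of Theorem \ref{daomb}, since the setup of Case 2 has been engineered via Proposition \ref{325n}, Proposition \ref{gujin}, and Theorem \ref{C1n} to be formally parallel to Subcase B of Case 1 (with the role of the stage-$m$ data replaced by stage-$n$ data). The dichotomy will again be governed by whether the two eigenvalue parameterizations $E_{n+1}^i(\theta)$ and $\mathcal{E}_{n+1}^i(\theta)$ coincide at the symmetric point $\theta_{n+1}^i$ given by \eqref{smn}.

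First I would handle \textbf{Case I}: $E_{n+1}^i(\theta_{n+1}^i)\neq\mathcal{E}_{n+1}^i(\theta_{n+1}^i)$. As in the proof of Theorem \ref{C1n}, the symmetry $H_{B_{n+1}^i}(\theta)=H_{B_{n+1}^i}(2\theta_{n+1}^i-\theta)$ combined with the absence of a crossing forces each curve to be individually symmetric about $\theta_{n+1}^i$, so $\frac{d}{d\theta}E_{n+1}^i(\theta_{n+1}^i)=0$. Part \textbf{(c)} of Proposition \ref{gujin} then yields $|\frac{d^2}{d\theta^2}E_{n+1}^i(\theta)|>2$ with a unique sign whenever $|\frac{d}{d\theta}E_{n+1}^i(\theta)|\leq 10\delta_n^{1/2}$, so I can invoke Lemma \ref{C2} with $\theta_s=\theta_{n+1}^i$ and $\delta=\delta_n^2$ to obtain $|\frac{d}{d\theta}E_{n+1}^i(\theta)|\geq\min(\delta_n^2,|\theta-\theta_{n+1}^i|)$ on the whole interval $|\theta-\theta^*|<10\delta_n^{1/2}$.

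Next I would handle \textbf{Case II}: $E_{n+1}^i(\theta_{n+1}^i)=\mathcal{E}_{n+1}^i(\theta_{n+1}^i)$. Here I do not need any new work: the argument in \textbf{Case II} of the proof of Theorem \ref{C1n} already produced the uniform lower bound $|\frac{d}{d\theta}E_{n+1}^i(\theta)|\geq 10\delta_n^{1/2}$ throughout $|\theta-\theta^*|<10\delta_n^{1/2}$, by first computing $\frac{d}{d\theta}E_{n+1}^i(\theta_{n+1}^i)=\langle\psi_s,V'\psi_a\rangle\geq\tfrac12\delta_{n-1}^{2}\geq 10\delta_n^{1/2}$ via Kato's formula for the derivative group (the analog of Lemma \ref{daogroup}) and then propagating the bound using Proposition \ref{gujin}\textbf{(c)}. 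Since $10\delta_n^{1/2}\geq\delta_n^2$, this immediately gives the required estimate.

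There is no genuine obstacle beyond bookkeeping; the only mildly delicate point is that both cases rely crucially on Proposition \ref{gujin}\textbf{(c)}, whose validity in turn rests on the $C^1$ eigenvalue parameterizations of Rellich--Kato and on the gap estimate \textbf{(d)} of Proposition \ref{325n}. Because all of those have been established under the hypothesis of Case 2, the proof of Theorem \ref{daonb} reduces to quoting them in the correct order, exactly as in Theorem \ref{daomb}.
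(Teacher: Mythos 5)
Your proposal matches the paper's proof exactly: the same dichotomy on whether the two eigenvalue curves cross at $\theta_{n+1}^i$, with Case I following from Lemma \ref{C2} and Proposition \ref{gujin}\textbf{(c)}, and Case II inheriting the uniform lower bound $|\frac{d}{d\theta}E_{n+1}^i|\geq 10\delta_n^{1/2}\geq\delta_n^2$ already established in the proof of Theorem \ref{C1n}. Your write-up simply spells out a bit more of the bookkeeping that the paper compresses into two lines.
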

\begin{proof}We consider two cases. \smallskip\\
	\textbf{Case I}. $E_{n+1}^i(\theta_{n+1}^i)> \mathcal{E}_{n+1}^i(\theta_{n+1}^i)$. It immediately follows from Lemma \ref{C2} and \textbf{(c)} of Proposition \ref{gujin}.\\
	\textbf{Case II}. $E_{n+1}^i(\theta_{n+1}^i)= \mathcal{E}_{n+1}^i(\theta_{n+1}^i)$. In this case,  we have $|\frac{d}{d \theta} E_{n+1}^i(\theta)|\geq10\delta_n^{1/2}\geq \delta_n^2.$
\end{proof}
\begin{thm}
	If $c_{n+1}^i \in Q_{n+1}$, then
	$$
	|E_{n+1}^i(\theta)-\mathcal{E}_{n+1}^i(\theta)| \geq \delta_n^2 |\theta-\theta_{n+1}^i|
	$$
	for all $\theta$ in the interval  of $|\theta-\theta^*| \leq 10 \delta_n^{1 / 2}$.
\end{thm}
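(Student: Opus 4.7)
The plan is to mirror the structure of the analogous theorem proved at the end of Subcase B of Case 1, adapting indices since we are now in Case 2 with $c_{n+1}^i = (c_n^i + \tilde c_n^i)/2$ and Proposition \ref{gujin} in place of Proposition \ref{gujim}. I would split the argument into the same two cases according to whether the two eigenvalue branches $E_{n+1}^i$ and $\mathcal{E}_{n+1}^i$ actually cross at the symmetry point $\theta_{n+1}^i$.

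In the non-crossing case $E_{n+1}^i(\theta_{n+1}^i)>\mathcal{E}_{n+1}^i(\theta_{n+1}^i)$, the symmetry of the union of curves about $\theta_{n+1}^i$ forces $\tfrac{d}{d\theta}E_{n+1}^i(\theta_{n+1}^i)=\tfrac{d}{d\theta}\mathcal{E}_{n+1}^i(\theta_{n+1}^i)=0$, while \eqref{2107} gives $|\langle\psi_{n+1}^i,V'\Psi_{n+1}^i\rangle(\theta_{n+1}^i)|\geq \tfrac12\delta_{n-1}^{2}$. By continuity there is a largest interval $\theta_{n+1}^i\le\theta\le\theta_d$ on which $|\langle\psi_{n+1}^i,V'\Psi_{n+1}^i\rangle|\geq \delta_{n-1}^{2}$; on this interval a Taylor expansion plus the formulas \eqref{dfn}--\eqref{ddfn} yields
\[
\tfrac{d^2}{d\theta^2}(E_{n+1}^i-\mathcal{E}_{n+1}^i)(\xi)\geq \frac{4\langle\psi_{n+1}^i,V'\Psi_{n+1}^i\rangle^2(\xi)}{(E_{n+1}^i-\mathcal{E}_{n+1}^i)(\xi)}+O(\delta_{n-1}^{-1})\geq \frac{2\delta_{n-1}^{4}}{(E_{n+1}^i-\mathcal{E}_{n+1}^i)(\theta)},
\]
which after squaring gives $(E_{n+1}^i-\mathcal{E}_{n+1}^i)(\theta)\geq \delta_{n-1}^{2}|\theta-\theta_{n+1}^i|\geq \delta_n^{2}|\theta-\theta_{n+1}^i|$.

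For $\theta\geq\theta_d$ I would appeal to the monotonicity established in Case I of Theorem \ref{C1n}: there we showed $\tfrac{d}{d\theta}E_{n+1}^i\geq 10\delta_n^{1/2}$ and $\tfrac{d}{d\theta}\mathcal{E}_{n+1}^i\leq-10\delta_n^{1/2}$ throughout the relevant interval. Integrating from $\theta_d$ and combining with the bound $(E_{n+1}^i-\mathcal{E}_{n+1}^i)(\theta_d)\geq \delta_{n-1}^{2}(\theta_d-\theta_{n+1}^i)$ from the previous paragraph yields
\[
(E_{n+1}^i-\mathcal{E}_{n+1}^i)(\theta)\geq \delta_{n-1}^{2}(\theta_d-\theta_{n+1}^i)+20\delta_n^{1/2}(\theta-\theta_d)\geq \delta_n^{2}(\theta-\theta_{n+1}^i).
\]
In the crossing case $E_{n+1}^i(\theta_{n+1}^i)=\mathcal{E}_{n+1}^i(\theta_{n+1}^i)$, the Kato/Rellich analysis used in Case II of Theorem \ref{C1n} already gives $\tfrac{d}{d\theta}E_{n+1}^i\geq 10\delta_n^{1/2}$ and $\tfrac{d}{d\theta}\mathcal{E}_{n+1}^i\leq-10\delta_n^{1/2}$ on the whole interval $|\theta-\theta^*|\leq 10\delta_n^{1/2}$, so a mean value estimate directly produces $|(E_{n+1}^i-\mathcal{E}_{n+1}^i)(\theta)|\geq 20\delta_n^{1/2}|\theta-\theta_{n+1}^i|\geq \delta_n^{2}|\theta-\theta_{n+1}^i|$.

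The only substantive step that requires care is the first one: one must check that the largest interval on which $|\langle\psi_{n+1}^i,V'\Psi_{n+1}^i\rangle|\geq\delta_{n-1}^{2}$ actually lies inside the domain on which $E_{n+1}^i$ and $\mathcal{E}_{n+1}^i$ are simple $C^2$ functions with the derivative formulas \eqref{dfn}--\eqref{ddfn} valid, and that the remainder term $O(\delta_{n-1}^{-1})$ from Proposition \ref{gujin}(b) is genuinely dominated by the leading $\delta_{n-1}^{4}/(E_{n+1}^i-\mathcal{E}_{n+1}^i)$ term — this is guaranteed because $(E_{n+1}^i-\mathcal{E}_{n+1}^i)(\theta)=O(\delta_n^{1/2})\ll \delta_{n-1}^{5}$ by Proposition \ref{325n}(a), exactly as in the Subcase B analogue.
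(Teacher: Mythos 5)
Your proposal is correct and mirrors the paper's own proof of this statement essentially line by line: the same split on whether $E_{n+1}^i$ and $\mathcal{E}_{n+1}^i$ cross at $\theta_{n+1}^i$, the same $\theta_d$-decomposition in the non-crossing case using \eqref{2107}, the Taylor expansion with the second-derivative formulas from Proposition \ref{gujin}, and the mean-value bound from the monotonicity established in the proof of Theorem \ref{C1n}. One small slip: you quote $|\langle\psi_{n+1}^i,V'\Psi_{n+1}^i\rangle(\theta_{n+1}^i)|\geq\tfrac12\delta_{n-1}^2$ but then define $\theta_d$ by the stricter threshold $\geq\delta_{n-1}^2$, which could leave that interval empty; the paper (correctly) uses the matching threshold $\tfrac12\delta_{n-1}^2$, and with that adjustment the rest of your estimates go through unchanged.
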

\begin{proof}
	We  consider two cases.\smallskip\\
	\textbf{Case I.} $E_{n+1}^i(\theta_{n+1}^i)>\mathcal{E}_{n+1}^i(\theta_{n+1}^i)$. Then 
	$$\frac{d}{d\theta}E_{n+1}^i(\theta_{n+1}^i)=\frac{d}{d\theta}\mathcal{E}_{n+1}^i(\theta_{n+1}^i)=0$$
	and by \eqref{2107},
	$$
	|\langle\psi_{n+1}^i, V' \Psi_{n+1}^i\rangle(\theta_{n+1}^i)| \geq \frac{1}{2}\delta_{n-1}^{2}.
	$$
	Therefore, there must be a largest  interval $\theta_{n+1}^i \leq \theta \leq \theta_d$, where $|\langle\psi_{n+1}^i, V' \Psi_{n+1}^i\rangle(\theta)| \geq \frac{1}{2} \delta_{n-1}^{2} $. If $\theta$ is in this interval, then
	\begin{align*}
		(E_{n+1}^i-\mathcal{E}_{n+1}^i)(\theta)&= (E_{n+1}^i-\mathcal{E}_{n+1}^i)(\theta_{n+1}^i)\\
		&\ \  +\frac{d}{d \theta}(E_{n+1}^i-\mathcal{E}_{n+1}^i)(\theta_{n+1}^i)\cdot (\theta-\theta_{n+1}^i) \\
		&\ \ +\frac{1}{2} \frac{d^2}{d \theta^2}(E_{n+1}^i-\mathcal{E}_{n+1}^i)(\xi)\cdot (\theta-\theta_{n+1}^i)^2\\
		&\geq\frac{1}{2} \frac{d^2}{d \theta^2}(E_{n+1}^i-\mathcal{E}_{n+1}^i)(\xi)\cdot (\theta-\theta_{n+1}^i)^2.
	\end{align*}
	By \eqref{dfn} and \eqref{ddfn}, we have
	\begin{align*}
		\frac{d^2}{d \theta^2}(E_{n+1}^i-\mathcal{E}_{n+1}^i)(\xi) & =\frac{4\langle\psi_{n+1}^i, V' \Psi_{n+1}^i\rangle^2(\xi)}{(E_{n+1}^i-\mathcal{E}_{n+1}^i)(\xi)}+O(\delta_{n-1}^{-1}) \\
		& \geq\frac{\delta_{n-1}^4}{2(E_{n+1}^i-\mathcal{E}_{n+1}^i)(\theta)}\\
		&\geq\frac{2\delta_{n}^4}{(E_{n+1}^i-\mathcal{E}_{n+1}^i)(\theta)},
	\end{align*}
which implies
	$$
	(E_{n+1}^i-\mathcal{E}_{n+1}^i)(\theta) \geq  \frac{\delta_{n}^4 }{(E_{n+1}^i-\mathcal{E}_{n+1}^i)(\theta)}(\theta-\theta_{n+1}^i)^2
	$$
	and proves the theorem.
	We now consider the case when $\theta \geq \theta_d$. By the argument in the proof of  Theorem \ref{C1n} (\textbf{Case I}), we have 
	$$
	\frac{d}{d \theta} E_{n+1}^i \geq 10\delta_n^{1/2} \text{ and }  \frac{d}{d \theta} \mathcal{E}_{n+1}^i \leq-10\delta_n^{1/2}, 
	$$
	for $\theta\geq\theta_d$, which gives us 
	\begin{align*}
		(E_{n+1}^i-\mathcal{E}_{n+1}^i)(\theta) & =(E_{n+1}^i-\mathcal{E}_{n+1}^i)(\theta_d)+\frac{d}{d \theta}(E_{n+1}^i-\mathcal{E}_{n+1}^i)(\xi)\cdot (\theta-\theta_d) \\
		& \geq(E_{n+1}^i-\mathcal{E}_{n+1}^i)(\theta_d)+20 \delta_n^{1/2} (\theta-\theta_d)\\
		&\geq\delta_{n}^{2} (\theta_d-\theta_{n+1}^i)+20 \delta_n^{1/2} (\theta-\theta_d)\\
		&\geq\delta_n^2 (\theta-\theta_{n+1}^i).
	\end{align*}
	\textbf{Case II}. $E_{n+1}^i(\theta_{n+1}^i)= \mathcal{E}_{n+1}^i(\theta_{n+1}^i)$.
	In this case,   we have $\frac{d}{d \theta} E_{n+1}^i\geq10\delta_n^{1/2}$ and $\frac{d}{d \theta} \mathcal{E}_{n+1}^i\leq-10\delta_n^{1/2}$. Thus 
	\begin{align*}
		|(E_{n+1}^i-\mathcal{E}_{n+1}^i)(\theta)| & =|(E_{n+1}^i-\mathcal{E}_{n+1}^i)(\theta_{n+1}^i) +\frac{d}{d \theta}(E_{n+1}^i-\mathcal{E}_{n+1}^i)(\xi)\cdot (\theta-\theta_{n+1}^i)| \\
		& \geq 20 \delta_n^{1/2} |\theta-\theta_{n+1}^i|\geq \delta_n^2|\theta-\theta_{n+1}^i|.
	\end{align*}
\end{proof}

Finally, we estimate  Green's functions on $(n+1)$-good sets.  

\begin{thm}\label{ng}If $\Lambda$ is $(n+1)$-good, then for all $|\theta-\theta^*|<\delta_{n+1}/(10M_1), |E-E^*|<\delta_{n+1}/5$, 
\begin{align*}
	\|G_\Lambda(\theta; E)\|&\leq10\delta_{n+1}^{-1},\\
	|G_\Lambda(\theta;E)(x,y)|&<e^{-\gamma_{n+1}\|x-y\|_1}\ {\rm for}\ \|x-y\|_1\geq l_{n+1}^\frac{5}{6},
	\end{align*}
where $\gamma_{n+1}=(1-O(l_{n+1}^{-\frac{1}{30}}))\gamma_{n}.$
\end{thm}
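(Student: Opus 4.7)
The plan is to follow the three-step resolvent-identity scheme already used in the proof of Theorem \ref{1g}, replacing stage $1$ by stage $n+1$ throughout, and feeding in the inductive Green's function estimates at stage $n$ as the base ingredient.

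First, I would handle the case of a single $(n+1)$-regular block $\Lambda=B_{n+1}^i$. By definition $\operatorname{dist}(\sigma(H_{B_{n+1}^i}(\theta^*)),E^*)\geq \delta_{n+1}$, so a Neumann-series perturbation in $\theta-\theta^*$ and $E-E^*$ (using $|V'|\leq M_1$) yields $\|G_{B_{n+1}^i}(\theta;E)\|\leq 2\delta_{n+1}^{-1}$ on the stated domain. For off-diagonal decay, I would excise a small kernel $I_{n+1}^i$ of side $O(l_{n+1}^{2/3})$ centered at $c_{n+1}^i$ so that $B_{n+1}^i\setminus I_{n+1}^i$ is $n$-good; this is guaranteed by the square-root-kernel property of the block construction together with the inductive Center Theorem at stage $n$, which forbids any additional singular blocks of stage $\leq n$ from appearing outside $I_{n+1}^i$. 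One then applies the resolvent identity once to move the problem onto $B_{n+1}^i\setminus I_{n+1}^i$, uses the stage-$n$ off-diagonal estimate $|G_{B_{n+1}^i\setminus I_{n+1}^i}(x,y)|\leq e^{-\gamma_n\|x-y\|_1}$ there, and absorbs the loss $\delta_{n+1}^{-1}=e^{l_{n+1}^{2/3}}\ll e^{\gamma_n\|x-y\|_1}$ whenever $\|x-y\|_1\geq l_{n+1}^{5/6}$.

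Second, for a general $(n+1)$-good $\Lambda$, I would obtain the norm bound via Schur's test. Writing $P'_{n+1}=\{c_{n+1}^i\in P_{n+1}:\ B_{n+1}^i\subset \Lambda\}$ and $\Lambda'=\Lambda\setminus \bigcup_{c_{n+1}^i\in P'_{n+1}}I_{n+1}^i$, the set $\Lambda'$ is $n$-good. A single application of the resolvent identity then gives, for $x\notin\bigcup_{c_{n+1}^i\in P'_{n+1}}2I_{n+1}^i$, an inequality $\sum_y|G_\Lambda(x,y)|\leq C(d)\delta_n^{-1}+C(d)e^{-l_{n+1}^{2/3}}\sup_{z'}\sum_y|G_\Lambda(z',y)|$, while for $x\in 2I_{n+1}^i$ the Step~1 block estimate contributes a $\delta_{n+1}^{-2}$ term together with a similarly small tail. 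Taking the supremum in $x$ and absorbing the exponentially small factor yields $\sup_x\sum_y|G_\Lambda(x,y)|\leq 2\delta_{n+1}^{-2}$, which in particular shows $\operatorname{dist}(\sigma(H_\Lambda(\theta)),E)\geq \delta_{n+1}/5$ and hence $\|G_\Lambda(\theta;E)\|\leq 10\delta_{n+1}^{-1}$.

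Third, with the norm bound secured, the off-diagonal decay follows from the standard MSA iteration. For each $x\in\Lambda$ with $\|x-y\|_1\geq l_{n+1}^{5/6}$ I would choose $B_x$ to be either a small $n$-good cube of side $l_{n+1}^{1/2}$ around $x$ (if $x$ lies outside the enlarged kernels $\bigcup 2I_{n+1}^i$) or the enclosing $(n+1)$-regular block $B_{n+1}^i$ (if $x$ lies inside such a kernel). Iterating the resolvent identity moves $x_0=x$ through $x_{s+1}\in\partial B_{x_s}$, paying at each step a factor at most $C(d)l_{n+1}^de^{-\gamma_n\|x_s-x_{s+1}\|_1}$ by Step~1 in the regular-block case and by the stage-$n$ estimate in the $n$-good case. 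The iteration is terminated once $y\in B_{x_L}$, at which point a final application of the resolvent identity combined with the Step~1/stage-$n$ off-diagonal bound and the global norm bound $\|G_\Lambda\|\leq 10\delta_{n+1}^{-1}$ closes the argument; summing the multiplicative losses produces exactly $\gamma_{n+1}=(1-O(l_{n+1}^{-1/30}))\gamma_n$.

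The main obstacle, beyond bookkeeping, is the construction and verification that the kernels $I_{n+1}^i$ have the property that $B_{n+1}^i\setminus I_{n+1}^i$ is genuinely $n$-good. This requires that every $m$-singular block $B_m^j$ ($m\leq n$) intersecting $B_{n+1}^i$ be contained in the square-root-sized kernel $I_{n+1}^i$, which is exactly the content of geometric properties \textbf{(1)}--\textbf{(4)} of the block construction together with the Center Theorem at all previous stages (and the Diophantine condition, which excludes a third resonant site at scale $l_{n+1}$). Once that geometric input is in place, the three steps above are a direct generalization of the $n=0$ argument.
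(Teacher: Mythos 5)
Your proposal correctly identifies and follows the paper's three-step resolvent scheme, and Steps 1 and 2 match the paper's argument in all essentials. Step 3, however, contains a gap in the choice of the auxiliary block $B_x$ that the paper's proof is specifically designed to avoid.

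You set $B_x$ to be either a cube of side $l_{n+1}^{1/2}$ around $x$ (when $x$ lies outside $\bigcup 2I_{n+1}^i$) or the enclosing $(n+1)$-regular block (when $x$ lies inside some $2I_{n+1}^i$), and you assert that in the first case this cube is $n$-good. That assertion can fail. Being outside all $(n+1)$-kernels only says $x$ is far from the centers that survived into $P_{n+1}$; it says nothing about proximity to a regular block $B_m^j$ of intermediate stage $m\leq n$, whose center $c_m^j\notin Q_m$ dropped out of the hierarchy at stage $m$. If the minimal regular stage of some $c_0\in Q_0$ equals $m=n$, the block $B_n^j$ has diameter $l_n = l_{n+1}^{1/2}$ (in Case 1), so a cube of radius $l_{n+1}^{1/2}$ centered at an $x$ lying near the kernel $I_n^j$ (but still outside all $2I_{n+1}^i$) can contain that $c_0$ without containing $B_n^j$. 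In that situation there is no $m'$-regular block containing $c_0$ inside your cube, so the cube is not $n$-good, and the resolvent step breaks down because the off-diagonal bound you invoke does not apply.

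The paper's proof avoids this by a finer, adaptive choice of $B_x$: for $x$ outside the stage-$1$ kernels $\bigcup_{c_1^i\in P'_1}2I_1^i$ it uses the much smaller cube $\Lambda_{l_1^{1/2}}(x)\cap\Lambda$, which is automatically $0$-good because $Q_0$ is entirely contained in the stage-$1$ kernels; and for $x\in 2I_m^i$ it takes $B_x=B_m^i$, where $m\leq n+1$ is the \emph{first} stage at which $c_m^i\notin Q_m$. This way every intermediate regular scale is explicitly accounted for, and each $B_x$ is a set on which the inductive Green's function bounds are available. To repair your argument you would need to replace the single cube size $l_{n+1}^{1/2}$ with this stage-dependent assignment; as written, the claim that the cube is $n$-good is unjustified and, in general, false.
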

\begin{proof} The proof is similar to  that of Theorem \ref{1g}, which can be established via three key steps. 

First,  we consider the case when $\Lambda=B_{n+1}^i$ is a $(n+1)$-regular block. By the definition of $(n+1)$-regular, we have 
	\begin{equation*}
		\|G_{B_{n+1}^i}(\theta^*;E^*)\|\leq\delta_{n+1}^{-1}.
	\end{equation*}
	So  by the Neumann series argument, for $|\theta-\theta^*|<\delta_{n+1}/(10M_1)$ and  $|E-E^*|<\frac{2}{5}\delta_{n+1}$, 
	\begin{equation*}\label{L2n+1}
		\|G_{B_{n+1}^i}(\theta;E)\|\leq2\delta_{n+1}^{-1}.
	\end{equation*}
For convenience, we omit the dependence of Green functions on  $\theta$ and $ E$. Let $x,y\in B_{n+1}^i$ satisfy $\|x-y\|_1\geq l_{n+1}^\frac{4}{5}$. Since $G_{B_{n+1}^i}$ is self-adjoint,  we may  assume $\|x-c_{n+1}^i\|_1\geq l_{n+1}^\frac{3}{4}$. Let $I_{n+1}^i$ be a $l_{n+1}^\frac{2}{3}$-size block  centered at $c_{n+1}^i$ such that $A=B_{n+1}^i\setminus I_{n+1}^i$ is $n$-good.  Hence by induction hypothesis,  we have 
\begin{align*}\label{L2n} 
	\|G_A\|&\leq10\delta_n^{-1},\\
|G_A(x,y)|&\leq e^{-\gamma_n\|x-y\|_1}\ \text{for $\|x-y\|_1\geq l_n^\frac{5}{6}$.}
\end{align*}
Using the resolvent identity, we obtain 
	\begin{align*}
		|G_{B_{n+1}^i}(x,y)|&=|G_{A}(x,y)\chi_A(y)+\sum_{z,z'}G_{A}(x,z)\Gamma_{z,z'}G_{B_{n+1}^i}(z',y)|\\
		&\leq e^{-\gamma_n\|x-y\|_1}+C(d)\sup_{z,z'}e^{-\gamma_n\|x-z\|_1}|G_{B_{n+1}^i}(z',y)|\\
		&\leq e^{-\gamma_n\|x-y\|_1}+C(d)\sup_{z,z'}e^{-\gamma_n\|x-z\|_1}e^{-\gamma_n(\|z'-y\|_1-l_{n+1}^\frac{3}{4})}\delta_{n+1}^{-1}\\
		&\leq e^{-\gamma'_n\|x-y\|_1}
	\end{align*}
	with  $\gamma'_{n}=(1-O(l_{n+1}^{-\frac{1}{30}}))\gamma_{n}$, where we have used if $\|z'-y\|\leq l_{n+1}^\frac{3}{4}$, 
	\begin{equation*}
		|G_{B_{n+1}^i}(z',y)|\leq 	\|G_{B_{n+1}^i}\|\leq2\delta_{n+1}^{-1}\leq 2e^{-\gamma_n(\|z'-y\|_1-l_{n+1}^\frac{3}{4})}\delta_{n+1}^{-1}, 
	\end{equation*}
and if $\|z'-y\|\geq l_{n+1}^\frac{3}{4}$, 
	\begin{align*}
	|G_{B_{n+1}^i}(z',y)|=	|G_{B_{n+1}^i}(y,z')|&\leq 	\sum_{w,w'}|G_{A}(y,w)\Gamma_{w,w'}G_{B_{n+1}^i}(w',z')|\\
	&\leq C(d)e^{-\gamma_n\|y-w\|_1}\|G_{B_{n+1}^i}\| \\
	&\leq C(d)	e^{-\gamma_n(\|z'-y\|_1-l_{n+1}^\frac{3}{4})}\delta_{n+1}^{-1},
\end{align*}
 and  $\delta_{n+1}^{-1}=e^{l_{n+1}^\frac{2}{3}}\ll e^{\gamma_{n}\|x-y\|_1}$ to bound the second term. 
 
 Second,  we establish the upper bound on the norm of Green's functions restricted to general   $(n+1)$-good set.  Now  assume $\Lambda$ is an arbitrary $(n+1)$-good set.  Thus, all the $(n+1)$- stage blocks $B_{n+1}^i$ inside $\Lambda$ are $(n+1)$-regular. We must show that $G_\Lambda$ exists. By the Schur's test Lemma, it suffices to show 
	\begin{equation}\label{Schurn+1}
		\sup_x\sum_{y}|G_\Lambda(\theta;E+i0)(x,y)|<C<\infty.
	\end{equation}
	Denote $P'_{n+1}=\{c_{n+1}^i\in P_{n+1}:\ B_{n+1}^i\subset\Lambda\} \text{ and } \Lambda'=\Lambda\setminus(\cup_{c_{n+1}^i\in P'_{n+1}} I_{n+1}^i)$. Since $\Lambda$ is $(n+1)$-good, one can check that $\Lambda'$ is $n$-good.  For $x\in \Lambda\setminus(\cup_{c_{n+1}^i\in P'_{n+1}} 2I_{n+1}^i)$ ($2I_{n+1}^i$ is a $2l_{n+1}^\frac{2}{3}$-size block  centered at $c_{n+1}^i$), we have 
	\begin{align*}
		\sum_y|G_\Lambda(x,y)|&\leq \sum_y|G_{\Lambda'}(x,y)|+\sum_{z,z',y}|G_{\Lambda'}(x,z)\Gamma_{z,z'}G_{\Lambda}(z',y)|\\
		&\leq	C(d)\delta_n^{-2}+	C(d)e^{-l_{n+1}^\frac{2}{3}}\sup_{z'}\sum_y|G_{\Lambda}(z',y)|.
	\end{align*}
	For $x\in 2I_{n+1}^i$, we have 
	\begin{align*}
		\sum_y|G_\Lambda(x,y)|&\leq \sum_y|G_{B_{n+1}^i}(x,y)|+\sum_{z,z',y}|G_{B_{n+1}^i}(x,z)\Gamma_{z,z'}G_{\Lambda}(z',y)|\\
		&\leq \delta_{n+1}^{-2}+C(d)e^{-\frac{1}{2}l_{n+1}}\sup_{z'}\sum_y|G_{\Lambda}(z',y)|.
	\end{align*}
	By taking supremum in $x$, we get $$\sup_x\sum_y|G_\Lambda(x,y)|\leq \delta_{n+1}^{-2}+\frac{1}{2}\sup_x\sum_y|G_\Lambda(x,y)|,$$
	which gives \eqref{Schurn+1}.
	Thus it follows that for $|\theta-\theta^*|<\delta_{n+1}/(10M_1)$ and $|E-E^*|<\frac{2}{5}\delta_{n+1}$, $G_\Lambda(\theta;E)$ exists, from which we get  $\operatorname{dist}(\sigma(H_\Lambda(\theta)),E^*)\geq \frac{2}{5}\delta_{n+1}$.  Hence $\operatorname{dist}(\sigma(H_\Lambda(\theta)),E)\geq \frac{1}{5}\delta_{n+1}$ for $|E-E^*|<\frac{1}{5}\delta_{n+1}$, giving  the desired  bound 
	$$\|G_\Lambda(\theta;E)\|=\frac{1}{\operatorname{dist}(\sigma(H_\Lambda(\theta)),E)}\leq10\delta_{n+1}^{-1}.$$

	Finally,  we use the above upper bound on norms of Green's functions  and iteration of the resolvent identity  to prove the off-diagonal decay of Green's function. 
	Let $x,y\in \Lambda$ such that $\|x-y\|\geq l_{n+1}^\frac{5}{6}$. We define 
	\[B_x=\left\{\begin{aligned}
		&\Lambda_{\l_1^\frac{1}{2}}(x)\cap\Lambda  \quad \text{if }   x\in \Lambda\setminus\cup_{c_1^i\in P'_1} 2I_1^i,  \\
		&B_m^i\ \quad \text{if }  x \in2I_m^i,\text{ $m\leq n+1$ is the first stage such that $c_m^i\notin Q_m$}. \ 
	\end{aligned}\right. \]
	The set $B_x$ has the following two properties: \textbf{(1).} $B_x$ is $m$-good for some $0\leq m\leq n+1$; \textbf{(2).} The $x$ is close to the center of $B_x$ and away from its relative boundary with $\Lambda$. We can iterate the resolvent identity  to obtain 
	\begin{align}
		|G_\Lambda(x,y)|&\leq\prod_{s=0}^{L-1} (C(d) l_{m_s}^d e^{-\gamma_{m_s-1}'\|x_{s}-x_{s+1}\|_1})|G_\Lambda(x_L,y)|\nonumber \\
		&\leq e^{-\gamma_n''\|x-x_L\|_1}|G_\Lambda(x_L,y)|, \label{728n}
	\end{align}
	where $x_0:=x$, $B_{x_s}$ is a  $0$-good set  or  a regular  block of stage $m_s$ and   $x_{s+1}\in \partial B_{x_{s}}.$ Thus $\|x_{s}-x_{s+1}\|_1\geq \frac{1}{2}l_{m_s}$. We stop the iteration until  $y\in B_{x_L}$. 
	Using the resolvent identity again, we get 
	\begin{align}
		|G_\Lambda(x_L,y)|&\leq|G_{B_{x_L}}(x_L,y)|+\sum_{z,z'}|G_{B_{x_L}}(x_L,z)\Gamma_{z,z'}G_{\Lambda}(z',y)|\nonumber\\
		&\leq C(d) e^{-\gamma_n'(\|x_L-y\|_1-l_{n+1}^\frac{4}{5})}\delta_{n+1}^{-1},\label{728.n}
	\end{align}
	where we have used the exponential  off-diagonal decay of $G_{B_{x_L}}$ and the  estimate  $\|G_{\Lambda}\|\leq10\delta_{n+1}^{-1}$. So combining \eqref{728n} and \eqref{728.n} gives the desired off-diagonal estimate
	$$|G_\Lambda(x,y)|\leq e^{-\gamma_{n+1}\|x-y\|_1}$$
	with $\gamma_{n+1}=(1-O(l_{n+1}^{-\frac{1}{30}}))\gamma_n$. 
\end{proof}
\section{Arithmetic  version of  Anderson localization}
In this section, we will finish the proof of Theorem \ref{AL} by using Green's function estimates. 
\begin{proof}[Proof of Theorem \ref{AL}]\label{sec3}
Let $\varepsilon_0$ be small enough such that Theorem \ref{mainthm} holds true.  Fix $\theta^*\not\in\Theta$. Let $E^*$ be a generalized eigenvalue of $H(\theta^*)$ and $\psi\neq0$ be the corresponding generalized eigenfunction satisfying $|\psi(x)|\leq(1+\|x\|_1)^d.$ 	 From Schnol's theorem, it suffices to show $\psi$  decays exponentially. For this purpose, note first  there exists (since $\theta^*\notin \Theta$) some $n_1\geq1$ such that
	\begin{equation}\label{xiajie.}
		\left\| 2 \theta^*+x\cdot \omega \right\|>\|x\|_1^{-d-2} 
	\end{equation}
for  $x\in \Z^d$ satisfying $\|x\|_1 \geq l_{n_1}.$	
We claim that there exists some $n_2\geq1$ such that for all $n \geq n_2$, 
\begin{equation}\label{2148}
	\Lambda_{100l_{n}}\bigcap\left( \bigcup_{c_n^i\in Q_n} B_n^i\right)\neq\emptyset.
\end{equation}
Otherwise,  there exists a subsequence  $n_r\to +\infty$ such that 
\begin{equation}\label{2114}
\Lambda_{100l_{n_r}}\bigcap\left( \bigcup_{c_{n_r}^i\in Q_{n_r}} B_{n_r}^i\right)=\emptyset.
\end{equation}
By the result of Appendix \ref{AC}, there exists $\Lambda\subset \Z^d $ such that 
\begin{equation}\label{2120}
\Lambda_{50l_{n_r}}\subset\Lambda\subset \Lambda_{100l_{n_r}}\text{ and }	B_m^i\cap \Lambda\neq\emptyset\Rightarrow B_m^i\subset \Lambda\ \text{ for $1\leq m\leq n_r$} .
\end{equation}
Let  $G_\Lambda=G_\Lambda(\theta^*;E^*)=(H_\Lambda(\theta^*)-E^*)^{-1}$.  From \eqref{2114} and \eqref{2120}, we deduce that $\Lambda$ is $n_r$-good. Thus if $\|x\|_1\leq l_{n_r} $, 
we have $$|\psi(x)|\leq\sum_{z,z'}|G_\Lambda(x,z)\Gamma_{z,z'} \psi(z')|\leq C(d)l_{n_r}^{2d}e^{-\frac{1}{2}\gamma_0l_{n_r}},$$
where we use  $\operatorname{dist}(x,\partial\Lambda)>l_{n_r}$ and the exponential off-diagonal decay of  $G_\Lambda$.
Taking $r\to \infty$ yields $\psi=0$, which contradicts the assumption $\psi\neq0$. Hence we prove the claim.
Recalling again Appendix \ref{AC}, there exists $X_n$  such that,
  $$\Lambda_{4l_{n+2}}\setminus\Lambda_{l_{n+1}}\subset X_n\subset \Lambda_{4l_{n+2}+50l_n}\setminus\Lambda_{l_{n+1}-50l_n}$$
  and 
$$B_m^i\cap A_n\neq\emptyset\Rightarrow B_m^i\subset A_n\ \text{ for $1\leq m\leq n$} .$$
Denote  $Y_n=\Lambda_{3l_{n+2}}\setminus\Lambda_{2l_{n+1}}$.  Then  for $\|x\|>\max(2l_{n_1+1},2l_{n_2+1})$, there exists $n\geq\max(n_1,n_2)$ such that $x\in Y_n$. Recall that \eqref{2148} holds for this $n$, i.e.,  $B_n^i\cap\Lambda_{100l_{n}}\neq\emptyset$ for some $c_n^i\in Q_n$.  So if there exists some $B_n^j\ (c_n^j\in Q_n)$ such that $B_n^j\subset A_n$,  then the {\bf Center Theorem}  shows 
\begin{equation}\label{centeral}
	m(c_n^i,c_n^j):=\min(\|(c_n^i-c_n^j)\cdot \omega\|,\|2\theta^*+(c_n^i+c_n^j)\cdot \omega\|)\leq2\delta_n^{1/2}.
\end{equation}
We will  prove that \eqref{centeral} contradicts  \eqref{xiajie.}. By the Diophantine  condition of $\omega$, we have 
$$ \|(c_n^i-c_n^j)\cdot \omega\|\geq \frac{\gamma}{\|c_n^i-c_n^j\|_1^\tau}\geq\frac{\gamma}{(5l_{n+2})^\tau}>2\delta_n^{1/2}.$$
Thus, if \eqref{centeral} holds, we must have
 \begin{equation}\label{mao}
	\|2\theta^*+(c_n^i+c_n^j)\cdot \omega\|\leq2\delta_n^{1/2}.
\end{equation}
We note that $\|c_n^i+c_n^j\|_1\geq\|c_n^j\|_1-\|c_n^i\|_1\geq l_{n+1}-200l_n>l_n$. Thus \eqref{xiajie.} gives 
 $$	\left\| 2 \theta^*+(c_n^i+c_n^j)\cdot \omega \right\|\geq\|c_n^i+c_n^j\|_1^{-d-2}\geq (5l_{n+1})^{-d-2},$$
 which contradicts  \eqref{mao}. So,  there is no singular block $B_n^j$ contained in  $ X_n$, namely,  $X_n$ is $n$-good and the Green's function estimates hold true.
 Recalling $x\in Y_n$, one has  $\operatorname{dist}(x,X_n)\geq \|x\|_1/5 \geq l_n$. Thus, we obtain 
 \begin{align}\label{shuai}
 	\begin{split}
	|\psi(x)|&\leq\sum_{z,z'}|G_{X_n}(x,z)\Gamma_{z,z'} \psi(z')|\\
&\leq C(d)l_{n+2}^{2d}e^{-\frac{1}{10}\gamma_0\|x\|_1}.\\
&\leq e^{-\frac{1}{20}\gamma_0\|x\|_1},
 	\end{split}
 \end{align}
 which proves  the exponential decay of $|\psi(x)|$ for  $\|x\|_1>\max(2l_{n_1+1},2l_{n_2+1})$.
 \end{proof}

\section{Dynamical localization}\label{sec4}
In this section, we will prove Theorem \ref{t1} about the dynamical localization. 
\begin{proof}[Proof of Theorem \ref{t1}]
Let $\varepsilon_0$ be small enough such that Theorem \ref{mainthm} holds true. 	Since Anderson localization  holds  for $\theta\in \Theta_A$ by  Theorem \ref{AL}, let $\{\varphi_\alpha,E_\alpha\}_{\alpha\in\N}$ denote a complete set of eigenstates and corresponding eigenvalues of $H(\theta)$. For simplicity, we omit the dependence of $H(\theta)$ on $\theta$. Then 
	$${\bm e}_0=\sum_\alpha\varphi_\alpha(0)\varphi_\alpha$$
	and hence 
	$$e^{itH}{\bm e}_0=\sum_\alpha e^{itE_\alpha}\varphi_\alpha(0)\varphi_\alpha.$$
Thus, it is sufficient to estimate
	\begin{equation}\label{3}
		\sum_\alpha\left(\sum_x (1+\|x\|_1)^q |\varphi_\alpha(x)|\right) |\varphi_\alpha(0)|.
	\end{equation}
	Let $I_0=\emptyset$ and $I_j=\{\alpha:\ |\varphi_\alpha(0)|>e^{-\gamma_0l_j}\}\  (j\geq1).$
	Then \begin{equation}\label{kuai}
		\eqref{3}=\sum_{j=1}^{+\infty}\sum_{\alpha\in I_j\setminus I_{j-1}}\left(\sum_x (1+\|x\|_1)^q |\varphi_\alpha(x)|\right) |\varphi_\alpha(0)|.
	\end{equation}
We claim that 	for $\alpha\in I_j$ and $n\geq j$,  
	\begin{equation}\label{jiao}
		\Lambda_{100l_n}\cap\left(\bigcup_{c_n^i\in Q_n}B_n^i\right)\neq \emptyset.
	\end{equation}
Otherwise, there exists some $n$-good set $\Lambda$ such that $\Lambda_{50l_n}\subset\Lambda\subset\Lambda_{100l_n}$. Then we get a contradiction of 
 \begin{align*}
 	|\varphi_\alpha(0)|&\leq \sum_{(z,z')\in \partial \Lambda} | G_{\Lambda}(0, z) \varphi_\alpha(z')|<e^{-\gamma_0l_n}\leq e^{-\gamma_0l_j}.
 \end{align*}
Assume \begin{equation}\label{AAsu}
	\delta_m^{1/4}<A\leq\delta_{m-1}^{1/4}\ (\delta_{-1}:=+\infty).
\end{equation} Then by  \eqref{D1} and $\omega\in{\rm DC}_{\tau,\gamma}$,  we have  for $n\geq m$,    $x\in \Lambda_{100l_n}$ and $ x'\in \Lambda_{4l_{n+2}+50l_n}\setminus\Lambda_{l_{n+1}-50l_n}$,
\begin{align}\label{yuansu}
	\nonumber m(x,x')&=\min(\|(x-x')\cdot \omega\|,\|(x+x')\cdot \omega+2\theta\|)\\
	&\geq\min\left(\frac{\gamma}{(5l_{n+2})^\tau},\frac{A}{(5l_{n+2})^{d+1}}\right)>2\delta_n^{1/2}.
\end{align}  If $\alpha \in I_j$, then \eqref{jiao} holds for $n\geq j$.  Thus by \eqref{yuansu} and the {\bf Center Theorem}, for $n\geq\max(m,j)$, there is no singular block of  the $n$-th generation  inside  $\Lambda_{4l_{n+2}}\setminus\Lambda_{l_{n+1}}$, which  proves   
$	|\varphi_\alpha(x)|\leq e^{-\frac{1}{20}\gamma_0\|x\|_1}$ for $\|x\|_1\geq\max(2l_{m+1},2l_{j+1})$ (the proof is the same as that of \eqref{shuai}).
 From the Hilbert-Schmidt argument, we have  \begin{align*}
	C(d)l_{\max(m,j)+1}^d&\geq\sum_{\|x\|_1\leq2l_{\max(m,j)+1}}\sum_\alpha|\varphi_\alpha(x)|^2\\
	&\geq\sum_{\alpha\in I_j}\sum_{\|x\|_1\leq2l_{\max(j,m)+1}}|\varphi_\alpha(x)|^2\\
	&=\#I_j\sum_{\alpha\in I_j}\sum_{\|x\|_1>2N_{\max(j,m)+1}}|\varphi_\alpha(x)|^2\\
	&\geq\frac{1}{2}\#I_j.
\end{align*}
Thus   $\#I_j\leq C(d)l_{\max(j,m)+1}^{d}$. 

To estimate \eqref{kuai}, using 	$|\varphi_\alpha(x)|\leq e^{-\frac{1}{20}\gamma_0\|x\|_1}$ for $\alpha \in I_m$ and  $\|x\|_1\geq2l_{m+1}$, we get 
\begin{align}\label{jizhong}
\nonumber&\ \ \ \sum_{j=1}^{m}\sum_{\alpha\in I_j\setminus I_{j-1}}\left(\sum_x(1+\|x\|_1)^q |\varphi_\alpha(x)|\right) |\varphi_\alpha(0)|\\
\nonumber&\leq\sum_{\alpha\in I_m}\left(\sum_x(1+\|x\|_1)^q |\varphi_\alpha(x)|\right)\nonumber\\
	&\leq\#I_m\sup_{\alpha\in I_m}\left(\sum_{\|x\|_1\leq2l_{m+1}}+\sum_{\|x\|_1>2l_{m+1}}\right)(1+\|x\|_1)^q |\varphi_\alpha(x)|\nonumber\\
	&\leq C({q,d})l_{m+1}^{q+2d	}.
\end{align}
Using 	$|\varphi_\alpha(x)|\leq e^{-\frac{1}{20}\gamma_0\|x\|_1}$ for $ j\geq m,\alpha \in I_j$ and $\|x\|_1\geq2l_{j+1}$, we get 
\begin{align*}
&\ \ \ \sum_{\alpha\in I_j\setminus I_{j-1}}\left(\sum_x (1+\|x\|_1)^q |\varphi_\alpha(x)|\right) |\varphi_\alpha(0)|\\
	&\leq\#I_j\sup_{\alpha\in I_j}\left(\sum_{\|x\|_1\leq2l_{j+1}}+\sum_{\|x\|_1>2l_{j+1}}\right)e^{-\gamma_0l_{j-1}}\\
	&\leq C_{q,d}l_{j+1}^{q+2d	}e^{-\gamma_0l_{j-1}},
\end{align*}
where $l_0:=0$. Summing  up $j$ for $j\geq m+1$ gives  
\begin{align}\label{zai}
		\nonumber &\ \ \ \sum_{j=m+1}^{\infty}\sum_{\alpha\in I_j\setminus I_{j-1}}\left(\sum_x (1+\|x\|_1)^q |\varphi_\alpha(x)|\right) |\varphi_\alpha(0)|\\
		&\leq\left\{\begin{aligned}
	& C({q,d})e^{-\frac{\gamma_0}{2}l_m} &\text{if $m\geq1$,}\\
	&C({q,d})l_2^{q+2d} &\text{if $m=0$.}
\end{aligned}\right. 
\end{align}
From \eqref{jizhong} and \eqref{zai}, we obtain 
\begin{align*}
	\eqref{kuai}&\leq C({q,d})\max(l_{m+1}^{q+2d},l_2^{q+2d})\\
	&\leq C({q,d})\max(|\log \max(A,1)|^{12(q+2d)},|\log\varepsilon_0|^{12(q+2d)}),
\end{align*}
where we use  \eqref{AAsu}  (i.e.,  $A\leq\delta_{m-1}^{1/4}$), which implies  $|\log\delta_{m-1}|\leq4|\log A|$ and $l_{n+2}\leq l_n^8$. 

Hence, we finish the proof of the dynamical localization. It remains to prove the strong dynamical localization. For this, recalling \eqref{D1}, then  taking integration leads  to 
		\begin{align*}
			&\ \ \ \left(\int_{\Theta_{\delta_0}}	+\sum_{n=1}^{+\infty}	\int_{\Theta_{\delta_n}\setminus\Theta_{\delta_{n-1}}}\right)\sup_{t\in \mathbb{R}}\sum_{x\in \mathbb{Z}^d}(1+\|x\|_1)^q|\langle e^{itH(\theta)}{\bm e}_0, {\bm e}_x\rangle| d\theta\\
			&\leq C({q,d})\left(|\log\varepsilon_0|^{12(q+2d)}+\sum_{n=1}^{+\infty}|\log\delta_n|^{12(q+2d)}\delta_{n-1}\right)\\
			&<+\infty,
		\end{align*}
	which concludes proof. 
\end{proof}
\section{H\"older continuity of the IDS}\label{sec5}

In this section, we prove Theorem \ref{thm2}.

\begin{proof}[Proof of Theorem \ref{thm2}]
	Let $\varepsilon_0$ be small enough such that Theorem \ref{mainthm} holds true. Fix $\theta^* \in \T ,E^*\in \R$ and $\eta>0$. We are going to estimate  the number of eigenvalues of  $H_\Lambda(\theta^*)$ belonging to $[E^*-\eta,E^*+\eta].$
 For this purpose, we first introduce a useful lemma which  connects the  $L^2$ bound of Green's function  with the numbers of eigenvalues of  the self-adjoint operator inside a certain interval $[E^*-\eta,E^*+\eta].$
 \begin{lem}\label{IDSL}
 	Let $H$ be a  self-adjoint operator on $\Z^{d}$  and $\Lambda\subset \Z^d$  be a finite set. Assume there exists some  $\Lambda'\subset\Z^d$ such that $\#(\Lambda\setminus\Lambda')+\#(\Lambda'\setminus\Lambda)\leq M$ and $\|G_{\Lambda'}(E^*)\|\leq(2\eta)^{-1}$,  where $G_{\Lambda'}(E^*)=(H_{\Lambda'}-E^*)^{-1}$. Then the number of eigenvalues of $H_\Lambda$  inside $[E^*-\eta,E^*+\eta]$ is at most $3M$. 
  \end{lem}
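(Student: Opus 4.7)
The plan is to reduce the statement to Cauchy's interlacing inequality applied to principal submatrices. The key observation is that the norm bound $\|G_{\Lambda'}(E^*)\|\leq(2\eta)^{-1}$ is equivalent to $\operatorname{dist}(\sigma(H_{\Lambda'}),E^*)\geq 2\eta$, so $H_{\Lambda'}$ has \emph{no} eigenvalues inside the interval $(E^*-2\eta,E^*+2\eta)$, and in particular none inside $[E^*-\eta,E^*+\eta]$. The strategy is then to compare both $H_\Lambda$ and $H_{\Lambda'}$ to the common principal submatrix indexed by $\tilde\Lambda=\Lambda\cap\Lambda'$, and combine two applications of interlacing.

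First, I would set $\tilde\Lambda=\Lambda\cap\Lambda'$ and observe that $\#(\Lambda\setminus\tilde\Lambda)+\#(\Lambda'\setminus\tilde\Lambda)=\#(\Lambda\setminus\Lambda')+\#(\Lambda'\setminus\Lambda)\leq M$. Since $\tilde\Lambda\subset\Lambda$ and $\tilde\Lambda\subset\Lambda'$, and since $H_\Sigma=R_\Sigma H R_\Sigma$ for any $\Sigma$, the operator $H_{\tilde\Lambda}$ is a principal submatrix of both $H_\Lambda$ and $H_{\Lambda'}$, with codimensions $m_1:=\#(\Lambda\setminus\tilde\Lambda)$ and $m_2:=\#(\Lambda'\setminus\tilde\Lambda)$ respectively. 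No level-crossing or spectral geometry is needed here; this is purely a dimension-counting remark.

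Next, I would invoke Cauchy's interlacing theorem in its standard form: if $B$ is a principal submatrix of a finite Hermitian matrix $A$ with codimension $k$, then for any interval $I\subset\R$ one has $|N(A,I)-N(B,I)|\leq k$, where $N(\cdot,I)$ denotes the number of eigenvalues in $I$ counted with multiplicity. Applying this twice yields
\begin{align*}
N(H_\Lambda,[E^*-\eta,E^*+\eta])&\leq N(H_{\tilde\Lambda},[E^*-\eta,E^*+\eta])+m_1\\
&\leq N(H_{\Lambda'},[E^*-\eta,E^*+\eta])+m_1+m_2\\
&\leq 0+M\leq 3M,
\end{align*}
where the last equality uses Step~1. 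The factor $3$ is generous; in fact $M$ suffices, but the weaker bound is all that is needed downstream for the H\"older continuity argument.

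There is essentially no serious obstacle: the statement is a soft min-max lemma, and the only mild subtlety is the observation that $\tilde\Lambda$ sits inside both $\Lambda$ and $\Lambda'$ as an honest principal submatrix (so that Cauchy's theorem applies directly, rather than via a rank-$M$ Hermitian perturbation argument, which would give the same conclusion but less cleanly). If one instead preferred a perturbation-theoretic proof, one could embed everything into $\ell^2(\Lambda\cup\Lambda')$ and note that the difference between the two extensions has rank at most $M$, then use Weyl's inequality; but the Cauchy interlacing route via $\tilde\Lambda$ avoids any need to handle extra zero eigenvalues and is the shortest path.
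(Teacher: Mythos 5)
Your proof is correct, and it takes a genuinely different route from the paper. The paper does not pass to a common principal submatrix or invoke interlacing at all. Instead it works directly with the normalized eigenfunctions $\xi_1,\dots,\xi_L$ of $H_\Lambda$ with eigenvalues in $[E^*-\eta,E^*+\eta]$: writing $T=H-E^*$ and using the restriction-operator identity $R_{\Lambda\cap\Lambda'}TR_\Lambda=R_{\Lambda'}TR_{\Lambda'}+R_{\Lambda'}TR_{\Lambda\setminus\Lambda'}-R_{\Lambda'\setminus\Lambda}TR_\Lambda$ on vectors supported in $\Lambda$, it multiplies by $G_{\Lambda'}(E^*)$ and uses $\|G_{\Lambda'}(E^*)\|\le(2\eta)^{-1}$ to show that each $R_{\Lambda'}\xi_l$ lies within distance $1/2$ of a fixed subspace $\mathcal H$ of dimension at most $M$ (the range of $G_{\Lambda'}(E^*)$ applied to the rank-$\le M$ operator built from $R_{\Lambda\setminus\Lambda'}$ and $R_{\Lambda'\setminus\Lambda}$). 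Summing $\|\xi_l\|^2=1$ then gives $L\le L/4+\dim\mathcal H+\#(\Lambda\setminus\Lambda')\le L/4+2M$, hence $L\le 8M/3<3M$. Your interlacing proof via $\tilde\Lambda=\Lambda\cap\Lambda'$ is shorter, more standard, and produces the sharper constant $M$ rather than $3M$; it also avoids the bookkeeping with restriction operators. The paper's version, while slightly heavier, is self-contained (no appeal to interlacing) and in spirit parallels the trial-function lemma of its Appendix A, which may explain the authors' choice. Both arguments are valid, and the constant $3$ in the statement is immaterial for the downstream H\"older estimate, as you observe.
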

 	\begin{proof}
 	Denote $T=H-E^*$.	Let $\{\xi_l\}_{l=1}^L$ be the orthonormal eigenfunctions of $H_\Lambda$ with corresponding eigenvalues belonging to $[E^*-\eta,E^*+\eta]$. Then for every $\xi\in \{\xi_l\}_{l=1}^L$, we have $\|T_{\Lambda}\xi\|\leq\eta$  and then 
 	\begin{align*}
 		\eta\geq\|R_{\Lambda\cap\Lambda'}TR_\Lambda\xi\|&=\|(R_{\Lambda'}-R_{\Lambda'\setminus\Lambda})TR_\Lambda\xi\|\\
 		&=\|R_{\Lambda'}TR_{\Lambda'}\xi+R_{\Lambda'}TR_{\Lambda\setminus\Lambda'}\xi-R_{\Lambda'\setminus\Lambda}TR_\Lambda\xi\|.
 	\end{align*}
 Using $\|G_{\Lambda'}(E^*)\|\leq(2\eta)^{-1}$, we obtain
\begin{equation}\label{752}
	 \|R_{\Lambda'}\xi+G_{\Lambda'}(E^*)(R_{\Lambda'}TR_{\Lambda\setminus\Lambda'}\xi-R_{\Lambda'\setminus\Lambda}TR_\Lambda)\xi\|\leq1/2.
\end{equation}
 	Denote $\mathcal{H}=\operatorname{Range}G_{\Lambda'}(E^*)(R_{\Lambda'}TR_{\Lambda\setminus\Lambda'}\xi-R_{\Lambda'\setminus\Lambda}TR_\Lambda)$. Thus \begin{equation}\label{809}
 		\operatorname{dim}\mathcal{H}\leq \operatorname{Rank}R_{\Lambda\setminus\Lambda'}+\operatorname{Rank}R_{\Lambda'\setminus\Lambda}\leq M.
 	\end{equation}
 	From \eqref{752}, we deduce 
\begin{equation}\label{808}
	 \|R_{\Lambda'}\xi\|^2-\|P_\mathcal{H}R_{\Lambda'}\xi\|^2=\|P_\mathcal{H}^\perp R_{\Lambda'}\xi\|^2\leq1/4.
\end{equation}
 	Hence, from \eqref{809} and \eqref{808}, we get 
 	\begin{align*}
 		L=\sum_{l=1}^L\|\xi_l\|^2	&=\sum_{l=1}^L\|R_{\Lambda'}\xi_l\|^2+\sum_{l=1}^L\|R_{\Lambda\setminus\Lambda'}\xi_l\|^2\\
 		&\leq L/4+\sum_{l=1}^L\|P_\mathcal{H}R_{\Lambda'}\xi_l\|^2+\sum_{l=1}^L\|R_{\Lambda\setminus\Lambda'}\xi_l\|^2\\
 		&\leq L/4+\operatorname{dim}\mathcal{H}+\#(\Lambda\setminus\Lambda')\\
 		&\leq L/4+2M,
 	\end{align*}
which concludes the proof.
	  	\end{proof}
	Now, let  $N$ be  sufficiently large depending on $\eta$.	For $\eta>\delta_0^{20}=\varepsilon_0$, define $$Q_\eta=\{x\in \Lambda_N:\ |v(\theta^*+x\cdot \omega)-E^*|\leq(2d+2) \eta\}.$$ 
Thus for any $x,x'\in Q_\eta$, we have  since  \eqref{wh}
$$\min(\|(x-x')\cdot \omega\|, \|2\theta^*+(x+x')\cdot \omega)\|) \leq C(d)\eta^{1/2}.$$
From the uniform distribution of $\{x\cdot \omega\}_{x\in \Z^d}$, we deduce that  $\#Q_\eta\leq C(d)\eta^{1/2}\#\Lambda_N$. Denote $\Lambda'=\Lambda_N\setminus Q_\eta$. Then 
 $$\|G_{\Lambda'}(E^*;\theta^*)\|\leq((2d+2)\eta-2d\varepsilon)^{-1}\leq(2\eta)^{-1}.$$
By Lemma \ref{IDSL}, $H_{\Lambda_N}(\theta^*)$ has at most $C(d)\eta^{1/2}\#\Lambda_N$ eigenvalues in $[E^*-\eta,E^*+\eta]$. Thus  $$\mathcal{N}_{\Lambda_N}(E^*+\eta;\theta^*)-\mathcal{N}_{\Lambda_N}(E^*-\eta;\theta^*)\leq C(d)\eta^{1/2}.$$

Next,  we consider the case when  $\delta_s^{20}\geq\eta\geq\delta_{s+1}^{20}$ for some $s\geq0$. By result of Appendix \ref{AC}, we can find  $\tilde{\Lambda}$ such that $\Lambda_N\subset\tilde{\Lambda}\subset\Lambda_{N+50l_{n+1}}$ and 
$$B_m^i\cap \tilde{\Lambda}\neq\emptyset\Rightarrow B_m^i\subset \tilde{\Lambda}\ \text{ for $1\leq m\leq n+1$} .$$
Define 
 $$P'_{n+1}=\{c_{n+1}^i\in P_{n+1}:\ B_{n+1}^i\subset\tilde{\Lambda}\}$$ and 
$$Q_\eta=\{k_{n+1}^i\in P_{n+1}':\   \operatorname{dist}(\sigma(H_{B_{n+1}^i}(\theta^*)),E^*) <20\eta\}.$$ 
Replacing $\delta_{n+1}$ with $\eta$ in the proof  of {\bf Center Theorem} from stage $n$ to stage $n+1$ (where we only use the relation $|\log\delta_{n+1}|\geq20|\log\delta_{n}|$), we get 
for any $x,x'\in Q_\eta$,
 $$m(x,x'):=\min(\|(x-x')\cdot \omega)\|, \|2\theta^*+(x+x')\cdot \omega)\|) \leq 2(20\eta)^{1/2}<20\eta^{1/2}.$$
Let $\Lambda'=\tilde{\Lambda}\setminus(\bigcup_{k_{n+1}^i\in Q_\eta}B_{n+1}^i)$. Replacing  $\delta_{n+1}$ with $\eta$ and similar to the proof of Theorem \ref{ng} (since we only use the relations $\delta_{n+1}<\delta_n/10$ and $|\log\delta_{n+1}|\lesssim l_{n+1}^{2/3}$ in the proof), we obtain
  \begin{equation}\label{1241}
		\| G_{\Lambda'}(\theta^*; E^*)\|\leq10(20\eta)^{-1}=(2\eta)^{-1}.
\end{equation}
 Notice that \begin{align}\label{1242}
		\nonumber&\ \ \ \#(\Lambda_N\setminus\Lambda')+\#(\Lambda'\setminus\Lambda_N)\\
	\nonumber&\leq\#(\tilde{\Lambda}\setminus\Lambda')+\#(\tilde{\Lambda}\setminus\Lambda_N)\\
	&\leq C(d)(l_{n+1}^d\eta^{1/2}\#\Lambda_N+l_{n+1}N^{d-1}).
\end{align}
Combining \eqref{1241}, \eqref{1242} and Lemma \ref{IDSL} gives  
\begin{align*}
\mathcal{N}_{\Lambda_N}(E^*+\eta;\theta^*)-\mathcal{N}_{\Lambda_N}(E^*-\eta;\theta^*)&\leq C(d)(l_{n+1}^d\eta^{1/2}+l_{n+1}/N)\\ 
&\leq C(d)\eta^{\frac{1}{2}}|\log\eta|^{8d}
\end{align*}
provided $N\gg1$,  where we use  $l_{n+1}\leq l_n^4\leq|\log \delta_n|^8\leq|\log\eta|^8$.

Finally, combining the above two cases leads to the desired proof. 
\end{proof}

\appendix{}
\section{}\label{AD}
\begin{lem}[Trial wave function] Let $H$ be a self-adjoint operator on a finite dimensional   Hilbert space  and $E^*\in \R$. If there exist $m$ orthonormal functions $\psi_k\ (1\leq k\leq m )$ such that  $\|(H-E^*)\psi_k\|\leq\delta$ for some $\delta>0$ and all $1\leq k\leq m$, then $H$ has $m$ eigenvalues $ E_{k}\ (1\leq k\leq m )$ counted in multiplicities  satisfying $\sum_{k=1}^{m}(E_k-E^*)^2\leq m\delta^2$. These $\psi_k$ are  called  trial functions.
\end{lem}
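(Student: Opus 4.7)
The plan is to diagonalize $H$ via the spectral theorem and then reduce the claim to a simple rearrangement. Pick an orthonormal eigenbasis $\{\phi_j\}_{j=1}^N$ of $H$ with eigenvalues $\{\lambda_j\}_{j=1}^N$ (repeated according to multiplicity), and expand each trial function as $\psi_k=\sum_j c_{kj}\phi_j$. The orthonormality of $\{\psi_k\}_{k=1}^m$ is exactly the statement that the $m\times N$ matrix $C=(c_{kj})$ satisfies $CC^*=I_m$, whence $C^*C$ is an orthogonal projection of rank $m$. Consequently the weights
\begin{equation*}
p_j:=\sum_{k=1}^m|c_{kj}|^2=(C^*C)_{jj}
\end{equation*}
satisfy $0\le p_j\le 1$ for every $j$, together with $\sum_{j=1}^N p_j=\mathrm{tr}(C^*C)=m$.

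Next, I would expand each $(H-E^*)\psi_k$ in the eigenbasis and compute
\begin{equation*}
\sum_{j=1}^N p_j(\lambda_j-E^*)^2 \;=\; \sum_{k=1}^m \|(H-E^*)\psi_k\|^2 \;\le\; m\delta^2,
\end{equation*}
using the hypothesis of the lemma. After relabelling the $\lambda_j$'s so that $(\lambda_1-E^*)^2\le(\lambda_2-E^*)^2\le\cdots\le(\lambda_N-E^*)^2$, it remains to show
\begin{equation*}
\sum_{j=1}^m(\lambda_j-E^*)^2 \;\le\; \sum_{j=1}^N p_j(\lambda_j-E^*)^2;
\end{equation*}
setting $E_k:=\lambda_k$ for $1\le k\le m$ then completes the proof.

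The only step with any content is the last inequality, which is a one-line rearrangement. Writing $a_j:=(\lambda_j-E^*)^2$ and using $\sum_{j>m}p_j=\sum_{j\le m}(1-p_j)$ (which follows from $\sum_j p_j=m$), one computes
\begin{equation*}
\sum_{j=1}^N p_j a_j \;-\; \sum_{j=1}^m a_j \;=\; \sum_{j>m} p_j\bigl(a_j-a_m\bigr) \;+\; \sum_{j\le m}(1-p_j)\bigl(a_m-a_j\bigr) \;\ge\;0,
\end{equation*}
since $p_j\in[0,1]$ and the ordering gives $a_j\ge a_m$ for $j>m$ and $a_j\le a_m$ for $j\le m$. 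The main (mild) obstacle is simply to identify the correct constraint set on the weights $(p_j)$; once this is in place the argument is purely algebraic. No complication from eigenvalue multiplicities arises, because these are already absorbed into the indexing of the orthonormal eigenbasis.
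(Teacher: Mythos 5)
Your proof is correct, and it takes a genuinely different route from the paper's. The paper works with the positive semidefinite operator $H^2$ (after normalizing $E^*=0$), compresses it to the span of the $\psi_k$'s, and invokes the Courant--Fischer min-max principle to conclude that the $k$-th eigenvalue $\lambda_k$ of $H^2$ is dominated by the $k$-th eigenvalue $\mu_k$ of the compression $PH^2P$; summing and computing the trace gives the result in a few lines. You instead expand the $\psi_k$'s in an eigenbasis of $H$, observe that $C^*C$ is a rank-$m$ orthogonal projection so that the weights $p_j=(C^*C)_{jj}$ satisfy $p_j\in[0,1]$ and $\sum_j p_j=m$, and then finish with an elementary rearrangement inequality showing that $\sum_{j=1}^m a_j\le\sum_j p_j a_j$ for the ordered values $a_j=(\lambda_j-E^*)^2$. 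Both arguments are implementations of the same underlying linear-algebra fact (the sum of the $m$ smallest eigenvalues is a lower bound for the trace of the compression to any $m$-dimensional subspace), but yours replaces the appeal to min-max/interlacing with a self-contained Schur--Horn-type observation on projection diagonals plus rearrangement. The paper's version is shorter; yours is more elementary and makes the variational mechanism completely explicit.
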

\begin{proof}
	Without loss of  generality, we may assume $E^*=0$. It suffices to show that the first $m$ eigenvalues of the  positive semidefinite operator   $H^2$,   $0\leq\lambda_1\leq\cdots\leq\lambda_m$ satisfy
$$	\sum_{k=1}^m \lambda_k\leq m\delta^2.$$
	 Denote by $P$ the orthogonal projection on the space spanned by $\psi_k\ (1\leq k\leq m)$. Thus, the restricted operator $PH^2P$ has $m$  eigenvalues $0\leq\mu_1\leq\cdots\leq\mu_m$ satisfying $\lambda_k\leq\mu_k$ by the min-max principle. Thus, we obtain \begin{align*}
	 		\sum_{k=1}^m \lambda_k\leq\sum_{k=1}^m \mu_k&=\operatorname{Trace}(PH^2P)\\
	 		&\leq \sum_{k=1}^{m}\langle\psi_k,PH^2P\psi_k\rangle\\
	 		&=\sum_{k=1}^m\|H\psi_k\|^2\\
	 		&\leq m\delta^2,
	 \end{align*} which  finishes  the proof.
 \end{proof}
This lemma immediately gives us 	   \begin{cor}\label{trialcor}
	   	If there  exists a trial function such that $\|\psi\|=1$ and   $\|(H-E^*)\psi\|\leq\delta$, then $H$ has at least one eigenvalue in $|E-E^*|\leq\delta.$ If there exist two orthogonal trial functions such that $\|\psi_1\|=\|\psi_2\|=1$, $\|(H-E^*)\psi_1\|\leq\delta$ and $
	   	\|(H-E^*)\psi_2\|\leq\delta$, then $H$ has at least two eigenvalues in $|E-E^*|\leq\sqrt{2}\delta$.
	   \end{cor}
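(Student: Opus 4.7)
The plan is to deduce both assertions as immediate specializations of the preceding trial wave function lemma, taking $m=1$ and $m=2$ respectively. The lemma already does all the work; the corollary is really just the per-eigenvalue extraction from the aggregate $\ell^2$ bound $\sum_{k=1}^{m}(E_k-E^*)^2\leq m\delta^2$. So my proof will consist of selecting the right $m$, invoking the lemma verbatim, and then reading off the per-index consequence.

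For the first claim, I apply the lemma to the single normalized trial function $\psi$. It produces one eigenvalue $E_1$ of $H$ with $(E_1-E^*)^2\leq \delta^2$, which rearranges to $|E_1-E^*|\leq \delta$ as desired.

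For the second claim, I feed the two orthonormal trial functions $\psi_1,\psi_2$ into the lemma with $m=2$. This returns two eigenvalues $E_1,E_2$ (counted with multiplicity) satisfying
\[
(E_1-E^*)^2+(E_2-E^*)^2\leq 2\delta^2.
\]
Since each squared summand is nonnegative, each one individually satisfies $(E_k-E^*)^2\leq 2\delta^2$, and therefore $|E_k-E^*|\leq \sqrt{2}\delta$ for $k=1,2$. There is no substantive obstacle here: the min-max reduction inside the lemma already bounds the first $m$ eigenvalues of $H^2$ by the corresponding eigenvalues of the restricted operator $PH^2P$. The only small subtlety worth flagging is that the $\ell^2$-type aggregate bound forces the slightly weaker per-eigenvalue estimate $\sqrt{2}\delta$ in the $m=2$ case (as opposed to $\delta$ in the $m=1$ case), which is exactly the $\sqrt{2}$ factor appearing in the statement.
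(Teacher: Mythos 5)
Your proof is correct and is exactly the argument the paper intends: the corollary is stated as an immediate consequence of the trial wave function lemma, and you apply it with $m=1$ and $m=2$ respectively, extracting per-eigenvalue bounds from the aggregate estimate $\sum_{k=1}^{m}(E_k-E^*)^2\leq m\delta^2$ by nonnegativity of the summands.
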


\section{}\label{AA}
\begin{lem}[Morse]\label{C2}
	Let $E(\theta)$ be a $C^2$  function defined on $[a,b]$. Suppose that there is a point $\theta_s$ in the interval such that
	$E\left(\theta_s- \theta\right)=E\left(\theta_s+ \theta\right)$ for all $\theta$. We also assume that there exists $\delta>0$ such that,    $|E'(\theta)| \leq \delta$ implies $|E''(\theta)| \geq 2$ with  a unique sign for these $\theta$. Then
	\begin{align*}
	\left|E\left(\theta_2\right)-E\left(\theta_1\right)\right|& \geq \frac{1}{2}M^2(\theta_1,\theta_2)\\
	&:= \frac{1}{2}\min (\left|\theta_2-\theta_1\right|^2, \left|\theta_2+\theta_1-2 \theta_s\right|^2)
	\end{align*}
	provided $M(\theta_1,\theta_2)\leq\delta$.
	Moreover,
	$$
	|E'(\theta)| \geq \min 
	(\delta,|\theta-\theta_s|) .
	$$\end{lem}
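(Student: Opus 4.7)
The plan is to reduce everything to the one‑sided picture using the symmetry and then run a careful bookkeeping on $E'$ partitioned by the set $S=\{\theta:|E'(\theta)|\leq \delta\}$. First I would observe that differentiating the identity $E(\theta_s+\theta)=E(\theta_s-\theta)$ at $\theta=0$ forces $E'(\theta_s)=0$. By replacing $E$ with $-E$ if necessary, I may assume $E''\geq 2$ on $S$. Since $\theta_s\in S$ and $E'(\theta_s)=0$, $E'$ is strictly increasing on the connected component of $S$ containing $\theta_s$, and if at some $\theta_1^{*}>\theta_s$ we have $E'(\theta_1^{*})=\delta$, then $E'$ cannot re‑enter $S$ from above: a crossing back would require $E''\leq 0$ on the boundary of $S$, contradicting $E''\geq 2$ there. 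Consequently, on $[\theta_s,b]$ one has $E'(t)\geq 2(t-\theta_s)$ on $[\theta_s,\theta_1^{*}]$ and $E'(t)\geq \delta$ on $[\theta_1^{*},b]$, in particular $E'(t)\geq \min(\delta,|t-\theta_s|)$. The symmetric argument gives the same bound on $[a,\theta_s]$, which establishes the second assertion.

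Next I would prove the value estimate by a same‑side reduction. Given $\theta_1,\theta_2$, the symmetry yields $E(\theta_1)=E(2\theta_s-\theta_1)$, so we may replace $\theta_1$ by $\theta_1':=2\theta_s-\theta_1$ at the cost of swapping $|\theta_2-\theta_1|$ and $|\theta_2+\theta_1-2\theta_s|$ in the quantity $M(\theta_1,\theta_2)$. Choosing the representative so that $\theta_1,\theta_2$ lie on the same side of $\theta_s$ (WLOG both $\geq \theta_s$, with $\theta_s\leq \theta_1\leq \theta_2$), the two candidate distances for $M$ become $|\theta_2-\theta_1|$ and $(\theta_2-\theta_s)+(\theta_1-\theta_s)\geq |\theta_2-\theta_1|$, so $M=\theta_2-\theta_1$. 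Since $E'\geq 0$ on $[\theta_s,b]$, the difference $E(\theta_2)-E(\theta_1)=\int_{\theta_1}^{\theta_2}E'(t)\,dt$ is nonnegative and we only need a lower bound on this integral, using the bounds on $E'$ derived above.

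I then split on the position of $[\theta_1,\theta_2]$ relative to $\theta_1^{*}$ (where $\theta_1^{*}\leq \theta_s+\delta/2$). If $[\theta_1,\theta_2]\subset[\theta_s,\theta_1^{*}]$, the bound $E'(t)\geq 2(t-\theta_s)$ gives
\begin{equation*}
E(\theta_2)-E(\theta_1)\geq (\theta_2-\theta_s)^{2}-(\theta_1-\theta_s)^{2}=(\theta_2+\theta_1-2\theta_s)(\theta_2-\theta_1)\geq (\theta_2-\theta_1)^{2}=M^{2}.
\end{equation*}
If $[\theta_1,\theta_2]\subset[\theta_1^{*},b]$, then $E'\geq \delta\geq M$ on the whole interval, so $E(\theta_2)-E(\theta_1)\geq \delta(\theta_2-\theta_1)\geq M^{2}$. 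The nontrivial case is $\theta_1<\theta_1^{*}<\theta_2$; setting $u=\theta_1^{*}-\theta_1\geq 0$ and $v=\theta_2-\theta_1^{*}\geq 0$, with $M=u+v\leq \delta$, one gets
\begin{equation*}
E(\theta_2)-E(\theta_1)\geq \bigl((\theta_1^{*}-\theta_s)^{2}-(\theta_1-\theta_s)^{2}\bigr)+\delta v\geq u^{2}+\delta v\geq u^{2}+(u+v)v=u^{2}+uv+v^{2},
\end{equation*}
which is $\geq (u+v)^{2}/2=M^{2}/2$ since $u^{2}+v^{2}\geq 2uv$ is equivalent to $(u-v)^{2}\geq 0$. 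Combining the three subcases yields the stated inequality.

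The step I expect to be fiddly is the mixed subcase above, where one must pick up exactly the factor $1/2$; the delicate point is that $\delta v$ must be leveraged using $\delta\geq u+v$ rather than only $\delta\geq v$, otherwise the cross term $uv$ is lost. Everything else is routine from the monotonicity of $E'$ on $S$ and the no‑re‑entry argument; the symmetry reduction $\theta_1\mapsto 2\theta_s-\theta_1$ is what makes the Case B (opposite sides of $\theta_s$) automatic, so no separate analysis is needed there.
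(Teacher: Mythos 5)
Your proof is correct and follows essentially the same route as the paper: use symmetry to get $E'(\theta_s)=0$, define a cutoff (your $\theta_1^*$, the paper's $\theta_d$) separating the region where $E''\geq 2$ from the region where $E'\geq\delta$, prove the ``no re-entry'' lemma, reduce to $\theta_1,\theta_2$ on the same side of $\theta_s$ by the reflection $\theta_1\mapsto 2\theta_s-\theta_1$, and split into the same three subcases depending on position relative to the cutoff. The only differences are cosmetic (integrating the lower bound $E'(t)\geq 2(t-\theta_s)$ rather than Taylor-expanding, and the algebraic bookkeeping $u^2+uv+v^2\geq\frac12(u+v)^2$ versus the paper's $a^2+b^2\geq\frac12(a+b)^2$ in the mixed subcase).
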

\begin{proof}
	The proof is similar to that in \cite{Sur90} (cf.  Appendix A). Without loss of  generality, we may consider the case $\left|E'\right| \leq \delta$ implies $E^{\prime \prime} \geq 2$. By the symmetry,  we must have $E'\left(\theta_s\right)=0$; therefore $E^{\prime \prime}\left(\theta_s\right) \geq 2$. Let $\theta_d$ be the largest number satisfying 
	$$
	E^{\prime \prime}(\theta) \geq 2 \text { for } \theta_s \leq \theta \leq \theta_d .
	$$
	This implies that $E(\theta)$ is an increasing function to the right of the symmetry point. By the definition of $\theta_d$, we have $E^{\prime \prime}\left(\theta_d+\Delta \theta_n\right)<2$ for a sequence $\Delta \theta_n\to 0^+$.  Therefore $E'\left(\theta_d+\Delta \theta_n\right) >\delta$. This inequality must hold for every $\theta>\theta_d$.  Otherwise,  we would have a point $\theta>\theta_d$,  where $E'(\theta)=\delta,E'(\theta-\Delta\theta)>\delta$ for small $\Delta\theta>0$,  but   $E^{\prime \prime}(\theta) \geq 2>0$ by  $|E'(\theta)|\leq\delta$ and the assumption of $E$. This is impossible. Therefore
	$$
	E'(\theta) \geq \delta \text { for } \theta>\theta_d .
	$$
	So we have the following cases:\smallskip\\
	{\it Case} 1. $\theta_s\leq\theta_1<\theta_2\leq\theta_d$.  
	\begin{align*}
		E(\theta_2)-E(\theta_1)= E'(\theta_1) (\theta_2-\theta_1)+\frac{1}{2}E''(\xi)(\theta_2-\theta_1)^2\geq(\theta_2-\theta_1)^2.
	\end{align*}
	{\it Case }2. $\theta_d\leq\theta_1<\theta_2$.
	\begin{align*}
		E(\theta_2)-E(\theta_1)= E'(\xi) (\theta_2-\theta_1)\geq \delta(\theta_2-\theta_1)\geq (\theta_2-\theta_1)^2.
	\end{align*}
	{\it Case} 3. $\theta_s\leq\theta_1\leq\theta_d\leq\theta_2$. 
	\begin{align*}
		E(\theta_2)-E(\theta_1)&= E(\theta_2)-E(\theta_d)+E(\theta_d)-E(\theta_1) \\
		&\geq (\theta_2-\theta_d)^2+(\theta_d-\theta_1)^2\\
		&\geq\frac{1}{2}(\theta_2-\theta_1)^2.
	\end{align*}
	{\it Case} 4. $\theta_1\leq\theta_s\leq\theta_2$. Then we have $2\theta_s-\theta_1\geq\theta_s$. By Case $1$-$3$, we get  
	\begin{align*}
		|E(\theta_2)-E(\theta_1)|=|E(\theta_2)-E(2\theta_s-\theta_1)|\geq\frac{1}{2}(\theta_1+\theta_2-2\theta_s)^2
		.\end{align*}
	To prove the second inequality, we  consider the  two cases. \\
	{\it Case} 1. $\theta_s \leq \theta \leq \theta_d$.
	$$
	E'(\theta)=E'(\theta_s)+E''(\xi)(\theta-\theta_s) \geq \theta-\theta_s .
	$$
	{\it Case} 2. $\theta_d \leq \theta$. In this case,  we have $E'(\theta) \geq \delta$. \\
	For $\theta\leq\theta_s$, we use the symmetry property  of $E$ about $\theta_s$. 
	
	Hence we finish the proof. 
\end{proof} 

\section{}\label{AB}
\begin{thm}\label{daoshu}
	Let  $H(\theta)$ be a family of  finite dimensional self-adjoint operators with $C^2$ parametrization. Assume that $E(\theta^*)$ is a simple eigenvalue of $H(\theta^*)$ and $\psi(\theta^*)$ is its corresponding   eigenfunction. Then by Lemma \ref{neqs}, $E(\theta),\psi(\theta)$ can be $C^2$ parameterized in a neighborhood of $\theta^*$. Moreover, for $\theta$ belonging to this  neighborhood, we have  
	\begin{itemize}
		\item[\textbf{(1)}.] $\frac{d}{d\theta}E=\langle\psi,H'\psi\rangle$.
		\item[\textbf{(2)}.] $\frac{d^2}{d\theta^2}E=\langle\psi,H''\psi\rangle-2\langle H'\psi,G^\perp(E)H'\psi\rangle$, where $G^\perp(E)$ denotes the Green's function on the orthogonal complement of $\psi$.
		\item[\textbf{(3)}.] Let  $\mathcal{E}\neq E$ be another simple eigenvalue and $\Psi$ its   eigenfunction. Then  we have 
		$$\langle H'\psi,G^\perp(E)H'\psi\rangle=-\frac{\langle\Psi,H'\psi\rangle^2}{E-\mathcal{E}}+\langle H'\psi,G^{\perp\perp}(E)H'\psi\rangle,$$
	\end{itemize} 
where $G^{\perp\perp}(E)$ denotes the Green's function on the orthogonal complement of $\psi$ and $\Psi$.
\end{thm}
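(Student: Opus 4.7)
The plan is to work directly from the identity $H(\theta)\psi(\theta)=E(\theta)\psi(\theta)$, which can be differentiated twice by Lemma \ref{neqs}, and to exploit the self-adjointness of $H(\theta)$ together with the normalization $\|\psi(\theta)\|=1$ to extract the three formulas. Throughout I will suppress the $\theta$-dependence and write $'$ for $\frac{d}{d\theta}$.

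For \textbf{(1)}, I would differentiate the eigenvalue equation to obtain $H'\psi+H\psi'=E'\psi+E\psi'$, take the inner product with $\psi$, and use self-adjointness $\langle \psi,H\psi'\rangle=E\langle\psi,\psi'\rangle$ together with $\|\psi\|=1$. The cross terms cancel and leave $E'=\langle\psi,H'\psi\rangle$. This is the Hellmann--Feynman identity and involves no subtlety.

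For \textbf{(2)}, I would first normalize the phase of $\psi(\theta)$ so that $\langle \psi,\psi'\rangle=0$ for all $\theta$ near $\theta^*$ (the reality part of this identity follows automatically from differentiating $\|\psi\|^2=1$, and the imaginary part can be arranged by a smooth gauge choice, legitimate under the $C^2$ regularity given by Lemma \ref{neqs}). With this choice, $\psi'\in\mathrm{Range}(P^\perp)$, where $P$ is the orthogonal projector onto $\psi$. Applying $P^\perp$ to the first-order relation $(H-E)\psi'=-(H'-E')\psi$ and using $P^\perp\psi=0$ gives
\begin{equation*}
\psi'=-G^\perp(E)\,P^\perp H'\psi .
\end{equation*}
Differentiating the eigenvalue equation once more yields $H''\psi+2H'\psi'+H\psi''=E''\psi+2E'\psi'+E\psi''$; taking the inner product with $\psi$ and using self-adjointness to cancel the $\psi''$ terms, together with $\langle\psi,\psi'\rangle=0$, produces
\begin{equation*}
E''=\langle\psi,H''\psi\rangle+2\langle\psi,H'\psi'\rangle
    =\langle\psi,H''\psi\rangle-2\langle H'\psi,G^\perp(E)H'\psi\rangle ,
\end{equation*}
where in the last step I absorb $P^\perp$ into $G^\perp(E)$ (which by definition is supported on $\mathrm{Range}(P^\perp)$, so $\langle H'\psi,G^\perp(E)H'\psi\rangle=\langle P^\perp H'\psi,G^\perp(E)P^\perp H'\psi\rangle$). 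This is the content of \textbf{(2)}.

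For \textbf{(3)}, I would use the spectral decomposition of $G^\perp(E)$: on $\mathrm{Range}(P^\perp)$, split off the rank-one contribution from the simple eigenvalue $\mathcal{E}$ with eigenfunction $\Psi$, giving
\begin{equation*}
G^\perp(E)=\frac{1}{\mathcal{E}-E}\,|\Psi\rangle\langle\Psi|+G^{\perp\perp}(E) ,
\end{equation*}
where $G^{\perp\perp}(E)$ is the Green's function restricted to the orthogonal complement of the span of $\psi,\Psi$. Pairing with $H'\psi$ on both sides then yields the claimed identity
\begin{equation*}
\langle H'\psi,G^\perp(E)H'\psi\rangle
=-\frac{\langle\Psi,H'\psi\rangle^2}{E-\mathcal{E}}+\langle H'\psi,G^{\perp\perp}(E)H'\psi\rangle.
\end{equation*}
None of the steps is really an obstacle; the only mildly delicate point is justifying the phase normalization $\langle\psi,\psi'\rangle=0$ so that $\psi'$ lives in $\mathrm{Range}(P^\perp)$, and this is taken care of by the $C^2$ parametrization guaranteed by Lemma \ref{neqs}. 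Once that is fixed, the remainder is purely algebraic manipulation using self-adjointness and the spectral decomposition.
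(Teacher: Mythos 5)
Your proof is correct and essentially matches the paper's: both establish \textbf{(1)} via the Hellmann--Feynman identity, derive $\psi'=-G^\perp(E)P^\perp H'\psi$ from the differentiated eigenvalue equation to get \textbf{(2)}, and read off \textbf{(3)} from the spectral decomposition of $G^\perp(E)$. The paper obtains \textbf{(1)} by differentiating $E=\langle\psi,H\psi\rangle$ rather than $H\psi=E\psi$, and simply asserts $\langle\psi,\psi'\rangle=0$ (justified here since the matrices are real symmetric, so the phase-gauge issue you flag is moot), but these are only cosmetic differences.
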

\begin{proof}
	Notice that $\langle\psi,\psi\rangle\equiv1$. So we have $\langle\psi,\psi'\rangle=0$.
	Taking derivatives  on the equation  $E= \langle\psi,H\psi\rangle$ yields 
	  \begin{equation}\label{yijie}
		\frac{d}{d\theta}E=\langle\psi,H'\psi\rangle+2\langle\psi',H\psi\rangle=\langle\psi,H'\psi\rangle,
	\end{equation} where we have used  $H\psi=E\psi$ and $\langle\psi,\psi'\rangle=0$.  This proves {\bf (1)}. Now we try to prove \textbf{(2)}.
Taking derivatives again on  \eqref{yijie} gives 
$\frac{d^2}{d\theta^2}E=\langle\psi,H''\psi\rangle+2\langle \psi',H'\psi\rangle$. Thus, it suffices to show 
$$\langle \psi',H'\psi\rangle=-\langle H'\psi,G^\perp(E)H'\psi\rangle.$$
Since $\psi'$ is orthogonal to $\psi$, we have 
\begin{align*}
	\langle \psi',H'\psi\rangle =	\langle P^\perp \psi',H'\psi\rangle=\langle G^\perp(E)(H-E)\psi',H'\psi\rangle=-\langle H'\psi,G^\perp(E)H'\psi\rangle,
\end{align*}
where we have used $G^\perp(E)\psi=0 $ and  $(H'-E')\psi=-(H-E)\psi'$ since $(H-E)\psi\equiv0$.
 Finally, the item \textbf{(3)} follows from $$G^\perp(E)=((H-E)^\perp)^{-1}=\sum_{E'\neq E}\frac{1}{E'-E}P_{E'}$$
and 
$$G^{\perp \perp}(E)=((H-E)^{\perp \perp})^{-1}=\sum_{E'\neq E,\mathcal{E}}\frac{1}{E'-E}P_{E'}$$
immediately, where $P_{E'}$ denotes the orthogonal projection on the eigenspace of $E'\in\R$.
\end{proof}
\section{}\label{AC}
\begin{thm}\label{1}
If  $s_{n}=\inf\{\|c_{n}^i-c_{n}^j\|_1:\ {c_{n}^i\neq c_{n}^j\in Q_{n} }\}\geq 10l_n^2$. Then we can associate every $c_{n+1}^i\in P_{n+1}=Q_{n}$ a block $B_{n+1}^i$ such that
\begin{itemize}
	\item[(1).] $\Lambda_{l_{n}^2}(c_{n+1}^i)	\subset B_{n+1}^i\subset	\Lambda_{l_{n}^2+50l_{n}}(c_{n+1}^i)$.
	\item[(2).] If $B_m^{j}\cap B_{n+1}^i\neq \emptyset\ (1\leq m\leq n)$, then $B_m^{j}\subset B_{n+1}^i$.
	\item[(3).] $B_{n+1}^i$ is symmetric about $c_{n+1}^i$ (i.e., $k\in B_{n+1}^i\Rightarrow 2c_{n+1}^i-k\in B_{n+1}^i$).
	\item[(4).] The set $B_{n+1}^i-c_{n+1}^i$ is independent of $i$, i.e.,  $B_{n+1}^j=B_{n+1}^i+(c_{n+1}^j-c_{n+1}^i)$.
\end{itemize} 
\end{thm}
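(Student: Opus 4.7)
My plan is to construct a single ``template'' set $T\subset\Z^d$ that is symmetric about the origin, and then define
\[ B_{n+1}^i := T + c_{n+1}^i \quad \text{for each } c_{n+1}^i\in P_{n+1}. \]
With this ansatz, property (4) (translation invariance) holds trivially, and property (3) (symmetry of $B_{n+1}^i$ about $c_{n+1}^i$) reduces to the requirement that $T=-T$. Property (1) then becomes the size bound $\Lambda_{l_n^2}(0)\subset T\subset \Lambda_{l_n^2+50l_n}(0)$, and property (2) becomes the absorption property: every earlier block $B_m^j$ ($1\le m\le n$) that meets some $B_{n+1}^i$ must lie inside it.

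The construction of $T$ proceeds by downward induction on $m$, starting from $T_{n+1}:=\Lambda_{l_n^2}(0)$. The key structural fact I will establish first is that the set of relative displacements
\[ D_m(i) := \{c_m^j-c_{n+1}^i : B_m^j\cap (T_{m+1}+c_{n+1}^i)\neq\emptyset\} \]
is \emph{independent of $i$}. This is a direct consequence of the Center Theorem (Hypotheses (H5) and (H12)) combined with the Diophantine condition on $\omega$: two stage-$m$ singular centers within distance $\lesssim l_n^2$ of $c_{n+1}^i$ must either coincide with it (via the nesting $c_{n+1}^i=c_n^i=\cdots=c_m^i$ in pure Case~1 situations) or form a mirror pair at a fixed displacement produced by the Case~2 construction at stage $m$, and the Diophantine condition forbids a third such center. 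Denoting the common set by $D_m$, I set
\[ T_m := T_{m+1}\cup \bigcup_{d\in D_m}\bigl(\tilde B_m+d\bigr), \]
where $\tilde B_m := B_m^i-c_m^i$ is the template at stage $m$, which is well defined by the induction hypothesis. Finally take $T := T_1\cup(-T_1)$ to enforce symmetry.

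The symmetrization step does not trigger a new round of absorption, because $D_m$ is itself symmetric: the Case~2 mirror construction always produces displacements in $\pm$ pairs, while in the pure Case~1 situation one has $D_m=\{0\}=-D_m$. Since each $\tilde B_m$ is already symmetric by the induction hypothesis, $-(\tilde B_m+d)=\tilde B_m+(-d)$ is again one of the absorbed sets. No additional lower-stage blocks need to be absorbed either, because any $B_m^j$ meeting $T+c_{n+1}^i$ is, by the induction hypothesis at stage $n$, contained in some $B_n^k$, which is swept in as a whole by construction. The size bound is then a telescoping estimate: using the inductive bounds $\operatorname{diam}(\tilde B_m)\le l_m+50l_{m-1}\le 2l_m$ and $|D_m|\le 2$, stage $m$ enlarges the thickness of $T$ by at most $4l_m$, and the super-geometric growth $l_{m+1}\in\{l_m^2,l_m^4\}$ yields $\sum_{m=1}^{n}4l_m\le 50l_n$.

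The main obstacle I anticipate is verifying $i$-independence of $D_m$ uniformly across all stages simultaneously: at each $m$ the Center Theorem must be invoked to exclude extraneous mirror configurations within the $l_n^2$-window, and the downward induction must be arranged so that the ``virtual'' centers $2c_{n+1}^i-c_m^j$ produced by symmetrization do not spawn new unabsorbed blocks at deeper stages; this is where the mirror-image bookkeeping of Lemma \ref{mirrr} and its analogs at every stage becomes essential.
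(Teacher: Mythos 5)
Your overall strategy—building one universal template $T$ symmetric about the origin and translating it to each center—matches the paper's; the paper's $B_{n+1}^i = J_{n,0}+(c_{n+1}^i-k_0)$ is exactly such a construction, with $J_{n,0}-k_0$ playing the role of your $T$. However, the proposal has three concrete gaps in the absorption step.

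First, the $i$-independence of $D_m(i)$ is asserted but not proved, and the Center Theorem does not give it: the bound $m(c_m^j,c_m^{j'})\le 6\delta_{m-1}^{1/2}$ constrains the orbit $\{c\cdot\omega\}$, not the actual lattice displacement $c_m^j-c_{n+1}^i\in\Z^d$, so this vector can genuinely depend on $i$. The paper sidesteps this by \emph{never} claiming $i$-independence; instead it absorbs around the universal, manifestly symmetric set $H_r=(k_0-P_{n+1}+P_{n-r})\cup(k_0+P_{n+1}-P_{n-r})$, which contains all possible shifted stage-$(n-r)$ centers at once. Taking $D_m:=\bigcup_i D_m(i)$ (plus its reflection) would put you on the same footing, but then your cardinality bound $|D_m|\le 2$ is unavailable.

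Second, a single pass at each stage does not achieve the closure needed for property (2): once you adjoin $\tilde B_m+d$ for $d\in D_m$, the enlarged set $T_m$ may meet a new stage-$m$ block that $T_{m+1}$ did not. The paper iterates, $J_{r,0}\subsetneqq J_{r,1}\subsetneqq\cdots\subsetneqq J_{r,t_r}$, and proves $t_r<10$ by a pigeonhole argument that combines Lemma~\ref{2} with the Diophantine condition. This termination argument is not optional; it is what makes the telescoping size estimate work. Your attempted justification that iteration is unnecessary—``any $B_m^j$ meeting $T+c_{n+1}^i$ is contained in some $B_n^k$''—is false: property (2) of the induction hypothesis only forces $B_m^j\subset B_n^k$ when $B_m^j\cap B_n^k\neq\emptyset$, and there exist free-standing stage-$m$ blocks (centers in $P_m\setminus Q_m$ whose chain terminated) that meet no $B_n^k$ and hence are not swept in by absorbing stage-$n$ blocks.

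Third, the cardinality bound $|D_m|\le 2$ does not come from the Center Theorem and is stronger than what the paper proves. The paper's pigeonhole only bounds the number of iterations at each stage by $9$; the resulting size bound $J_{r+1,0}\subset\Lambda_{40l_{n-r}}(J_{r,0})$ and telescoping $\sum_r 40l_{n-r}<50l_n$ rely on the supergeometric growth of $l_m$, not on a hard cap of two displacements per stage. Your $\sum_m 4l_m\le 50l_n$ happens to be numerically compatible, but its premise $|D_m|\le 2$ is unjustified, and if $D_m$ is replaced by the union over $i$ (as suggested above) it certainly fails.
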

\begin{thm}
	If $s_{n}<10l_n^2$. Then we can associate every $c_{n+1}^i\in P_{n+1}=\{c_{n+1}^i=(c_n^i+\tilde{c}_n^i)/2:\ c_n^i\in Q_n \}$ a block $B_{n+1}^i$ such that 
	\begin{itemize}
		\item[(1).] $\Lambda_{l_{n}^4}(c_{n+1}^i)	\subset B_{n+1}^i\subset	\Lambda_{l_{n}^4+50l_{n}}(c_{n+1}^i)$.
		\item[(2).] If $B_m^{j}\cap B_{n+1}^i\neq \emptyset \ (1\leq m\leq n)$, then $B_m^{j}\subset B_{n+1}^i$.
		\item[(3).] $B_{n+1}^i$ is symmetric about $c_{n+1}^i$ (i.e., $k\in B_{n+1}^i\Rightarrow 2c_{n+1}^i-k\in B_{n+1}^i$).
		\item[(4).] The set $B_{n+1}^i-c_{n+1}^i$ is independent of $i$, i.e.,   $B_{n+1}^j=B_{n+1}^i+(c_{n+1}^j-c_{n+1}^i)$.
	\end{itemize} 
\end{thm}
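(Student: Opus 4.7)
The construction parallels that of the Case~1 analogue (the preceding theorem in this appendix) but must additionally accommodate the mirror-pair structure inherited from stage~$n$; translation invariance is the delicate point. I would initialize $B_{n+1}^{i,(0)} := \Lambda_{l_n^4}(c_{n+1}^i)$ and observe the basic containment: since $\|c_n^i - c_{n+1}^i\|_1 = \|\tilde c_n^i - c_{n+1}^i\|_1 = s_n/2 \leq 5 l_n^2$ and both $B_n^i, \tilde B_n^i$ have diameter $\lesssim l_n$, they lie well inside $\Lambda_{l_n^4/2}(c_{n+1}^i) \subset B_{n+1}^{i,(0)}$; by property~(2) of the induction hypothesis at stage~$n$, every $B_m^j$ with $m \leq n$ meeting $B_n^i$ or $\tilde B_n^i$ is already contained in it. I would then perform a finite number of swallow-then-symmetrize sweeps: in each sweep, for $m = n, n-1, \ldots, 1$ in decreasing order, add to the current set any $B_m^j$ (with $c_m^j \in P_m$) that meets it but is not contained; then replace the set by its union with its reflection through $c_{n+1}^i$. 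The key sparsity input is that, by the Center Theorem combined with the Diophantine condition, $Q_m \cap \Lambda_R(c_{n+1}^i)$ is either empty or a single mirror pair for any $R \leq 2 l_n^4 \ll (\gamma/\delta_m^{1/2})^{1/\tau}$, so only $O(1)$ stage-$m$ blocks ever boundary-intersect the current set; each contributes at most diameter $l_m \leq l_n$, and summed over all stages and the $O(1)$ sweeps needed for closure the total inflation stays below $50 l_n$, establishing property~(1). Properties~(2) and (3) are automatic from the construction.

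Translation invariance (property~(4)) proceeds as follows. Fix $c_{n+1}^i, c_{n+1}^k \in P_{n+1}$ and set $h := c_{n+1}^k - c_{n+1}^i$. Subtracting the mirror bounds $\|2\theta^* + (c_n^i + \tilde c_n^i)\cdot\omega\| \leq 6 \delta_n^{1/2}$ and the analogue with $i$ replaced by $k$ yields $\|2 h \cdot \omega\| \leq 12 \delta_n^{1/2}$. By the analogue of Lemma~\ref{mirrr} at stage~$n$, the mirror displacement $\tilde c_n^i - c_n^i$ takes one of two values $\pm v$ for a single universal vector $v \in \Z^d$, so the unordered pair $\{c_n^i, \tilde c_n^i\} - c_{n+1}^i = \{-v/2, v/2\}$ is independent of $i$. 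Coupled with induction hypothesis~(4) at stage~$n$, this forces $(B_n^i \cup \tilde B_n^i) - c_{n+1}^i$ to be a universal subset of $\Z^d$. Recursively applying the same argument inside $B_n^i$ for the nested mirror pairs of earlier scales (which by induction already carry their own universal offset structure) one deduces that the entire relevant offset pattern $\{c_m^j - c_{n+1}^i : c_m^j \in P_m,\ B_m^j \cap B_{n+1}^i \neq \emptyset\}$ is intrinsic to the scales and independent of $i$. Since the swallow-symmetrize construction above depends only on this offset pattern, property~(4) follows.

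\textbf{Main obstacle.} The delicate point is verifying that no ``spurious'' center $c_m^j \in P_m$ lies within the construction radius $\lesssim l_n^4 + 50 l_n$ of $c_{n+1}^i$ outside the nested mirror chain descending from $c_{n+1}^i$; such a center could destroy property~(4) by appearing near one $c_{n+1}^i$ but not another. The resolution, following the scheme of \cite{CSZ22}, is a scale-by-scale cascade: at each intermediate stage $m$ one combines the stage-$m$ Center Theorem, the Diophantine lower bound $\gamma / \|x\|_1^\tau$, and the shift estimate $\|h \cdot \omega\| \leq 12 \delta_n^{1/2}$ (together with its analogues at intermediate scales obtained from the nested mirror bounds) to force every candidate center either to belong to the nested chain or to lie at distance $\geq (\gamma / \delta_m^{1/2})^{1/\tau} \gg l_n^4$ from $c_{n+1}^i$. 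Verifying this cascade of exclusions and confirming that the swallow-symmetrize iteration terminates within the advertised $50 l_n$ budget is the only nontrivial bookkeeping in the proof.
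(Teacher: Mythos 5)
Your construction takes a genuinely different route from the paper's, and the translation-invariance step has a gap that the paper's route is specifically designed to avoid.

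\textbf{The gap.} Your swallow-and-symmetrize procedure absorbs only the \emph{actual} lower-stage blocks $B_m^j$ (with $c_m^j\in P_m$) that intersect the current set, so the final set depends a priori on the local configuration of $P_m$-centers around the particular $c_{n+1}^i$ you start from. To recover property (4) you then claim that the offset pattern $\{c_m^j-c_{n+1}^i : B_m^j\cap B_{n+1}^i\neq\emptyset\}$ is ``intrinsic to the scales and independent of $i$,'' and you propose to prove this by excluding every spurious $c_m^j$ outside the nested mirror chain at distance $\geq(\gamma/\delta_m^{1/2})^{1/\tau}\gg l_n^4$ from $c_{n+1}^i$. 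That exclusion radius is too optimistic. The sparsity at scale $m$ comes from the stage-$m$ analogue of Lemma~\ref{2} ($m(c_r^i,c_r^j)\leq 6\delta_{r-1}^{1/2}$ for $c_r^i,c_r^j\in P_r$) combined with the Diophantine condition, so the correct exclusion radius is $\sim(\gamma/\delta_{m-1}^{1/2})^{1/\tau}$. Since $\delta_{m-1}$ is far larger than $\delta_n$ when $m$ is small, and $l_n^4$ grows doubly exponentially in $n$ for fixed $\varepsilon_0$, one cannot have $(\gamma/\delta_{m-1}^{1/2})^{1/\tau}\gg l_n^4$ for the lower scales $m$ once $n$ is large; plenty of $P_m$-mirror-pairs can sit inside $\Lambda_{l_n^4}(c_{n+1}^i)$ outside the nested chain. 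Moreover, shifting by $h=c_{n+1}^j-c_{n+1}^i$ (which only satisfies $\|2h\cdot\omega\|\leq 12\delta_n^{1/2}$) does not carry $P_m$-centers to $P_m$-centers, so the local configurations around distinct $c_{n+1}^i$ really can differ, and equality of your individually-built sets $B_{n+1}^i-c_{n+1}^i$ is not guaranteed.

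\textbf{What the paper does instead.} The paper never tries to prove that the actual offset patterns coincide. It fixes a single $k_0\in P_{n+1}$ and, at each scale $r$, inflates against the set of \emph{potential} offsets
$H_r=(k_0-P_{n+1}+P_{n-r})\cup(k_0+P_{n+1}-P_{n-r})$,
which records the positions of stage-$(n-r)$ centers relative to \emph{every} $c_{n+1}^j$, reflected through $k_0$. The template $J_{n,0}$ therefore absorbs all positions where a lower-stage block \emph{could} sit for \emph{some} $c_{n+1}^j$, and every $B_{n+1}^i$ is defined outright as $J_{n,0}+(c_{n+1}^i-k_0)$, so property (4) is true by construction, (3) follows because $H_r$ is symmetric about $k_0$, and (2) follows by over-inclusion. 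Sparsity is used only to control the \emph{size} of $J_{n,0}$: the pigeonhole argument built on Lemma~\ref{2} and the Diophantine comparison at the matched scale ($(\gamma/\delta_{n-r-1}^{1/2})^{1/\tau}>40 l_{n-r}$) shows at most $10$ inflation rounds per scale, summing to $<50l_n$. Your ``$O(1)$ sweeps'' assertion is essentially this pigeonhole step, which you would still need to supply; but the key point is that the paper's template-and-translate scheme is what makes (4) free, whereas your per-$i$ construction puts (4) on the critical path with an argument that, as written, does not close.
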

\begin{thm}
	For an arbitrary finite size set $\Lambda\subset \Z^d$ , there exists a  set $\tilde{\Lambda}$ such that  
		\begin{itemize}
		\item[(1).]$\Lambda\subset \tilde{\Lambda}\subset \Lambda^*$, where $\Lambda^*=\{k\in \Z^d:\ \operatorname{dist}(k,\Lambda)\leq50l_n\}.$
		\item[(2).]If $B_m^{j}\cap\tilde{\Lambda}\neq \emptyset\ (1\leq m\leq n)$, then $B_m^{j}\subset \tilde{\Lambda}$.
	\end{itemize}  
\end{thm}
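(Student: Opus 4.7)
The plan is to construct $\tilde{\Lambda}$ by a finite top-down absorption procedure, exploiting the laminar (nested) structure of the resonant blocks inherited from the induction hypothesis, in particular the key property that $B_m^j\cap B_{m'}^{j'}\neq\emptyset$ with $m<m'$ forces $B_m^j\subset B_{m'}^{j'}$, and that blocks of the \emph{same} stage are pairwise disjoint. Concretely, I would set $\Lambda_0:=\Lambda$ and, for $k=1,2,\ldots,n$, define
\[
\Lambda_k := \Lambda_{k-1}\,\cup\,\bigcup\bigl\{B_{n-k+1}^{j}\;:\; B_{n-k+1}^{j}\cap \Lambda_{k-1}\neq\emptyset\bigr\},
\]
and finally put $\tilde\Lambda:=\Lambda_n$. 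Thus we first absorb every intersecting stage-$n$ block, then every intersecting stage-$(n-1)$ block, and so on down to stage $1$.

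To verify property (2), suppose $B_m^j\cap\tilde\Lambda\neq\emptyset$ for some $1\leq m\leq n$, and let $k$ be the smallest index with $B_m^j\cap\Lambda_k\neq\emptyset$. If $k<n-m+1$, then $B_m^j$ meets a block $B_{n-k+1}^{j'}$ added at step $k$, which has \emph{larger} stage $n-k+1>m$; the nesting property then gives $B_m^j\subset B_{n-k+1}^{j'}\subset\Lambda_k\subset\tilde\Lambda$. If $k=n-m+1$, then $B_m^j$ is added by construction at that very step, so $B_m^j\subset\Lambda_k\subset\tilde\Lambda$. Finally, $k>n-m+1$ cannot occur, because in that case the new intersection point would come from a block $B_{n-k+1}^{j'}$ of strictly smaller stage added at step $k$; by nesting $B_{n-k+1}^{j'}\subset B_m^j$, and since $B_{n-k+1}^{j'}\cap\Lambda_{k-1}\neq\emptyset$ we would already have $B_m^j\cap\Lambda_{k-1}\neq\emptyset$, contradicting minimality of $k$.

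For property (1), the fact that $\Lambda\subset\tilde\Lambda$ is immediate from the definition. For the upper bound, I claim that each step extends only within a controlled neighborhood of $\Lambda$ itself. Indeed, at step $k$ any newly added block $B_{n-k+1}^{j'}$ which meets $\Lambda_{k-1}$ either meets an earlier-added block $B_{s}^{i}$ with $s>n-k+1$, in which case nesting forces $B_{n-k+1}^{j'}\subset B_s^i\subset\Lambda_{k-1}$ (so nothing new is gained), or else it meets $\Lambda$ directly, in which case every point of $B_{n-k+1}^{j'}$ lies within $\operatorname{diam}(B_{n-k+1}^{j'})\leq l_{n-k+1}+50l_{n-k}\leq 2l_n$ of $\Lambda$. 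Consequently $\tilde\Lambda\subset\{k\in\Z^d:\ \operatorname{dist}(k,\Lambda)\leq 2l_n\}\subset\Lambda^*$.

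The only genuinely delicate point is the recursive argument for (1), namely that the extensions at successive stages do \emph{not} accumulate: one must observe that any would-be excursion beyond $\Lambda$ via an added smaller block is immediately reabsorbed into a previously added larger block by nesting, so the extension is governed by a single stage rather than by a sum over stages. Aside from this, the construction is essentially bookkeeping once the laminar hierarchy of the $\{B_m^j\}$ is in hand.
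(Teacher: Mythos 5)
Your proposal is correct, and it takes a genuinely different route from the paper. The paper proves only Theorem C.1 (the construction of the symmetric, translation-invariant blocks $B_{n+1}^i$) and notes that the proof of this statement is ``similar''; that argument starts from a cube, absorbs cubes $\Lambda_{2l_{n-r}}(h)$ around structured translates of center sets $H_r$, and re-invokes the Diophantine condition together with the Center Theorem to show at most ten absorptions occur per stage, giving an expansion bounded by the geometric sum $\sum_r 40\, l_{n-r}<50 l_n$. Your construction instead absorbs the actual blocks $B_m^j$ top-down by stage and never touches the Diophantine condition: you lean solely on two facts already furnished by the induction hypothesis, namely the laminar nesting $B_m^j\cap B_{m'}^{j'}\neq\emptyset$, $m<m'$ $\Rightarrow$ $B_m^j\subset B_{m'}^{j'}$, and same-stage disjointness. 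The key structural observation—that any stage-$(n-k+1)$ block contributing new points to $\Lambda_k$ must meet $\Lambda$ directly (otherwise nesting reabsorbs it into a previously added larger block)—makes the expansion a one-shot phenomenon governed by the diameter of a single block, not a telescoping sum over stages. This is cleaner, uses strictly weaker hypotheses, and in fact yields an expansion closer to $4 l_n$ than $50 l_n$; the paper's constant has slack your argument reveals.

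Two small imprecisions, neither fatal. First, in the verification of (2) you say that if $k=n-m+1$ then $B_m^j$ ``is added by construction at that very step,'' but that would require $B_m^j\cap\Lambda_{n-m}\neq\emptyset$, contradicting minimality of $k$; the right observation is that $k=n-m+1$ (with $k\geq1$) is impossible by same-stage disjointness, and the genuine boundary case is $k=0$, where $B_m^j$ meets $\Lambda$ directly and is then absorbed at step $n-m+1$. Second, your bound $\operatorname{diam}(B_{n-k+1}^{j'})\leq l_{n-k+1}+50\,l_{n-k}$ is really a radius; the diameter is twice that, but $2(l_n+50\,l_{n-1})\ll 50\,l_n$, so $\tilde\Lambda\subset\Lambda^*$ still holds comfortably.
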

We only give the proof of Theorem \ref{1}, since those of the other two theorems are similar.
\begin{proof} [Proof of Theorem \ref{1}]
In this proof, for a set $A$,  we denote $\Lambda_{L}(A)=\{k\in \Z^d:\ \operatorname{dist}(k,A)\leq L\}.$
	Before proving this theorem,  we prove a lemma concerning the set $P_r\ (1\leq r\leq n+1)$.
	\begin{lem}\label{2}
		For $c_r^i,c_r^j\in P_r$, we have 	$m(c_r^i,c_r^j):=\min(\|(c_r^i-c_r^j)\cdot \omega\|,\|2\theta^*+(c_r^i+c_r^j)\cdot \omega\|)\leq6\delta_{r-1}^{1/2}.$
	\end{lem}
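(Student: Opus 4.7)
\smallskip

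The plan is to prove the lemma by induction on $r$, splitting into the two cases that appear in the construction of $P_r$.

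First, if $s_{r-1}\geq 10 l_{r-1}^2$, then by the definition of $P_r$ one has $P_r=Q_{r-1}$, so every pair $c_r^i,c_r^j\in P_r$ is actually a pair in $Q_{r-1}$. The bound $m(c_r^i,c_r^j)\leq 2\delta_{r-1}^{1/2}\leq 6\delta_{r-1}^{1/2}$ is then immediate from the \textbf{Center Theorem} at stage $r-1$ (\textbf{(H5)} / \textbf{(H12)} of the induction hypothesis). This case is essentially tautological.

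The substantive case is $s_{r-1}<10 l_{r-1}^2$, where $P_r=\{c_r^i=(c_{r-1}^i+\tilde c_{r-1}^i)/2:c_{r-1}^i\in Q_{r-1}\}$. Here $\tilde c_{r-1}^i=c_{r-1}^i\pm(c_{r-1}^J-c_{r-1}^I)$ for the fixed pair $c_{r-1}^I,c_{r-1}^J\in Q_{r-1}$ realizing $s_{r-1}$; the sign is uniquely determined by the mirror-image rule analogous to Lemma \ref{mirrr}. Writing $d:=(c_{r-1}^J-c_{r-1}^I)/2$ and $\sigma_i\in\{+1,-1\}$, we have $c_r^i=c_{r-1}^i+\sigma_i d$. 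By the \textbf{Center Theorem} applied to the pair $(c_{r-1}^i,c_{r-1}^I)$, exactly one of
$$\|(c_{r-1}^i-c_{r-1}^I)\cdot\omega\|\leq 2\delta_{r-1}^{1/2}\quad(\sigma_i=+1)\ ,\quad \|2\theta^*+(c_{r-1}^i+c_{r-1}^I)\cdot\omega\|\leq 2\delta_{r-1}^{1/2}\quad(\sigma_i=-1)$$
holds (and similarly for $j$); this is the bookkeeping that links the $\sigma_i$ to a usable arithmetic bound.

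I would then split on $\sigma_i,\sigma_j$. If $\sigma_i=\sigma_j$, then $c_r^i-c_r^j=c_{r-1}^i-c_{r-1}^j$, and pivoting on $c_{r-1}^I$ via the triangle inequality on $\|\cdot\|$ (either by subtracting the two "difference"-type bounds when $\sigma_i=\sigma_j=+1$, or by subtracting the two "sum"-type bounds when $\sigma_i=\sigma_j=-1$) yields $\|(c_r^i-c_r^j)\cdot\omega\|\leq 4\delta_{r-1}^{1/2}$. If $\sigma_i\neq\sigma_j$, say $\sigma_i=+1,\sigma_j=-1$, then $c_r^i+c_r^j=c_{r-1}^i+c_{r-1}^j$, and adding one "difference"-type bound with one "sum"-type bound produces $\|2\theta^*+(c_r^i+c_r^j)\cdot\omega\|\leq 4\delta_{r-1}^{1/2}$. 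In every sub-case $m(c_r^i,c_r^j)\leq 4\delta_{r-1}^{1/2}\leq 6\delta_{r-1}^{1/2}$, completing the induction. The only genuine issue to watch is the sign accounting in the mirror construction together with the fact that the \textbf{Center Theorem} only guarantees \emph{one} of the two clauses of $m(\cdot,\cdot)$; once $\sigma_i$ is correctly identified with that clause for each index, the argument is a short triangle inequality pivoted at $c_{r-1}^I$, which I expect to be the only delicate point of the proof.
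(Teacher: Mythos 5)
Your proof is correct. Case 1 is tautological and matches the paper exactly. In Case 2 the paper proceeds by invoking the analog of Lemma~\ref{514} to establish that a single phase $\mu\in\{0,1/2\}$ satisfies $\|\theta^*+c_r^i\cdot\omega+\mu\|\leq 3\delta_{r-1}^{1/2}$ simultaneously for all $c_r^i\in P_r$, and then doubles to conclude; you instead pivot directly on the fixed center $c_{r-1}^I$, using the mirror-sign bookkeeping from Lemma~\ref{mirrr} together with the \textbf{Center Theorem} at stage $r-1$. Both routes go through the same ingredients --- the Center Theorem, the definition of the mirror image, and the Diophantine condition on $\omega$ (which is what makes the mirror sign $\sigma_i$ uniquely select one clause of $m(\cdot,\cdot)$) --- but yours avoids the intermediate ``single $\mu$'' observation, is more elementary, and yields the slightly sharper bound $4\delta_{r-1}^{1/2}$. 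One remark: the paper's detour through the Lemma~\ref{514} analog is not wasted in context, since that single-$\mu$ statement is needed independently to set up hypothesis \textbf{(H6)}; the paper is simply reusing an already-proved fact, whereas your argument reproves what it needs from scratch.
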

\begin{proof}[Proof of Lemma \ref{2} ]
	We consider two cases. \smallskip\\
	\textbf{Case 1}.  $s_{r-1}\geq 10l_{r-1}^2$. Then $P_r=Q_{r-1}$ and the proof is completed by  the {\bf  Center Theorem}. \smallskip\\
	\textbf{Case 2}.  $s_{r-1}> 10l_{r-1}^2$. As in the proof of Lemma \ref{514}, one can show that there exists $\mu=0 \text{ or 1/2}$, such that $\|\theta^*+c_r^i\cdot \omega+\mu\|\leq3\delta_{r-1}^{1/2}$ and  $\|\theta^*+c_r^j\cdot \omega+\mu\|\leq3\delta_{r-1}^{1/2}$, which proves this lemma.
\end{proof}
Now	fix $k_0\in P_{n+1}$. We start with  $J_{0,0}=\Lambda_{l_n^2}(k_0)$.
	Denote 
	$$H_r=( k_0-P_{n+1}+P_{n-r})\cup(k_0+P_{n+1}-P_{n-r}),\  0\leq r\leq n-1.$$
	Define inductively $$J_{r,0} \subsetneqq J_{r,1}\subsetneqq \cdots \subsetneqq J_{r,t_r}:=J_{r+1,0}, $$
	where
	$$J_{r,t+1}=J_{r,t}\bigcup \left(  \bigcup_{\{ h\in H_{r}:\ \Lambda_{2l_{n-r}}(h)\cap J_{r,t}\neq\emptyset\} }\Lambda_{2l_{n-r}}(h)\right)  $$
	and $t_r$ is the  largest integer satisfying the $\subsetneqq$ relationship (the following argument shows that  $t_r<10$).
	Thus by definition, we have 
	\begin{equation}\label{??}
		h\in H_{r}, \ \Lambda_{2l_{n-r}}(h)\cap J_{r+1,0}\neq\emptyset\Rightarrow \Lambda_{2l_{n-r}}(h)\subset J_{r+1,0}.
	\end{equation}
	For $\tilde{k} \in k_0- P_{n+1}$, by Lemma \ref{2}, we have  
\begin{equation}\label{..}
	  \min\left(\|\tilde{k}\cdot  \omega\|,\|\tilde{k} \cdot \omega-2k_0\cdot \omega -2\theta^*\|\right)< 6\delta_n^{1/2}.
\end{equation}
 Choosing  a point $p\in P_{n-r}$, for convenience, we denote $\theta'=2k_0\cdot \omega+2\theta^*,\theta''=-p\cdot\omega-2\theta^*$.  From \eqref{..} and  Lemma \ref{2}, we deduce that  for any $h\in   k_0-P_{n+1}+P_{n-r}$,
  \begin{align}\label{chou}
& \min(\|(h-p)\cdot \omega\|,\|h\cdot\omega-\theta'\|,\|(h-p)\cdot \omega-\theta'\|,\|h\cdot\omega-\theta'-\theta''\|)\\
 \nonumber &\leq6\delta_{n-r-1}^{1/2}+6\delta_n^{1/2}.
  \end{align}
So \eqref{chou} says that the set $\{h\cdot \omega :\ h\in k_0-P_{n+1}+P_{n-r}\}$ must be close to one of the four fixing phases,  namely,  $\theta_i\ (i=1,2,3,4)$.  
Notice that $ k_0+P_{n+1}-P_{n-r}=2k_0-(k_0-P_{n+1}+P_{n-r})$ is symmetric to  $k_0-P_{n+1}+P_{n-r}$  about $k_0$. Thus the set $\{h\cdot \omega :\ h\in  k_0+P_{n+1}-P_{n-r}\}$ must be close to one of $\theta_{4+i}:=2k_0\cdot \omega-\theta_i\ (i=1,2,3,4)$. By the pigeonhole principle, any ten distinct elements of $H_r$ must contain two elements $h,\tilde{h}$ of them  such that $\|h\cdot \omega-\theta_i\|\leq 7\delta_{n-r-1}^{1/2}$ and $\|\tilde{h}\cdot \omega-\theta_i\|\leq 7\delta_{n-r-1}^{1/2}$ for some $1\leq i\leq 8$. Hence \begin{equation}\label{ky}
	\|(h-\tilde{h})\cdot \omega\|\leq14\delta_{n-r-1}^{1/2}.
\end{equation}
	We claim that $t_r<10$. Otherwise, there exist  distinct $h_t\in H_r\ (1\leq t\leq10)$ such that
	$$\Lambda_{2l_{n-r}}(h_{1})\cap J_{r,0}\neq\emptyset,\  \Lambda_{2l_{n-r}}(h_t)\cap\Lambda_{2l_{n-r}}(h_{t+1})\neq\emptyset. $$
	In particular, $\|h_t-h_{t+1}\|\leq4l_n$. Thus
	$\|h_t -h_{t'}\|_1\leq40l_{n-r}\text{ for all } (1\leq t,t'\leq10). $
On the other hand, by \eqref{ky}, there exist  $h_t\neq h_{t'}$ such that $\|(h_t-h_{t'})\cdot \omega\|\leq14\delta_{n-r-1}^{1/2}.$ The Diophantine condition gives $ \|h_t -h_{t'}\|_1>40l_{n-r}$. Hence we get a contradiction and prove the claim. Thus we have 
	\begin{equation}\label{?}
		J_{r+1,0}=J_{r,t_r}\subset \Lambda_{40l_{n-r}}(J_{r,0}).
	\end{equation}	
	Since $$\sum_{r=0}^{n-1}40l_{n-r}<50l_n,$$
	we find $J_{n,0}$  to satisfy
	$$\Lambda _{l_n^2}(k_0)=J_{0,0}\subset J_{n,0}\subset \Lambda_{50l_n}(J_{0,0})\subset\Lambda _{l_{n}^2+50l_n}(k_0).$$
	Next, for any $c_{n+1}^i\in P_{n+1},$ we   define
	\begin{equation}\label{wuhu}
		B_{n+1}^i=J_{n,0}+(c_{n+1}^i-k_0).
	\end{equation}
	Assume that for some $c_{n+1}^i\in P_{n+1}$ and $c_m^j\in P_m\ (1\leq m\leq n)$, 
$B_{n+1}^i\cap B_m^{j} \neq \emptyset$.
	Then \begin{equation}\label{cb}
		\left( B_{n+1}^i+(k_0-c_{n+1}^i)\right)  \cap \left( B_m^{j}+(k_0-c_{n+1}^i) \right) \neq \emptyset .
	\end{equation}
Since $	B_{n+1}^i+(k_0-c_{n+1}^i)=J_{n,0}, B_m^{j}+(k_0-c_{n+1}^i)\subset \Lambda_{l_m+50l_{m-1}}(h)\subset \Lambda_{1.5l_m}(h)$
where $h =k_0-c_{n+1}^i+c_m^i\in H_{n-m}$.
So \eqref{cb} can be restated as 
	$$J_{n,0}\cap\Lambda_{1.5l_m}(h)\neq\emptyset.$$
	Recalling \eqref{?}, we have
	$$J_{n,0}\subset \Lambda_{50l_{m-1}}(J_{n-m+1,0}).$$
	Thus $$ \Lambda_{50l_{m-1}}(J_{n-m+1,0})\cap \Lambda_{1.5l_m}(h)\neq\emptyset.   $$
	From $50l_{m-1}< 0.5l_m$, it follows that
	$$J_{n-m+1,0}\cap \Lambda_{2l_m}(h)\neq\emptyset.$$
	Recalling \eqref{??}, we deduce 
	$$\Lambda_{2l_m}(h)\subset J_{n-m+1,0}\subset J_{n,0}.$$
	Hence
	$$B_m^j \subset\Lambda_{2l_m}(c_m^j)=\Lambda_{2l_m}(h)+(c_m^j-h)\subset J_{n,0}+(c_m^j-h)=B_{n+1}^i.$$
	We will show $B_{n+1}^i-c_{n+1}^i$ is independent of $c_{n+1}^i \in P_{n+1}$. For this, recalling \eqref{wuhu}, we deduce  
	$$B_{n+1}^i-c_{n+1}^i=J_{n,0}-k_0$$ is independent of $c_{n+1}^i$.
	Finally, we prove the symmetry property of $B_{n+1}^i$. The definition of $H_r$ implies that it is symmetric about $k_0$,  which implies all   $J_{r,t}$ are symmetric  about $k_0$ as well.  In particular, $J_{n,0}$ is symmetrical about $k_0$. Using \eqref{wuhu} shows that $B_{n+1}^i$ is symmetric  about $c_{n+1}^i$. 
\end{proof}

\section*{Acknowledgments}
 Y. Shi was  partially supported by NSFC  (12271380).  Z. Zhang was  partially supported by  NSFC  (12171010, 12288101). The authors are very grateful to the handling editor and the anonymous referees  for their helpful suggestions. 
\section*{Data Availability}
The manuscript has no associated data.
\section*{Declarations}
{\bf Conflicts of interest} \ The authors  state  that there is no conflict of interest.

\bibliographystyle{alpha}

\end{document}